\newtheorem{theorem}{Theorem}
\newtheorem{proposition}[theorem]{Proposition}
\newtheorem{corollary}[theorem]{Corollary}
\newtheorem{lemma}[theorem]{Lemma}
\newtheorem{remark}{Remark}
\newtheorem{assump}{Assumption}
\newenvironment{customthm}[1]
{\innercustomthm}
{\endinnercustomthm}
\newenvironment{customprop}[1]
{\innercustomprop}
{\endinnercustomprop}
\newenvironment{customlem}[1]
{\innercustomlem}
{\endinnercustomlem}
\newenvironment{customcor}[1]
{\innercustomcor}
{\endinnercustomcor}
\newtheorem{example}{Example}
\DeclareMathOperator*{\argmin}{arg\,min}
\newcommand{\norm}[1]{\left|\left| #1 \right|\right|}
\newcommand{\Diag}{\mathtt{Diag}}
\renewcommand{\hat}{\widehat}
\renewcommand{\tilde}{\widetilde}
\renewcommand{\P}{\mbox{$\mathrm{P}$}}
\newcommand{\E}{\mbox{$\mathbb{E}$}}
\definecolor{emerald}{rgb}{0.31, 0.78, 0.47}
\title{\textbf{Nonparametric Inference on Dose-Response Curves Without the Positivity Condition}}
\author{Yikun Zhang$^{1,*}$, \; Yen-Chi Chen$^{1,\dagger}$, \;\text{and}\, Alexander Giessing$^2$}
\date{\normalsize
	\textit{$^1$Department of Statistics, University of Washington}\\
	$^*$\href{mailto:yikun@uw.edu}{yikun@uw.edu}; $^{\dagger}$\href{mailto:yenchic@uw.edu}{yenchic@uw.edu}\\
	\textit{$^2$Department of Statistics and Data Science, National University of Singapore}\\~\\	
	\today}
\begin{document}
	\maketitle

\begin{abstract}
Existing statistical methods in causal inference often assume the positivity condition, where every individual has some chance of receiving any treatment level regardless of covariates. This assumption could be violated in observational studies with continuous treatments. In this paper, we develop identification and estimation theories for causal effects with continuous treatments (\emph{i.e.}, dose-response curves) without relying on the positivity condition. 
Our approach identifies and estimates the derivative of the treatment effect for each observed sample, integrating it to the treatment level of interest to mitigate bias from the lack of positivity. The method is grounded in a weaker assumption, satisfied by additive confounding models. We propose a fast and reliable numerical recipe for computing our integral estimator in practice and derive its asymptotic properties. To enable valid inference on the dose-response curve and its derivative, we use the nonparametric bootstrap and establish its consistency. The performances of our proposed estimators are validated through simulation studies and an analysis of the effect of air pollution exposure (PM$_{2.5}$) on cardiovascular mortality rates.
		\\~\\
\noindent \textbf {Keywords:} {Causal Inference; Dose-Response Curve; Positivity; Kernel Smoothing.}
\end{abstract}
	
\section{Introduction}
\label{sec:Intro}

In observational studies, causal effects do not always result from standard binary interventions but rather arise from continuous treatments or exposures. The causal effect of a continuous treatment variable $T\in \mathcal{T} \subset \mathbb{R}$ on an outcome $Y\in \mathcal{Y}\subset \mathbb{R}$ is described by the (causal) dose-response curve, which characterizes the average outcome if all individuals in the population were assigned to a treatment level $T=t$. Typically, an additional set of covariates $\bm{S} \in \mathcal{S} \subset \mathbb{R}^d$ is observed, capturing potential confounding variables that influence both the treatment $T$ and the outcome $Y$. Under appropriate regularity conditions, the dose-response curve coincides with the covariate-adjusted regression function $t\mapsto \mathbb{E}\left[\mu(t,\bm{S})\right]$, where $\mu(t,\bm{s})=\mathbb{E}\left(Y|T=t,\bm{S}=\bm{s}\right)$, making it identifiable from the observed data for any $t\in \mathcal{T}$ \citep{robins2000marginal,neugebauer2007nonparametric,diaz2013targeted,kennedy2017non,takatsu2022debiased}.

\begin{figure}[t]
	\captionsetup[subfigure]{justification=centering}
	\begin{subfigure}[t]{0.32\linewidth}
		\centering
		\includegraphics[width=1\linewidth]{./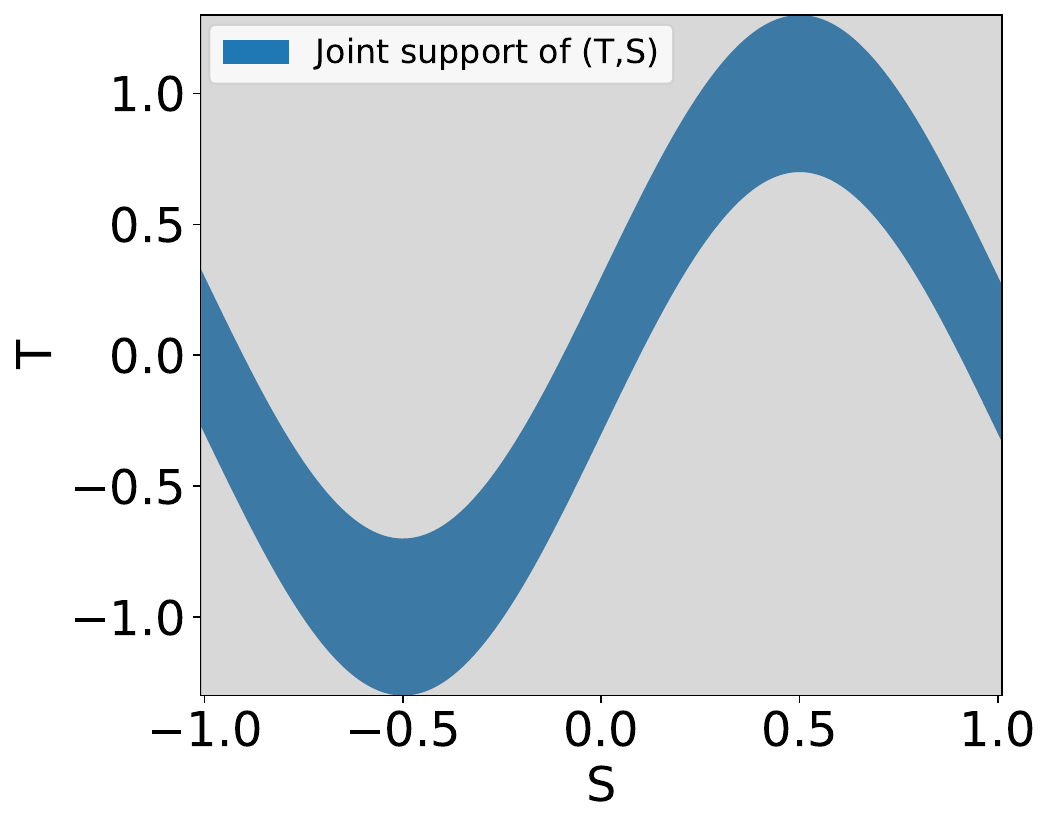}
	\end{subfigure}
	\hfil
	\begin{subfigure}[t]{0.32\linewidth}
		\centering		\includegraphics[width=1\linewidth]{./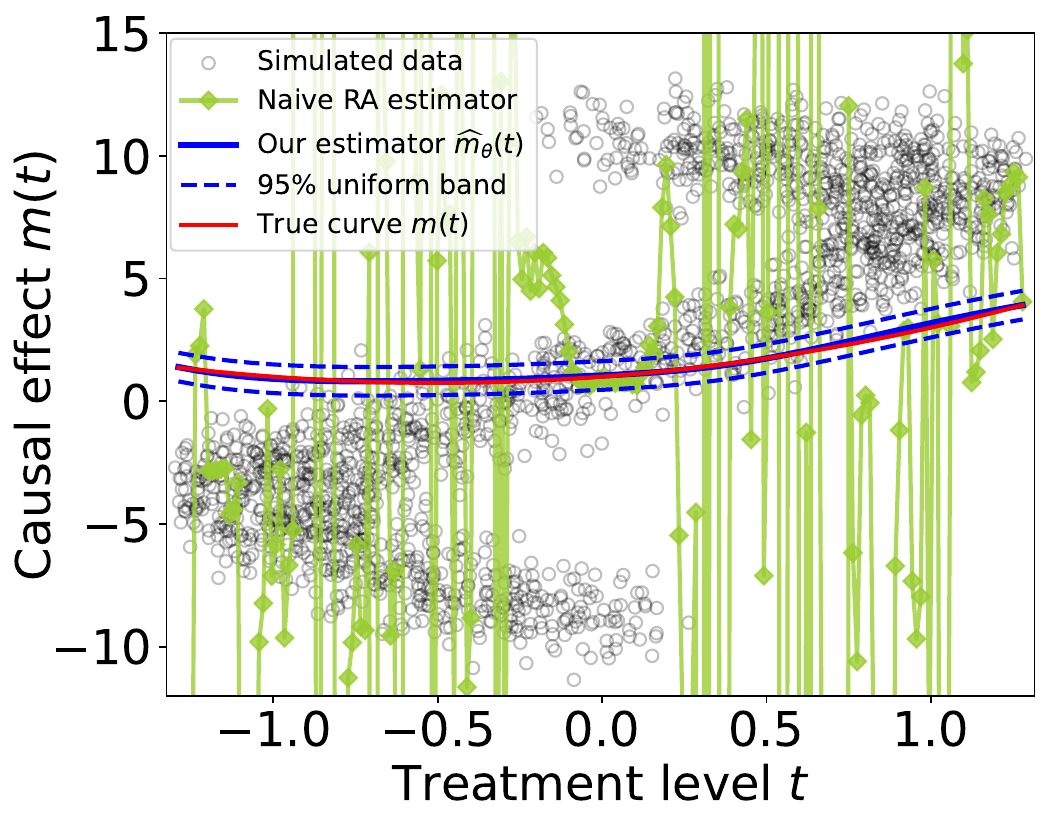}
	\end{subfigure}
	\hfil
	\begin{subfigure}[t]{0.32\linewidth}
		\centering		\includegraphics[width=1\linewidth]{./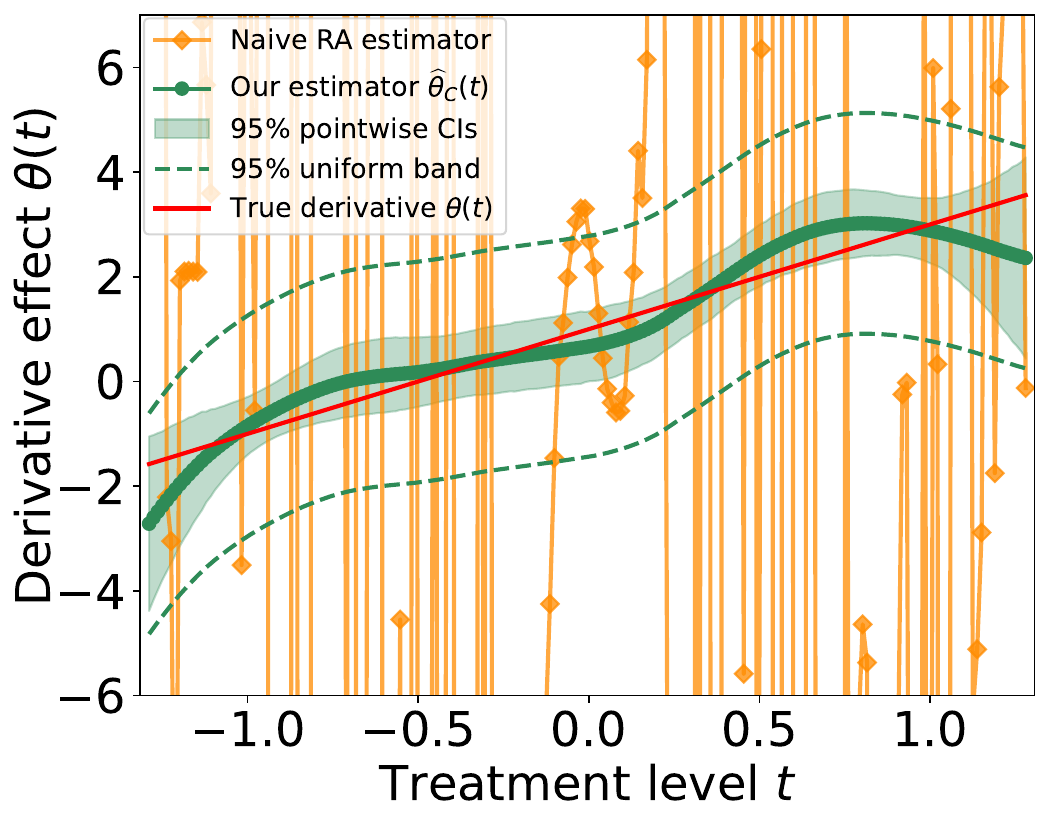}
	\end{subfigure}
	\caption{Simulation results under the single confounder model \eqref{single_conf} with $n=2000$. {\bf Left:} The support of the joint distribution of $(T,S)$. {\bf Middle:} Estimated dose-response curves using naive regression adjustment (RA) and the proposed integral estimators, overlaid with the true one $m(t)$. {\bf Right:} Estimated derivatives of the dose-response curve by naive RA and our proposed localized estimators, overlaid with the true derivative $\theta(t)$. The middle and right panels also present the 95\% confidence intervals (CIs) and/or uniform confidence bands from our proposed estimators, shown as shaded regions and dashed lines, respectively.}
	\label{fig:single_conf}
\end{figure}

However, the regularity conditions for identifying and estimating the dose-response curve may not always hold in observational studies. In particular, a key assumption in this context is the following positivity condition, which ensures sufficient variability in treatment assignment across all strata of covariates.

\begin{assump}[Positivity]
	\label{assump:positivity}
	The conditional density $p(t|\bm{s})$ is bounded above and away from zero almost surely for all $t\in \mathcal{T}$ and $\bm{s}\in \mathcal{S}$.
\end{assump}

The positivity condition \eqref{assump:positivity} may be violated in observational studies for two main reasons: (i) theoretically, it is impossible for some individuals with specific covariate values to receive certain treatment levels, and (ii) practically, a finite data sample may fail to collect individuals exposed to certain treatment levels; see \cite{cole2008constructing,westreich2010invited,petersen2012diagnosing} for related discussions. These issues are particularly pervasive in the context of continuous treatments. For instance, air pollution levels may vary significantly across larger regions while remaining relatively consistent within smaller areas, resulting in individuals in the same location being exposed to only one level of pollution. This discrepancy, where spatial covariates change at a finer scale than the exposure measurement, leads to violations of the positivity condition \citep{paciorek2010importance,schnell2020,keller2020selecting}. As an illustrative example, consider a single confounder model:
\begin{equation}
	\label{single_conf}
	Y=T^2+T+1+10S+\epsilon, \quad T=\sin(\pi S) + E, \quad \text{ and } \quad S\sim \text{Uniform}[-1,1] \subset \mathbb{R},
\end{equation}
where $E\sim \text{Uniform}[-0.3,0.3]$ is an independent treatment variation and $\epsilon\sim \mathcal{N}(0,1)$ is an independent noise variable. Here, the marginal supports of $T$ and $S$ are $\mathcal{T}=[-1.3, 1.3]$ and $\mathcal{S}=[-1,1]$ respectively, but the joint support of $(T,S)$ covers only a narrow band within the product space $\mathcal{T}\times \mathcal{S}$; see the left panel of \autoref{fig:single_conf}. In this scenario, the conditional density $p(t|s)$ for any $s\in \mathcal{S}$ is 0 in the gray regions, violating the positivity condition \eqref{assump:positivity}. Without \eqref{assump:positivity}, existing approaches for estimating the dose-response curve $t\mapsto m(t)$ and its derivative $t \mapsto m'(t):=\theta(t)$ can be very unstable at certain treatment levels, as illustrated by the usual regression adjustment estimators (defined in \eqref{m_RA} and Remark~\ref{remark:theta_RA} below) in the middle and right panels of \autoref{fig:single_conf}.

To address the issues arising from violations of the positivity condition, we introduce a novel identification strategy for the dose-response curve $m(t)$ and its derivative $\theta(t)$ based on conditional expectations, thereby eliminating reliance on \eqref{assump:positivity}. Leveraging this strategy, we propose an integral estimator of $m(t)$ and a localized estimator of $\theta(t)$, both of which remain consistent even when the positivity condition \eqref{assump:positivity} fails in some regions of $\mathcal{T}\times \mathcal{S}$; see the middle and right panels of \autoref{fig:single_conf}. Our key contributions and the paper outline are as follows:

{\bf 1. Identification and Inference:} We illuminate identification conditions for $m(t)$ and $\theta(t)$ without requiring the positivity condition \eqref{assump:positivity} in \autoref{sec:prelim}. In \autoref{subsec:integral_est}, we introduce our integral estimator of $m(t)$, constructed by integrating a localized estimator of $\theta(t)$ around the observed data, enabling extrapolation to any treatment level $t\in \mathcal{T}$ of interest. Its practical utility is facilitated by Riemann sum approximations in \autoref{subsec:fast_algo} and reliable inference through nonparametric bootstrap in \autoref{subsec:bootstrap_infer}.

{\bf 2. Asymptotic Theory:} In \autoref{subsec:consistency_int_est}, we establish the uniform consistency of our integral and localized derivative estimators within the kernel smoothing framework. Furthermore, the validity of nonparametric bootstrap inference is rigorously proved in \autoref{subsec:bootstrap_validity}.

{\bf 3. Experiments:} We evaluate the finite-sample performances of our proposed estimators through simulations and a case study of the effects of fine particulate matter (PM$_{2.5}$) on cardiovascular mortality rates in \autoref{sec:experiments}. Code for all our experiments is publicly available at \url{https://github.com/zhangyk8/npDoseResponse/tree/main/Paper_Code}, together with both Python package ``\href{https://pypi.org/project/npDoseResponse/}{\texttt{npDoseResponse}}'' and R package \citep{zhang2024package}.

\subsection{Other Related Works}
\label{subsec:related_work}


Estimating the dose-response curve is technically challenging in causal inference because it is not pathwise differentiable and cannot be consistently estimated in a $\sqrt{n}$ rate \citep{chamberlain1986asymptotic,van1991differentiable}. Parametrically, \cite{robins2000marginal} pioneered a marginal structural model to estimate $m(t)$, later extended by \cite{laan2003unified,neugebauer2007nonparametric} to relax dependence on parametric model specification through projection methods. Nonparametrically, \cite{newey1994kernel} developed a regression adjustment approach using two-stage kernel smoothing, later adapted to dose-response curve estimation by \cite{flores2007estimation}. However, the consistency and asymptotic normality of their kernel-based estimators require undersmoothing bandwidths, which complicates practical bandwidth selection; see Section 5.7 in \cite{wasserman2006all}. On the contrary, our proposed integral estimator of $m(t)$, leveraging kernel-based derivative estimation, avoids the need for undersmoothing or explicit bias correction \citep{calonico2018effect,takatsu2022debiased}, allowing standard bandwidth selection methods in nonparametric regression. 


There are many existing works about the estimation of average derivative effects in the literature; see \cite{hardle1989investigating,powell1989semiparametric,newey1993efficiency,cattaneo2010robust,hirshberg2020debiased} and references therein. Under regularity conditions, average derivative effects are identical to the so-called incremental treatment effects, which are closely related to the derivatives $\theta(t)=m'(t)$ of dose-response curves; see Proposition 1 in \cite{rothenhausler2019incremental} and Section 6.1 in \cite{hines2023optimally}.


Most existing methods for estimating $m(t)$ and $\theta(t)$ assume the positivity condition \eqref{assump:positivity}. For binary treatments, prior research has examined the impact of positivity violations on various estimators from the perspectives of convergence rates \citep{khan2010irregular,d2021overlap} and empirical performances \citep{busso2014new,leger2022causal}. Approaches to address positivity violations include trimming extreme propensity score values \citep{dehejia1999causal,crump2009dealing,yang2018asymptotic,branson2023causal} and developing robust estimators \citep{rothe2017robust,ma2020robust}. More recently, \cite{kennedy2019nonparametric,rothenhausler2019incremental,schindl2024incremental} proposed incremental treatment effects as alternative estimands under positivity violations. To the best of our knowledge, no existing work provides nonparametric inference methods for $m(t)$ and $\theta(t)$ without assuming \eqref{assump:positivity}.

\subsection{Notations}
\label{subsec:notations}

Throughout the paper, we consider an outcome variable $Y\in \mathcal{Y}\subset \mathbb{R}$, univariate continuous treatment $T\in \mathcal{T}\subset \mathbb{R}$, and a vector of covariates $\bm{S}=(S_1,...,S_d)\in \mathcal{S}\subset \mathbb{R}^d$ with a fixed dimension $d$. The data sample consists of independent and identically distributed (i.i.d.) observations $\bm{U}_i=(Y_i,T_i,\bm{S}_i), i=1,...,n$ with the common distribution $\P$ and Lebesgue density $p(y,t,\bm{s}) = p(y|t,\bm{s})\cdot p(t|\bm{s}) \cdot p_S(\bm{s})$. Here, $p_T(t)$ and $p_S(\bm{s})$ are the marginal densities of $T$ and $\bm{S}$, respectively, and $p(t|\bm{s}) = \frac{\partial}{\partial t} \P\left(T\leq t|\bm{S} = \bm{s}\right)$ is the conditional density of $T$ given covariates $\bm{S}=\bm{s}$. We also denote the joint density of $(T,\bm{S})$ by $p(t,\bm{s}) = p(t|\bm{s})\cdot p_S(\bm{s}) = p(\bm{s}|t)\cdot p_T(t)$, whose support is $\mathcal{E}\subset \mathcal{T}\times \mathcal{S}$. Let $\mathbb{P}_n$ be the empirical measure so that $\mathbb{P}_n g = \int g(\bm{u}) \,d\mathbb{P}_n(\bm{u}) = \frac{1}{n}\sum_{i=1}^n g(\bm{U}_i)$ for some measurable function $g$. At the population level, we denote $\P g= \mathbb{E}\left[g(\bm{U})\right] = \int g(\bm{u}) \,d\P(\bm{u})$ for $\P$-integrable function $g$. In addition, we define the empirical process evaluated at a $\P$-integrable function $g$ as $\mathbb{G}_n(g) = \sqrt{n}\left(\mathbb{P}_n - \P\right) g =\frac{1}{\sqrt{n}} \sum_{i=1}^n \left[g(\bm{U}_i) - \mathbb{E}\left(g(\bm{U}_i)\right)\right]$. Finally, we use $\mathbbm{1}_A$ to denote the indicator function of a set $A$.

We use the big-$O$ notation $h_n=O(g_n)$ if $|h_n|$ is upper bounded by a positive constant multiple of $g_n >0$ when $n$ is sufficiently large. In contrast, $h_n=o(g_n)$ when $\lim_{n\to\infty} \frac{|h_n|}{g_n}=0$. For random variables, the notation $o_P(1)$ is short for a sequence of random variables converging to zero in probability, while the expression $O_P(1)$ denotes the sequence that is bounded in probability. We also use the notation $a_n\lesssim b_n$ or $b_n\gtrsim a_n$ when there exists an absolute constant $A>0$ such that $a_n\leq A b_n$ when $n$ is large. If $a_n\gtrsim b_n$ and $a_n \lesssim b_n$, then $a_n,b_n$ are asymptotically equal and it is denoted by $a_n\asymp b_n$.

\section{Basic Setup and Identification}
\label{sec:prelim}

Following the potential outcome framework \citep{rubin1974estimating}, we let $Y(t)$ denote the potential outcome that would have been observed under treatment level $T=t$. The causal effect for the continuous treatment variable (\emph{i.e.}, dose-response curve) can be written as $t\mapsto m(t)=\mathbb{E}\left[Y(t)\right]$, and its derivative effect curve is $t\mapsto \theta(t)=\frac{d}{dt}\mathbb{E}\left[Y(t)\right]$. We first state some basic identification assumptions.

\begin{assump}[Basic identification conditions]
	\label{assump:identify_cond}
	For any $t\in \mathcal{T}$, we have that
	\begin{enumerate}[label=(\alph*)]
		\item (Consistency) $T=t$ implies that $Y(t) = Y$.
		\item (Ignorability or unconfoundedness) $Y(t)$ is conditionally independent of $T$ given $\bm{S}$.
		\item (Treatment variation) The conditional variance of $T$ given any $\bm{S}=\bm{s}\in \mathcal{S}$ is strictly positive, \emph{i.e.}, $\mathrm{Var}(T|\bm{S} = \bm{s}) > 0$ for any $\bm{s}\in \mathcal{S}$.
	\end{enumerate}
\end{assump}

Assumption~\ref{assump:identify_cond}(a) is a continuous treatment version of the stable unit treatment value assumption (SUTVA; Page 19 of \citealt{cox1958planning} and \citealt{rubin1980randomization}), ensuring no interference between units and no multiple versions of treatments. When both the treatment $T$ and the outcome $Y$ are continuous, the required consistency condition is essentially the equality of conditional distributions $\P\left(Y(T) \leq y|T=t\right) = \P\left(Y(t)\leq y|T=t\right)$ for all $t\in \mathcal{T}$ and $y\in \mathcal{Y}$  \citep{gill2001causal}. Assumption~\ref{assump:identify_cond}(b), the ignorability condition, was first generalized to continuous treatments by \cite{hirano2004propensity} and implies that the potential outcome variable is independent of the treatment within any specific strata of covariates. In spatial confounding, this condition holds when spatial locations are adjusted for \citep{gilbert2023causal}. It further ensures that the mean potential outcome under $T=t$ is invariant across treatment levels, conditioned on $\bm{S}=\bm{s}$. Finally, Assumption~\ref{assump:identify_cond}(c), the treatment variation condition, is crucial for identifying the conditional mean outcome (or regression) function $\mu(t,\bm{s})=\E\left(Y|T=t,\bm{S}=\bm{s}\right)$ over a non-degenerate region of $\mathcal{T}\times \mathcal{S}$. We demonstrate how $\mathrm{Var}(T|\bm{S}=\bm{s})=0$ for some $\bm{s}\in \mathcal{S}$ can lead to ambiguous definitions of the associated dose-response curves in Example~\ref{example:id} of \autoref{app:np_bounds}. In cases where Assumption~\ref{assump:identify_cond}(c) fails, we also derive nonparametric bounds on $m(t)$ and its derivative $\theta(t)$ in \autoref{app:np_bounds} under an additive confounding model \eqref{add_conf_model} stated below.

Unfortunately, Assumption~\ref{assump:identify_cond} alone is insufficient to identify the dose-response curve $m(t)=\E\left[Y(t)\right]$ with observable data in the standard manner, such as through the covariate-adjusted regression function $\E\left[\mu(t,\bm{S})\right] = \int_{\mathcal{S}} \mu(t,\bm{s})\cdot p_S(\bm{s})\, d\bm{s}$. The key reason is that without the positivity condition \eqref{assump:positivity}, the conditional mean outcome function $\mu(t,\bm{s}) = \mathbb{E}\left(Y|T=t,\bm{S}=\bm{s}\right)$ is ill-defined in those regions of $\mathcal{T}\times \mathcal{S}$ outside the support $\mathcal{E}$ of the joint density $p(t,\bm{s})$. To resolve this identification issue, we introduce an additional assumption to identify the derivative curve $\theta(t)=\frac{d}{dt}\E\left[Y(t)\right]$, subsequently linking the identification of $m(t)$ to $\theta(t)$ via an integral formula.

\begin{assump}[Extrapolation condition]
	\label{assump:diff_inter}
	Suppose that at least one of the following assumptions are valid.
	\begin{enumerate}[label=(\alph*)]
		\item The function $\mathbb{E}\left[Y(t)|\bm{S}=\bm{s}\right]$ is continuously differentiable with respect to $t$ for any $(t,\bm{s})\in \mathcal{T}\times \mathcal{S}$ such that $p(\bm{s}|t)>0$, and the following equalities hold true:
		\begin{equation}
			\label{theta_equi}
			\theta(t) = \E\left[\frac{\partial}{\partial t} \mathbb{E}\left[Y(t)|\bm{S}\right]\right] = \E\left[\frac{\partial}{\partial t} \mathbb{E}\left[Y(t)|\bm{S}\right]\Big|T=t\right].
		\end{equation}
		
		\item The potential outcome $Y(t)$ is continuously differentiable with respect to $t$, and the following equalities hold true:
		\begin{equation}
			\label{theta_equi2}
			\theta(t) = \E\left[\mathbb{E}\left[\frac{\partial}{\partial t}Y(t)\Big|\bm{S}\right]\right] = \E\left[\mathbb{E}\left[\frac{\partial}{\partial t}Y(t) \Big|\bm{S}\right]\Big|T=t\right].
		\end{equation}
	\end{enumerate}
Additionally, it holds true that $\E(Y) = \E\left[m(T)\right]$.
\end{assump}

The first equalities of \eqref{theta_equi} and \eqref{theta_equi2} essentially reflect the interchangeability of expectation and partial derivative operations, which hold under mild conditions. In particular, these equalities are valid if $\left|\frac{\partial}{\partial t} \mathbb{E}\left[Y(t)|\bm{S}\right]\right|$ or $\left|\frac{\partial}{\partial t} Y(t)\right|$ are bounded by some integrable functions $\bar{\mu}_1(\bm{S}),\bar{\mu}_2(\bm{S})$ with respect to the distribution of $\bm{S}$, respectively; see Theorem 1.1 and Example 1.8 in \cite{shao2003mathematical}. Additionally, the smoothness condition on $Y(t)$ in Assumption~\ref{assump:diff_inter}(b) is not new, as it has been introduced in Assumption 3 of \cite{rothenhausler2019incremental} for identifying incremental effects. The key identification conditions in Assumption~\ref{assump:diff_inter} are captured by the second equalities of \eqref{theta_equi} and \eqref{theta_equi2}. As demonstrated by Proposition~\ref{prop:iden_theta}, they bridge the connection between the hypothetical potential outcome regime and the observable data distribution.

\begin{proposition}[Identification for $\theta(t)$]
\label{prop:iden_theta}
Under Assumption~\ref{assump:identify_cond}, we have that
\begin{equation}
	\label{theta_C}
	\E\left[\frac{\partial}{\partial t} \mathbb{E}\left[Y(t)|\bm{S}\right]\Big|T=t\right] = \E\left[\frac{\partial}{\partial t} \mathbb{E}\left(Y|T=t,\bm{S}\right)\Big|T=t\right] :=\theta_C(t),
\end{equation}
If, in addition, $\mathbb{E}\left[\left|\frac{\partial}{\partial t}Y(t)\right| \Big|\bm{S}=\bm{s}\right] < \infty$ for all $\bm{s}\in \mathcal{S}$, then 
\begin{equation}
	\label{theta_C2}
\E\left[\mathbb{E}\left[\frac{\partial}{\partial t}Y(t) \Big|\bm{S}\right]\Big|T=t\right] = \E\left[\frac{\partial}{\partial t} \mathbb{E}\left[Y(t)|\bm{S}\right]\Big|T=t\right]=\theta_C(t).
\end{equation}
\end{proposition}

\begin{proof}[Proof of Proposition~\ref{prop:iden_theta}]
The equality \eqref{theta_C} is a natural consequence of Assumption~\ref{assump:identify_cond} as:
$$\E\left[\frac{\partial}{\partial t} \mathbb{E}\left[Y(t)|\bm{S}\right]\Big|T=t\right] \stackrel{\text{(*)}}{=} \E\left[\frac{\partial}{\partial t} \mathbb{E}\left[Y(t)|T=t,\bm{S}\right]\Big|T=t\right] \stackrel{\text{(**)}}{=} \E\left[\frac{\partial}{\partial t} \mathbb{E}\left[Y|T=t,\bm{S}\right]\Big|T=t\right],$$
where (*) follows from Assumption~\ref{assump:identify_cond}(b) and (**) is due to Assumption~\ref{assump:identify_cond}(a).

In addition, if $\mathbb{E}\left[\left|\frac{\partial}{\partial t}Y(t)\right| \Big|\bm{S}=\bm{s}\right] < \infty$ for all $\bm{s}\in \mathcal{S}$, then the interchangeability of partial derivative $\frac{\partial}{\partial t}$ and conditional expectation $\E(\cdot|\bm{S})$ holds, and \eqref{theta_C2} follows.
\end{proof}

Under Assumption~\ref{assump:diff_inter} and Proposition~\ref{prop:iden_theta}, we can identify the derivative curve $\theta(t)$ through the form $\theta_C(t) = \E\left[\frac{\partial}{\partial t} \mu(t,\bm{S}) \Big| T=t\right]$, and we only require $\frac{\partial}{\partial t} \mu(t,\bm{s}) = \frac{\partial}{\partial t} \mathbb{E}\left(Y|T=t,\bm{S}=\bm{s}\right)$ to be well-defined whenever $p(\bm{s}|t)>0$ for any $t\in \mathcal{T}$. This serves as our key technique to bypass the restrictive positivity condition \eqref{assump:positivity} in the context of continuous treatments. 

Once $\theta(t)$ is identifiable from observable data under Assumptions~\ref{assump:identify_cond} and \ref{assump:diff_inter}, we can refer the identification and estimation back to the dose-response curve $m(t)$ as follows. For any $t\in \mathcal{T}$, the fundamental theorem of calculus reveals that 
$$m(t) = m(T) + \int_{\tilde{t}=T}^{\tilde{t}=t} m'(\tilde{t})\, d\tilde{t} = m(T)+ \int_{\tilde{t}=T}^{\tilde{t}=t} \theta(\tilde{t})\, d\tilde{t}.$$
Taking the expectation on both sides of the above equality under Assumption~\ref{assump:diff_inter} yields that
\begin{align}
	\label{integral_rel}
	\begin{split}
		m(t) = \E\left[m(T) + \int_{\tilde{t}=T}^{\tilde{t}=t} \theta(\tilde{t})\, d\tilde{t}\right] &= \E(Y) + \E\left\{\int_{\tilde{t}=T}^{\tilde{t}=t} \E\left[\frac{\partial}{\partial t}\mu(\tilde{t},\bm{S})\Big|T=\tilde{t}\right] \, d\tilde{t}\right\}.
	\end{split}
\end{align}
All the quantities on the right-hand of \eqref{integral_rel} are identifiable from the observed data $\left\{(Y_i,T_i,\bm{S}_i)\right\}_{i=1}^n$ even without the positivity condition \eqref{assump:positivity}; see \autoref{subsec:integral_est} for more details. To further demystify Assumption~\ref{assump:diff_inter}, we now present several structural equation modeling examples under which Assumption~\ref{assump:diff_inter} holds.

\subsection{Motivating Example: Additive Confounding Model}
\label{subsec:additive}

The key running example in this paper is the additive confounding model defined as:
\begin{align}
	\label{add_conf_model}
	\begin{split}
		Y(t) &= \bar{m}(t) + \eta(\bm{S}) +\epsilon,
	\end{split}
\end{align}
where $\bar{m}(t)$ is the primary treatment effect of interest, $\eta(\bm{S})$ is the random effect, and $\epsilon\in \mathbb{R}$ is an independent noise variable with $\E(\epsilon)=0$. 

Under model \eqref{add_conf_model} and Assumption~\ref{assump:identify_cond}(a), the outcome model becomes $Y=\bar{m}(T) + \eta(\bm{S})+\epsilon$. Such an additive form is a common working model in spatial confounding problems \citep{paciorek2010importance,schnell2020} and also known as the geoadditive structural equation model \citep{kammann2003geoadditive,thaden2018structural,wiecha2024two}, where $\bm{S}\in \mathbb{R}^d$ are the spatial locations (usually with $d=2$) or other spatially correlated covariates that affect both the treatment $T$ and the outcome $Y$. We summarize some key properties of the additive confounding model \eqref{add_conf_model} in the following proposition, whose proof is in \autoref{app:proof_prop1}. In particular, Proposition~\ref{prop:add_conf_prop}(c,d) verify Assumption~\ref{assump:diff_inter} under model \eqref{add_conf_model}.

\begin{proposition}[Properties of the additive confounding model]
	\label{prop:add_conf_prop}
	Let $\mu(t,\bm{s})=\mathbb{E}\left(Y|T=t,\bm{S}=\bm{s}\right)=\bar{m}(t)+\eta(\bm{s})$ under the additive confounding model \eqref{add_conf_model}. Assume that $\bar{m}(t)$ is differentiable for any $t\in \mathcal{T}$. The following results hold for any $t\in \mathcal{T}$ under Assumption~\ref{assump:identify_cond}:
	\begin{enumerate}[label=(\alph*)]
		\item $\mathbb{E}\left[\mu(t,\bm{S})\right] = \bar{m}(t)$ when $\E\left[\eta(\bm{S})\right]=0$.
		
		
		\item $\theta(t) \neq \frac{d}{d t} \E\left[\mu(t,\bm{S})|T=t\right]$ when $\frac{d}{dt} \E\left[\eta(\bm{S})|T=t\right] \neq 0$.
		
		\item $\theta(t) =\mathbb{E}\left[\frac{\partial}{\partial t} \mu(t,\bm{S})\right] = \mathbb{E}\left[\frac{\partial}{\partial t} \mu(t,\bm{S}) \big| T=t\right] = \theta_C(t)$.
		
		\item $\E(Y)=\mathbb{E}\left[\mu(T,\bm{S})\right] = \mathbb{E}\left[m(T)\right]$, where $m(t)=\E\left[Y(t)\right]$.
	\end{enumerate}
\end{proposition}

In addition to the additive confounding model \eqref{add_conf_model}, Assumption~\ref{assump:diff_inter} holds under the following more general model:
\begin{equation}
\label{add_conf_general}
Y(t)=\bar{m}(t) +\eta(\bm{S}_1,\bm{S}_2) + q(\bm{S}_1,t) \cdot \Phi(\bm{S}_2) + \epsilon,
\end{equation}
where the covariate vector is $\bm{S}=(\bm{S}_1,\bm{S}_2)^T \in \mathcal{S}_1\times \mathcal{S}_2\subset \mathbb{R}^d$, and $\eta:\mathcal{S}_1\times \mathcal{S}_2 \to \mathbb{R}, q:\mathcal{S}_1\times \mathcal{T}\to \mathbb{R}, \Phi: \mathcal{S}_2\to \mathbb{R}$ are deterministic functions. Furthermore, the random variable $\Phi(\bm{S}_2)$ needs to satisfy that
$$\E\left[\Phi(\bm{S}_2) \big|\bm{S}_1=\bm{s}_1,T=t\right] = \E\left[\Phi(\bm{S}_2) \big|\bm{S}_1=\bm{s}_1\right] = 0$$ 
for all $t\in \mathcal{T}$ and $\bm{s}_1\in \mathcal{S}_1$.

\section{Nonparametric Inference Without the Positivity Condition}
\label{sec:method}

Under the positivity condition \eqref{assump:positivity} and Assumption~\ref{assump:identify_cond}, the dose-response curve $m(t)$ can be identified through the covariate-adjusted regression function $\E\left[\mu(t,\bm{S})\right]$ from the observed data $\left\{(Y_i,T_i,\bm{S}_i)\right\}_{i=1}^n$, suggesting the following regression adjustment (RA) or G-computation estimator \citep{robins1986new,gill2001causal}:
\begin{equation}
	\label{m_RA}
	\hat{m}_{RA}(t)  = \frac{1}{n}\sum_{i=1}^n \hat{\mu}(t,\bm{S}_i),
\end{equation}
where $\hat{\mu}(t,\bm{s})$ is any consistent estimator of the conditional mean outcome function $\mu(t,\bm{s})=\E\left(Y|T=t,\bm{S}=\bm{s}\right)$. As discussed in \autoref{sec:prelim}, the above estimator \eqref{m_RA} becomes invalid and inconsistent for estimating $m(t)$ when the positivity condition \eqref{assump:positivity} fails in some region of $\mathcal{T}\times \mathcal{S}$.
Similarly, other existing methods for estimating $m(t)$, such as the inverse probability weighted (IPW) or its augmented variants, also rely on \eqref{assump:positivity} for their consistency and empirical performances \citep{diaz2013targeted,kennedy2017non,kallus2018policy,colangelo2020double}. The same issue incurred by the failure of \eqref{assump:positivity} applies to the estimation of $\theta(t) = m'(t)$.

In this section, utilizing our identification assumption \eqref{assump:diff_inter} and the integral formula \eqref{integral_rel}, we propose a novel integral estimator to address the inconsistency of RA estimators for $m(t)$ and $\theta(t)$ in the absence of \eqref{assump:positivity}. 
Furthermore, we outline a fast algorithm for computing our integral estimator of $m(t)$ in practice and delineate bootstrap inference procedures for both estimators of $\theta(t)$ and $m(t)$.

\subsection{Proposed Integral Estimator of $m(t)$}
\label{subsec:integral_est}

Given the observed data $\left\{(Y_i,T_i,\bm{S}_i)\right\}_{i=1}^n$, our estimation strategy for addressing violations the positivity condition \eqref{assump:positivity} relies on three critical insights.
\begin{itemize}
	\item {\bf Insight 1: Estimation of $\mu(t,\bm{s})$ and $\frac{\partial}{\partial t}\mu(t,\bm{s})$.} Since the observed data $(T_i,\bm{S}_i),i=1,...,n$ typically fall within high density regions of $p(t,\bm{s})$, the conditional mean outcome function $\mu(t,\bm{s})$ is well-defined and estimable at each observation $(T_i,\bm{S}_i)$, enabling consistent estimation of the partial derivative $\frac{\partial}{\partial t} \mu(t,\bm{s})$ at $(T_i,\bm{S}_i)$ as well.
	
	\item {\bf Insight 2: Estimation of $\theta(t)$ via $\theta_C(t)$.}
	The localized form $\theta_C(t) =\E\left[\frac{\partial}{\partial t}\mu(t,\bm{S})\Big|T=t\right]$ only requires accurate estimation of $\frac{\partial}{\partial t}\mu(t,\bm{s})$ at those covariate vectors $\bm{s}$ with high conditional density $p(\bm{s}|t)$.
	
	\item {\bf Insight 3: Integral Relation Between $\theta(t)$ and $m(t)$.}
	The integral formula \eqref{integral_rel} suggests that as long as we have consistent estimators of $\frac{\partial}{\partial t}\mu(t,\bm{s})$ at each $(T_i,\bm{S}_i)$ and $\theta(t)$ near $T_i$, we can extrapolate the estimation to $m(t)$ for any $t\in \mathcal{T}$ via integration even when $p(t,\bm{s})=0$ for some $\bm{s} \in \mathcal{S}$.
\end{itemize}

Building on these insights, we propose an \emph{integral estimator} of the dose-response curve $m(t)$ as:
\begin{equation}
	\label{simp_integral}
	\hat{m}_\theta(t) = \frac{1}{n}\sum_{i=1}^n \left[Y_i + \int_{\tilde{t}=T_i}^{\tilde{t}=t} \hat{\theta}_C(\tilde{t})\, d\tilde{t} \right],
\end{equation}
where $\hat{\theta}_C(t)$ is a consistent estimator of $\theta_C(t) = \E\left[\frac{\partial}{\partial t}\mu(t,\bm{S})\Big|T=t\right] = \int \frac{\partial}{\partial t} \mu(t,\bm{s})\, d\P(\bm{s}|t)$. The construction of $\hat{\theta}_C(t)$ requires estimation of two nuisance functions: 
\begin{enumerate}[label=(\roman*)]
	\item the partial derivative $\frac{\partial}{\partial t} \mu(t,\bm{s})$ of the conditional mean outcome function $\mu(t,\bm{s})$;
	
	\item the conditional cumulative distribution function (CDF) $P(\bm{s}|t)$.
\end{enumerate}
In this paper, we employ the following kernel smoothing methods for estimating these two nuisance functions and leave other possibilities for future exploration.

\subsubsection{Local Polynomial Regression Estimator of $\frac{\partial}{\partial t} \mu(t,\mathbf{s})$}
\label{subsec:partial_deriv_locpoly}

We consider estimating $\frac{\partial}{\partial t} \mu(t,\bm{s})$ by local polynomial regression \citep{fan1996local} due to its robustness around the boundary of support, known as the automatic kernel carpentry \citep{hastie1993local}. Let $K_T:\mathbb{R}\to [0,\infty),\, K_S:\mathbb{R}^d \to [0,\infty)$ be two symmetric kernel functions, and let $h,b>0$ be two smoothing bandwidth parameters. Some common univariate kernel functions include the Epanechnikov kernel $K(u) = \frac{3}{4}\left(1-u^2\right)\cdot \mathbbm{1}_{\{|u|\leq 1\}}$ and Gaussian kernel $K(u)=\frac{1}{\sqrt{2\pi}} \exp\left(-\frac{u^2}{2}\right)$. For a multivariate kernel function, we use the product kernel technique as $K_S(\bm{u}) = \prod_{i=1}^d K(u_i)$ for $\bm{u}\in \mathbb{R}^d$. 

To estimate $\frac{\partial}{\partial t} \mu(t,\bm{s})$ from the observed data $\{(Y_i,T_i,\bm{S}_i)\}_{i=1}^n$, we fit a partial local polynomial regression of order $q\geq 1$, with monomials in $\{(T_i-t)\}_{i=1}^n$ as the polynomial basis for the treatment variable $T$ and a local linear function for the covariate vector $\bm{S}$ \citep{ruppert1994multivariate}. Specifically, we define $\bm{X}_i(t,\bm{s}) = \left(1, (T_i-t),...,(T_i-t)^q, (S_{i,1}-s_1),...,(S_{i,d}-s_d) \right)^T\in\mathbb{R}^{q+1+d}$ for $i=1,...,n$ and consider 
\begin{equation}
		\label{localpoly2}
		\begin{aligned}
			&\left(\hat{\bm{\beta}}(t,\bm{s}), \hat{\bm{\alpha}}(t,\bm{s}) \right)^T \\
			&= \argmin_{(\bm{\beta},\bm{\alpha})^T \in \mathbb{R}^{q+1}\times \mathbb{R}^d} \sum_{i=1}^n \left[Y_i-\sum_{j=0}^q\beta_j (T_i-t)^q - \sum_{\ell=1}^d\alpha_{\ell}(S_{i,\ell}-s_{\ell})\right]^2 K_T\left(\frac{T_i-t}{h}\right)K_S\left(\frac{\bm{S}_i-\bm{s}}{b}\right)\\
			&= \argmin_{(\bm{\beta},\bm{\alpha})^T \in \mathbb{R}^{q+1}\times \mathbb{R}^d} \sum_{i=1}^n \left(Y_i-\bm{X}_i(t,\bm{s})^T\begin{pmatrix}
				\bm{\beta}\\
				\bm{\alpha}
			\end{pmatrix}\right)^2 K_T\left(\frac{T_i-t}{h}\right)K_S\left(\frac{\bm{S}_i-\bm{s}}{b}\right).
		\end{aligned}
\end{equation}
For simplicity, we use the same bandwidth parameter $b$ for each coordinate in $K_S$. One can generalize the above method and related theoretical results to a general bandwidth matrix for $K_S$ with little effort. Let $\bm{X}(t,\bm{s}) \in\mathbb{R}^{n\times (q+1+d)}$ be a matrix with the $j$-th row as $\bm{X}_j(t,\bm{s})^T$ and $\bm{Y} = (Y_1,..., Y_n)^T\in \mathbb{R}^n$. We define a diagonal weight matrix as:
$$\bm{W}(t,\bm{s})  = \Diag\left( K_T\left(\frac{T_1-t}{h}\right)K_S\left(\frac{\bm{S}_1-\bm{s}}{b}\right),...,
	K_T\left(\frac{T_n-t}{h}\right)K_S\left(\frac{\bm{S}_n-\bm{s}}{b}\right) \right) \in \mathbb{R}^{n\times n}.$$
Thus, \eqref{localpoly2} is a weighted least square problem and has a closed-form solution as:
\begin{equation}
	\label{localpoly3}
	\left(\hat{\bm{\beta}}(t,\bm{s}), \hat{\bm{\alpha}}(t,\bm{s}) \right)^T = \left[\bm{X}^T(t,\bm{s})\bm{W}(t,\bm{s}) \bm{X}(t,\bm{s})\right]^{-1} \bm{X}(t,\bm{s})^T\bm{W}(t,\bm{s}) \bm{Y}. 
\end{equation}
The second component $\hat{\beta}_2(t,\bm{s})$ of the fitted coefficient $\hat{\bm{\beta}}(t,\bm{s})$ provides a natural estimator of $\beta_2(t,\bm{s}) := \frac{\partial}{\partial t}\mu(t,\bm{s})$. In practice, we recommend choosing $q$ to be an even number in \eqref{localpoly2}, because there is an increment to the asymptotic variance of $\hat{\beta}_2(t,\bm{s})$ when $q$ changes from an even number to the consecutive odd number. Additionally, fitting (partial) local polynomial regressions of higher orders often give rise to a possible reduction of bias but also a substantial increase of the variability; see Chapter 3.3 in \cite{fan1996local}. We will mainly focus on (partial) local quadratic regression with $q=2$ when constructing our derivative estimator $\hat \beta_2(t,\bm{s})$ in the subsequent analysis.

\subsubsection{Kernel-Based Conditional CDF Estimator of $\P(\mathbf{s}|t)$}
\label{subsec:cond_CDF_est}

We consider estimating $\P(\bm{s}|t)$ through a Nadaraya-Watson conditional CDF estimator \citep{hall1999methods} defined as:
\begin{equation}
	\label{loc_CDF_est}
	\hat{P}_{\hslash}(\bm{s}|t) = \frac{\sum_{i=1}^n  \mathbbm{1}_{\{\bm{S}_i\leq \bm{s}\}} \cdot \bar{K}_T\left(\frac{T_i-t}{\hslash}\right)}{\sum_{j=1}^n \bar{K}_T\left(\frac{T_j-t}{\hslash}\right)},
\end{equation}
where $\bar{K}_T:\mathbb{R}\to [0,\infty)$ is again a kernel function and $\hslash>0$ is a smoothing bandwidth parameter that needs not be the same as the bandwidth $h>0$ used in estimating $\beta_2(t,\bm{s}) = \frac{\partial}{\partial t}\mu(t,\bm{s})$ via local polynomial regression. Practically, several strategies for selecting $\hslash>0$ in \eqref{loc_CDF_est} are discussed by \cite{bashtannyk2001bandwidth,holmes2012fast}.\\

Combining the partial derivative estimator $\hat{\beta}_2(t,\bm{s})$ from \eqref{localpoly3} with the conditional CDF estimator $\hat{P}_{\hslash}(\bm{s}|t)$ from \eqref{loc_CDF_est}, we obtain the final localized estimator of $\theta_C(t) = \int \frac{\partial}{\partial t} \mu(t,\bm{s})\, d\P(\bm{s}|t)$ as:
\begin{equation}
	\label{theta_C_est}
	\begin{aligned}
		\hat{\theta}_C(t) &= \int \hat{\beta}_2(t,\bm{s})\, d\hat{P}_{\hslash}(\bm{s}|t) = \frac{\sum_{i=1}^n \hat{\beta}_2(t,\bm{S}_i) \cdot \bar{K}_T\left(\frac{T_i-t}{\hslash}\right)}{\sum_{j=1}^n \bar{K}_T\left(\frac{T_j-t}{\hslash}\right)}.
	\end{aligned}
\end{equation}
In essence, $\hat{\theta}_C(t)$ is a regression adjustment (RA) estimator with two nuisance functions: $\beta_2(t,\bm{s}) = \frac{\partial}{\partial t} \mu(t,\bm{s})$ and $P(\bm{s}|t)$. We have showcased how to use the (partial) local polynomial regression \eqref{localpoly2} to estimate $\beta_2(t,\bm{s})$ and Nadaraya-Watson conditional CDF estimator \eqref{loc_CDF_est} to estimate $P(\bm{s}|t)$.

\begin{remark}[RA estimator of $\theta(t)$]
	\label{remark:theta_RA}
	Under the additive confounding model \eqref{add_conf_model}, one can directly estimate $\theta(t)$ via $\hat{\beta}_2(t,\bm{s})$ or the RA estimator $\hat{\theta}_{RA}(t)=\frac{1}{n}\sum_{i=1}^n \hat{\beta}_2(t,\bm{S}_i)$, because, as shown in Proposition~\ref{prop:add_conf_prop},
	$$\theta(t) = \frac{d}{dt} \mathbb{E}\left[\bar{m}(t) + \eta(\bm{S}) \right] = \bar{m}'(t) = \frac{\partial}{\partial t} \mu(t,\bm{s}) = \mathbb{E}\left[\frac{\partial}{\partial t}\mu(t,\bm{S})\right].$$
	However, the estimator $\hat{\beta}_2(t,\bm{s})$ is suboptimal, as it only uses information from a single location.
	Furthermore, $\hat{\theta}_{RA}(t)$ may not be stable and consistent when the positivity condition fails at certain $(t,\bm{S}_i) \in \mathcal{T}\times \mathcal{S}$; see \autoref{fig:single_conf} for an illustration. In contrast, our proposed estimator in \eqref{theta_C_est}, which relies on the conditional CDF estimator, remains valid and consistent even when the positivity condition \eqref{assump:positivity} does not hold.
\end{remark}

\begin{remark}[Linear smoother]
	It is worth noting that our integral estimator \eqref{simp_integral} under the kernel-based estimator \eqref{theta_C_est} is indeed a linear smoother. Let $\bm{e}_2 = (0,1,0,...,0)^T \in \mathbb{R}^{q+1+d}$. From \eqref{localpoly3} and $\hat{\beta}_2(t,\bm{s}) = \bm{e}_2^T \hat{\bm{\beta}}(t,\bm{s})$, we can express $\hat{\theta}_C(t)$ in \eqref{theta_C_est} as:
	$$\hat{\theta}_C(t) = \left\{\int \bm{e}_2^T\left[\bm{X}^T(t,\bm{s})\bm{W}(t,\bm{s}) \bm{X}(t,\bm{s})\right]^{-1} \bm{X}(t,\bm{s})^T\bm{W}(t,\bm{s}) \,d \hat{P}_{\hslash}(\bm{s}|t) \right\} \bm{Y}.$$
	This implies that
	\begin{align*}
		\hat m_\theta(t) &= \frac{1}{n}\sum_{i=1}^n Y_i +   \left\{\frac{1}{n}\sum_{i=1}^n \int_{\tilde{t}=T_i}^{\tilde{t}=t} \int \bm{e}_2^T\left[\bm{X}^T(t,\bm{s})\bm{W}(t,\bm{s}) \bm{X}(t,\bm{s})\right]^{-1} \bm{X}(t,\bm{s})^T\bm{W}(t,\bm{s}) \,d \hat{P}_{\hslash}(\bm{s}|t)\, d\tilde{t}\right\}\bm{Y}\\
		&\equiv \sum_{i=1}^n l_i(t) Y_i.
	\end{align*}
	As a consequence, we can apply the theory of linear smoothers to derive the effective degrees of freedom and fine-tune the smoothing bandwidth parameters \citep{buja1989linear,wasserman2006all}.
\end{remark} 

\subsection{Fast Computing Algorithm for the Proposed Integral Estimator}
\label{subsec:fast_algo}

Our proposed estimator $\hat{m}_{\theta}(t)=\frac{1}{n}\sum_{i=1}^n \left[Y_i + \int_{T_i}^t \hat{\theta}_C(\tilde{t})\, d\tilde{t} \right]$ of the dose-response curve $m(t)$ involves an integral that can be analytically difficult to compute in practice. Therefore, we propose an efficient numerical approximation method to evaluate $\hat{m}_{\theta}(t)$ with an error of order at most $O_P\left(\frac{1}{n}\right)$. The key idea is to approximate the integral via Riemann sum, evaluate $\hat{m}_{\theta}(t)$ only at the observed data points $T_1,...,T_n$, and use linear interpolation to estimate the value at any arbitrary point $t \in \mathcal{T}$.

Let $T_{(1)}\leq \cdots\leq T_{(n)}$ be the order statistics of $T_1,..., T_n$ and $\Delta_j = T_{(j+1)} - T_{(j)}$ for $j=1,..., n-1$ be the consecutive differences. The integral estimator $\hat m_\theta(t)$ in \eqref{simp_integral} can be rewritten as: 
$$\hat{m}_\theta(t) = \frac{1}{n}\sum_{i=1}^n Y_i +  \frac{1}{n}\sum_{i=1}^n \int_{\tilde{t}=T_{(i)}}^{\tilde{t}=t} \hat \theta_C(\tilde{t})\,d\tilde{t}.$$
To compute $\hat{m}_{\theta}(T_{(j)})$ at the $j$-th order statistic, we approximate the integral term $\frac{1}{n}\sum_{i=1}^n \int_{T_{(i)}}^{T_{(j)}} \hat \theta_C(\tilde{t})\, d\tilde{t}$ above as follows. 
\begin{itemize}
	\item When $i<j$, the integral is approximated by $\int_{\tilde{t}=T_{(i)}}^{\tilde{t}=T_{(j)}} \hat \theta_C(\tilde{t}) \,d\tilde{t} \approx  \sum_{\ell=i}^{j-1} \hat \theta_C(T_{(\ell)}) \Delta_{\ell}$.
	
	\item When $i>j$, the integral is approximated by $\int_{\tilde{t}=T_{(i)}}^{\tilde{t}=T_{(j)}} \hat \theta_C(\tilde{t}) \,d\tilde{t} \approx  -\sum_{\ell=j}^{i-1} \hat \theta_C(T_{(\ell+1)}) \Delta_{\ell}$.
\end{itemize}
Substituting the above results into $\frac{1}{n}\sum_{i=1}^n \int_{T_{(i)}}^{T_{(j)}} \hat \theta_C(\tilde{t})\, d\tilde{t}$ for $1< j<n$, we obtain that
\begin{align*}
	\frac{1}{n}\sum_{i=1}^n \int_{\tilde{t}=T_{(i)}}^{\tilde{t}=T_{(j)}} \hat \theta_C(\tilde{t}) \,d\tilde{t}
	& \approx \frac{1}{n}\left[\sum_{i=1}^{j-1}\sum_{\ell=i}^{j-1} \hat \theta_C(T_{(\ell)}) \Delta_{\ell}
	-  \sum_{i=j}^n\sum_{\ell=j}^{i-1}\hat \theta_C(T_{(\ell+1)}) \Delta_{\ell} \right]\\
	& \stackrel{\text{(i)}}{=} \frac{1}{n}\left[\sum_{\ell=1}^{j-1} \ell \cdot \hat \theta_C(T_{(\ell)}) \Delta_{\ell}
	- \sum_{\ell=j}^{n-1} (n-\ell)\cdot \hat \theta_C(T_{(\ell+1)}) \Delta_{\ell} \right]\\
	&  = \frac{1}{n}\sum_{i=1}^{n-1} \Delta_i \left[ i \cdot \hat \theta_C(T_{(i)}) \mathbbm{1}_{\{i<j\}} - (n-i)\cdot \hat \theta_C(T_{(i+1)}) \mathbbm{1}_{\{i\geq j\}} \right],
\end{align*}
where the equality (i) follows from switching the orders of summations. Similarly, when $j=1$ or $j=n$, we also have that
\begin{align*}
	\frac{1}{n}\sum_{i=1}^n \int_{\tilde{t}=T_{(i)}}^{\tilde{t}=T_{(j)}} \hat \theta_C(\tilde{t}) \,d\tilde{t} & \approx
	\begin{cases}
		-\frac{1}{n} \sum_{i=2}^n \sum_{\ell=1}^{i-1} \hat{\theta}_C(T_{(\ell+1)}) \Delta_{\ell} & \text{ when } j=1,\\
		\frac{1}{n} \sum_{i=1}^{n-1} \sum_{\ell=i}^{n-1} \hat{\theta}_C(T_{(\ell)}) \Delta_{\ell} & \text{ when } j=n,
	\end{cases}\\
	&= \begin{cases}
		-\frac{1}{n} \sum_{\ell=1}^{n-1} (n-\ell) \cdot \hat{\theta}_C(T_{(\ell+1)}) \Delta_{\ell} & \text{ when } j=1,\\
		\frac{1}{n} \sum_{\ell=1}^{n-1} \ell \cdot \hat{\theta}_C(T_{(\ell)}) \Delta_{\ell} & \text{ when } j=n,
	\end{cases}\\
	&= \frac{1}{n}\sum_{i=1}^{n-1} \Delta_i \left[ i \cdot \hat \theta_C(T_{(i)}) \mathbbm{1}_{\{i<j\}} - (n-i)\cdot \hat \theta_C(T_{(i+1)}) \mathbbm{1}_{\{i\geq j\}} \right].
\end{align*}
Thus, the final approximation for $\hat{m}_{\theta}(T_{(j)})$ is given by:
\begin{equation}
	\label{simp_integral_approx}
	\hat m_\theta(T_{(j)}) \approx \frac{1}{n}\sum_{i=1}^n Y_i +  \frac{1}{n}\sum_{i=1}^{n-1} \Delta_i \left[ i \cdot \hat \theta_C(T_{(i)}) \mathbbm{1}_{\{i<j\}} - (n-i)\cdot \hat \theta_C(T_{(i+1)}) \mathbbm{1}_{\{i\geq j\}} \right].
\end{equation}
To evaluate $\hat{m}_{\theta}(t)$ at any arbitrary point $t$, we use linear interpolation between $\hat{m}_{\theta}(T_{(j)})$ and $\hat{m}_{\theta}(T_{(j+1)})$ over the interval $t\in \left[T_{(j)}, T_{(j+1)}\right]$. The main advantage of using the approximation formula \eqref{simp_integral_approx} is that it requires computing the derivative estimator $\hat{\theta}_C(t)$ only at the order statistics $T_{(1)},..., T_{(n)}$.
When $\hat{m}_{\theta}(t)$ and its derivative $\hat{\theta}_C(t)$ are Lipschitz continuous and the marginal density $p_T(t)$ is uniformly bounded away from 0 within the region of interest, the error of this approximation formula is at most $O_P\left(\frac{1}{n}\right)$. This approximation error is asymptotically negligible compared to the dominating estimation error of $\hat{m}_{\theta}(t)$ itself; see \autoref{thm:int_est_rate} for further details.

\subsection{Bootstrap Inference}
\label{subsec:bootstrap_infer}

Since deriving consistent estimators of the (asymptotic) variances of our integral estimator \eqref{simp_integral} and localized derivative estimator \eqref{theta_C_est} is complicated, we propose conducting inference on $m(t)$ and $\theta(t)$ through the empirical bootstrap method \citep{efron1979bootstrap} as follows. 
\begin{enumerate}
	\item Compute the integral estimator $\hat{m}_{\theta}(t)$ and localized derivative estimator $\hat{\theta}_C(t)$ on the original data $\{(Y_i,T_i,\bm{S}_i)\}_{i=1}^n$.
	
	\item Generate $B$ bootstrap samples $\left\{\left(Y_i^{*(b)},T_i^{*(b)},\bm{S}_i^{*(b)}\right)\right\}_{i=1}^n, b=1,...,B$ by sampling with replacement from the original data and compute the integral estimator $\hat{m}_{\theta}^{*(b)}(t)$ and localized derivative estimator $\hat{\theta}_C^{*(b)}(t)$ on each bootstrapped sample for $b=1,...,B$.
	
	\item Let $\alpha \in (0,1)$ be a pre-specified significance level.
	\begin{itemize}
		\item For a pointwise inference at $t_0\in \mathcal{T}$, calculate the $1-\alpha$ quantiles $\zeta_{1-\alpha}^*(t_0)$ and $\bar{\zeta}_{1-\alpha}^*(t_0)$ of $\{D_1(t_0),...,D_B(t_0)\}$ and $\{\bar{D}_1(t_0),...,\bar{D}_B(t_0)\}$ respectively, where $D_b(t_0) = \left|\hat{m}_{\theta}^{*(b)}(t_0) - \hat{m}_{\theta}(t_0)\right|$ and $\bar{D}_b(t_0) = \left|\hat{\theta}_C^{*(b)}(t_0) - \hat{\theta}_C(t_0)\right|$ for $b=1,...,B$. 
		
		\item For an uniform inference on the entire dose-response curve $m(t)$ and its derivative $\theta(t)$, compute the $1-\alpha$ quantiles $\xi_{1-\alpha}^*$ and $\bar{\xi}_{1-\alpha}^*$ of $\{D_{\sup,1},...,D_{\sup,B}\}$ and $\{\bar{D}_{\sup,1},...,\bar{D}_{\sup,B}\}$ respectively, where $D_{\sup,b} = \sup_{t\in \mathcal{T}}\left|\hat{m}_{\theta}^{*(b)}(t) - \hat{m}_{\theta}(t)\right|$ and $\bar{D}_{\sup,b} = \sup_{t\in \mathcal{T}}\left|\hat{\theta}_C^{*(b)}(t) - \hat{\theta}_C(t)\right|$ for $b=1,...,B$.
	\end{itemize}
	
	\item Define the $1-\alpha$ confidence intervals for $m(t_0)$ and $\theta(t_0)$ as:
	$$\left[\hat{m}_{\theta}(t_0) - \zeta_{1-\alpha}^*(t_0),\, \hat{m}_{\theta}(t_0) + \zeta_{1-\alpha}^*(t_0)\right] \; \text{ and } \; \left[\hat{\theta}_C(t_0) - \bar{\zeta}_{1-\alpha}^*(t_0),\, \hat{\theta}_C(t_0) + \bar{\zeta}_{1-\alpha}^*(t_0)\right],$$ 
	respectively, as well as the simultaneous $1-\alpha$ confidence bands as:
	$$\left[\hat{m}_{\theta}(t) - \xi_{1-\alpha}^*,\, \hat{m}_{\theta}(t) + \xi_{1-\alpha}^*\right] \quad \text{ and } \quad \left[\hat{\theta}_C(t) - \bar{\xi}_{1-\alpha}^*,\, \hat{\theta}_C(t) + \bar{\xi}_{1-\alpha}^*\right]$$ 
	for every $t\in \mathcal{T}$.
\end{enumerate}
In \autoref{subsec:bootstrap_validity}, we will establish the consistency of the above bootstrap inference procedures.

\section{Asymptotic Theory}
\label{sec:theory}

In this section, we study the consistency results of our integral estimator \eqref{simp_integral} and localized derivative estimator \eqref{theta_C_est}, along with the validity of bootstrap inference described in \autoref{subsec:bootstrap_infer}.

\subsection{Notations and Assumptions}

We introduce the regularity conditions for our subsequent theoretical analysis. Recall that $\mathcal{E} \subset \mathcal{T}\times \mathcal{S}$ is the support of the joint density $p(t,\bm{s})$, $\mathcal{E}^{\circ}$ is the interior of $\mathcal{E}$, and $\partial\mathcal{E}$ is the boundary of $\mathcal{E}$.

\begin{assump}[Differentiability of the conditional mean outcome function]
	\label{assump:reg_diff}\noindent
	\begin{enumerate}[label=(\alph*)]
		\item For any $(t,\bm{s}) \in \mathcal{E}^{\circ}$, the conditional mean outcome function $\mu(t,\bm{s})$ is at least $(q+1)$ times continuously differentiable with respect to $t$ and at least four times continuously differentiable with respect to $\bm{s}$, where $q$ is the order of (partial) local polynomial regression in \eqref{localpoly2}.
		
		\item All these partial derivatives of $\mu(t,\bm{s})$ are continuous up to the boundary $\partial \mathcal{E}$.
		
		\item $\mu(t,\bm{s})$ and the partial derivatives are uniformly bounded on $\mathcal{E}$.
		
		\item There exist absolute constants $\sigma,A_0 >0$ such that $\mathrm{Var}(Y|T=t,\bm{S}=\bm{s}) = \sigma^2$ and $\E|Y|^4 < A_0<\infty$ uniformly in $\mathcal{E}$.
	\end{enumerate}
\end{assump}

\begin{assump}[Differentiability of the joint density]
	\label{assump:den_diff}\noindent
	\begin{enumerate}[label=(\alph*)]
		\item The joint density $p(t,\bm{s})$ is bounded and at least twice continuously differentiable with bounded partial derivatives up to the second order on $\mathcal{E}^{\circ}$.
		
		\item All these partial derivatives of $p(t,\bm{s})$ are continuous up to the boundary $\partial \mathcal{E}$.
		
		\item $\mathcal{E}$ is compact and $p(t,\bm{s})$ is uniformly bounded away from 0 on $\mathcal{E}$.
		
		\item The marginal density $p_T(t)$ of $T$ is non-degenerate, \emph{i.e.}, its support $\mathcal{T}$ has a nonempty interior.
	\end{enumerate}
\end{assump}

Assumptions~\ref{assump:reg_diff} and \ref{assump:den_diff} are standard differentiability conditions in local polynomial regression literature \citep{ruppert1994multivariate,fan1996local}, which can be slightly relaxed by the H\"older continuity condition. These conditions ensure that the bias term of $\hat{\beta}_2(t,\bm{s})$ from the local polynomial regression \eqref{localpoly3} converges to 0 of the standard order $O(h^q) + O(b^2)$; see Lemma~\ref{lem:loc_poly_deriv_unif} below. Notice that the homogeneous error condition in Assumption~\ref{assump:reg_diff}(d) is imposed only for simplifying the proof. Our asymptotic theories hold for heterogeneous errors that are uniformly bounded away from 0 in $\mathcal{E}$. Additionally, the projection $\mathrm{proj}_T(\mathcal{E})$ of the joint density support $\mathcal{E}$ onto the domain of $T$ coincides with the marginal support $\mathcal{T}$. Hence, $\mathcal{T}$ is compact under Assumption~\ref{assump:den_diff}. 

To mitigate the boundary effects of the local polynomial regression, we impose the following conditions on $p(t,\bm{s})$, $\mu(t,\bm{s})$, and the geometric structure of $\mathcal{E}$ near its boundary $\partial \mathcal{E}$. 

\begin{assump}[Boundary conditions]
	\label{assump:boundary_cond}\noindent
	\begin{enumerate}[label=(\alph*)]	
		\item There exists some constants $r_1,r_2 \in (0,1)$ such that for any $(t,\bm{s}) \in \mathcal{E}$ and all $\delta\in (0,r_1]$, there is a point $(t',\bm{s}') \in \mathcal{E}$ satisfying 
		$$\mathcal{B}\left((t',\bm{s}'),\, r_2\delta\right) \subset \mathcal{B}\left((t,\bm{s}),\, \delta\right) \cap \mathcal{E},$$
		where $\mathcal{B}((t,\bm{s}),\, r)=\left\{(t_1,\bm{s}_1) \in \mathbb{R}^{d+1}: \norm{(t_1-t,\bm{s}_1-\bm{s})}_2 \leq r \right\}$ with $\norm{\cdot}_2$ being the standard Euclidean norm.
		
		\item For any $(t,\bm{s}) \in \partial \mathcal{E}$, the boundary of $\mathcal{E}$, it satisfies that $\frac{\partial}{\partial t} p(t,\bm{s}) = \frac{\partial}{\partial s_j} p(t,\bm{s})=0$ and $\frac{\partial^2}{\partial s_j^2} \mu(t,\bm{s}) = 0$ for all $j=1,...,d$.
		
		\item For any $\delta >0$, the Lebesgue measure of the set $\partial \mathcal{E} \oplus \delta$ satisfies $\left|\partial \mathcal{E} \oplus \delta \right| \leq A_1\cdot \delta$ for some absolute constant $A_1>0$, where $\partial \mathcal{E} \oplus \delta = \left\{\bm{z}\in \mathbb{R}^{d+1}: \inf_{\bm{x}\in \partial \mathcal{E}} \norm{\bm{z}-\bm{x}}_2 \leq \delta\right\}$.
	\end{enumerate}
\end{assump}

Assumption~\ref{assump:boundary_cond}(a) is adopted from the boundary condition (Assumption X) in \cite{fan2015multivariate}, whose primitive and stronger version also appeared as Assumption (A4) in \cite{ruppert1994multivariate}. This is a relatively mild condition that holds when the boundary $\partial\mathcal{E}$ is smooth or contains non-smooth vertices from regular structures, such as $d+1$ dimensional cubes or convex cones. Indeed, Assumption~\ref{assump:boundary_cond}(a) remains valid if any ball inside $\mathcal{E}$ that centers near a vertex of $\partial \mathcal{E}$ has its radius shrunk linearly when approaching the vertex. More examples and discussions about Assumption~\ref{assump:boundary_cond}(a) are available in \cite{ruppert1994multivariate,fan2015multivariate}. The main purpose of this support condition is to ensure sufficient observations near the boundary so that the rate of convergence for our local polynomial estimator $\hat{\beta}_2(t,\bm{s})$ at boundary points matches that at interior points of $\mathcal{E}$. Assumption~\ref{assump:boundary_cond}(b) also regularizes the slope of $p(t,\bm{s})$ and the curvature of $\mu(t,\bm{s})$ at any boundary point $(t,\bm{s}) \in \partial \mathcal{E}$, requiring $p(t,\bm{s})$ to be flat and $\mu(t,\bm{s})$ to exhibit zero curvature at $\partial\mathcal{E}$. Similar to Assumption~\ref{assump:den_diff}, the partial derivatives $\frac{\partial}{\partial t} p(t,\bm{s}), \frac{\partial}{\partial s_j} p(t,\bm{s})$ are computed as limits from nearby interior points. This condition is also essential for maintaining the rate of convergence for the bias term of $\hat{\beta}_2(t,\bm{s})$ at boundary points. Finally, Assumption~\ref{assump:boundary_cond}(c) ensures that $\partial \mathcal{E}$ does not exhibit fractal-like or irregular structures that lead to an infinite perimeter.

To establish the (uniform) consistency of our local polynomial regression estimator \eqref{localpoly3} and Nadaraya-Watson conditional CDF estimator \eqref{loc_CDF_est}, we impose the following regularity conditions on the kernel functions.

\begin{assump}[Regular kernel and VC-type conditions]
	\label{assump:reg_kernel}\noindent
	\begin{enumerate}[label=(\alph*)]
		\item The functions $K_T:\mathbb{R} \to [0,\infty)$ and $K_S:\mathbb{R}^d \to [0,\infty)$ are compactly supported and Lispchitz continuous kernels such that $\int_{\mathbb{R}} K_T(t)\,dt = \int_{\mathbb{R}^d} K_S(\bm{s}) \, d\bm{s}=1$, $K_T(t)=K_T(-t)$, and $K_S$ is radially symmetric with $\int \bm{s}\cdot K_S(\bm{s}) d\bm{s}=\bm{0}$. In addition, for all $j=1,2,...,$ and $\ell=1,...,d$, it holds that 
		\begin{align*}
			&\kappa_j^{(T)} := \int_{\mathbb{R}} u^j K_T(u) \, du < \infty,\quad \nu_j^{(T)} := \int_{\mathbb{R}} u^j K_T^2(u) \, du < \infty, \\ &\kappa_{j,\ell}^{(S)} := \int_{\mathbb{R}^d} u_{\ell}^j K_S(\bm{u})\, d\bm{u} < \infty, \quad \text{ and } \quad \nu_{j,k}^{(S)} := \int_{\mathbb{R}^d} u_{\ell}^j K_S^2(\bm{u})\, d\bm{u} < \infty.
		\end{align*}
		Finally, both $K_T$ and $K_S$ are second-order kernels, \emph{i.e.}, $\kappa_2^{(T)} >0$ and $\kappa_{2,\ell}^{(S)} >0$ for all $\ell=1,...,d$.
		
		\item Let $\mathcal{K}_{q,d} = \Big\{(y,\bm{z})\mapsto \left(\frac{y-t}{h}\right)^{\ell} \left(\frac{z_i-s_i}{b}\right)^{k_1} \left(\frac{z_j-s_j}{b}\right)^{k_2} K_T\left(\frac{y-t}{h}\right) K_S\left(\frac{\bm{z}-\bm{s}}{b}\right): (t,\bm{s})\in \mathcal{T}\times \mathcal{S}; i,j=1,...,d;\ell=0,...,2q; k_1,k_2=0,1; h, b>0\Big\}$. It holds that $\mathcal{K}_{q,d}$ is a bounded VC-type class of measurable functions on $\mathbb{R}^{d+1}$.
		
		\item The function $\bar{K}_T:\mathbb{R}\to [0,\infty)$ is a second-order, Lipschitz continuous, and symmetric kernel with a compact support, \emph{i.e.}, $\int_{\mathbb{R}} \bar{K}_T(t)\, dt=1$, $\bar{K}_T(t)=\bar{K}_T(-t)$, and $\int_{\mathbb{R}} t^2 \bar{K}_T(t) \, dt \in (0,\infty)$. 
		
		\item Let $\bar{\mathcal{K}} = \left\{y \mapsto \bar{K}_T\left(\frac{y-t}{\hslash}\right): t\in \mathcal{T}, \hslash >0 \right\}$. It holds that $\bar{\mathcal{K}}$ is a bounded VC-type class of measurable functions on $\mathbb{R}$.
	\end{enumerate}
\end{assump}

Assumption~\ref{assump:reg_kernel}(a,c) describe inherent properties of commonly used kernel functions rather than being regularity conditions. Assumption~\ref{assump:reg_kernel}(b,d) are critical for the uniform consistency of kernel-based estimators \citep{gine2002rates,einmahl2005uniform} and can be satisfied by many kernel functions, such as Gaussian and Epanechnikov kernels. For example, these assumptions hold when the kernel is a composite function of a polynomial in $d\geq 1$ variables and a real-valued function of bounded variation; see Lemma 22 in \cite{nolan1987u}. Since most of the kernel functions are bounded, we can always take constant envelope functions for $\mathcal{K}_{q,d}$ and $\bar{\mathcal{K}}$ in Assumption~\ref{assump:reg_kernel}(b,d).

\subsection{Consistency of the Integral Estimator}
\label{subsec:consistency_int_est}

Before establishing the consistency of our proposed integral estimator \eqref{simp_integral} and localized derivative estimator \eqref{theta_C_est}, we first study the uniform consistency of $\hat{\beta}_2(t,\bm{s})$ from the local polynomial regression \eqref{localpoly3} as a building block. The pointwise consistency of $\hat{\beta}_2(t,\bm{s})$ is detailed in \autoref{app:pointwise_loc_poly} instead.

\begin{lemma}[Uniform convergence of $\hat{\beta}_2(t,\bm{s})$]
	\label{lem:loc_poly_deriv_unif}
	Let $q>0$. Suppose that Assumptions~\ref{assump:reg_diff}, \ref{assump:den_diff}, \ref{assump:boundary_cond}, and \ref{assump:reg_kernel}(a,b) hold. Let $\hat \beta_2(t,\bm{s})$ be the second element of $\hat{\bm{\beta}}(t,\bm{s}) \in \mathbb{R}^{q+1}$ and $\beta_2(t,\bm{s}) = \frac{\partial}{\partial t} \mu(t,\bm{s})$. Then, as $h,b,\frac{\max\{h,b\}^4}{h}\to 0$ and $\frac{nh^3 b^d}{|\log(hb^d)|},\frac{|\log(hb^d)|}{\log\log n} \to \infty$, we have that 
	$$\sup_{(t,\bm{s})\in \mathcal{E}}\left|\hat{\beta}_2(t,\bm{s}) - \beta_2(t,\bm{s})\right| =
	O\left(h^q + b^2 + \frac{\max\{b,h\}^4}{h}\right) + O_P\left(\sqrt{\frac{|\log(hb^d)|}{nh^3 b^d}}\right).$$
\end{lemma}

The proof of Lemma~\ref{lem:loc_poly_deriv_unif} is in \autoref{app:proof_beta_2_unif}. At a high level, our rates of convergence for $\hat{\beta}_2(t,\bm{s})$ in Lemma~\ref{lem:loc_poly_deriv_unif} are consistent with standard results in local polynomial regression literature \citep{ruppert1994multivariate,fan1996local,lu1996multivariate}. The extra bias rate, $O\left(\frac{b^4}{h}\right)$ or $O\left(\frac{\max\{h,b\}^4}{h}\right)$, arises from a higher-order term in Taylor's expansion and becomes asymptotically negligible when $h\gtrsim b$ as $n\to \infty$; see \eqref{taylor_reg} and \eqref{bias_term1} in \autoref{app:proof_beta_2} for example. The stochastic variation term $\hat{\beta}_2(t,\bm{s}) - \mathbb{E}\left[\hat{\beta}_2(t,\bm{s})\right]$ can be approximated (up to a scaled factor $\sqrt{nh^3b^d}$) by an empirical process, leading to the upper bound of its uniform rate of convergence as in \cite{einmahl2005uniform}. Notably, when using the (partial) local quadratic regression ($q=2$) to obtain $\hat{\beta}_2(t,\bm{s})$, the bandwidths $h\asymp b$ that optimally trades off the bias and variance in Lemma~\ref{lem:loc_poly_deriv_unif} are of order $O\left(n^{-\frac{1}{d+7}}\right)$, resulting in the optimal rate for derivative estimation established in \cite{stone1980optimal,stone1982optimal}. 

The uniform rate of convergence in Lemma~\ref{lem:loc_poly_deriv_unif} is useful not only for studying the asymptotic behaviors of our integral estimator \eqref{simp_integral} and localized derivative estimator \eqref{theta_C_est} in \autoref{thm:int_est_rate} but also for analyzing the bootstrap consistency in \autoref{subsec:bootstrap_validity}. With Lemma~\ref{lem:loc_poly_deriv_unif}, we now present the uniform consistency results for $\hat{m}_{\theta}(t)$ in \eqref{simp_integral} and $\hat{\theta}_C(t)$ in \eqref{theta_C_est}. 

\begin{theorem}[Convergence of $\hat{\theta}_C(t)$ and $\hat{m}_{\theta}(t)$]
	\label{thm:int_est_rate}
	Let $q>0$ and $\mathcal{T}' \subset \mathcal{T}$ be a compact set so that $p_T(t)$ is uniformly bounded away from 0 within $\mathcal{T}'$. Suppose that Assumptions~\ref{assump:identify_cond}, \ref{assump:diff_inter}, \ref{assump:reg_diff}, \ref{assump:den_diff}, \ref{assump:boundary_cond}, and \ref{assump:reg_kernel} hold. Then, as $h,b,\hslash,\frac{\max\{b,h\}^4}{h}\to 0$, and $\frac{n\max\{h,\hslash\}b^d}{|\log(\hslash hb^d)|}, \frac{|\log(h\hslash b^d)|}{\log\log n}, \frac{nh^3}{|\log h|}\to \infty$, we know that
	\begin{align*}
		\sup_{t\in \mathcal{T}'} \left|\hat{\theta}_C(t) - \theta(t)\right| &= O\left(h^q + b^2 + \frac{\max\{b,h\}^4}{h}\right) + O_P\left(\sqrt{\frac{|\log h|}{nh^3}} + \hslash^2 + \sqrt{\frac{|\log \hslash|}{n\hslash}}\right)
	\end{align*}
	and
	\begin{align*}
		\sup_{t\in \mathcal{T}'}\left|\hat m_\theta(t) - m(t) \right| &= O_P\left(\frac{1}{\sqrt{n}}\right) + O\left(h^q + b^2 + \frac{\max\{b,h\}^4}{h}\right) + O_P\left(\sqrt{\frac{|\log h|}{nh^3}} + \hslash^2 + \sqrt{\frac{|\log \hslash|}{n\hslash}}\right).
	\end{align*}
\end{theorem}

The proof of \autoref{thm:int_est_rate} is in \autoref{app:sim_integral_proof}. We restrict the uniform consistency results in \autoref{thm:int_est_rate} to a compact set $\mathcal{T}'\subset \mathcal{T}$ to avoid issues with the density decay of $p_T(t)$ near the boundary of $\mathcal{T}$. If $p_T(t)$ is uniformly bounded away from 0 in its support $\mathcal{T}$, we can take $\mathcal{T}' = \mathcal{T}$. 

The (uniform) rate of convergence of $\hat{m}_{\theta}(t)$ consists of two parts. The first part $O_P\left(\frac{1}{\sqrt{n}}\right)$ arises from the sample average of $Y_i,i=1,...,n$ and is asymptotically negligible. The second dominant part is due to the integral component:
\begin{equation}
	\label{integral_part_si}
	\hat \Delta_{h,b}(t) =\frac{1}{n}\sum_{i=1}^n \int_{\tilde{t}=T_i}^{\tilde{t}=t} \hat \theta_C(\tilde{t})d\tilde{t},
\end{equation}
whose rate of convergence is determined by $\hat{\theta}_C(t)$. Furthermore, the (uniform) rate of convergence of $\hat{\theta}_C(t)$ interweaves the rate $O\left(h^q + b^2 + \frac{\max\{b,h\}^4}{h}\right) + O_P\left(\sqrt{\frac{|\log h|}{nh^3}}\right)$ for estimating a one-dimensional derivative-like quantity $\mathbb{E}\left[\frac{\partial}{\partial t} \mu(t,\bm{S}) \big| T=t\right]$ with the rate $O(\hslash^2) + O_P\left(\sqrt{\frac{|\log \hslash|}{n\hslash}}\right)$ for estimating the conditional CDF $\P(\bm{s}|t)$. Similar rates of convergence for the additive model \eqref{add_conf_model} have also been obtained by \cite{fan1998direct}.

\subsection{Validity of the Bootstrap Inference}
\label{subsec:bootstrap_validity}

Before proving the consistency of our bootstrap procedure in \autoref{subsec:bootstrap_infer}, we derive the asymptotic linearity of our integral estimator \eqref{simp_integral} and localized derivative estimator \eqref{theta_C_est} as intermediate results.

\begin{lemma}[Asymptotic linearity]
	\label{lem:asymp_linear}
	Let $q\geq 2$ in the local polynomial regression for estimating $\frac{\partial}{\partial t}\mu(t,\bm{s})$ and $\mathcal{T}' \subset \mathcal{T}$ be a compact set so that $p_T(t)$ is uniformly bounded away from 0 within $\mathcal{T}'$. Suppose that Assumptions~\ref{assump:identify_cond}, \ref{assump:diff_inter}, \ref{assump:reg_diff}, \ref{assump:den_diff}, \ref{assump:boundary_cond}, and \ref{assump:reg_kernel} hold. Then, if $b\lesssim h\asymp n^{-\frac{1}{\gamma}}$ and $\hslash \asymp n^{-\frac{1}{\varpi}}$ for some $\gamma, \varpi >0$ such that $\frac{nh^5}{\log n} \to c_1$ and $\frac{n\hslash^5}{\log n} \to c_2$ for some finite number $c_1,c_2 \geq 0$ and $\sqrt{nh^3 \max\{h, \hslash\}^4}, \frac{h^3\log n}{\hslash},\frac{\log n}{\sqrt{n\hslash}}, \frac{\log n}{n\max\{h,\hslash\} b^d} \to 0$ as $n\to \infty$, then for any $t\in \mathcal{T}'$, we have that
	$$\sqrt{nh^3} \left[\hat{\theta}_C(t) -\theta(t) \right] = \mathbb{G}_n \bar{\varphi}_t + o_P(1) \quad \text{ and } \quad \sqrt{nh^3} \left[\hat{m}_{\theta}(t) -m(t) \right] = \mathbb{G}_n \varphi_t + o_P(1),$$
	where $\bar{\varphi}_t(Y, T, \bm{S}) = \frac{C_{K_T}[Y -\mu(T,\bm{S})]}{\sqrt{h}\cdot p_T(t)} \left(\frac{T-t}{h}\right) K_T\left(\frac{T-t}{h}\right)$ for some constant $C_{K_T}>0$ that only depends on the kernel $K_T$ and 
	\begin{align}
		\label{influ_func}
		\begin{split}
			\varphi_t\left(Y,T,\bm{S}\right) &= \mathbb{E}_{T_1}\left[\int_{T_1}^t \bar{\varphi}_{\tilde{t}}(Y, T, \bm{S}) \, d\tilde{t}\right] = \mathbb{E}_{T_1}\left\{\int_{T_1}^t \frac{C_{K_T} \left[Y-\mu(T,\bm{S})\right]}{\sqrt{h}\cdot p_T(\tilde{t})} \left(\frac{T-\tilde{t}}{h}\right) K_T\left(\frac{T-\tilde{t}}{h}\right) d\tilde{t}\right\}.
		\end{split}
	\end{align}
	Furthermore, we have the following uniform results as:
	\begin{align*}
		&\left|\sqrt{nh^3}\sup_{t\in \mathcal{T}'}\left|\hat{\theta}_C(t) -\theta(t)\right| - \sup_{t\in \mathcal{T}'} |\mathbb{G}_n\bar{\varphi}_t|\right| \asymp \left|\sqrt{nh^3} \sup_{t\in \mathcal{T}'}\left|\hat{m}_{\theta}(t) -m(t)\right| - \sup_{t\in \mathcal{T}'}|\mathbb{G}_n\varphi_t|\right| \\
		&= O_P\left(\sqrt{nh^3 \max\{h, \hslash\}^4} + \sqrt{\frac{h^3\log n}{\hslash}} + \frac{\log n}{\sqrt{n\hslash}} + \sqrt{\frac{\log n}{nb^d\hslash}}\right).
	\end{align*}
\end{lemma}

The proof of Lemma~\ref{lem:asymp_linear} is in \autoref{app:asymp_linear_proof}, and we make two remarks to this crucial lemma.

\begin{remark}[Non-degeneracy and validity of pointwise confidence intervals]
	We prove in Lemma~\ref{lem:influc_func_moment_bnd} of \autoref{app:asymp_linear_proof} that the variances $\text{Var}\left[\varphi_t(Y,T,\bm{S})\right]$ and $\text{Var}\left[\bar{\varphi}_t(Y,T,\bm{S})\right]$ of the influence functions in \eqref{influ_func} are positive for each $t\in \mathcal{T}'$, provided that $\mathrm{Var}(Y|T=t,\bm{S}=\bm{s})$ is finite and uniformly bounded away from 0 for all $(t,\bm{s})\in \mathcal{E}$. This ensures that the asymptotic linearity results in Lemma~\ref{lem:asymp_linear} are non-degenerate. Similarly, the bootstrap estimates $\hat{m}_{\theta}^*(t)$ and $\hat{\theta}_C^*(t)$ are also asymptotically linear given the observed data $\left\{(Y_i,T_i,\bm{S}_i)\right\}_{i=1}^n$; see the proof of \autoref{thm:bootstrap_cons} in \autoref{app:boot_consistency}. This indicates that the pointwise bootstrap confidence intervals for $m(t)$ and $\theta(t)$ in \autoref{subsec:bootstrap_infer} are asymptotically valid; see Lemma 23.3 in \cite{VDV1998} and related results in \cite{arcones1992bootstrap,tang2024consistency}.
\end{remark}

\begin{remark}[Bandwidth selection and the curse of dimensionality]
	\label{remark:bw_sel}
	To ensure that the remainder terms of the asymptotically linear forms in Lemma~\ref{lem:asymp_linear} are of order $o_P(1)$, we can choose the bandwidths $h\asymp \hslash$ to be of order $O\left(n^{-\frac{1}{5}}\right)$, consistent with typical bandwidth selection methods for nonparametric regression \citep{wasserman2006all, schindler2011bandwidth}. Thus, we can tune $h,\hslash$ using standard bandwidth selectors for estimating the regression function $\mu(t,\bm{s})$ and conditional CDF $p(\bm{s}|t)$ without any explicit undersmoothing. For the bandwidth parameter $b$ related to the confounding variables in the local polynomial regression, it must satisfy $\left(\frac{\log n}{n}\right)^{\frac{4}{5d}}\lesssim b \lesssim n^{-\frac{1}{5}}$ to maintain fast and dimensionally independent rates of convergence for our integral and derivative estimators \eqref{simp_integral} and \eqref{theta_C_est}. This constraint implies that the dimension of confounding variables has to be $d\leq 4$ and reveals that our nonparametric inference procedures for $m(t)$ and $\theta_C(t)$ suffer from the curse of dimensionality. However, when $d\geq 5$, the consistency and asymptotic properties of our estimators $\hat{m}_{\theta}(t)$ and $\hat{\theta}_C(t)$ remain valid, though their rates of convergence become slower and dimensionally dependent.
\end{remark}

Apart from the above asymptotic linearity results, we establish couplings between $\sqrt{nh^3}\cdot \sup_{t\in \mathcal{T}'}\left|\hat{m}_{\theta}(t) -m(t)\right|$ and $\sup_{t\in \mathcal{T}'}\left|\mathbb{G}_n(\varphi_t)\right|$ (and similarly, for $\hat{\theta}_C(t)$) in Lemma~\ref{lem:asymp_linear}. These couplings are crucial for deriving the Gaussian approximations of $\hat{m}{\theta}(t)$ and $\hat{\theta}_C(t)$ in \autoref{thm:gauss_approx_si} and for establishing their bootstrap consistency. Now, consider two function classes
\begin{equation}
	\label{func_class}
	\mathcal{F} = \left\{(v,x,\bm{z}) \mapsto \varphi_t(v,x,\bm{z}): t\in \mathcal{T}'\right\} \; \text{ and } \; \mathcal{F}_{\theta} = \left\{(v,x,\bm{z}) \mapsto \bar{\varphi}_t(v,x,\bm{z}): t\in \mathcal{T}'\right\}
\end{equation} 
with $\varphi_t,\bar{\varphi}_t$ defined in \eqref{influ_func} respectively. Let $\mathbb{B}, \bar{\mathbb{B}}$ be two Gaussian processes indexed by $\mathcal{F}$ and $\mathcal{F}_{\theta}$ respectively with zero means and covariance functions
\begin{align*}
	&\mathrm{Cov}(\mathbb{B}(f_1), \mathbb{B}(f_2)) = \mathbb{E}\left[f_1(\bm{U}) \cdot f_2(\bm{U})\right] \; \text{ and } \; \mathrm{Cov}(\bar{\mathbb{B}}(g_1), \bar{\mathbb{B}}(g_2)) = \mathbb{E}\left[g_1(\bm{U}) \cdot g_2(\bm{U})\right]
\end{align*}
with $\bm{U}=(Y,T,\bm{S})$ for any $f_1,f_2 \in \mathcal{F}$ and $g_1,g_2\in \mathcal{F}_{\theta}$.

\begin{theorem}[Gaussian approximation]
	\label{thm:gauss_approx_si}
	Let $q\geq 2$ in the local polynomial regression for estimating $\frac{\partial}{\partial t}\mu(t,\bm{s})$ and $\mathcal{T}' \subset \mathcal{T}$ be a compact set so that $p_T(t)$ is uniformly bounded away from 0 within $\mathcal{T}'$. Suppose that Assumptions~\ref{assump:identify_cond}, \ref{assump:diff_inter}, \ref{assump:reg_diff}, \ref{assump:den_diff}, \ref{assump:boundary_cond}, and \ref{assump:reg_kernel} hold. If $b\lesssim h\asymp n^{-\frac{1}{\gamma}}$ and $\hslash \asymp n^{-\frac{1}{\varpi}}$ for some $\gamma\geq \varpi >0$ such that $\frac{nh^5}{\log n} \to c_1$ and $\frac{n\hslash^5}{\log n} \to c_2$ for some finite number $c_1,c_2 \geq 0$ and $\frac{\hslash}{h^3\log n}, \hslash n^{\frac{1}{3}}\log n, \frac{\sqrt{n\hslash}}{\log n}, \frac{n\max\{h,\hslash\} b^d}{\log n} \to \infty$ as $n\to \infty$, then there exist Gaussian processes $\mathbb{B}, \bar{\mathbb{B}}$ such that 
	\begin{align*}
		&\sup_{u\geq 0} \left|\P\left(\sqrt{nh^3}\cdot \sup_{t\in \mathcal{T}'}\left|\hat{m}_{\theta}(t) -m(t)\right| \leq u\right) - \P\left(\sup_{f\in \mathcal{F}} |\mathbb{B}(f)| \leq u\right) \right|\\
		& \asymp \sup_{u\geq 0} \left|\P\left(\sqrt{nh^3}\cdot \sup_{t\in \mathcal{T}'}\left|\hat{\theta}_C(t) -\theta(t)\right| \leq u\right) - \P\left(\sup_{g\in \mathcal{F}_{\theta}} |\bar{\mathbb{B}}(g)| \leq u\right) \right| \\
		&= O\left(\left(\frac{\log^5 n}{nh^3}\right)^{\frac{1}{8}} + \left(\frac{\log^2 n}{nb^d\hslash}\right)^{\frac{3}{8}}\right),
	\end{align*}
	where $\mathcal{F},\mathcal{F}_{\theta}$ are defined in \eqref{func_class}.
\end{theorem}

The proof of \autoref{thm:gauss_approx_si} is in \autoref{app:gauss_approx_proof}. It demonstrates that the distributions of $\sqrt{nh^3}\cdot \sup_{t\in \mathcal{T}'}\left|\hat{m}_{\theta}(t) -m(t)\right|$ and $\sqrt{nh^3}\cdot \sup_{t\in \mathcal{T}'}\left|\hat{\theta}_C(t) -\theta(t)\right|$ can be asymptotically approximated by the suprema of two separate Gaussian processes respectively. To establish the consistency of our bootstrap inference procedure in \autoref{subsec:bootstrap_infer}, it remains to show that the bootstrap versions of the differences $\sqrt{nh^3}\cdot \sup_{t\in \mathcal{T}'}\left|\hat{m}_{\theta}^*(t) -\hat{m}_{\theta}(t)\right|$ and $\sqrt{nh^3}\cdot \sup_{t\in \mathcal{T}'}\left|\hat{\theta}_C^*(t) -\hat{\theta}_C(t)\right|$, conditional on the observed data $\bm{U}_n=\left\{(Y_i,T_i,\bm{S}_i)\right\}_{i=1}^n$, can also be asymptotically approximated by the suprema of the same Gaussian processes, which are summarized in the following theorem.

\begin{theorem}[Bootstrap consistency]
	\label{thm:bootstrap_cons}
	Let  $\mathbb{U}_n=\left\{(Y_i,T_i,\bm{S}_i)\right\}_{i=1}^n$ be the observed data. Under the same setup of \autoref{thm:gauss_approx_si}, we have that
	\begin{align*}
		&\sup_{u\geq 0} \left|\P\left(\sqrt{nh^3} \cdot \sup_{t\in \mathcal{T}'}\left|\hat{m}_{\theta}^*(t) -\hat{m}_{\theta}(t)\right| \leq u \Big| \mathbb{U}_n\right) - \P\left(\sup_{f\in \mathcal{F}} |\mathbb{B}(f)| \leq u\right) \right| \\
		&\asymp \sup_{u\geq 0} \left|\P\left(\sqrt{nh^3}\cdot \sup_{t\in \mathcal{T}'}\left|\hat{\theta}_C^*(t) -\hat{\theta}_C(t)\right| \leq u \Big| \mathbb{U}_n\right) - \P\left(\sup_{g\in \mathcal{F}_{\theta}} |\bar{\mathbb{B}}(g)| \leq u\right) \right| \\
		&= O_P\left(\left(\frac{\log^5 n}{nh^3}\right)^{\frac{1}{8}} + \left(\frac{\log^2 n}{nb^d\hslash}\right)^{\frac{3}{8}}\right),
	\end{align*}
	where $\hat{m}_{\theta}^*(t)$ and $\hat{\theta}_C^*(t)$ are the integral estimator \eqref{simp_integral} and localized derivative estimator \eqref{theta_C_est} based on a bootstrap sample $\mathbb{U}_n^*=\left\{\left(Y_i^*,T_i^*,\bm{S}_i^*\right)\right\}_{i=1}^n$.
\end{theorem}

The proof of \autoref{thm:bootstrap_cons} is in \autoref{app:boot_consistency}. These results, together with \autoref{thm:gauss_approx_si}, imply the asymptotic validity of bootstrap uniform confidence bands in \autoref{subsec:bootstrap_infer}; see Corollary~\ref{cor:boot_ci} below with its proof in \autoref{app:conf_band}.

\begin{corollary}[Uniform confidence band]
	\label{cor:boot_ci}
	Under the setup of \autoref{thm:bootstrap_cons}, we have that
	\begin{align*}
			&\P\left(\theta(t) \in \left[\hat{\theta}_C(t) - \bar{\xi}_{1-\alpha}^*, \hat{\theta}_C(t) + \bar{\xi}_{1-\alpha}^*\right] \text{ for all } t\in \mathcal{T}'\right) = 1-\alpha + O\left(\left(\frac{\log^5 n}{nh^3}\right)^{\frac{1}{8}} + \left(\frac{\log^2 n}{nb^d\hslash}\right)^{\frac{3}{8}}\right), \\
			&\P\left(m(t) \in \left[\hat{m}_{\theta}(t) - \xi_{1-\alpha}^*, \hat{m}_{\theta}(t) + \xi_{1-\alpha}^*\right] \text{ for all } t\in \mathcal{T}'\right) = 1-\alpha + O\left(\left(\frac{\log^5 n}{nh^3}\right)^{\frac{1}{8}} + \left(\frac{\log^2 n}{nb^d\hslash}\right)^{\frac{3}{8}}\right).
	\end{align*}
\end{corollary}

\section{Experiments}
\label{sec:experiments}

In this section, we evaluate the finite-sample performances of our proposed integral estimator \eqref{simp_integral} and localized derivative estimator \eqref{theta_C_est} through simulation studies. Additionally, we apply these estimators to analyze the causal effects of fine particulate matter on the cardiovascular mortality rate in a case study.

\subsection{Parameter Setup}
\label{subsec:para_setup}

In all experiments, we use the Epanechnikov kernel for $K_T$ and $K_S$ (with the product kernel technique) in the (partial) local quadratic regression \eqref{localpoly3}. To select the default bandwidth parameters $h,b >0$, we implement the rule-of-thumb method outlined in Appendix A of \cite{yang1999multivariate} as:
$$h_{ROT} = C_h\left[\frac{\left(\nu_0^{(T)}\right)^2 \hat{R} \left(T_{(n)} - T_{(1)}\right)}{4\left(\kappa_2^{(T)}\right)^2 \hat{C}_{\mu} \cdot n}\right]^{\frac{1}{5}}, \quad \bm{b}_{ROT} = C_b \left[\frac{d\left(\nu_{0,1}^{(S)}\right)^{2d} \hat{R} \left(\bm{S}_{(n)} - \bm{S}_{(1)}\right)}{4\left(\kappa_{2,1}^{(S)}\right)^2 \hat{C}_{\mu}}\right]^{-\frac{1}{d+5}} n^{-\frac{1}{d+1}},$$
where $\nu_0^{(T)}, \nu_{0,1}^{(S)}, \kappa_2^{(T)}, \kappa_{2,1}^{(S)} >0$ are defined in Assumption~\ref{assump:reg_kernel}(a), $\hat{R} = \frac{1}{n-5}\sum_{i=1}^n \left[Y - \hat{m}(T_i,\bm{S}_i)\right]^2$, with $\hat{m}(t,\bm{s}) = \hat{\beta}_1(t,\bm{s})$ estimated using global (partial) polynomial regression \eqref{localpoly3} with $q=4$ and $K_T=K_S\equiv 1$. The differences $\bm{S}_{(n)} - \bm{S}_{(1)}$ are coordinatewise maxima and minima that approximate the integration ranges, and $\hat{C}_{\mu,T},\hat{C}_{\mu,\bm{S}}$ are the estimated density-weighted curvatures (or second-order partial derivatives) of $\mu$ obtained via coordinatewise global fourth-order polynomials averaged over the observed data $\left\{(T_i,\bm{S}_i),i=1,...,n\right\}$; see also Section 4.3 in \cite{garcia2023notes}. To account for variability in covariate scales, we use a bandwidth vector $\bm{b}_{ROT}\in [0,\infty)^d$ rather than a scalar $b>0$. In addition, unless stated otherwise, we set the scaling constants to be $C_h=10, C_b=15$ for all simulations. Finally, for fair comparisons between our proposed estimators and the naive RA estimators $\hat{m}_{RA}(t)$ in \eqref{m_RA} and $\hat{\theta}_{RA}(t)$ in Remark~\ref{remark:theta_RA}, we estimate $\hat{\mu}(t,\bm{s})$ and $\hat{\beta}_2(t,\bm{s})$ using (partial) local quadratic regression \eqref{localpoly3} with the same bandwidth parameters $h,b$.

For the conditional CDF estimator \eqref{loc_CDF_est}, we leverage the Gaussian kernel for $\bar{K}_T$ and set its default bandwidth parameter $\hslash$ to the normal reference rule in \cite{chacon2011asymptotics,chen2016comprehensive} as $\hslash_{NR} = \left(\frac{4}{3n}\right)^{\frac{1}{5}} \hat{\sigma}_T$, where $\hat{\sigma}_T$ is the sample standard deviation of $\{T_i,i=1,...,n\}$. This rule assumes a normal distribution for $p_T(t)$ and optimizes the asymptotic bias-variance trade-off, which may oversmooth the conditional CDF estimator \citep{sheather2004density}. Nevertheless, since our proposed estimators of $m(t)$ and $\theta(t)$ are relatively insensitive to the choice of $\hslash$, and the order of $\hslash_{NR}$ aligns with our theoretical requirement (see Remark~\ref{remark:bw_sel}), we adopt $\hslash_{NR}$ in the subsequent analysis. For the bootstrap inference, we set a resampling time of $B=1000$ and a significance level of $\alpha = 0.05$, targeting 95\% confidence intervals and uniform bands for inferring the true dose-response curve $m(t)$ and its derivative $\theta(t)$.

\subsection{Simulation Studies} 

We consider three different model settings under the additive confounding model \eqref{add_conf_model} for our simulation studies. For each model setting, we generate i.i.d. observations $(Y_i,T_i,\bm{S}_i),i=1,...,n$.

$\bullet$ {\bf Single Confounder Model:} The data-generating model is already described in \eqref{single_conf}.

\begin{figure}[t]
	\centering
	\includegraphics[width=1\linewidth]{./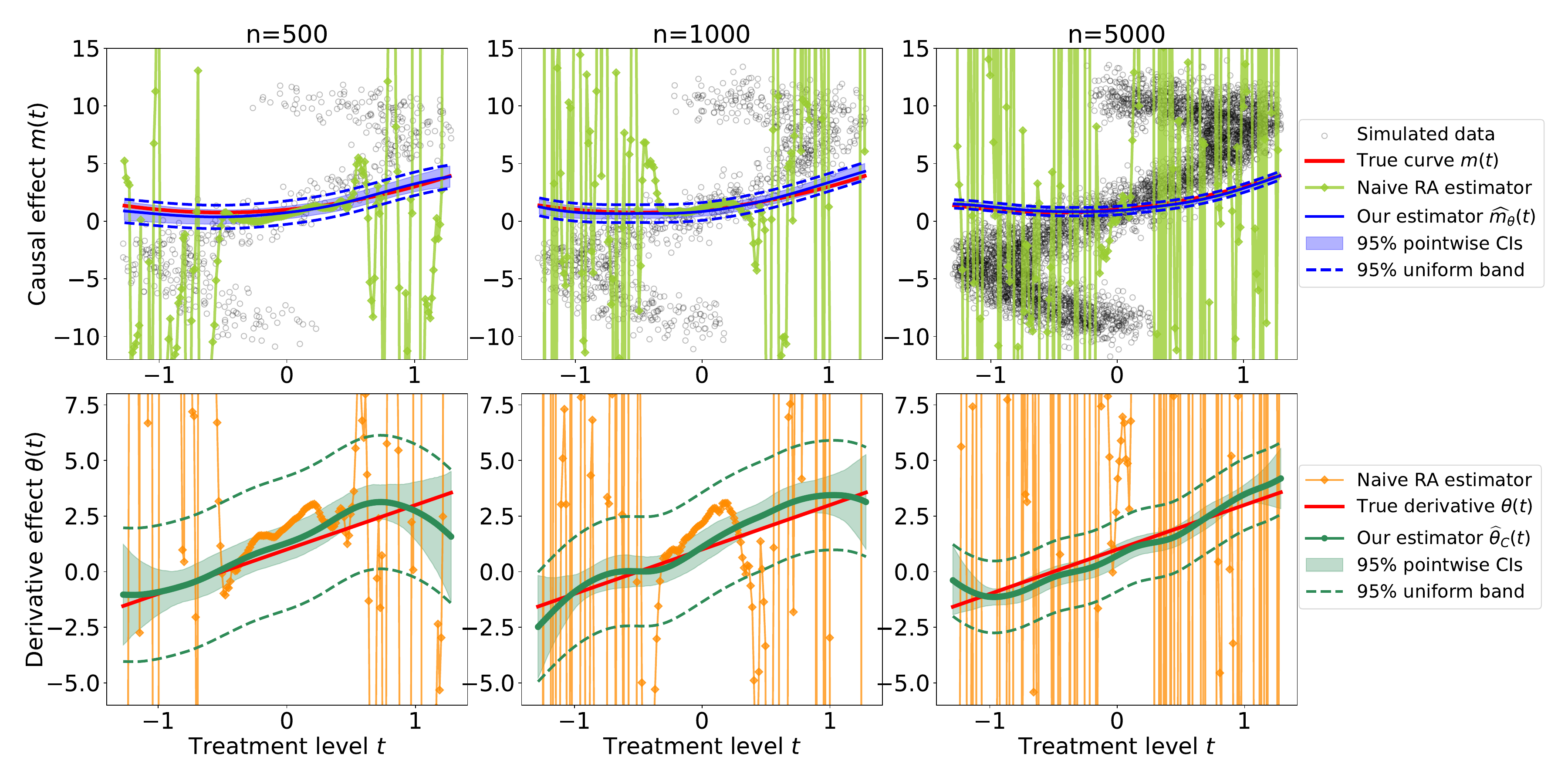}
	\caption{Comparisons between our proposed estimators and the naive RA estimators across various sample sizes under the single confounder model \eqref{single_conf}. Rows present results for estimating $m(t)$ and $\theta(t)$ respectively, while columns correspond to different values for $n$.}
	\label{fig:single_conf_full}
\end{figure}

$\bullet$ {\bf Linear Confounding Model:} We consider the following linear effect model with
\begin{equation}
	\label{linear_conf}
	Y = T+ 6S_1 + 6S_2 + \epsilon, \quad T = 2S_1+S_2+E, \quad \text{and}\quad  \bm{S}=(S_1,S_2) \sim \text{Uniform}[-1,1]^2\subset \mathbb{R}^2,
\end{equation}
where $E\sim \text{Uniform}[-0.5,0.5]$ and $\epsilon\sim \mathcal{N}(0,1)$ are two independent variables. The marginal supports of $T$ and $\bm{S}$ are $\mathcal{T}=[-3.5,3.5]$ and $\mathcal{S}=[-1,1]^2$ respectively, while the support of conditional density $p(t|\bm{s})$ for any $\bm{s}\in \mathcal{S}$ is much narrower than $\mathcal{T}$.

\begin{figure}[t]
	\centering
	\includegraphics[width=1\linewidth]{./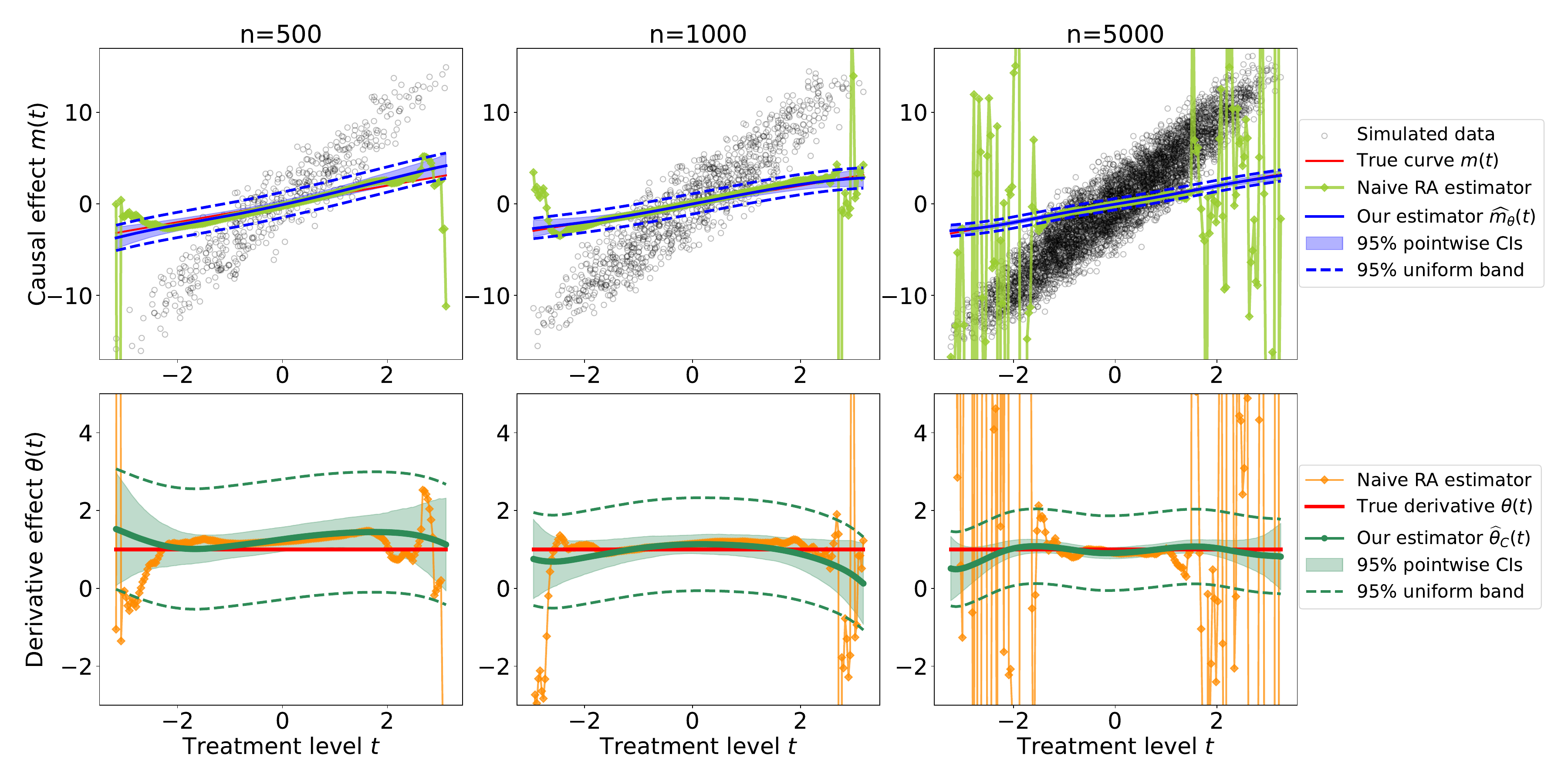}
	\caption{Comparisons between our proposed estimators and the naive RA estimators across various sample sizes under the linear confounding model \eqref{linear_conf}. Rows present results for estimating $m(t)$ and $\theta(t)$ respectively, while columns correspond to different values for $n$.}
	\label{fig:linear_conf}
\end{figure}


$\bullet$ {\bf Nonlinear Confounding Model:} We consider the following nonlinear effect model with
\begin{align}
	\label{nonlinear_conf}
	\begin{split}
		&Y = T^2+T+ 10Z + \epsilon, \quad T = \cos\left(\pi Z^3\right)+ \frac{Z}{4}+E, \quad Z=4S_1+S_2,\\ 
		&\quad \text{ and } \quad \bm{S}=(S_1,S_2) \sim \text{Uniform}[-1,1]^2\subset \mathbb{R}^2,
	\end{split}
\end{align}
where $E\sim \text{Uniform}[-0.1,0.1]$ and $\epsilon\sim \mathcal{N}(0,1)$. Due to the nonlinear confounding effect $\cos(\pi Z)$ and limited treatment variation $E$, the positivity condition \eqref{assump:positivity} is violated on the generated data from \eqref{nonlinear_conf}. Notably, the debiased estimator of $m(t)$ proposed by \cite{takatsu2022debiased} (and similarly, \citealt{kennedy2017non}) fails to recover $m(t)$ under this data model, because some of their pseudo-outcomes have invalid values caused by near-zero estimated conditional densities of $p(t|\bm{s})$.

\begin{figure}[t]
	\centering
	\includegraphics[width=1\linewidth]{./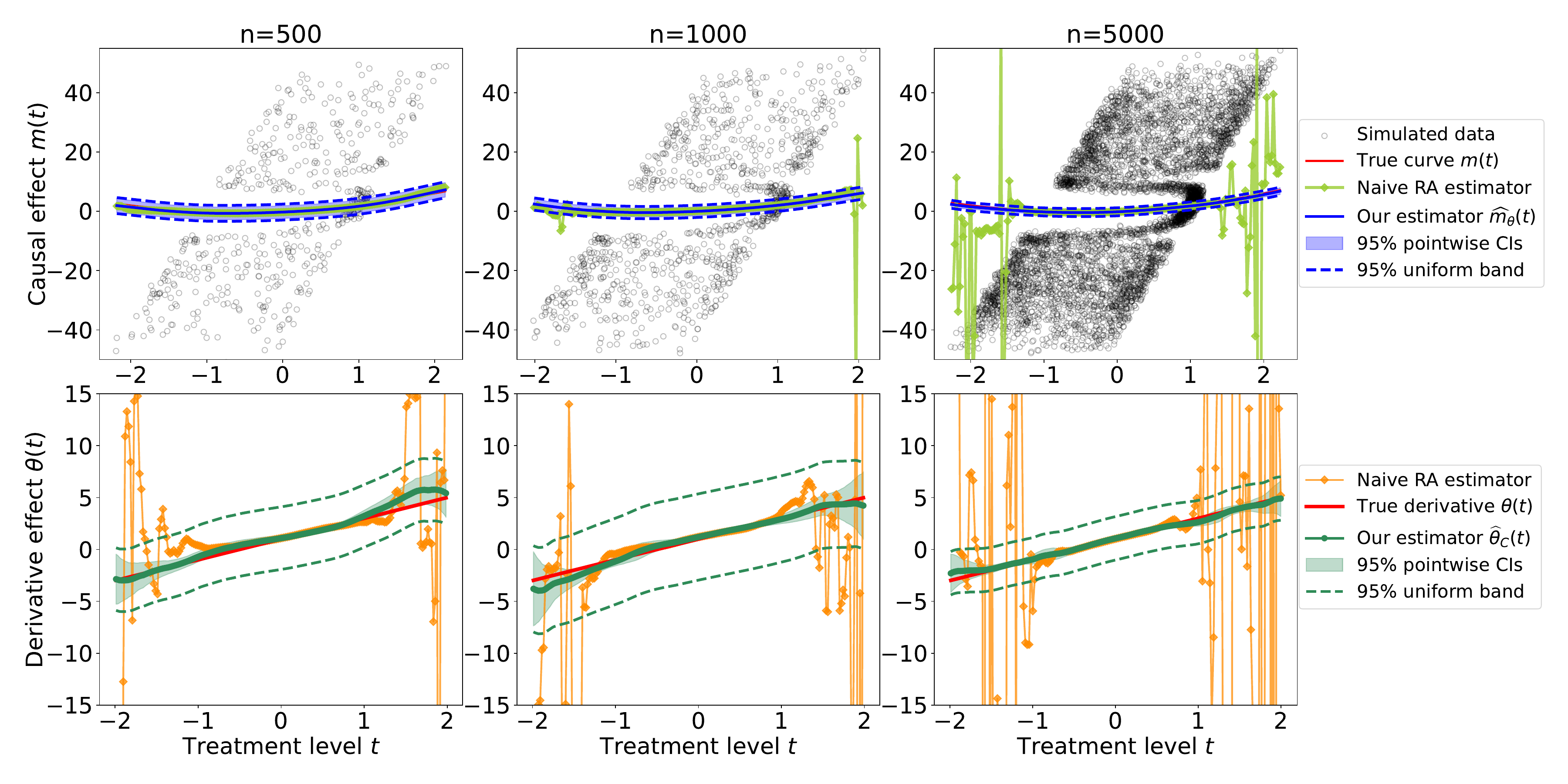}
	\caption{Comparisons between our proposed estimators and the naive RA estimators across various sample sizes under the linear confounding model \eqref{nonlinear_conf}. Rows present results for estimating $m(t)$ and $\theta(t)$ respectively, while columns correspond to different values for $n$.}
	\label{fig:nonlin_conf}
\end{figure}


In the presence of confounding variables, regressing $Y_i$'s to $T_i$'s in the generated data from each model only leads to biased estimates of $m(t)$. We apply our proposed integral estimator $\hat{m}_{\theta}(t)$ and localized derivative estimator $\hat{\theta}_C(t)$ with nonparametric bootstrap inference in \autoref{subsec:bootstrap_infer} to the data generated by each model above with varying sample sizes $n$. We also compare our proposed estimators to the naive RA estimators in \eqref{m_RA} and Remark~\ref{remark:theta_RA}. The results are presented in \autoref{fig:single_conf_full}, \autoref{fig:linear_conf}, and \autoref{fig:nonlin_conf}. As sample size increases, the violations of positivity become more pronounced, causing naive RA estimators to exhibit greater instability, especially near the boundary of the marginal support $\mathcal{T}$. Because of the positivity violations, fine-tuning the bandwidth parameters $h,b$ does not improve the stability or consistency of the naive RA estimators. In contrast, our proposed estimators approach the true dose-response curves and its derivatives as sample size increases, with narrower pointwise confidence intervals and uniform bands. These results demonstrate the consistency of our proposed estimators under violations of the positivity condition \eqref{assump:positivity}.

\subsection{Case Study: Effect of PM$_{2.5}$ on Cardiovascular Mortality Rate}

Air pollution, especially fine particulate matter with diameters less than 2.5 $\si{\um}$ (PM$_{2.5}$), is known to be associated with increased cardiovascular disease risk \citep{brook2010particulate,krittanawong2023pm2}. Recent studies have also demonstrated a positive association between PM$_{2.5}$ levels ($\si{\ug}/m^3$) and county-level cardiovascular mortality rates (CMR; deaths/100,000 person-years) in the United States, with socioeconomic factors being well-controlled \citep{wyatt2020contribution}. 

To showcase the applicability of our proposed integral and localized derivative estimators, we apply them to analyzing the PM$_{2.5}$-CMR data from \cite{wyatt2020annual}, which contain average annual CMRs (outcome variable $Y$) and PM$_{2.5}$ concentrations (exposure variable $T$) across $n=2132$ counties from 1990 to 2010. These data were sourced from the US National Center for Health Statistics and Community Multiscale Air Quality modeling system, respectively. The covariate vector $\bm{S}\in \mathbb{R}^{10}$ consists of two parts. The first part includes two spatial confounding variables, latitude and longitude of each county, to capture spatial dependence. The second part comprises eight county-level socioeconomic factors from the U.S. census: population in 2000, civilian unemployment rate, median household income, percentage of female households without spouses, percentage of vacant housing units, percentage of owner occupied housing units, percentage of high school graduates or above, and percentage of households in poverty. For each county in a given year, we assign the values of socioeconomic variables based on the nearest U.S. Census year (1990, 2000, or 2010) and compute the averages of $Y,T,\bm{S}$ for each county over the 21 years to construct the final dataset. Following \cite{takatsu2022debiased}, we focus on PM$_{2.5}$ levels between 2.5 $\si{\ug}/m^3$ and 11.5 $\si{\ug}/m^3$ to minimize boundary effects.

We apply our proposed integral estimator \eqref{simp_integral} and localized derivative estimator \eqref{theta_C_est} to the final data with bandwidth parameters specified in \autoref{subsec:para_setup}. To account for scale differences between treatment $T$ and covariates in $\bm{S}$, we multiply the (coordinatewise) standard deviations of $T$ and $\bm{S}$ on the data to $h_{ROT},\bm{b}_{ROT}$ respectively and set $C_h=16, C_b=23$ for smoothing the estimated derivatives. To examine confounding effects, we compare three modeling settings: 
\begin{enumerate}[label=(\roman*)]
	\item regress $Y$ on $T$ alone via local quadratic regression estimators;
	
	\item regress $Y$ on $T$ with spatial locations;
	
	\item regress $Y$ on $T$ with both spatial and socioeconomic covariates.
\end{enumerate}
Setting (i) ignores confounding effects. Setting (ii) assumes that spatial locations fully account for confounding effects, and our proposed estimators are valid if, for example, no interactions exist between PM$_{2.5}$ levels ($T$) and spatial locations ($\bm{S}$), as in model \eqref{add_conf_model}. Setting (iii) incorporates additional socioeconomic factors as confounding variables, with our estimators remaining valid if interaction effects between PM$_{2.5}$ levels ($T$) and all confounding variables ($\bm{S}_1,\bm{S}_2$) are present but can be resolved by conditioning on a subset of key confounding variables ($\bm{S}_1$), as in model \eqref{add_conf_general}.  

Results in the left panel of \autoref{fig:PM25_app} show that without adjusting for any confounding variables, the fitted curve is non-monotonic, peaking at around 8 $\si{\ug}/m^3$. After controlling for spatial locations, the relationship flatten but remains non-monotonic when the PM$_{2.5}$ level is above 8 $\si{\ug}/m^3$. Only when we incorporate both spatial and socioeconomic covariates can the estimated curve between PM$_{2.5}$ and CMR become monotonically increasing. The estimated changing rate of CMR with respect to PM$_{2.5}$ (\emph{i.e.}, estimated derivative curve) in the right panel of \autoref{fig:PM25_app} confirms this trend. Notably, after adjusting for all confounding variables, the changing rate of CMR and its uniform 95\% confidence band stabilize and remain consistently above 0 when PM$_{2.5}$ levels are below 8 $\si{\ug}/m^3$, indicating a statistically significant positive association between PM$_{2.5}$ and CMR. However, for PM$_{2.5}$ levels above 10 $\si{\ug}/m^3$, the association becomes less clear, potentially due to competing risks \citep{leiser2019evaluation}. Our findings align with those of \cite{wyatt2020contribution,takatsu2022debiased}, while offering additional insights. Specifically, we estimate not only the relationship between PM$_{2.5}$ and CMR but also its derivative. Furthermore, our proposed estimators effectively address potential violations of the positivity condition \eqref{assump:positivity}, yielding more robust conclusions in this observational study.

\begin{figure}
	\captionsetup[subfigure]{justification=centering}
	\begin{subfigure}[t]{0.49\linewidth}
		\centering		\includegraphics[width=1\linewidth]{./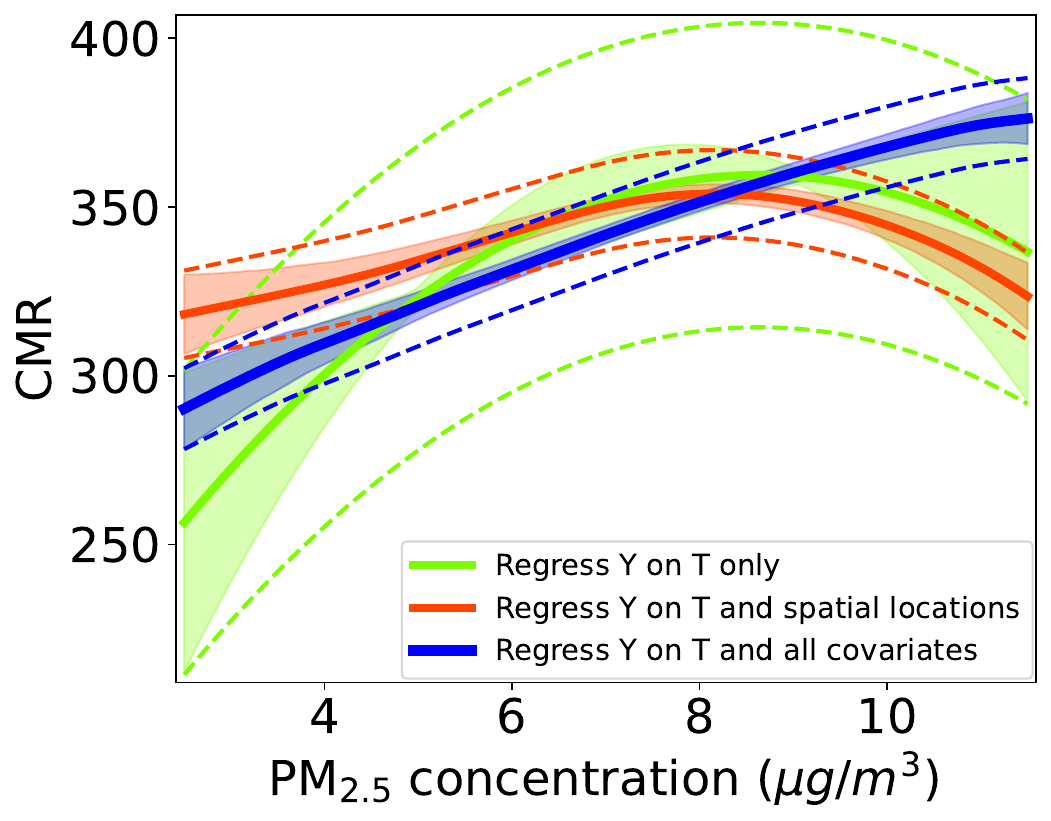}
	\end{subfigure}
	\hfil
	\begin{subfigure}[t]{0.49\linewidth}
		\centering		\includegraphics[width=1\linewidth]{./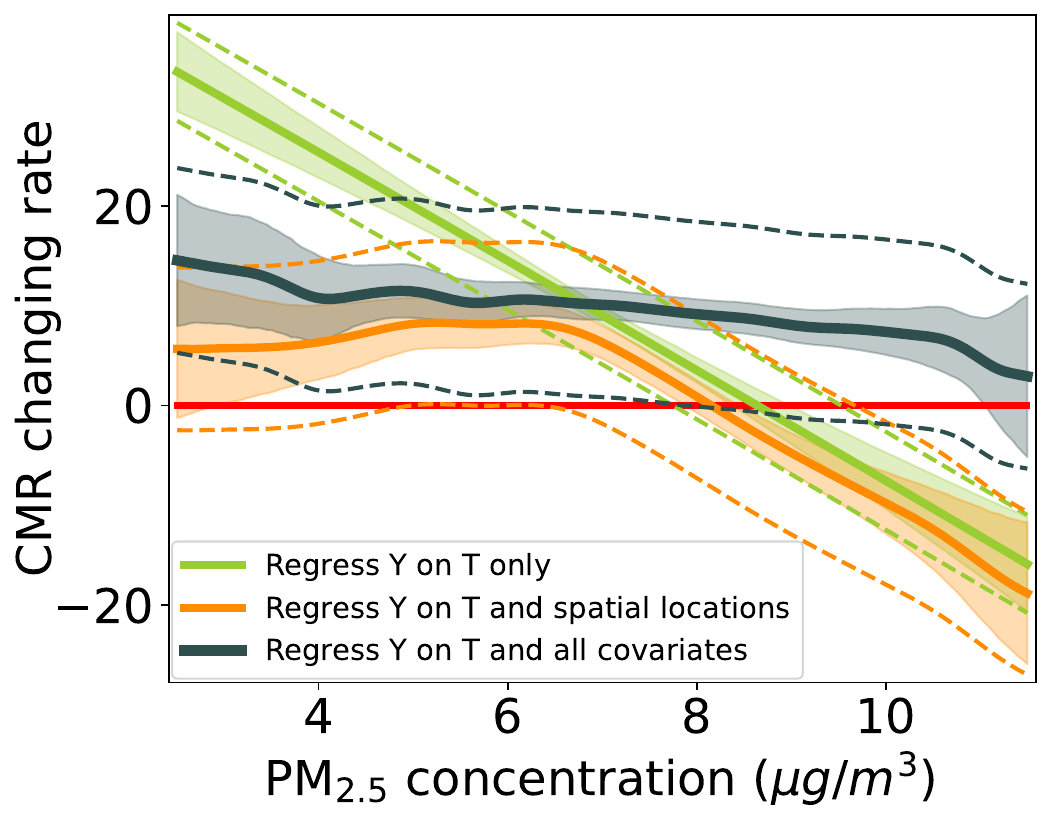}
	\end{subfigure}
	\caption{Estimated relationships between the PM$_{2.5}$ concentration and CMR or its changing rate at the county level. {\bf Left:} The estimated CMR with respect to the PM$_{2.5}$ concentration. {\bf Right:} The estimated changing rates of CMR with respect to the PM$_{2.5}$ concentration. We also present the 95\% confidence intervals and uniform confidence bands as shaded regions and dashed lines respectively for each regression scenario.}
	\label{fig:PM25_app}
\end{figure}

\section{Discussion}
\label{sec:discussion}

In summary, this paper studies nonparametric identification and inference on the dose-response curve and its derivative via innovative integral and localized strategies. Our proposed estimators consistently recover and infer these curves without assuming the overly restrictive positivity condition \eqref{assump:positivity} for continuous treatments. We establish the consistency and bootstrap validity of the proposed estimators when nuisance functions are estimated by kernel smoothing methods, without requiring explicit undersmoothing of bandwidth parameters. Simulations and real-world applications demonstrate the effectiveness of our approach in addressing the limitation of \eqref{assump:positivity}. There are several future directions that can further advance the impacts of our proposal.

{\bf 1. Estimation of Nuisance Parameters/Functions:} The proposed integral estimator \eqref{simp_integral} and localized derivative estimator \eqref{theta_C_est} require estimating two nuisance functions: the partial derivative $\frac{\partial}{\partial t} \mu(t,\bm{s})$ and the conditional CDF $P(\bm{s}|t)$. When we focus on the kernel smoothing methods, the finite-sample performances of these estimators depend on the selection of three bandwidth parameters $h,b,\hslash>0$. While this paper employs rule-of-thumb and normal reference bandwidth selection methods, exploring other bandwidth choices through the plug-in rule \citep{ruppert1995effective} or cross-validation \citep{li2004cross} remains an interesting area for future research. More broadly, alternative approaches for estimating the nuisance functions, such as regression splines for $\frac{\partial}{\partial t} \mu(t,\bm{s})$ \citep{friedman1991multivariate,zhou2000derivative} or nearest neighbor-type and local logistic methods for $P(\bm{s}|t)$ \citep{stute1986conditional,hall1999methods}, could further enhance the empirical performances of our proposed estimators $\hat{\theta}_C(t)$ and $\hat{m}_{\theta}(t)$.

{\bf 2. IPW and Doubly Robust Estimation:} Our proposed estimators \eqref{simp_integral} and \eqref{theta_C_est} are based on the idea of regression adjustment. An interesting avenue for future research would be exploring how our integral and localized techniques could address the positivity requirement in existing generalized propensity score-based approaches \citep{hirano2004propensity,imai2004causal} and doubly robust methods \citep{kennedy2017non,westling2020causal,colangelo2020double,semenova2021debiased,bonvini2022fast} for estimating the dose-response curve and its derivative.

{\bf 3. Additive Model Diagnostics:} Assumption~\ref{assump:diff_inter} is vital to the consistency of our proposed estimators, though its validity beyond the additive confounding model \eqref{add_conf_model} remains unclear. Nevertheless, our estimators can be utilized to test the additive structure in \eqref{add_conf_model}. Under the positivity condition \eqref{assump:positivity}, this can be achieved by estimating the absolute difference $\left|\mathbb{E}\left[\frac{\partial}{\partial t} \mu(t,\bm{S})\right] - \mathbb{E}\left[\frac{\partial}{\partial t} \mu(t,\bm{S}) \big| T=t\right] \right|$ from 0 using the RA estimator $\hat{\theta}_{RA}(t)$ in Remark~\ref{remark:theta_RA} and our estimator $\hat{\theta}_C(t)$ in \eqref{theta_C_est}. Other testing procedures, such as the marginal integration regression method \citep{linton1995kernel}, are also applicable. When the positivity condition \eqref{assump:positivity} is violated, the partial derivative $\beta_2(t,\bm{s}) = \frac{\partial}{\partial t} \mu(t,\bm{s})$ remains independent of $\bm{s}$ for any $t\in \mathcal{T}$ under \eqref{add_conf_model}. Consequently, our estimator $\hat{\beta}_2(t,\bm{s})$ in \eqref{localpoly3} should asymptotically not depend on $\bm{s}$. This suggests that statistical quantities, such as $\Theta(t) = \sup_{\bm{s}\in \mathcal{S}(t)} \left|\hat{\beta}_2(t,\bm{s}) - \hat{\theta}_C(t)\right|$ and $\Theta = \int_{\mathcal{T}}  \Theta(t) \, dt$, should converge to 0 as $n\to\infty$. Thus, by deriving the limiting distribution of $\Theta(t)$ or $\Theta$, one can develop procedures to statistically test the additive confounding model \eqref{add_conf_model} without relying on \eqref{assump:positivity}. In contrast, existing tests for additivity \citep{eubank1995testing,gozalo2001testing} generally require the positivity condition \eqref{assump:positivity} and may not be directly applicable here.

{\bf 4. Violation of Ignorability (Assumption~\ref{assump:identify_cond}(b)):} In observational studies, unmeasured confounding variables beyond the covariate vector $\bm{S}\in \mathcal{S}\subset \mathbb{R}^d$ can bias inference \citep{vanderweele2008sign}. Hence, it is crucial to analyze the sensitivity of the dose-response curve and its derivative to violations of ignorability. In this context, approaches such as the instrumental variable model \citep{kilbertus2020class} and the Riesz-Frechet representation technique \citep{chernozhukov2022long} could be useful for the analysis. Alternatively, one could incorporate an additional high-dimensional covariate vector $\bm{Z}\in \mathbb{R}^{d'}$ into model \eqref{add_conf_model} to mitigate unmeasured confounding variables \citep{guo2019decorrelated}. We will leave a rigorous investigation of these directions as future work.

\section*{Acknowledgement}

We thank Alex Luedtke, Andrea Rotnitzky, Marco Carone, Pawel Morzywolek, Zhichao Jiang, and Daniel Suen for their insightful comments on the earlier version of this paper. YZ is supported in part by YC's NSF grant DMS-2141808. YC is supported by NSF grants DMS-1952781, 2112907, 2141808, and NIH U24-AG07212.

\vspace{5mm}
\singlespacing
\bibliography{Nonpar_Conf}

\begin{thebibliography}{110}
\providecommand{\natexlab}[1]{#1}
\providecommand{\url}[1]{\texttt{#1}}
\expandafter\ifx\csname urlstyle\endcsname\relax
  \providecommand{\doi}[1]{doi: #1}\else
  \providecommand{\doi}{doi: \begingroup \urlstyle{rm}\Url}\fi

\bibitem[Arcones and Gine(1992)]{arcones1992bootstrap}
M.~A. Arcones and E.~Gine.
\newblock {On the Bootstrap of $U$ and $V$ Statistics}.
\newblock \emph{The Annals of Statistics}, 20\penalty0 (2):\penalty0 655 --
  674, 1992.

\bibitem[Bahadur(1966)]{bahadur1966note}
R.~R. Bahadur.
\newblock A note on quantiles in large samples.
\newblock \emph{The Annals of Mathematical Statistics}, 37\penalty0
  (3):\penalty0 577--580, 1966.

\bibitem[Bashtannyk and Hyndman(2001)]{bashtannyk2001bandwidth}
D.~M. Bashtannyk and R.~J. Hyndman.
\newblock Bandwidth selection for kernel conditional density estimation.
\newblock \emph{Computational Statistics \& Data Analysis}, 36\penalty0
  (3):\penalty0 279--298, 2001.

\bibitem[Bonvini and Kennedy(2022)]{bonvini2022fast}
M.~Bonvini and E.~H. Kennedy.
\newblock Fast convergence rates for dose-response estimation.
\newblock \emph{arXiv preprint arXiv:2207.11825}, 2022.

\bibitem[Branson et~al.(2023)Branson, Kennedy, Balakrishnan, and
  Wasserman]{branson2023causal}
Z.~Branson, E.~H. Kennedy, S.~Balakrishnan, and L.~Wasserman.
\newblock Causal effect estimation after propensity score trimming with
  continuous treatments.
\newblock \emph{arXiv preprint arXiv:2309.00706}, 2023.

\bibitem[Brook et~al.(2010)Brook, Rajagopalan, Pope, Brook, Bhatnagar,
  Diez-Roux, Holguin, Hong, Luepker, Mittleman, Peters, Siscovick, Smith,
  Whitsel, and Kaufman]{brook2010particulate}
R.~D. Brook, S.~Rajagopalan, C.~A. Pope, J.~R. Brook, A.~Bhatnagar, A.~V.
  Diez-Roux, F.~Holguin, Y.~Hong, R.~V. Luepker, M.~A. Mittleman, A.~Peters,
  D.~Siscovick, S.~C. Smith, L.~Whitsel, and J.~D. Kaufman.
\newblock Particulate matter air pollution and cardiovascular disease: an
  update to the scientific statement from the american heart association.
\newblock \emph{Circulation}, 121\penalty0 (21):\penalty0 2331--2378, 2010.

\bibitem[Buja et~al.(1989)Buja, Hastie, and Tibshirani]{buja1989linear}
A.~Buja, T.~Hastie, and R.~Tibshirani.
\newblock Linear smoothers and additive models.
\newblock \emph{The Annals of Statistics}, 17\penalty0 (2):\penalty0 453 --
  510, 1989.

\bibitem[Busso et~al.(2014)Busso, DiNardo, and McCrary]{busso2014new}
M.~Busso, J.~DiNardo, and J.~McCrary.
\newblock New evidence on the finite sample properties of propensity score
  reweighting and matching estimators.
\newblock \emph{Review of Economics and Statistics}, 96\penalty0 (5):\penalty0
  885--897, 2014.

\bibitem[Calonico et~al.(2018)Calonico, Cattaneo, and
  Farrell]{calonico2018effect}
S.~Calonico, M.~D. Cattaneo, and M.~H. Farrell.
\newblock On the effect of bias estimation on coverage accuracy in
  nonparametric inference.
\newblock \emph{Journal of the American Statistical Association}, 113\penalty0
  (522):\penalty0 767--779, 2018.

\bibitem[Cattaneo et~al.(2010)Cattaneo, Crump, and Jansson]{cattaneo2010robust}
M.~D. Cattaneo, R.~K. Crump, and M.~Jansson.
\newblock Robust data-driven inference for density-weighted average
  derivatives.
\newblock \emph{Journal of the American Statistical Association}, 105\penalty0
  (491):\penalty0 1070--1083, 2010.

\bibitem[Chac{\'o}n et~al.(2011)Chac{\'o}n, Duong, and
  Wand]{chacon2011asymptotics}
J.~E. Chac{\'o}n, T.~Duong, and M.~Wand.
\newblock Asymptotics for general multivariate kernel density derivative
  estimators.
\newblock \emph{Statistica Sinica}, pages 807--840, 2011.

\bibitem[Chamberlain(1986)]{chamberlain1986asymptotic}
G.~Chamberlain.
\newblock Asymptotic efficiency in semi-parametric models with censoring.
\newblock \emph{Journal of Econometrics}, 32\penalty0 (2):\penalty0 189--218,
  1986.

\bibitem[Chen et~al.(2015)Chen, Genovese, and Wasserman]{chen2015asymptotic}
Y.-C. Chen, C.~R. Genovese, and L.~Wasserman.
\newblock Asymptotic theory for density ridges.
\newblock \emph{The Annals of Statistics}, 43\penalty0 (5):\penalty0
  1896--1928, 2015.

\bibitem[Chen et~al.(2016)Chen, Genovese, and Wasserman]{chen2016comprehensive}
Y.-C. Chen, C.~R. Genovese, and L.~Wasserman.
\newblock {A comprehensive approach to mode clustering}.
\newblock \emph{Electronic Journal of Statistics}, 10\penalty0 (1):\penalty0
  210 -- 241, 2016.

\bibitem[Chen et~al.(2017)Chen, Genovese, and Wasserman]{chen2017density}
Y.-C. Chen, C.~R. Genovese, and L.~Wasserman.
\newblock Density level sets: Asymptotics, inference, and visualization.
\newblock \emph{Journal of the American Statistical Association}, 112\penalty0
  (520):\penalty0 1684--1696, 2017.

\bibitem[Cheng and Chen(2019)]{cheng2019nonparametric}
G.~Cheng and Y.-C. Chen.
\newblock {Nonparametric inference via bootstrapping the debiased estimator}.
\newblock \emph{Electronic Journal of Statistics}, 13\penalty0 (1):\penalty0
  2194 -- 2256, 2019.

\bibitem[Chernozhukov et~al.(2014)Chernozhukov, Chetverikov, and
  Kato]{chernozhukov2014gaussian}
V.~Chernozhukov, D.~Chetverikov, and K.~Kato.
\newblock Gaussian approximation of suprema of empirical processes.
\newblock \emph{The Annals of Statistics}, 42\penalty0 (4):\penalty0
  1564--1597, 2014.

\bibitem[Chernozhukov et~al.(2022)Chernozhukov, Cinelli, Newey, Sharma, and
  Syrgkanis]{chernozhukov2022long}
V.~Chernozhukov, C.~Cinelli, W.~Newey, A.~Sharma, and V.~Syrgkanis.
\newblock Long story short: Omitted variable bias in causal machine learning.
\newblock Technical report, National Bureau of Economic Research, 2022.

\bibitem[Colangelo and Lee(2020)]{colangelo2020double}
K.~Colangelo and Y.-Y. Lee.
\newblock Double debiased machine learning nonparametric inference with
  continuous treatments.
\newblock \emph{arXiv preprint arXiv:2004.03036}, 2020.

\bibitem[Cole and Hern{\'a}n(2008)]{cole2008constructing}
S.~R. Cole and M.~A. Hern{\'a}n.
\newblock Constructing inverse probability weights for marginal structural
  models.
\newblock \emph{American Journal of Epidemiology}, 168\penalty0 (6):\penalty0
  656--664, 2008.

\bibitem[Cox(1958)]{cox1958planning}
D.~R. Cox.
\newblock \emph{Planning of Experiments}.
\newblock Wiley, 1958.

\bibitem[Crump et~al.(2009)Crump, Hotz, Imbens, and Mitnik]{crump2009dealing}
R.~K. Crump, V.~J. Hotz, G.~W. Imbens, and O.~A. Mitnik.
\newblock Dealing with limited overlap in estimation of average treatment
  effects.
\newblock \emph{Biometrika}, 96\penalty0 (1):\penalty0 187--199, 2009.

\bibitem[Dehejia and Wahba(1999)]{dehejia1999causal}
R.~H. Dehejia and S.~Wahba.
\newblock Causal effects in nonexperimental studies: Reevaluating the
  evaluation of training programs.
\newblock \emph{Journal of the American statistical Association}, 94\penalty0
  (448):\penalty0 1053--1062, 1999.

\bibitem[D{\'\i}az and van~der Laan(2013)]{diaz2013targeted}
I.~D{\'\i}az and M.~J. van~der Laan.
\newblock Targeted data adaptive estimation of the causal dose--response curve.
\newblock \emph{Journal of Causal Inference}, 1\penalty0 (2):\penalty0
  171--192, 2013.

\bibitem[Dvoretzky et~al.(1956)Dvoretzky, Kiefer, and
  Wolfowitz]{dvoretzky1956asymptotic}
A.~Dvoretzky, J.~Kiefer, and J.~Wolfowitz.
\newblock {Asymptotic Minimax Character of the Sample Distribution Function and
  of the Classical Multinomial Estimator}.
\newblock \emph{The Annals of Mathematical Statistics}, 27\penalty0
  (3):\penalty0 642 -- 669, 1956.

\bibitem[D’Amour et~al.(2021)D’Amour, Ding, Feller, Lei, and
  Sekhon]{d2021overlap}
A.~D’Amour, P.~Ding, A.~Feller, L.~Lei, and J.~Sekhon.
\newblock Overlap in observational studies with high-dimensional covariates.
\newblock \emph{Journal of Econometrics}, 221\penalty0 (2):\penalty0 644--654,
  2021.

\bibitem[Efron(1979)]{efron1979bootstrap}
B.~Efron.
\newblock Bootstrap methods: Another look at the jackknife.
\newblock \emph{The Annals of Statistics}, 7\penalty0 (1):\penalty0 1 -- 26,
  1979.

\bibitem[Einmahl and Mason(2005)]{einmahl2005uniform}
U.~Einmahl and D.~M. Mason.
\newblock {Uniform in bandwidth consistency of kernel-type function
  estimators}.
\newblock \emph{The Annals of Statistics}, 33\penalty0 (3):\penalty0 1380 --
  1403, 2005.

\bibitem[Eubank et~al.(1995)Eubank, Hart, Simpson, and
  Stefanski]{eubank1995testing}
R.~Eubank, J.~D. Hart, D.~G. Simpson, and L.~A. Stefanski.
\newblock Testing for additivity in nonparametric regression.
\newblock \emph{The Annals of Statistics}, 23\penalty0 (6):\penalty0
  1896--1920, 1995.

\bibitem[Fan and Gijbels(1996)]{fan1996local}
J.~Fan and I.~Gijbels.
\newblock \emph{Local polynomial modelling and its applications}, volume~66.
\newblock Chapman \& Hall/CRC, 1996.

\bibitem[Fan et~al.(1996)Fan, Gijbels, Hu, and Huang]{fan1996study}
J.~Fan, I.~Gijbels, T.-C. Hu, and L.-S. Huang.
\newblock A study of variable bandwidth selection for local polynomial
  regression.
\newblock \emph{Statistica Sinica}, 6:\penalty0 113--127, 1996.

\bibitem[Fan et~al.(1998)Fan, H{\"a}rdle, and Mammen]{fan1998direct}
J.~Fan, W.~H{\"a}rdle, and E.~Mammen.
\newblock Direct estimation of low-dimensional components in additive models.
\newblock \emph{The Annals of Statistics}, 26\penalty0 (3):\penalty0 943--971,
  1998.

\bibitem[Fan and Guerre(2015)]{fan2015multivariate}
Y.~Fan and E.~Guerre.
\newblock Multivariate local polynomial estimators: Boundary properties and
  uniform asymptotic linear representation.
\newblock \emph{Advances in Econometrics}, 2015.

\bibitem[Flores(2007)]{flores2007estimation}
C.~Flores.
\newblock Estimation of dose-response functions and optimal doses with a
  continuous treatment.
\newblock Technical report, Department of Economics, University of Miami, 2007.
\newblock URL \url{https://core.ac.uk/download/pdf/7169663.pdf}.

\bibitem[Friedman(1991)]{friedman1991multivariate}
J.~H. Friedman.
\newblock Multivariate adaptive regression splines.
\newblock \emph{The Annals of Statistics}, 19\penalty0 (1):\penalty0 1--67,
  1991.

\bibitem[Garc{\'\i}a~Portugu{\'e}s(2023)]{garcia2023notes}
E.~Garc{\'\i}a~Portugu{\'e}s.
\newblock Notes for nonparametric statistics, 2023.
\newblock URL \url{https://bookdown.org/egarpor/NP-UC3M/}.
\newblock (Accessed on March 26, 2024).

\bibitem[Giessing(2023)]{giessing2023gaussian}
A.~Giessing.
\newblock Gaussian and bootstrap approximations for suprema of empirical
  processes.
\newblock \emph{arXiv preprint arXiv:2309.01307}, 2023.

\bibitem[Gilbert et~al.(2023)Gilbert, Datta, Casey, and
  Ogburn]{gilbert2023causal}
B.~Gilbert, A.~Datta, J.~A. Casey, and E.~L. Ogburn.
\newblock A causal inference framework for spatial confounding.
\newblock \emph{arXiv preprint arXiv:2112.14946}, 2023.

\bibitem[Gill and Robins(2001)]{gill2001causal}
R.~D. Gill and J.~M. Robins.
\newblock Causal inference for complex longitudinal data: the continuous case.
\newblock \emph{Annals of Statistics}, 29\penalty0 (6):\penalty0 1785--1811,
  2001.

\bibitem[Gin{\'e} and Guillou(2002)]{gine2002rates}
E.~Gin{\'e} and A.~Guillou.
\newblock Rates of strong uniform consistency for multivariate kernel density
  estimators.
\newblock \emph{Annales de l'Institut Henri Poincare (B) Probability and
  Statistics}, 38\penalty0 (6):\penalty0 907--921, 2002.

\bibitem[Gozalo and Linton(2001)]{gozalo2001testing}
P.~L. Gozalo and O.~B. Linton.
\newblock Testing additivity in generalized nonparametric regression models
  with estimated parameters.
\newblock \emph{Journal of Econometrics}, 104\penalty0 (1):\penalty0 1--48,
  2001.

\bibitem[Guo et~al.(2019)Guo, Yuan, and Zhang]{guo2019decorrelated}
Z.~Guo, W.~Yuan, and C.-H. Zhang.
\newblock Decorrelated local linear estimator: Inference for non-linear effects
  in high-dimensional additive models.
\newblock \emph{arXiv preprint arXiv:1907.12732}, 2019.

\bibitem[Hall et~al.(1999)Hall, Wolff, and Yao]{hall1999methods}
P.~Hall, R.~C. Wolff, and Q.~Yao.
\newblock Methods for estimating a conditional distribution function.
\newblock \emph{Journal of the American Statistical Association}, 94\penalty0
  (445):\penalty0 154--163, 1999.

\bibitem[H{\"a}rdle and Stoker(1989)]{hardle1989investigating}
W.~H{\"a}rdle and T.~M. Stoker.
\newblock Investigating smooth multiple regression by the method of average
  derivatives.
\newblock \emph{Journal of the American statistical Association}, 84\penalty0
  (408):\penalty0 986--995, 1989.

\bibitem[Hastie and Loader(1993)]{hastie1993local}
T.~Hastie and C.~Loader.
\newblock Local regression: Automatic kernel carpentry.
\newblock \emph{Statistical Science}, pages 120--129, 1993.

\bibitem[Hines et~al.(2023)Hines, Diaz-Ordaz, and
  Vansteelandt]{hines2023optimally}
O.~Hines, K.~Diaz-Ordaz, and S.~Vansteelandt.
\newblock Optimally weighted average derivative effects.
\newblock \emph{arXiv preprint arXiv:2308.05456}, 2023.

\bibitem[Hirano and Imbens(2004)]{hirano2004propensity}
K.~Hirano and G.~W. Imbens.
\newblock \emph{The Propensity Score with Continuous Treatments}, chapter~7,
  pages 73--84.
\newblock John Wiley \& Sons, Ltd, 2004.

\bibitem[Hirshberg and Wager(2020)]{hirshberg2020debiased}
D.~A. Hirshberg and S.~Wager.
\newblock Debiased inference of average partial effects in single-index models:
  Comment on wooldridge and zhu.
\newblock \emph{Journal of Business \& Economic Statistics}, 38\penalty0
  (1):\penalty0 19--24, 2020.

\bibitem[Holmes et~al.(2012)Holmes, Gray, and Isbell]{holmes2012fast}
M.~P. Holmes, A.~G. Gray, and C.~L. Isbell.
\newblock Fast nonparametric conditional density estimation.
\newblock \emph{arXiv preprint arXiv:1206.5278}, 2012.

\bibitem[Imai and van Dyk(2004)]{imai2004causal}
K.~Imai and D.~A. van Dyk.
\newblock Causal inference with general treatment regimes: Generalizing the
  propensity score.
\newblock \emph{Journal of the American Statistical Association}, 99\penalty0
  (467):\penalty0 854--866, 2004.

\bibitem[Kallus and Zhou(2018)]{kallus2018policy}
N.~Kallus and A.~Zhou.
\newblock Policy evaluation and optimization with continuous treatments.
\newblock In \emph{International Conference on Artificial Intelligence and
  Statistics}, pages 1243--1251. PMLR, 2018.

\bibitem[Kammann and Wand(2003)]{kammann2003geoadditive}
E.~Kammann and M.~P. Wand.
\newblock Geoadditive models.
\newblock \emph{Journal of the Royal Statistical Society Series C: Applied
  Statistics}, 52\penalty0 (1):\penalty0 1--18, 2003.

\bibitem[Keller and Szpiro(2020)]{keller2020selecting}
J.~P. Keller and A.~A. Szpiro.
\newblock Selecting a scale for spatial confounding adjustment.
\newblock \emph{Journal of the Royal Statistical Society Series A: Statistics
  in Society}, 183\penalty0 (3):\penalty0 1121--1143, 2020.

\bibitem[Kennedy(2019)]{kennedy2019nonparametric}
E.~H. Kennedy.
\newblock Nonparametric causal effects based on incremental propensity score
  interventions.
\newblock \emph{Journal of the American Statistical Association}, 114\penalty0
  (526):\penalty0 645--656, 2019.

\bibitem[Kennedy et~al.(2017)Kennedy, Ma, McHugh, and Small]{kennedy2017non}
E.~H. Kennedy, Z.~Ma, M.~D. McHugh, and D.~S. Small.
\newblock Nonparametric methods for doubly robust estimation of continuous
  treatment effects.
\newblock \emph{Journal of the Royal Statistical Society Series B: Statistical
  Methodology}, 79\penalty0 (4):\penalty0 1229--1245, 2017.

\bibitem[Khan and Tamer(2010)]{khan2010irregular}
S.~Khan and E.~Tamer.
\newblock Irregular identification, support conditions, and inverse weight
  estimation.
\newblock \emph{Econometrica}, 78\penalty0 (6):\penalty0 2021--2042, 2010.

\bibitem[Kilbertus et~al.(2020)Kilbertus, Kusner, and
  Silva]{kilbertus2020class}
N.~Kilbertus, M.~J. Kusner, and R.~Silva.
\newblock A class of algorithms for general instrumental variable models.
\newblock \emph{Advances in Neural Information Processing Systems},
  33:\penalty0 20108--20119, 2020.

\bibitem[Kong et~al.(2010)Kong, Linton, and Xia]{kong2010uniform}
E.~Kong, O.~Linton, and Y.~Xia.
\newblock Uniform bahadur representation for local polynomial estimates of
  m-regression and its application to the additive model.
\newblock \emph{Econometric Theory}, 26\penalty0 (5):\penalty0 1529--1564,
  2010.

\bibitem[Krittanawong et~al.(2023)Krittanawong, Qadeer, Hayes, Wang, Thurston,
  Virani, and Lavie]{krittanawong2023pm2}
C.~Krittanawong, Y.~K. Qadeer, R.~B. Hayes, Z.~Wang, G.~D. Thurston, S.~Virani,
  and C.~J. Lavie.
\newblock Pm2.5 and cardiovascular diseases: State-of-the-art review.
\newblock \emph{International Journal of Cardiology Cardiovascular Risk and
  Prevention}, 19:\penalty0 200217, 2023.

\bibitem[L{\'e}ger et~al.(2022)L{\'e}ger, Chatton, Le~Borgne, Pirracchio,
  Lasocki, and Foucher]{leger2022causal}
M.~L{\'e}ger, A.~Chatton, F.~Le~Borgne, R.~Pirracchio, S.~Lasocki, and
  Y.~Foucher.
\newblock Causal inference in case of near-violation of positivity: comparison
  of methods.
\newblock \emph{Biometrical Journal}, 64\penalty0 (8):\penalty0 1389--1403,
  2022.

\bibitem[Leiser et~al.(2019)Leiser, Smith, VanDerslice, Glotzbach, Farrell, and
  Hanson]{leiser2019evaluation}
C.~L. Leiser, K.~R. Smith, J.~A. VanDerslice, J.~P. Glotzbach, T.~W. Farrell,
  and H.~A. Hanson.
\newblock Evaluation of the sex-and-age-specific effects of pm2. 5 on hospital
  readmission in the presence of the competing risk of mortality in the
  medicare population of utah 1999--2009.
\newblock \emph{Journal of Clinical Medicine}, 8\penalty0 (12):\penalty0 2114,
  2019.

\bibitem[Li and Racine(2004)]{li2004cross}
Q.~Li and J.~Racine.
\newblock Cross-validated local linear nonparametric regression.
\newblock \emph{Statistica Sinica}, pages 485--512, 2004.

\bibitem[Linton and Nielsen(1995)]{linton1995kernel}
O.~Linton and J.~P. Nielsen.
\newblock A kernel method of estimating structured nonparametric regression
  based on marginal integration.
\newblock \emph{Biometrika}, 82\penalty0 (1):\penalty0 93--100, 1995.

\bibitem[Lu(1996)]{lu1996multivariate}
Z.-Q. Lu.
\newblock Multivariate locally weighted polynomial fitting and partial
  derivative estimation.
\newblock \emph{Journal of Multivariate Analysis}, 59\penalty0 (2):\penalty0
  187--205, 1996.

\bibitem[Ma and Wang(2020)]{ma2020robust}
X.~Ma and J.~Wang.
\newblock Robust inference using inverse probability weighting.
\newblock \emph{Journal of the American Statistical Association}, 115\penalty0
  (532):\penalty0 1851--1860, 2020.

\bibitem[Manski(1990)]{manski1990nonparametric}
C.~F. Manski.
\newblock Nonparametric bounds on treatment effects.
\newblock \emph{The American Economic Review}, 80\penalty0 (2):\penalty0
  319--323, 1990.

\bibitem[Massart(1990)]{massart1990tight}
P.~Massart.
\newblock The tight constant in the dvoretzky-kiefer-wolfowitz inequality.
\newblock \emph{The Annals of Probability}, 18\penalty0 (3):\penalty0
  1269--1283, 1990.

\bibitem[Neugebauer and van~der Laan(2007)]{neugebauer2007nonparametric}
R.~Neugebauer and M.~van~der Laan.
\newblock Nonparametric causal effects based on marginal structural models.
\newblock \emph{Journal of Statistical Planning and Inference}, 137\penalty0
  (2):\penalty0 419--434, 2007.

\bibitem[Newey(1994)]{newey1994kernel}
W.~K. Newey.
\newblock Kernel estimation of partial means and a general variance estimator.
\newblock \emph{Econometric Theory}, 10\penalty0 (2):\penalty0 1--21, 1994.

\bibitem[Newey and Stoker(1993)]{newey1993efficiency}
W.~K. Newey and T.~M. Stoker.
\newblock Efficiency of weighted average derivative estimators and index
  models.
\newblock \emph{Econometrica}, 61\penalty0 (5):\penalty0 1199--1223, 1993.

\bibitem[Nolan and Pollard(1987)]{nolan1987u}
D.~Nolan and D.~Pollard.
\newblock {$U$-Processes: Rates of Convergence}.
\newblock \emph{The Annals of Statistics}, 15\penalty0 (2):\penalty0 780 --
  799, 1987.

\bibitem[Paciorek(2010)]{paciorek2010importance}
C.~J. Paciorek.
\newblock The importance of scale for spatial-confounding bias and precision of
  spatial regression estimators.
\newblock \emph{Statistical Science}, 25\penalty0 (1):\penalty0 107--125, 2010.

\bibitem[Petersen et~al.(2012)Petersen, Porter, Gruber, Wang, and Van
  Der~Laan]{petersen2012diagnosing}
M.~L. Petersen, K.~E. Porter, S.~Gruber, Y.~Wang, and M.~J. Van Der~Laan.
\newblock Diagnosing and responding to violations in the positivity assumption.
\newblock \emph{Statistical methods in medical research}, 21\penalty0
  (1):\penalty0 31--54, 2012.

\bibitem[Powell et~al.(1989)Powell, Stock, and
  Stoker]{powell1989semiparametric}
J.~L. Powell, J.~H. Stock, and T.~M. Stoker.
\newblock Semiparametric estimation of index coefficients.
\newblock \emph{Econometrica}, pages 1403--1430, 1989.

\bibitem[Robins(1986)]{robins1986new}
J.~Robins.
\newblock A new approach to causal inference in mortality studies with a
  sustained exposure period—application to control of the healthy worker
  survivor effect.
\newblock \emph{Mathematical modelling}, 7\penalty0 (9-12):\penalty0
  1393--1512, 1986.

\bibitem[Robins et~al.(2000)Robins, Hernan, and Brumback]{robins2000marginal}
J.~M. Robins, M.~A. Hernan, and B.~Brumback.
\newblock Marginal structural models and causal inference in epidemiology.
\newblock \emph{Epidemiology}, 11\penalty0 (5):\penalty0 550--560, 2000.

\bibitem[Rothe(2017)]{rothe2017robust}
C.~Rothe.
\newblock Robust confidence intervals for average treatment effects under
  limited overlap.
\newblock \emph{Econometrica}, 85\penalty0 (2):\penalty0 645--660, 2017.

\bibitem[Rothenh{\"a}usler and Yu(2019)]{rothenhausler2019incremental}
D.~Rothenh{\"a}usler and B.~Yu.
\newblock Incremental causal effects.
\newblock \emph{arXiv preprint arXiv:1907.13258}, 2019.

\bibitem[Rubin(1974)]{rubin1974estimating}
D.~B. Rubin.
\newblock Estimating causal effects of treatments in randomized and
  nonrandomized studies.
\newblock \emph{Journal of Educational Psychology}, 66\penalty0 (5):\penalty0
  688--701, 1974.

\bibitem[Rubin(1980)]{rubin1980randomization}
D.~B. Rubin.
\newblock Randomization analysis of experimental data: The fisher randomization
  test comment.
\newblock \emph{Journal of the American Statistical Association}, 75\penalty0
  (371):\penalty0 591--593, 1980.

\bibitem[Ruppert and Wand(1994)]{ruppert1994multivariate}
D.~Ruppert and M.~P. Wand.
\newblock Multivariate locally weighted least squares regression.
\newblock \emph{The Annals of Statistics}, 22\penalty0 (3):\penalty0
  1346--1370, 1994.

\bibitem[Ruppert et~al.(1995)Ruppert, Sheather, and Wand]{ruppert1995effective}
D.~Ruppert, S.~J. Sheather, and M.~P. Wand.
\newblock An effective bandwidth selector for local least squares regression.
\newblock \emph{Journal of the American Statistical Association}, 90\penalty0
  (432):\penalty0 1257--1270, 1995.

\bibitem[Schindl et~al.(2024)Schindl, Shen, and
  Kennedy]{schindl2024incremental}
K.~Schindl, S.~Shen, and E.~H. Kennedy.
\newblock Incremental effects for continuous exposures.
\newblock \emph{arXiv preprint arXiv:2409.11967}, 2024.

\bibitem[Schindler(2011)]{schindler2011bandwidth}
A.~Schindler.
\newblock \emph{Bandwidth selection in nonparametric kernel estimation}.
\newblock PhD thesis, G{\"o}ttingen, Georg-August Universit{\"a}t, Diss., 2011.

\bibitem[Schnell and Papadogeorgou(2020)]{schnell2020}
P.~Schnell and G.~Papadogeorgou.
\newblock Mitigating unobserved spatial confounding when estimating the effect
  of supermarket access on cardiovascular disease deaths.
\newblock \emph{Annals of Applied Statistics}, 14:\penalty0 2069--2095, 12
  2020.

\bibitem[Semenova and Chernozhukov(2021)]{semenova2021debiased}
V.~Semenova and V.~Chernozhukov.
\newblock Debiased machine learning of conditional average treatment effects
  and other causal functions.
\newblock \emph{The Econometrics Journal}, 24\penalty0 (2):\penalty0 264--289,
  2021.

\bibitem[Shao(2003)]{shao2003mathematical}
J.~Shao.
\newblock \emph{Mathematical Statistics}.
\newblock Springer Science \& Business Media, 2003.

\bibitem[Sheather(2004)]{sheather2004density}
S.~J. Sheather.
\newblock Density estimation.
\newblock \emph{Statistical Science}, 19\penalty0 (4):\penalty0 588--597, 2004.

\bibitem[Shieh(2014)]{shieh2014u}
G.~S. Shieh.
\newblock {U-and V-statistics}.
\newblock \emph{Wiley StatsRef: Statistics Reference Online}, 2014.

\bibitem[Stone(1980)]{stone1980optimal}
C.~J. Stone.
\newblock Optimal rates of convergence for nonparametric estimators.
\newblock \emph{The Annals of Statistics}, 8\penalty0 (6):\penalty0 1348--1360,
  1980.

\bibitem[Stone(1982)]{stone1982optimal}
C.~J. Stone.
\newblock Optimal global rates of convergence for nonparametric regression.
\newblock \emph{The Annals of Statistics}, 10\penalty0 (4):\penalty0
  1040--1053, 1982.

\bibitem[Stute(1986)]{stute1986conditional}
W.~Stute.
\newblock Conditional empirical processes.
\newblock \emph{The Annals of Statistics}, 14\penalty0 (2):\penalty0 638--647,
  1986.

\bibitem[Takatsu and Westling(2022)]{takatsu2022debiased}
K.~Takatsu and T.~Westling.
\newblock Debiased inference for a covariate-adjusted regression function.
\newblock \emph{arXiv preprint arXiv:2210.06448}, 2022.

\bibitem[Tang and Westling(2024)]{tang2024consistency}
Z.~Tang and T.~Westling.
\newblock Consistency of the bootstrap for asymptotically linear estimators
  based on machine learning.
\newblock \emph{arXiv preprint arXiv:2404.03064}, 2024.

\bibitem[Thaden and Kneib(2018)]{thaden2018structural}
H.~Thaden and T.~Kneib.
\newblock Structural equation models for dealing with spatial confounding.
\newblock \emph{The American Statistician}, 72\penalty0 (3):\penalty0 239--252,
  2018.

\bibitem[van~der Laan and Robins(2003)]{laan2003unified}
M.~J. van~der Laan and J.~M. Robins.
\newblock \emph{Unified methods for censored longitudinal data and causality}.
\newblock Springer, 2003.

\bibitem[van~der Vaart(1991)]{van1991differentiable}
A.~van~der Vaart.
\newblock On differentiable functionals.
\newblock \emph{The Annals of Statistics}, 19\penalty0 (1):\penalty0 178--204,
  1991.

\bibitem[van~der Vaart and Wellner(1996)]{van1996weak}
A.~van~der Vaart and J.~Wellner.
\newblock \emph{Weak Convergence and Empirical Processes: With Applications to
  Statistics}.
\newblock Springer Series in Statistics. Springer, 1996.

\bibitem[van~der Vaart(1998)]{VDV1998}
A.~W. van~der Vaart.
\newblock \emph{Asymptotic Statistics}.
\newblock Cambridge Series in Statistical and Probabilistic Mathematics.
  Cambridge University Press, 1998.

\bibitem[VanderWeele(2008)]{vanderweele2008sign}
T.~J. VanderWeele.
\newblock The sign of the bias of unmeasured confounding.
\newblock \emph{Biometrics}, 64\penalty0 (3):\penalty0 702--706, 2008.

\bibitem[Wasserman(2006)]{wasserman2006all}
L.~Wasserman.
\newblock \emph{All of Nonparametric Statistics}.
\newblock Springer Science \& Business Media, 2006.

\bibitem[Westling et~al.(2020)Westling, Gilbert, and
  Carone]{westling2020causal}
T.~Westling, P.~Gilbert, and M.~Carone.
\newblock Causal isotonic regression.
\newblock \emph{Journal of the Royal Statistical Society Series B: Statistical
  Methodology}, 82\penalty0 (3):\penalty0 719--747, 2020.

\bibitem[Westreich and Cole(2010)]{westreich2010invited}
D.~Westreich and S.~R. Cole.
\newblock Invited commentary: positivity in practice.
\newblock \emph{American Journal of Epidemiology}, 171\penalty0 (6):\penalty0
  674--677, 2010.

\bibitem[Wiecha and Reich(2024)]{wiecha2024two}
N.~Wiecha and B.~J. Reich.
\newblock Two-stage spatial regression models for spatial confounding.
\newblock \emph{arXiv preprint arXiv:2404.09358}, 2024.

\bibitem[Wyatt et~al.(2020{\natexlab{a}})Wyatt, Peterson, Wade, Neas, and
  Rappold]{wyatt2020contribution}
L.~H. Wyatt, G.~C. Peterson, T.~J. Wade, L.~M. Neas, and A.~G. Rappold.
\newblock The contribution of improved air quality to reduced cardiovascular
  mortality: Declines in socioeconomic differences over time.
\newblock \emph{Environment international}, 136:\penalty0 105430,
  2020{\natexlab{a}}.

\bibitem[Wyatt et~al.(2020{\natexlab{b}})Wyatt, Peterson, Wade, Neas, and
  Rappold]{wyatt2020annual}
L.~H. Wyatt, G.~C.~L. Peterson, T.~J. Wade, L.~M. Neas, and A.~G. Rappold.
\newblock Annual pm2. 5 and cardiovascular mortality rate data: Trends modified
  by county socioeconomic status in 2,132 us counties.
\newblock \emph{Data in Brief}, 30:\penalty0 105318, 2020{\natexlab{b}}.

\bibitem[Yang and Tschernig(1999)]{yang1999multivariate}
L.~Yang and R.~Tschernig.
\newblock Multivariate bandwidth selection for local linear regression.
\newblock \emph{Journal of the Royal Statistical Society Series B: Statistical
  Methodology}, 61\penalty0 (4):\penalty0 793--815, 1999.

\bibitem[Yang and Ding(2018)]{yang2018asymptotic}
S.~Yang and P.~Ding.
\newblock Asymptotic inference of causal effects with observational studies
  trimmed by the estimated propensity scores.
\newblock \emph{Biometrika}, 105\penalty0 (2):\penalty0 487--493, 2018.

\bibitem[Zhang and Chen(2024)]{zhang2024package}
Y.~Zhang and Y.-C. Chen.
\newblock \emph{Package ‘npDoseResponse’}, 2024.
\newblock URL \url{https://cran.r-project.org/package=npDoseResponse}.
\newblock R package version 0.1.

\bibitem[Zhou and Wolfe(2000)]{zhou2000derivative}
S.~Zhou and D.~A. Wolfe.
\newblock On derivative estimation in spline regression.
\newblock \emph{Statistica Sinica}, 10\penalty0 (1):\penalty0 93--108, 2000.

\end{thebibliography}

\newpage 
\onehalfspacing

\begin{center}
{\LARGE \textbf{Supplementary Materials to ``Nonparametric Inference on Dose-Response Curves Without the Positivity Condition''}}\\~\\

\end{center}

\appendix

\setcounter{page}{1}

\noindent{\bf\LARGE Contents}

\startcontents[sections]
\printcontents[sections]{l}{1}{\setcounter{tocdepth}{2}}

\vspace{1cm}

%
%

\section{Nonparametric Bounds Under Zero Treatment Variations}
\label{app:np_bounds}

In this section, we study nonparametric bounds on the dose-response curve $m(t)$ and its derivative function $\theta(t)$ under the additive confounding model \eqref{add_conf_model} when the intrinsic treatment variation has zero variance. Specifically,
\begin{equation}
\label{add_conf2}
Y(t) = \bar{m}(t) + \eta(\bm{S}) +\epsilon \quad \text{ and } \quad T=f(\bm{S},E),
\end{equation}
where $E\in \mathbb{R}$ is an independent treatment variation random variable with $\mathbb{E}(E)=0$. When the treatment variation has zero variance, it means that $\mathrm{Var}(E)=0$.
The derivations of nonparametric bounds for more general models are left as future work.

To motivate this discussion, we first present an illustrative example demonstrating how zero treatment variation for certain covariate levels $\bm{s}\in \mathcal{S}$ can lead to ambiguities in defining the corresponding dose-response curves.

\begin{example}[Necessity of Assumption~\ref{assump:identify_cond}(c)]
	\label{example:id}
	Suppose that the positivity condition \eqref{assump:positivity} holds, and let $T=f(\bm{S},E) =S_1$ almost surely so that $\mathrm{Var}(E)=\mathrm{Var}(T|\bm{S}=\bm{s})=0$ for any $S_1=s_1$. Here, $S_1$ is the first component of $\bm{S}\in \mathcal{S}\subset \mathbb{R}^d$, and $E \in \mathbb{R}$ is an independent treatment variation random variable. We further assume that $\mathbb{E}(S_1) = 0$ and consider two equivalent conditional mean outcome functions:
	\begin{align*}
		\mu_1(T,\bm{S}) \equiv T+ 2S_1 \quad \text{ and } \quad \mu_2(T,\bm{S}) \equiv 2T + S_1,
	\end{align*}
	both of which are well-defined under the positivity condition \eqref{assump:positivity} and equal to $3S_1$ almost surely. While they agree on the support $\left\{(t,\bm{s})\in \mathcal{T}\times \mathcal{S}: t=f(\bm{s})=s_1 \right\}$, these two conditional mean outcome functions lead to two distinct covariate-adjusted functions (or equivalently, dose-response curves):
	$$m_1(t) =\E\left[Y_1(t)\right]= \E\left[\mu_1(t,\bm{S})\right] = t \quad \text{ and } \quad m_2(t) =\E\left[Y_2(t)\right] = \E\left[\mu_2(t,\bm{S})\right] = 2t,$$
	whose derivatives are different as well. 
\end{example}

\subsection{Nonparametric Bound on $m(t)$}

Under the additive confounding model \eqref{add_conf2}, Example~\ref{example:id} above demonstrates that when $\mathrm{Var}(E)=0$, the conditional mean outcome function $\mu(t,\bm{s})$ is identifiable only on the lower-dimensional surface $\left\{(t,\bm{s})\in \mathcal{T}\times \mathcal{S}: t=f(\bm{s},0)\equiv f(\bm{s})\right\}$, where 
\begin{equation}
	\label{mu_surf}
	\mu\left(f(\bm{s}),\bm{s}\right) = \bar{m}\left(f(\bm{s})\right) + \eta(\bm{s}) = m\left(f(\bm{s})\right) + \eta(\bm{s})
\end{equation}
when $\E\left[\eta(\bm{S})\right]=0$ by Proposition~\ref{prop:add_conf_prop}. 

Since $\mathrm{Var}(E)=0$, the relation $T=f(\bm{S})$ can always be estimated from the observed data. Hence, we assume that the function $f$ is known for any $\bm{s}\in \mathcal{S}$ in subsequent analyses. To obtain nonparametric bounds for all possible values of $m(t)$, we impose the following assumption on the magnitude of the random effect function $\eta:\mathcal{S}\to \mathbb{R}$ in \eqref{add_conf2}; see \cite{manski1990nonparametric} for related discussions.

\begin{assump}[Bounded random effect]
	\label{assump:bound_random_eff}
	Let $L_f(t) = \left\{\bm{s}\in \mathcal{S}: f(\bm{s}) = t\right\}$ be a level set of the function $f:\mathcal{S}\to \mathbb{R}$ at $t\in \mathcal{T}$. There exists a constant $\rho_1>0$ such that
	$$\rho_1\geq \max\left\{\sup_{t\in \mathcal{T}}\sup_{\bm{s}\in L_f(t)} |\eta(\bm{s})|,\; \frac{\sup_{t\in \mathcal{T}}\sup_{\bm{s}\in L_f(t)}\mu(f(\bm{s}), \bm{s}) - \inf_{t\in \mathcal{T}}\inf_{\bm{s}\in L_f(t)}\mu(f(\bm{s}), \bm{s})}{2} \right\}.$$
\end{assump}

By \eqref{mu_surf} and the first lower bound on $\rho_1\geq \sup_{t\in \mathcal{T}}\sup_{\bm{s}\in L_f(t)} |\eta(\bm{s})|$ in Assumption~\ref{assump:bound_random_eff}, we know that $$\left|\mu(f(\bm{s}),\bm{s}) - m(t)\right| = |\eta(\bm{s})| \leq \rho_1$$
for any $\bm{s}\in L_f(t)$. This also implies that
\begin{align}
	\label{nonp_bound_m}
	\begin{split}
		m(t) &\in \bigcap_{\bm{s}\in L_f(t)} \left[\mu(f(\bm{s}),\bm{s}) - \rho_1,\, \mu(f(\bm{s}),\bm{s})+\rho_1 \right] \\
		&= \left[\sup_{\bm{s}\in L_f(t)}\mu(f(\bm{s}), \bm{s}) -\rho_1,\, \inf_{\bm{s}\in L_f(t)}\mu(f(\bm{s}), \bm{s}) + \rho_1 \right],
	\end{split}
\end{align}
which is the nonparametric bound on $m(t)$ that contains all the possible values of $m(t)$ for any fixed $t\in \mathcal{T}$ when $\mathrm{Var}(E)=0$. Notice that this bound \eqref{nonp_bound_m} is well-defined and nonempty under the second lower bound on $\rho_1$ in Assumption~\ref{assump:bound_random_eff}.

\subsection{Nonparametric Bound on $\theta(t)$}

Since the functions $\mu(f(\bm{s}),\bm{s})$ and $f(\bm{s})$ are identifiable even when $\mathrm{Var}(E)=0$, their derivatives with respect to $\bm{s}$ are also identifiable. Hence, we assume that the gradients $\nabla_{\bm{s}} \mu(f(\bm{s}), \bm{s})$ and $\nabla f(\bm{s})$ are known for any $\bm{s}\in \mathcal{S}$ in the following analysis. The nonparametric bound on the dose-response curve $m(t)$ relies on an upper bound on the random effect function $\eta:\mathcal{S}\to \mathbb{R}$ (recall Assumption~\ref{assump:bound_random_eff}), while the nonparametric bound on the derivative function $\theta(t)$ requires some constraints on the gradient $\nabla\eta(\bm{s})$ of the random effect function as follows.

\begin{assump}[Constraints on the random effect gradient]
	\label{assump:bound_grad_random_eff}
	Let $L_f(t) = \left\{\bm{s}\in \mathcal{S}: f(\bm{s}) = t\right\}$ be a level set of the function $f:\mathcal{S}\to \mathbb{R}$ at $t\in \mathcal{T}$. There exist constants $\rho_2,\rho_3>0$ such that
	$$\rho_3\leq \inf_{t\in \mathcal{T}} \inf_{\bm{s}\in L_f(t)} \norm{\nabla \eta(\bm{s})}_{\min} \leq \sup_{t\in \mathcal{T}} \sup_{\bm{s}\in L_f(t)} \norm{\nabla \eta(\bm{s})}_{\max} \leq \rho_2$$
	and 
	$$\sup_{t\in \mathcal{T}} \sup_{\bm{s}\in L_f(t)} \max_{j=1,...,d} \left[\frac{v_j(\bm{s}) - \mathrm{sign}(g_j(\bm{s})) \cdot \rho_2}{g_j(\bm{s})}\right] \leq \inf_{t\in \mathcal{T}} \inf_{\bm{s}\in L_f(t)} \min_{j=1,...,d} \left[\frac{v_j(\bm{s}) + \mathrm{sign}(g_j(\bm{s})) \cdot \rho_2}{g_j(\bm{s})}\right],$$
	where $\norm{\nabla\eta(\bm{s})}_{\min} = \min_{j=1,...,d}\left|\frac{\partial}{\partial s_j} \eta(\bm{s})\right|$, $\norm{\nabla\eta(\bm{s})}_{\max} = \max_{j=1,...,d}\left|\frac{\partial}{\partial s_j} \eta(\bm{s})\right|$, $v_j(\bm{s}) = \frac{\partial}{\partial s_j} \mu(f(\bm{s}),\bm{s})$, and $g_j(\bm{s}) = \frac{\partial}{\partial s_j} f(\bm{s})$ for $j=1,...,d$.
\end{assump}

The first inequality in Assumption~\ref{assump:bound_grad_random_eff} ensures that we can derive a finite and meaningful nonparametric bound on $\theta(t)$ for any $t\in \mathcal{T}$. In particular, a direct calculation shows that
\begin{align*}
	\nabla_{\bm{s}}\mu(f(\bm{s}), \bm{s}) &= \nabla_{\bm{s}} m(f(\bm{s})) + \nabla \eta(\bm{s})\\
	&= m'(f(\bm{s}))\cdot \nabla f(\bm{s}) + \nabla \eta(\bm{s})\\
	&= \theta(t) \cdot \nabla f(\bm{s}) + \nabla \eta(\bm{s})
\end{align*}
for all $\bm{s}\in L_f(t)$. This indicates that
\begin{align*}
	\left|\frac{\partial}{\partial s_j} \mu(f(\bm{s}),\bm{s}) - \theta(t) \cdot \frac{\partial}{\partial s_j} f(\bm{s})\right| = \left|v_j(\bm{s}) - \theta(t) \cdot g_j(\bm{s})\right| = \left|\frac{\partial}{\partial s_j} \eta(\bm{s})\right|\leq \rho_2
\end{align*}
for $j=1,...,d$ and therefore, a nonparametric bound on $\theta(t)$ is given by
\begin{equation}
	\label{nonp_bound_theta}
	\theta(t) \in \bigcap_{j=1}^d \left[\frac{v_j(\bm{s}) - \mathrm{sign}(g_j(\bm{s})) \cdot \rho_2}{g_j(\bm{s})},\; \frac{v_j(\bm{s}) + \mathrm{sign}(g_j(\bm{s})) \cdot \rho_2}{g_j(\bm{s})} \right].
\end{equation}
This bound is well-defined and nonempty under the second inequality in Assumption~\ref{assump:bound_grad_random_eff}. More importantly, the nonparametric bound \eqref{nonp_bound_theta} highlights the pivotal role of the covariate effect function $f$ of $\bm{S}$ on treatment $T$ in bounding the possible value of $\theta(t)$. Specifically, as the variation in $f$ increases (\emph{i.e.}, the magnitude of $\nabla f(\bm{s})$ becomes larger), the nonparametric bound \eqref{nonp_bound_theta} becomes tighter, providing more precise constraints on $\theta(t)$.

\section{Pointwise Consistency of $\hat{\beta}_2(t,\mathbf{s})$ from \eqref{localpoly3}}
\label{app:pointwise_loc_poly}

In this section, we establish the pointwise consistency of $\hat{\beta}_2(t,\bm{s})$ from local polynomial regression \eqref{localpoly3}. 

\subsection{Additional Notations}

We introduce some notations that will be used in the following proofs of the consistency results of $\hat{\beta}_2(t,\bm{s})$ from local polynomial regression \eqref{localpoly3} (Lemmas~\ref{lem:loc_poly_deriv} and \ref{lem:loc_poly_deriv_unif}). Under the notations in Assumption~\ref{assump:reg_kernel}(a), we define a matrix 
\begin{equation}
	\label{M_q}
	\bm{M}_q = 
	\begin{pmatrix}
		\left(\kappa_{i+j-2}^{(T)} \right)_{1\leq i,j\leq q+1} & \bm{0} \\
		\bm{0} & \left(\kappa_{2,i-q-1}^{(S)} \mathbbm{1}_{\{i=j\}} \right)_{q+1 < i,j \leq q+1+d}
	\end{pmatrix} \in \mathbb{R}^{(q+1+d) \times (q+1+d)}.
\end{equation}
Notice that $\bm{M}_q$ only depends on the kernel functions $K_T,K_S$ in the local polynomial regression. For any $(t,\bm{s})\in \mathcal{T}\times \mathcal{S}$, we also define the functions $\bm{\Psi}_{t,\bm{s}}: \mathbb{R}\times \mathbb{R}\times \mathbb{R}^d \to \mathbb{R}^{q+1+d}$ and $\bm{\psi}_{t,\bm{s}}: \mathbb{R}\times \mathbb{R}^d \to \mathbb{R}^{q+1+d}$ as:
\begin{align}
	\label{psi_prod}
	\begin{split}
		\bm{\Psi}_{t,\bm{s}}(y,z,\bm{v}) &= \begin{bmatrix}
			\left(y\cdot \left(\frac{z-t}{h}\right)^{j-1} K_T\left(\frac{z-t}{h}\right) K_S\left(\frac{\bm{v}-\bm{s}}{b}\right)\right)_{1\leq j\leq q+1}\\
			\left(y\cdot \left(\frac{v_{j-q-1}- s_{j-q-1}}{b}\right) K_T\left(\frac{z-t}{h}\right) K_S\left(\frac{\bm{v}-\bm{s}}{b}\right)\right)_{q+1< j\leq q+1+d}
		\end{bmatrix}\\
		&= y \cdot \begin{bmatrix}
			\left(\left(\frac{z-t}{h}\right)^{j-1} K_T\left(\frac{z-t}{h}\right) K_S\left(\frac{\bm{v}-\bm{s}}{b}\right)\right)_{1\leq j\leq q+1}\\
			\left(\left(\frac{v_{j-q-1}- s_{j-q-1}}{b}\right) K_T\left(\frac{z-t}{h}\right) K_S\left(\frac{\bm{v}-\bm{s}}{b}\right)\right)_{q+1< j\leq q+1+d}
		\end{bmatrix}\equiv y\cdot \bm{\psi}_{t,\bm{s}}(z,\bm{v}).
	\end{split}
\end{align}

\subsection{Consistency Statement and its Proof}
\label{app:proof_beta_2}

The detailed asymptotic expressions for conditional variance and bias of $\hat{\beta}_2(t,\bm{s})$ are stated in Lemma~\ref{lem:loc_poly_deriv} below. Asymptotically, under the assumptions in Lemma~\ref{lem:loc_poly_deriv}, we know that as $h,b,\frac{\max\{h,b\}^4}{h}\to 0$, and $nh^3b^d \to\infty$,
\begin{align*}
	\hat{\beta}_2(t,\bm{s}) -  \beta_2(t,\bm{s}) &=
	\begin{cases}
		O\left(h^{q+1} + b^2 + \frac{b^4}{h}\right) + O_P\left(\sqrt{\frac{1}{nh^3 b^d}}\right) & \text{ if } q \text{ is odd and } (t,\bm{s})\in \mathcal{E}^{\circ},\\
		O\left(h^q + b^2 + \frac{b^4}{h}\right) + O_P\left(\sqrt{\frac{1}{nh^3 b^d}}\right) & \text{ if } q \text{ is even and } (t,\bm{s})\in \mathcal{E}^{\circ},\\
		O\left(h^q + \frac{\max\{h,b\}^4}{h}\right) + O_P\left(\sqrt{\frac{1}{nh^3 b^d}}\right) & \text{ if } q \text{ is an integer and } (t,\bm{s})\in \partial \mathcal{E}
	\end{cases}
\end{align*}
for any fixed $(t,\bm{s}) \in \mathcal{E}$ and integer $q>0$.

\begin{lemma}[Pointwise convergence of $\hat{\beta}_2(t,\bm{s})$]
	\label{lem:loc_poly_deriv}
	Suppose that Assumptions~\ref{assump:reg_diff}, \ref{assump:den_diff}, \ref{assump:boundary_cond}, and \ref{assump:reg_kernel}(a) hold. Let $\hat \beta_2(t,\bm{s}) $ be the second element of $\hat{\bm{\beta}}(t,\bm{s}) \in \mathbb{R}^{q+1}$, $\beta_2(t,\bm{s}) = \frac{\partial}{\partial t} \mu(t,\bm{s})$, and $\mathbb{X} = \left\{(T_i,\bm{S}_i)\right\}_{i=1}^n$. Then, for any $(t,\bm{s}) \in \mathcal{E}^{\circ}$ and $q>0$, as $h,b,\frac{b^4}{h}\to 0$ and $nh^3 b^d \to\infty$, we know that the asymptotic conditional covariance of $\hat{\beta}_2(t,\bm{s}) \in \mathbb{R}$ is given by
	\begin{align*}
		\mathrm{Var}\left[\hat{\beta}_2(t,\bm{s}) \big| \mathbb{X} \right] &= \frac{\sigma^2}{nh^3 b^d \cdot p(t,\bm{s})} \left[\bm{e}_2^T \bm{M}_q^{-1} \bm{M}_q^* \bm{M}_q^{-1} \bm{e}_2 + O\left(\max\{h,b\}\right) + O_P\left(\sqrt{\frac{1}{nhb^d}}\right) \right],
	\end{align*}
	and the asymptotic conditional bias is given by
	\begin{align*}
		\mathrm{Bias}\left[\hat{\beta}_2(t,\bm{s}) \big| \mathbb{X}\right] =
		\begin{cases}
			\frac{1}{p(t,\bm{s})} \left[h^{q+1} \tau_q^{\text{odd}} + b^2 \tau_q^* + O\left(\frac{b^4}{h}\right) + o_P\left(\sqrt{\frac{1}{nhb^d}}\right) \right] & q \text{ is odd},\\
			\frac{1}{p(t,\bm{s})} \left[h^q \tau_q^{\text{even}} + b^2 \tau_q^* + O\left(\frac{b^4}{h}\right) + o_P\left(\sqrt{\frac{1}{nhb^d}}\right) \right] & q \text{ is even}.
		\end{cases}
	\end{align*}
	Here, $\mathcal{E}^{\circ}$ is the interior of the support $\mathcal{E}\subset \mathcal{T}\times \mathcal{S}$, $\bm{e}_2=(0,1,0,...,0)^T \in \mathbb{R}^{q+1}$, $\bm{M}_q = \left(\kappa_{i+j-2}^{(T)} \right)_{1\leq i,j\leq q+1} \in \mathbb{R}^{(q+1)\times (q+1)}$, $\bm{M}_q^*=\left(\nu_{i+j-2}^{(T)} \nu_0^{(S)} \right)_{1\leq i,j\leq q+1} \in \mathbb{R}^{(q+1)\times (q+1)}$, $\tau_q^{\text{odd}} = \frac{\bm{e}_2^T\bm{M}_q^{-1} \bm{\tau}_q}{h}$ when $q$ is odd and $\tau_q^{\text{even}} = \bm{e}_2^T\bm{M}_q^{-1} \bm{\tau}_q$ when $q$ is even, as well as $\tau_q^* = \frac{\bm{e}_2^T\bm{M}_q^{-1} \bm{\tau}_q^*}{b}$, where
	\begin{align*}
		\bm{\tau}_q = \left[\frac{\partial^{q+1}}{\partial t^{q+1}} \mu(t,\bm{s}) \right] \frac{1}{(q+1)!} 
		\begin{bmatrix}
			\begin{pmatrix}
				\kappa_{q+j}^{(T)} \cdot p(t,\bm{s}) + h\cdot \kappa_{q+j+1}^{(T)} \cdot \frac{\partial}{\partial t} p(t,\bm{s})
			\end{pmatrix}_{1 \leq j \leq q+1}\\
			\begin{pmatrix}
				b\kappa_{q+1}^{(T)} \kappa_{2,j-q-1}^{(S)} \cdot \frac{\partial}{\partial s_{j-q-1}} p(t,\bm{s})
			\end{pmatrix}_{q+1 < j \leq q+1+d}
		\end{bmatrix} \in \mathbb{R}^{(q+1+d)\times (q+1+d)},
	\end{align*}
	\begin{align*}
		\bm{\tau}_q^* = 
		\begin{bmatrix}
			\begin{pmatrix}
				b\sum_{\ell=1}^d \left[\frac{\partial^2}{\partial t \partial s_{\ell}} \mu(t,\bm{s})\right] \kappa_j^{(T)} \kappa_{2,\ell}^{(S)} \cdot \frac{\partial}{\partial s_{\ell}} p(t,\bm{s})
			\end{pmatrix}_{1\leq j \leq q+1}\\
			\begin{pmatrix}
				h\left[\frac{\partial^2}{\partial t \partial s_{j-q-1}} \mu(t,\bm{s})\right] \kappa_2^{(T)} \kappa_{2,j-q-1}^{(S)} \cdot \frac{\partial}{\partial t} p(t,\bm{s})
			\end{pmatrix}_{q+1 < j \leq q+1+d}
		\end{bmatrix} \in \mathbb{R}^{(q+1+d)\times (q+1+d)},
	\end{align*}
	and
	\begin{align*}
		\tilde{\bm{\tau}}_q = 
		\begin{bmatrix}
			\begin{pmatrix}
				\kappa_{j-1}^{(T)} p(t,\bm{s}) \sum_{\ell=1}^d \frac{\kappa_{2,\ell}^{(S)}}{2} \left(\frac{\partial^2}{\partial s_{\ell}^2} \mu(t,\bm{s})\right)
			\end{pmatrix}_{1\leq j \leq q+1}\\
			\begin{pmatrix}
				b \sum_{\ell=1}^d \frac{\kappa_{2,j-q-1,\ell}^{(S)}}{2} \left[\frac{\partial^2}{\partial s_{\ell}^2} \mu(t,\bm{s})\right] \cdot \frac{\partial}{\partial s_{j-q-1}} p(t,\bm{s})
			\end{pmatrix}_{q+1 < j \leq q+1+d}
		\end{bmatrix} \in \mathbb{R}^{(q+1+d)\times (q+1+d)}.
	\end{align*}
	Furthermore, as $h,b,\frac{\max\{h,b\}^4}{h}\to 0$, and $nh^3b^d \to\infty$, we have that
	\begin{align*}
		&\hat{\beta}_2(t,\bm{s}) -  \beta_2(t,\bm{s}) \\
		&=
		\begin{cases}
			O\left(h^{q+1} + b^2 + \frac{b^4}{h}\right) + O_P\left(\sqrt{\frac{1}{nh^3 b^d}}\right) & \text{ if } q \text{ is odd and } (t,\bm{s})\in \mathcal{E}^{\circ},\\
			O\left(h^q + b^2 + \frac{b^4}{h}\right) + O_P\left(\sqrt{\frac{1}{nh^3 b^d}}\right) & \text{ if } q \text{ is even and } (t,\bm{s})\in \mathcal{E}^{\circ},\\
			O\left(h^q + \frac{\max\{h,b\}^4}{h}\right) + O_P\left(\sqrt{\frac{1}{nh^3 b^d}}\right) & \text{ if } q \text{ is an integer and } (t,\bm{s})\in \partial \mathcal{E}.
		\end{cases}
	\end{align*}
\end{lemma}

\begin{proof}[Proof of Lemma~\ref{lem:loc_poly_deriv}]
	Recall from \eqref{localpoly3} that 
	\begin{align*}
		\left(\hat{\bm{\beta}}(t,\bm{s}), \hat{\bm{\alpha}}(t,\bm{s}) \right)^T &= \left[\bm{X}^T(t,\bm{s})\bm{W}(t,\bm{s}) \bm{X}(t,\bm{s})\right]^{-1} \bm{X}(t,\bm{s})^T\bm{W}(t,\bm{s}) \bm{Y} \\
		&\equiv \left(\bm{X}^T\bm{W} \bm{X}\right)^{-1} \bm{X}^T\bm{W} \bm{Y}.
	\end{align*}
	The proof of Lemma~\ref{lem:loc_poly_deriv} has two major components, where we consider the cases when $(t,\bm{s})$ is an interior point or a boundary point of the support $\mathcal{E}$. When $(t,\bm{s})$ is an interior point of $\mathcal{E}$, we also divide the arguments into three steps that deal with the rates of convergence for the term $\bm{X}^T\bm{W} \bm{X}$, the conditional covariance term $\mathrm{Cov}\left[\left(\hat{\bm{\beta}}(t,\bm{s}), \hat{\bm{\alpha}}(t,\bm{s}) \right)^T\right]$, and the conditional bias term $\mathrm{Bias}\left[\left(\hat{\bm{\beta}}(t,\bm{s}), \hat{\bm{\alpha}}(t,\bm{s}) \right)^T \Big| \mathbb{X}\right]$ separately.\\~\\

	We first consider the case when $(t,\bm{s})$ is an interior point of the support $\mathcal{E}$, \emph{i.e.}, for any $(t_1,\bm{s}_1)\in \mathcal{T}\times \mathcal{S}$, $\frac{t_1-t}{h}$ and  $\frac{\bm{s}_1-\bm{s}}{b}$ lie in the supports of $K_T$ and $K_S$ respectively when $h,b$ are small. 
	Notice that the conditional mean outcome function $\mu(t,\bm{s})=\E\left(Y|T=t,\bm{S}=\bm{s}\right)$ is well-defined in $\mathcal{E}$, and we have that
	\begin{align*}
		\mathrm{Cov}\left[\left(\hat{\bm{\beta}}(t,\bm{s}), \hat{\bm{\alpha}}(t,\bm{s}) \right)^T \Big| \mathbb{X}\right] &= \left(\bm{X}^T\bm{W} \bm{X} \right)^{-1} \bm{X}^T \bm{\Sigma} \bm{X} \left(\bm{X}^T\bm{W} \bm{X} \right)^{-1}
	\end{align*}
	and
	\begin{align*}
		\mathrm{Bias}\left[\left(\hat{\bm{\beta}}(t,\bm{s}), \hat{\bm{\alpha}}(t,\bm{s}) \right)^T \Big| \mathbb{X}\right] &= \mathbb{E}\left[\left(\hat{\bm{\beta}}(t,\bm{s}), \hat{\bm{\alpha}}(t,\bm{s}) \right)^T \right] - \left(\bm{\beta}(t,\bm{s}), \bm{\alpha}(t,\bm{s}) \right)^T \\
		&= \left(\bm{X}^T\bm{W} \bm{X} \right)^{-1} \bm{X}^T \bm{W} \left[ \begin{pmatrix}
			\mu(T_1,\bm{S}_1)\\
			\vdots\\
			\mu(T_n,\bm{S}_n)
		\end{pmatrix} - \bm{X} \begin{pmatrix}
			\bm{\beta}(t,\bm{s})\\
			\bm{\alpha}(t,\bm{s})
		\end{pmatrix}\right],
	\end{align*}
	where $\bm{\Sigma} \in \mathbb{R}^{n\times n}$ is a diagonal matrix defined as:
	\begin{align*}
		\bm{\Sigma} &= \mathrm{Var}\left(\bm{Y}|\mathbb{X}\right) \cdot \Diag\left(K_T^2\left(\frac{T_1-t}{h}\right) K_S^2\left(\frac{\bm{S}_1-\bm{s}}{b}\right),..., K_T^2\left(\frac{T_n-t}{h}\right) K_S^2\left(\frac{\bm{S}_n-\bm{s}}{b}\right) \right) \\
		&= \sigma^2 \cdot \Diag\left(K_T^2\left(\frac{T_1-t}{h}\right) K_S^2\left(\frac{\bm{S}_1-\bm{s}}{b}\right),..., K_T^2\left(\frac{T_n-t}{h}\right) K_S^2\left(\frac{\bm{S}_n-\bm{s}}{b}\right) \right)
	\end{align*}
under Assumption~\ref{assump:reg_diff} with  $\mathrm{Var}\left(\bm{Y}|\mathbb{X}\right)=\Diag\Big(\mathrm{Var}(Y_1|T_1,\bm{S}_1),...,\mathrm{Var}(Y_n|T_n,\bm{S}_n)\Big) \in \mathbb{R}^{n\times n}$. In addition, by Assumption~\ref{assump:reg_diff} and Taylor's expansion, we know that
	\[
	\left(\bm{\beta}(t,\bm{s}), \bm{\alpha}(t,\bm{s}) \right)^T \equiv \left(
	\mu(t,\bm{s}),
	\frac{\partial}{\partial t} \mu(t,\bm{s}),..., \frac{1}{q!} \cdot \frac{\partial^q}{\partial t^q} \mu(t,\bm{s}), \frac{\partial}{\partial s_1} \mu(t,\bm{s}),..., \frac{\partial}{\partial s_d} \mu(t,\bm{s})\right)^T.
	\]
	
	\noindent {\bf Step 1: Common term $\bm{X}^T\bm{W} \bm{X}$.} Before deriving the asymptotic behaviors of the above conditional covariance matrix and bias, we first study the rates of convergence of $\bm{X}^T\bm{W} \bm{X} \in \mathbb{R}^{(q+1+d) \times (q+1+d)}$.\\
	
	\noindent By direct calculations, we know that
	\begin{align}
		\label{X_W_mat}
		\begin{split}
			&\left(\bm{X}^T\bm{W} \bm{X}\right)_{i,j} \\
			&= 
			\begin{cases}
				\sum\limits_{k=1}^n (T_k-t)^{i+j-2} K_T\left(\frac{T_k-t}{h}\right) K_S\left(\frac{\bm{S}_k -\bm{s}}{b}\right), \quad\quad 1\leq i,j \leq q+1,\\
				\sum\limits_{k=1}^n (T_k-t)^{i-1} \left(S_{k,j-q-1} - s_{j-q-1}\right) K_T\left(\frac{T_k-t}{h}\right) K_S\left(\frac{\bm{S}_k -\bm{s}}{b}\right), \quad 1\leq i \leq q+1 \text{ and } q+1< j\leq q+1+d,\\
				\sum\limits_{k=1}^n \left(S_{k,i-q-1} - s_{i-q-1}\right) (T_k-t)^{j-1} K_T\left(\frac{T_k-t}{h}\right) K_S\left(\frac{\bm{S}_k -\bm{s}}{b}\right), \quad q+1< i\leq q+1+d \text{ and } 1\leq j \leq q+1,\\
				\sum\limits_{k=1}^n \left(S_{k,i-q-1} - s_{i-q-1}\right) \left(S_{k,j-q-1} - s_{j-q-1}\right) K_T\left(\frac{T_k-t}{h}\right) K_S\left(\frac{\bm{S}_k -\bm{s}}{b}\right), \quad q+1< i,j\leq q+1+d.
			\end{cases}
		\end{split}
	\end{align}
	Here, each sample $\bm{S}_k\in \mathcal{S} \subset \mathbb{R}^d$ is written as $\bm{S}_k = \left(S_{k,1},...,S_{k,d}\right)^T \in \mathbb{R}^d$. We now derive the asymptotic rates of convergence of the expectation and variance for each term in \eqref{X_W_mat} under Assumptions~\ref{assump:reg_diff}, \ref{assump:den_diff}, and \ref{assump:reg_kernel}.\\
	
	\noindent $\bullet$ {\bf Case I:} $1\leq i,j\leq q+1$. We compute that
	\begin{align*}
		&\mathbb{E}\left[\left(\bm{X}^T\bm{W} \bm{X}\right)_{i,j} \right] \\
		&= n\int_{\mathbb{R} \times \mathbb{R}^d} \left(\tilde{t} - t\right)^{i+j-2} K_T\left(\frac{\tilde{t} -t}{h}\right) K_S\left(\frac{\tilde{\bm{s}} - \bm{s}}{b}\right) p(\tilde{t},\tilde{\bm{s}}) \, d\tilde{t} d\tilde{\bm{s}}\\
		&\stackrel{\text{(i)}}{=} nh^{i+j-1} b^d \int_{\mathbb{R} \times \mathbb{R}^d} u^{i+j-2} K_T(u) K_S(\bm{v}) \cdot p(t+uh, \bm{s} + b\bm{v}) \, du d\bm{v}\\
		&\stackrel{\text{(ii)}}{=} nh^{i+j-1} b^d \int_{\mathbb{R} \times \mathbb{R}^d} u^{i+j-2} K_T(u) K_S(\bm{v}) \left[p(t,\bm{s}) + uh \cdot \frac{\partial}{\partial t} p(t,\bm{s}) + b\bm{v}^T \frac{\partial}{\partial \bm{s}} p(t,\bm{s}) + O\left(\max\{h,b\}^2\right)\right] du d\bm{v}\\
		&\stackrel{\text{(iii)}}{=} nh^{i+j-1} b^d \left[\kappa_{i+j-2}^{(T)} \cdot p(t,\bm{s}) + h \cdot \kappa_{i+j-1}^{(T)} \cdot \frac{\partial}{\partial t} p(t,\bm{s}) + O\left(\max\{h,b\}^2\right)\right],
	\end{align*}
	where (i) utilizes the changes of variables $u=\frac{\tilde{t}-t}{h}$ and $\bm{v}=\frac{\tilde{\bm{s}}-\bm{s}}{b}$, (ii) leverages the differentiability of $p(t,\bm{s})$ and apply Taylor's expansion, as well as (iii) uses the symmetric properties of $K_T,K_S$ with notations in Assumption~\ref{assump:reg_kernel}(a). In addition, we calculate that
	\begin{align*}
		&\mathrm{Var}\left[\left(\bm{X}^T\bm{W} \bm{X}\right)_{i,j} \right]\\
		&= n \cdot \mathrm{Var}\left[(T_1-t)^{i+j-2} K_T\left(\frac{T_1-t}{h}\right) K_S\left(\frac{\bm{S}_1 -\bm{s}}{b}\right) \right]\\
		&\leq n \int_{\mathbb{R} \times \mathbb{R}^d} (\tilde{t}-t)^{2i+2j-4} K_T^2\left(\frac{\tilde{t} -t}{h}\right) K_S^2\left(\frac{\tilde{\bm{s}} -\bm{s}}{b}\right) p(\tilde{t},\tilde{\bm{s}}) \, d\tilde{t} d\tilde{\bm{s}}\\
		&= nh^{2i+2j-3} b^d \int_{\mathbb{R} \times \mathbb{R}^d} u^{2i+2j-4} K_T^2(u) K_S^2(v) p(t+uh, \bm{s}+b\bm{v}) \, du d\bm{v} \\
		&= nh^{2i+2j-3} b^d \int_{\mathbb{R} \times \mathbb{R}^d} u^{2i+2j-4} K_T^2(u) K_S^2(v) \left[p(t,\bm{s}) + uh \cdot \frac{\partial}{\partial t} p(t,\bm{s}) + b\bm{v}^T \frac{\partial}{\partial \bm{s}} p(t,\bm{s}) + O\left(\max\{h,b\}^2\right)\right] du d\bm{v} \\
		&= nh^{2i+2j-3} b^d \left[\nu_{2i+2j-4}^{(T)} \nu_0^{(S)} \cdot p(t,\bm{s}) + h \cdot \nu_{2i+2j-3}^{(T)} \nu_0^{(S)} \cdot \frac{\partial}{\partial t} p(t,\bm{s}) + O\left(\max\{h,b\}^2\right) \right].
	\end{align*}
	The above calculations on $\mathbb{E}\left[\left(\bm{X}^T\bm{W} \bm{X}\right)_{i,j} \right]$ and $\mathrm{Var}\left[\left(\bm{X}^T\bm{W} \bm{X}\right)_{i,j} \right]$ imply that
	\begin{align*}
		\left(\bm{X}^T\bm{W} \bm{X}\right)_{i,j} &= \mathbb{E}\left[\left(\bm{X}^T\bm{W} \bm{X}\right)_{i,j} \right] + O_P\left(\sqrt{\mathrm{Var}\left[\left(\bm{X}^T\bm{W} \bm{X}\right)_{i,j} \right]}\right) \\
		&= nh^{i+j-1} b^d \left[\kappa_{i+j-2}^{(T)} \cdot p(t,\bm{s}) + h \cdot \kappa_{i+j-1}^{(T)} \cdot \frac{\partial}{\partial t} p(t,\bm{s}) + O\left(\max\{h,b\}^2\right) + O_P\left(\sqrt{\frac{1}{nhb^d}} \right)\right]
	\end{align*}
	for any $1\leq i,j \leq q+1$.\\
	
	\noindent $\bullet$ {\bf Case II:} $1\leq i\leq q+1$ and $q+1 < j \leq q+1+d$. We compute that
	\begin{align*}
		&\mathbb{E}\left[\left(\bm{X}^T\bm{W} \bm{X}\right)_{i,j} \right]\\
		&= n\int_{\mathbb{R} \times \mathbb{R}^d} \left(\tilde{t} - t\right)^{i-1} (\tilde{s}_{j-q-1} - s_{j-q-1}) K_T\left(\frac{\tilde{t} -t}{h}\right) K_S\left(\frac{\tilde{\bm{s}} - \bm{s}}{b}\right) p(\tilde{t},\tilde{\bm{s}}) \, d\tilde{t} d\tilde{\bm{s}}\\
		&=nh^ib^{d+1} \int_{\mathbb{R} \times \mathbb{R}^d} u^{i-1} v_{j-q-1} K_T(u) K_S(\bm{v}) p(t+hu, \bm{s}+b\bm{v})\, dud\bm{v}\\
		&= nh^ib^{d+1} \int_{\mathbb{R} \times \mathbb{R}^d} u^{i-1} v_{j-q-1} K_T(u) K_S(\bm{v}) \left[p(t,\bm{s}) + uh \cdot \frac{\partial}{\partial t} p(t,\bm{s}) + b\bm{v}^T \frac{\partial}{\partial \bm{s}} p(t,\bm{s}) + O\left(\max\{h,b\}^2\right)\right] dud\bm{v}\\
		&= nh^ib^{d+1} \left[b\cdot \kappa_{i-1}^{(T)} \kappa_{2,j-q-1}^{(S)} \cdot \frac{\partial}{\partial s_{j-q-1}} p(t,\bm{s}) + O\left(\max\{h,b\}^2\right) \right]
	\end{align*}
	and
	\begin{align*}
		&\mathrm{Var}\left[\left(\bm{X}^T\bm{W} \bm{X}\right)_{i,j} \right]\\
		&= n \cdot \mathrm{Var}\left[(T_1-t)^{i-1} (S_{1,j-q-1} -s_{j-q-1}) K_T\left(\frac{T_1-t}{h}\right) K_S\left(\frac{\bm{S}_1 -\bm{s}}{b}\right) \right]\\
		&\leq n \int_{\mathbb{R} \times \mathbb{R}^d} (\tilde{t}-t)^{2i-2} (\tilde{s}_{j-q-1} - s_{j-q-1})^2 K_T^2\left(\frac{\tilde{t} -t}{h}\right) K_S^2\left(\frac{\tilde{\bm{s}} -\bm{s}}{b}\right) p(\tilde{t},\tilde{\bm{s}}) \, d\tilde{t} d\tilde{\bm{s}}\\
		&= nh^{2i-1} b^{d+2} \int_{\mathbb{R} \times \mathbb{R}^d} u^{2i-2} v_{j-q-1}^2 K_T^2(u) K_S^2(\bm{v}) \left[p(t,\bm{s}) + uh \cdot \frac{\partial}{\partial t} p(t,\bm{s}) + b\bm{v}^T \frac{\partial}{\partial \bm{s}} p(t,\bm{s}) + O\left(\max\{h,b\}^2\right)\right] dud\bm{v}\\
		&= nh^{2i-1} b^{d+2} \left[\nu_{2i-2}^{(T)} \nu_{2,j-q-1}^{(S)} \cdot p(t,\bm{s}) + O(h)+ O\left(\max\{h,b\}^2\right)\right].
	\end{align*}
	These two terms indicate that
	\begin{align*}
		\left(\bm{X}^T\bm{W} \bm{X}\right)_{i,j} &= \mathbb{E}\left[\left(\bm{X}^T\bm{W} \bm{X}\right)_{i,j} \right] + O_P\left(\sqrt{\mathrm{Var}\left[\left(\bm{X}^T\bm{W} \bm{X}\right)_{i,j} \right]}\right) \\
		&= nh^i b^{d+1} \left[b\cdot \kappa_{i-1}^{(T)} \kappa_{2,j-q-1}^{(S)} \cdot \frac{\partial}{\partial s_{j-q-1}} p(t,\bm{s}) + O\left(\max\{h,b\}^2\right) + O_P\left(\sqrt{\frac{1}{nhb^d}} \right)\right]
	\end{align*}
	for any $1\leq i\leq q+1$ and $q+1 < j \leq q+1+d$.\\
	
	\noindent $\bullet$ {\bf Case III:} $q+1 < i \leq q+1+d$ and $1\leq j\leq q+1$. By swapping the roles of $i$ and $j$ in our calculations for {\bf Case 2}, we obtain that
	\begin{align*}
		\left(\bm{X}^T\bm{W} \bm{X}\right)_{i,j} &= nh^j b^{d+1} \left[b\cdot \kappa_{j-1}^{(T)} \kappa_{2,i-q-1}^{(S)} \cdot \frac{\partial}{\partial s_{i-q-1}} p(t,\bm{s}) + O\left(\max\{h,b\}^2\right) + O_P\left(\sqrt{\frac{1}{nhb^d}} \right)\right]
	\end{align*}
	for any $q+1 < i \leq q+1+d$ and $1\leq j\leq q+1$.\\
	
	\noindent $\bullet$ {\bf Case IV:} $q+1 < i,j \leq q+1+d$. We compute that
	\begin{align*}
		&\mathbb{E}\left[\left(\bm{X}^T\bm{W} \bm{X}\right)_{i,j} \right]\\
		&= n\int_{\mathbb{R} \times \mathbb{R}^d} (\tilde{s}_{i-q-1} - s_{i-q-1}) (\tilde{s}_{j-q-1} - s_{j-q-1}) K_T\left(\frac{\tilde{t} -t}{h}\right) K_S\left(\frac{\tilde{\bm{s}} - \bm{s}}{b}\right) p(\tilde{t},\tilde{\bm{s}}) \, d\tilde{t} d\tilde{\bm{s}}\\
		&= nhb^{d+2} \int_{\mathbb{R} \times \mathbb{R}^d} v_{i-q-1} \cdot v_{j-q-1} K_T(u) K_S(\bm{v}) \cdot p(t+uh,\bm{s}+b\bm{v})\, dud\bm{v}\\
		&= nhb^{d+2} \int_{\mathbb{R} \times \mathbb{R}^d} v_{i-q-1} \cdot v_{j-q-1} K_T(u) K_S(\bm{v}) \left[p(t,\bm{s}) + uh \cdot \frac{\partial}{\partial t} p(t,\bm{s}) + b\bm{v}^T \frac{\partial}{\partial \bm{s}} p(t,\bm{s}) + O\left(\max\{h,b\}^2\right)\right] dud\bm{v}\\
		&= nhb^{d+2} \left[\kappa_{2,i-q-1}^{(S)} \mathbbm{1}_{\{i=j\}} \cdot p(t,\bm{s}) + O\left(\max\{h,b\}^2\right) \right]
	\end{align*}
	and 
	\begin{align*}
		&\mathrm{Var}\left[\left(\bm{X}^T\bm{W} \bm{X}\right)_{i,j} \right]\\
		&= n \cdot \mathrm{Var}\left[(S_{1,i-q-1} -s_{i-q-1}) (S_{1,j-q-1} -s_{j-q-1}) K_T\left(\frac{T_1-t}{h}\right) K_S\left(\frac{\bm{S}_1 -\bm{s}}{b}\right) \right]\\
		&\leq n \int_{\mathbb{R} \times \mathbb{R}^d} (\tilde{s}_{i-q-1} - s_{i-q-1})^2 (\tilde{s}_{j-q-1} - s_{j-q-1})^2 K_T^2\left(\frac{\tilde{t} -t}{h}\right) K_S^2\left(\frac{\tilde{\bm{s}} -\bm{s}}{b}\right) p(\tilde{t},\tilde{\bm{s}}) \, d\tilde{t} d\tilde{\bm{s}}\\
		&= nhb^{d+4} \int_{\mathbb{R} \times \mathbb{R}^d} v_{i-q-1}^2 v_{j-q-1}^2 K_T^2(u) K_S^2(\bm{v}) \left[p(t,\bm{s}) + uh \cdot \frac{\partial}{\partial t} p(t,\bm{s}) + b\bm{v}^T \frac{\partial}{\partial \bm{s}} p(t,\bm{s}) + O\left(\max\{h,b\}^2\right)\right] dud\bm{v}\\
		&= nhb^{d+4} \left[\nu_{2,i-q-1}^{(S)} \nu_{2,j-q-1}^{(S)} \cdot p(t,\bm{s}) + O\left(\max\{h,b\}^2\right)\right].
	\end{align*}
	The above calculations imply that
	\begin{align*}
		\left(\bm{X}^T\bm{W} \bm{X}\right)_{i,j} &= \mathbb{E}\left[\left(\bm{X}^T\bm{W} \bm{X}\right)_{i,j} \right] + O_P\left(\sqrt{\mathrm{Var}\left[\left(\bm{X}^T\bm{W} \bm{X}\right)_{i,j} \right]}\right) \\
		&= nh b^{d+2} \left[\kappa_{2,i-q-1}^{(S)} \mathbbm{1}_{\{i=j\}} \cdot p(t,\bm{s}) + O\left(\max\{h,b\}^2\right) + O_P\left(\sqrt{\frac{1}{nhb^d}} \right)\right]
	\end{align*}
	for any $q+1 < i,j \leq q+1+d$.\\
	
	Therefore, we summarize all the above cases as:
	\begin{align*}
		&\left(\bm{X}^T\bm{W} \bm{X}\right)_{i,j} \\
		&= \begin{cases}
			nh^{i+j-1} b^d \left[\kappa_{i+j-2}^{(T)} \cdot p(t,\bm{s}) + h \cdot \kappa_{i+j-1}^{(T)} \cdot \frac{\partial}{\partial t} p(t,\bm{s}) + O\left(\max\{h,b\}^2\right) + O_P\left(\sqrt{\frac{1}{nhb^d}} \right)\right], \\
			\hspace{70mm}\quad 1\leq i,j \leq q+1,\\
			nh^i b^{d+1} \left[b\cdot \kappa_{i-1}^{(T)} \kappa_{2,j-q-1}^{(S)} \cdot \frac{\partial}{\partial s_{j-q-1}} p(t,\bm{s}) + O\left(\max\{h,b\}^2\right) + O_P\left(\sqrt{\frac{1}{nhb^d}} \right)\right], \\
			\hspace{70mm}\quad  1\leq i \leq q+1 \text{ and } q+1< j\leq q+1+d,\\
			nh^j b^{d+1} \left[b\cdot \kappa_{j-1}^{(T)} \kappa_{2,i-q-1}^{(S)} \cdot \frac{\partial}{\partial s_{i-q-1}} p(t,\bm{s}) + O\left(\max\{h,b\}^2\right) + O_P\left(\sqrt{\frac{1}{nhb^d}} \right)\right], \\
			\hspace{70mm}\quad  q+1< i\leq q+1+d \text{ and } 1\leq j \leq q+1,\\
			nh b^{d+2} \left[\kappa_{2,i-q-1}^{(S)} \mathbbm{1}_{\{i=j\}} \cdot p(t,\bm{s}) + O\left(\max\{h,b\}^2\right) + O_P\left(\sqrt{\frac{1}{nhb^d}} \right)\right], \quad q+1< i,j\leq q+1+d.
		\end{cases}
	\end{align*}
	Let $\bm{H} = \Diag(1,h,...,h^q,b,...,b) \in \mathbb{R}^{(q+1+d)\times (q+1+d)}$. We also denote 
	\begin{equation*}
		\bm{M}_q = 
		\begin{pmatrix}
			\left(\kappa_{i+j-2}^{(T)} \right)_{1\leq i,j\leq q+1} & \bm{0} \\
			\bm{0} & \left(\kappa_{2,i-q-1}^{(S)} \mathbbm{1}_{\{i=j\}} \right)_{q+1 < i,j \leq q+1+d}
		\end{pmatrix} \in \mathbb{R}^{(q+1+d) \times (q+1+d)}
	\end{equation*}
	and
	{\small\begin{align*}
			&\tilde{\bm{M}}_{q,h,b} \\
			&= 
			\begin{pmatrix}
				h\cdot \frac{\partial}{\partial t} p(t,\bm{s}) \left(\kappa_{i+j-1}^{(T)} \right)_{1\leq i,j\leq q+1} & b\cdot \Diag\left(\frac{\partial}{\partial \bm{s}} p(t,\bm{s}) \right)\left(\kappa_{i-1}^{(T)} \kappa_{2,j-q-1}^{(S)}\right)_{1\leq i \leq q+1, q+1< j\leq q+1+d} \\
				b\cdot \Diag\left(\frac{\partial}{\partial \bm{s}} p(t,\bm{s}) \right) \left(\kappa_{j-1}^{(T)} \kappa_{2,i-q-1}^{(S)}\right)_{q+1< i\leq q+1+d, 1\leq j \leq q+1} & \bm{0}
			\end{pmatrix} \\
			&\in \mathbb{R}^{(q+1+d) \times (q+1+d)}.
	\end{align*}}%
	Then, we can rewrite the asymptotic behaviors of $\bm{X}^T\bm{W}\bm{X}$ in its matrix form as:
	\begin{align}
		\label{X_W_mat_asymp}
		\begin{split}
			&\bm{X}^T\bm{W}\bm{X} = nhb^d \cdot p(t,\bm{s}) \cdot\bm{H} \left[\bm{M}_q + \frac{\tilde{\bm{M}}_{q,h,b}}{p(t,\bm{s})} + O\left(\max\{h,b\}^2\right) + O_P\left(\sqrt{\frac{1}{nhb^d}} \right)\right] \bm{H},
		\end{split}
	\end{align}
	where an abuse of notation is applied when we use $O\left(\max\{h,b\}^2\right) + O_P\left(\sqrt{\frac{1}{nhb}} \right)$ to denote a matrix whose entries are of this order. 
	By the matrix inversion formula 
	$$(A+ \max\{h,b\}\cdot B)^{-1} = A^{-1} - \max\{h,b\}\cdot A^{-1}BA^{-1} + O\left(\max\{h,b\}^2\right),$$ 
	we know that
	\begin{align*}
		&\left(\bm{X}^T\bm{W}\bm{X} \right)^{-1} \\
		&= \frac{1}{nhb^d \cdot p(t,\bm{s})} \cdot \bm{H}^{-1} \Bigg[\bm{M}_q^{-1} - \bm{M}_q^{-1}\cdot \frac{\tilde{\bm{M}}_{q,h,b}}{p(t,\bm{s})} \cdot \bm{M}_q^{-1} + O\left(\max\{h,b\}^2\right) + O_P\left(\sqrt{\frac{1}{nhb^d}} \right)\Bigg] \bm{H}^{-1}.
	\end{align*}
	
	\noindent {\bf Step 2: Conditional covariance term $\mathrm{Cov}\left[\left(\hat{\bm{\beta}}(t,\bm{s}), \hat{\bm{\alpha}}(t,\bm{s}) \right)^T \Big|\mathbb{X}\right]$.} Following our calculations for $\left(\bm{X}^T\bm{W}\bm{X}\right)_{i,j}$ for $1\leq i,j\leq q+1+d$, we can similarly derive that
	\begin{align*}
		&\left(\bm{X}^T\bm{\Sigma} \bm{X}\right)_{i,j} \\	
		&= \begin{cases}
			\sigma^2\sum\limits_{k=1}^n (T_k-t)^{i+j-2} K_T^2\left(\frac{T_k-t}{h}\right) K_S^2\left(\frac{\bm{S}_k -\bm{s}}{b}\right), \quad\quad 1\leq i,j \leq q+1,\\
			\sigma^2\sum\limits_{k=1}^n (T_k-t)^{i-1} \left(S_{k,j-q-1} - s_{j-q-1}\right) K_T^2\left(\frac{T_k-t}{h}\right) K_S^2\left(\frac{\bm{S}_k -\bm{s}}{b}\right), \quad 1\leq i \leq q+1 \text{ and } q+1< j\leq q+1+d,\\
			\sigma^2\sum\limits_{k=1}^n \left(S_{k,i-q-1} - s_{i-q-1}\right) (T_k-t)^{j-1} K_T^2\left(\frac{T_k-t}{h}\right) K_S^2\left(\frac{\bm{S}_k -\bm{s}}{b}\right), \quad q+1< i\leq q+1+d \text{ and } 1\leq j \leq q+1,\\
			\sigma^2\sum\limits_{k=1}^n \left(S_{k,i-q-1} - s_{i-q-1}\right) \left(S_{k,j-q-1} - s_{j-q-1}\right) K_T^2\left(\frac{T_k-t}{h}\right) K_S^2\left(\frac{\bm{S}_k -\bm{s}}{b}\right), \quad q+1< i,j\leq q+1+d,
		\end{cases}
	\end{align*}
	and its asymptotic behaviors become
	\begin{align*}
		&\left(\bm{X}^T\bm{\Sigma} \bm{X}\right)_{i,j} \\
		&= \begin{cases}
			\sigma^2 nh^{i+j-1} b^d \left[\nu_{i+j-2}^{(T)} \nu_0^{(S)} \cdot p(t,\bm{s}) + h \cdot \nu_{i+j-1}^{(T)} \nu_0^{(S)} \cdot \frac{\partial}{\partial t} p(t,\bm{s}) + O\left(\max\{h,b\}^2\right) + O_P\left(\sqrt{\frac{1}{nhb^d}} \right)\right], \\
			\quad \hspace{10cm} 1\leq i,j \leq q+1,\\
			\sigma^2 nh^i b^{d+1} \left[b\cdot \nu_{i-1}^{(T)} \nu_{2,j-q-1}^{(S)} \cdot \frac{\partial}{\partial s_{j-q-1}} p(t,\bm{s}) + O\left(\max\{h,b\}^2\right) + O_P\left(\sqrt{\frac{1}{nhb^d}} \right)\right], \\
			\quad\hspace{90mm}  1\leq i \leq q+1 \text{ and } q+1< j\leq q+1+d,\\
			\sigma^2 nh^j b^{d+1} \left[b\cdot \nu_{j-1}^{(T)} \nu_{2,i-q-1}^{(S)} \cdot \frac{\partial}{\partial s_{i-q-1}} p(t,\bm{s}) + O\left(\max\{h,b\}^2\right) + O_P\left(\sqrt{\frac{1}{nhb^d}} \right)\right], \\
			\quad\hspace{90mm}  q+1< i\leq q+1+d \text{ and } 1\leq j \leq q+1,\\
			\sigma^2 nh b^{d+2} \left[\nu_0^{(T)}\nu_{2,i-q-1}^{(S)} \mathbbm{1}_{\{i=j\}} \cdot p(t,\bm{s}) + O\left(\max\{h,b\}^2\right) + O_P\left(\sqrt{\frac{1}{nhb^d}} \right)\right], \quad q+1< i,j\leq q+1+d.
		\end{cases}
	\end{align*}
	If we denote 
	\[
	\bm{M}_q^* = 
	\begin{pmatrix}
		\left(\nu_{i+j-2}^{(T)} \nu_0^{(S)} \right)_{1\leq i,j\leq q+1} & \bm{0} \\
		\bm{0} & \left(\nu_0^{(T)}\nu_{2,i-q-1}^{(S)} \mathbbm{1}_{\{i=j\}} \right)_{q+1 < i,j \leq q+1+d}
	\end{pmatrix} \in \mathbb{R}^{(q+1+d) \times (q+1+d)}
	\]
	and
	{\small
		\begin{align*}
			&\tilde{\bm{M}}_{q,h,b}^* \\
			&= 
			\begin{pmatrix}
				h\cdot \frac{\partial}{\partial t} p(t,\bm{s})\cdot \left(\nu_{i+j-1}^{(T)} \nu_0^{(S)}\right)_{1\leq i,j\leq q+1} & b\cdot \Diag\left(\frac{\partial}{\partial \bm{s}} p(t,\bm{s}) \right) \left(\nu_{i-1}^{(T)} \nu_{2,j-q-1}^{(S)}\right)_{1\leq i \leq q+1, q+1< j\leq q+1+d} \\
				b\cdot \Diag\left(\frac{\partial}{\partial \bm{s}} p(t,\bm{s}) \right) \left(\nu_{j-1}^{(T)} \nu_{2,i-q-1}^{(S)}\right)_{q+1< i\leq q+1+d, 1\leq j \leq q+1} & \bm{0}
			\end{pmatrix} \\
			&\in \mathbb{R}^{(q+1+d) \times (q+1+d)},
	\end{align*}}%
	then the asymptotic behavior of $\bm{X}^T\bm{\Sigma}\bm{X}$ in its matrix form is
	\begin{align*}
		&\bm{X}^T\bm{\Sigma}\bm{X} = nhb^d \sigma^2 \cdot p(t,\bm{s}) \bm{H} \left[\bm{M}_q^* + \frac{\tilde{\bm{M}}_{q,h,b}^*}{p(t,\bm{s})} + O\left(\max\{h,b\}^2\right) + O_P\left(\sqrt{\frac{1}{nhb^d}} \right)\right] \bm{H}.
	\end{align*}
	With our results for $\bm{X}^T\bm{W}\bm{X}$ in \eqref{X_W_mat_asymp} and Assumption~\ref{assump:den_diff}, we conclude that the asymptotic conditional covariance matrix of $\left(\hat{\bm{\beta}}(t,\bm{s}), \hat{\bm{\alpha}}(t,\bm{s}) \right)^T \in \mathbb{R}^{q+1+d}$ is
	\begin{align*}
		&\mathrm{Cov}\left[\left(\hat{\bm{\beta}}(t,\bm{s}), \hat{\bm{\alpha}}(t,\bm{s}) \right)^T \Big| \mathbb{X}\right] \\
		&= \left(\bm{X}^T\bm{W} \bm{X} \right)^{-1} \bm{X}^T \bm{\Sigma} \bm{X} \left(\bm{X}^T\bm{W} \bm{X} \right)^{-1}\\
		&= \frac{\sigma^2}{nhb^d \cdot p(t,\bm{s})} \cdot \bm{H}^{-1} \Bigg[\bm{M}_q^{-1} - \bm{M}_q^{-1}\cdot \frac{\tilde{\bm{M}}_{q,h,b}}{p(t,\bm{s})} \cdot \bm{M}_q^{-1} + O\left(\max\{h,b\}^2\right) + O_P\left(\sqrt{\frac{1}{nhb^d}} \right)\Bigg]\\
		&\quad \times \left[\bm{M}_q^* + \tilde{\bm{M}}_{q,h,b}^* + O\left(\max\{h,b\}^2\right) + O_P\left(\sqrt{\frac{1}{nhb^d}} \right)\right] \\
		&\quad \times \left[\bm{M}_q^{-1} - \bm{M}_q^{-1} \cdot \frac{\tilde{\bm{M}}_{q,h,b}}{p(t,\bm{s})} \cdot \bm{M}_q^{-1} + O\left(\max\{h,b\}^2\right) + O_P\left(\sqrt{\frac{1}{nhb^d}} \right)\right] \bm{H}^{-1}\\
		&= \frac{\sigma^2}{nhb^d \cdot p(t,\bm{s})} \cdot \bm{H}^{-1} \left[\bm{M}_q^{-1} \bm{M}_q^* \bm{M}_q^{-1} + O\left(\max\{h,b\}\right) + O_P\left(\sqrt{\frac{1}{nhb^d}}\right)\right] \bm{H}^{-1}.
	\end{align*}
	In particular, we know that the asymptotic conditional variance of $\hat{\beta}_2(t,\bm{s})$ is given by
	\begin{align*}
		\mathrm{Var}\left[\hat{\beta}_2(t,\bm{s}) \big| \mathbb{X} \right] &= \frac{\sigma^2}{nh^3 b^d \cdot p(t,\bm{s})} \left[\bm{e}_2^T \bm{M}_q^{-1} \bm{M}_q^* \bm{M}_q^{-1} \bm{e}_2 + O\left(\max\{h,b\}\right) + O_P\left(\sqrt{\frac{1}{nhb^d}}\right) \right],
	\end{align*}
	where $\left\{\bm{e}_1,...,\bm{e}_{q+1+d} \right\}$ is the standard basis in $\mathbb{R}^{q+1+d}$. Since the only entries in $\bm{M}_q,\bm{M}_q^*$ that affects $\mathrm{Var}\left[\hat{\beta}_2(t,\bm{s}) \big| \mathbb{X} \right]$ are those in the first $(q+1+d)$ rows and columns, the results follow by restricting $\bm{M}_q,\bm{M}_q^*$ to their first $(q+1)\times (q+1)$ block matrices and taking $\bm{e}_2=(0,1,0,...,0)^T \in \mathbb{R}^{q+1}$.\\
	
	\noindent {\bf Step 3: Conditional bias term $\mathrm{Bias}\left[\left(\hat{\bm{\beta}}(t,\bm{s}), \hat{\bm{\alpha}}(t,\bm{s}) \right)^T \Big| \mathbb{X}\right]$.} By Assumption~\ref{assump:reg_diff} and Taylor's expansion, we have that
	\begin{equation}
		\label{taylor_reg}
		\mu(T,\bm{S}) = \mu(t,\bm{s}) + \sum_{j=1}^q \frac{1}{q!} \cdot \frac{\partial^j}{\partial t^j} \mu(t,\bm{s}) \cdot (T-t)^j + \sum_{\ell=1}^d \frac{\partial}{\partial s_{\ell}} \mu(t,\bm{s}) \cdot (S_{\ell} -s_{\ell}) + r(T,\bm{S}),
	\end{equation}
	where the reminder term $r(T,\bm{S})$ is given by
	\begin{align*}
		r(T,\bm{S}) &= \frac{(T-t)^{q+1}}{(q+1)!} \left[ \frac{\partial^{q+1}}{\partial t^{q+1}} \mu(t,\bm{s})\right] + \sum_{\ell=1}^d \left[\frac{\partial^2}{\partial t \partial s_{\ell}} \mu(t,\bm{s}) \right] (T-t) (S_{\ell}-s_{\ell}) \\
		&\quad + \frac{1}{2}(\bm{S}-\bm{s})^T \left[\frac{\partial^2}{\partial \bm{s} \partial \bm{s}^T} \mu(t,\bm{s})\right] (\bm{S} -\bm{s}) + o\left(|T-t|^{q+1} + |T-t| \norm{\bm{S}-\bm{s}}_2 + \norm{\bm{S} - \bm{s}}_2^2\right).
	\end{align*}
	Thus, the conditional bias can be written as:
	\begin{align*}
		\mathrm{Bias}\left[\left(\hat{\bm{\beta}}(t,\bm{s}), \hat{\bm{\alpha}}(t,\bm{s}) \right)^T \Big| \mathbb{X}\right] &= \left(\bm{X}^T\bm{W} \bm{X} \right)^{-1} \bm{X}^T \bm{W} \left[ 
		\begin{pmatrix}
			\mu(T_1,\bm{S}_1)\\
			\vdots\\
			\mu(T_n,\bm{S}_n)
		\end{pmatrix} - \bm{X} \begin{pmatrix}
			\bm{\beta}(t,\bm{s})\\
			\bm{\alpha}(t,\bm{s})
		\end{pmatrix}\right]\\
		&= \left(\bm{X}^T\bm{W} \bm{X} \right)^{-1} \bm{X}^T \bm{W} 
		\begin{pmatrix}
			r(T_1,\bm{S}_1)\\
			\vdots\\
			r(T_n,\bm{S}_n)
		\end{pmatrix}.
	\end{align*}
	Now, we note that
	{\small \begin{align*}
			&\bm{X}^T\bm{W}\begin{pmatrix}
				r(T_1,\bm{S}_1)\\
				\vdots\\
				r(T_n,\bm{S}_n)
			\end{pmatrix} \\
			&= \begin{bmatrix}
				\begin{pmatrix}
					\sum\limits_{k=1}^n \left[\frac{\partial^{q+1}}{\partial t^{q+1}} \mu(t,\bm{s}) \right] \frac{(T_k-t)^{q+1}}{(q+1)!} \cdot K_T\left(\frac{T_k-t}{h}\right) K_S\left(\frac{\bm{S}_k-\bm{s}}{b}\right) \\
					+ \sum\limits_{k=1}^n \sum\limits_{\ell=1}^d \left[\frac{\partial^2}{\partial t\partial s_{\ell}} \mu(t,\bm{s}) \right] (T_k-t) \big(S_{k,\ell} - s_{\ell}\big) K_T\left(\frac{T_k-t}{h}\right) K_S\left(\frac{\bm{S}_k-\bm{s}}{b}\right)\\
					+ \sum\limits_{k=1}^n (T_k-t)^{j-1} \cdot \frac{1}{2} (\bm{S}_k-\bm{s})^T \left[\frac{\partial^2}{\partial \bm{s} \partial \bm{s}^T} \mu(t,\bm{s}) \right] (\bm{S}_k -\bm{s}) K_T\left(\frac{T_k-t}{h}\right) K_S\left(\frac{\bm{S}_k-\bm{s}}{b}\right)
				\end{pmatrix}_{1\leq j \leq q+1} \\
				\begin{pmatrix}
					\sum\limits_{k=1}^n \left[\frac{\partial^{q+1}}{\partial t^{q+1}} \mu(t,\bm{s}) \right] \frac{(T_k-t)^{q+1}}{(q+1)!} \cdot (S_{k,j-q-1} - s_{j-q-1}) \cdot K_T\left(\frac{T_k-t}{h}\right) K_S\left(\frac{\bm{S}_k-\bm{s}}{b}\right) \\
					+ \sum\limits_{k=1}^n \sum\limits_{\ell=1}^d \left[\frac{\partial^2}{\partial t\partial s_{\ell}} \mu(t,\bm{s}) \right] (T_k-t) (S_{k,\ell} - s_{\ell}) (S_{k,j-q-1} - s_{j-q-1}) K_T\left(\frac{T_k-t}{h}\right) K_S\left(\frac{\bm{S}_k-\bm{s}}{b}\right)\\
					+ \sum\limits_{k=1}^n (S_{k,j-q-1} -s_{j-q-1}) \cdot \frac{1}{2} (\bm{S}_k-\bm{s})^T \left[\frac{\partial^2}{\partial \bm{s} \partial \bm{s}^T} \mu(t,\bm{s}) \right] (\bm{S}_k -\bm{s}) K_T\left(\frac{T_k-t}{h}\right) K_S\left(\frac{\bm{S}_k-\bm{s}}{b}\right)
				\end{pmatrix}_{q+1 < j \leq q+1+d}
			\end{bmatrix}\\
			&= \begin{bmatrix}
				\begin{pmatrix}
					nh^{q+1+j} b^d \left[\frac{\partial^{q+1}}{\partial t^{q+1}} \mu(t,\bm{s}) \right] \frac{1}{(q+1)!} \left[\kappa_{q+j}^{(T)} p(t,\bm{s}) + \kappa_{q+j+1}^{(T)} h \frac{\partial}{\partial t} p(t,\bm{s}) + O\left(\max\{h,b\}^2\right) +O_P\left(\sqrt{\frac{1}{nhb^d}}\right)\right]\\
					+ nh^{j+1} b^{d+1} \left[b\cdot \sum_{\ell=1}^d \left[\frac{\partial^2}{\partial t \partial s_{\ell}} \mu(t,\bm{s})\right] \kappa_j^{(T)} \kappa_{2,\ell}^{(S)} \cdot \frac{\partial}{\partial s_{\ell}} p(t,\bm{s}) + O\left(\max\{h,b\}^2\right) +O_P\left(\sqrt{\frac{1}{nhb^d}}\right)\right]\\
					+ nh^jb^{d+2} \left[\kappa_{j-1}^{(T)} p(t,\bm{s}) \sum_{\ell=1}^d \frac{\kappa_{2,\ell}^{(S)}}{2} \left(\frac{\partial^2}{\partial s_{\ell}^2} \mu(t,\bm{s})\right) + O\left(\max\{h,b\}^2\right) +O_P\left(\sqrt{\frac{1}{nhb^d}}\right) \right]
				\end{pmatrix}_{1\leq j \leq q+1} \\
				\begin{pmatrix}
					nh^{q+2} b^{d+1} \left[\frac{\partial^{q+1}}{\partial t^{q+1}} \mu(t,\bm{s}) \right] \frac{1}{(q+1)!} \left[b\kappa_{q+1}^{(T)} \kappa_{2,j-q-1}^{(S)} \frac{\partial}{\partial s_{j-q-1}} p(t,\bm{s}) + O\left(\max\{h,b\}^2\right) +O_P\left(\sqrt{\frac{1}{nhb^d}}\right)\right]\\
					+ nh^2 b^{d+2} \left[\frac{\partial^2}{\partial t \partial s_{j-q-1}} \mu(t,\bm{s})\right] \left[h \kappa_2^{(T)} \kappa_{2,j-q-1}^{(S)} \cdot \frac{\partial}{\partial t} p(t,\bm{s}) + O\left(\max\{h,b\}^2\right) +O_P\left(\sqrt{\frac{1}{nhb^d}}\right)\right]\\
					+ nhb^{d+3} \left[b \sum_{\ell=1}^d \frac{\kappa_{2,j-q-1,\ell}^{(S)}}{2} \left(\frac{\partial^2}{\partial s_{\ell}^2} \mu(t,\bm{s})\right) \cdot \frac{\partial}{\partial s_{j-q-1}} p(t,\bm{s}) + O\left(\max\{h,b\}^2\right) +O_P\left(\sqrt{\frac{1}{nhb^d}}\right) \right]
				\end{pmatrix}_{q+1 < j \leq q+1+d}
			\end{bmatrix}\\
			&=nhb^d \bm{H} \begin{bmatrix}
				\begin{pmatrix}
					h^{q+1} \left[\frac{\partial^{q+1}}{\partial t^{q+1}} \mu(t,\bm{s}) \right] \frac{1}{(q+1)!} \left[\kappa_{q+j}^{(T)} p(t,\bm{s}) + \kappa_{q+j+1}^{(T)} h \frac{\partial}{\partial t} p(t,\bm{s}) + O\left(\max\{h,b\}^2\right) +O_P\left(\sqrt{\frac{1}{nhb^d}}\right)\right]\\
					+ hb \left[b\cdot \sum_{\ell=1}^d \left(\frac{\partial^2}{\partial t \partial s_{\ell}} \mu(t,\bm{s})\right) \kappa_j^{(T)} \kappa_{2,\ell}^{(S)} \cdot \frac{\partial}{\partial s_{\ell}} p(t,\bm{s}) + O\left(\max\{h,b\}^2\right) +O_P\left(\sqrt{\frac{1}{nhb^d}}\right)\right]\\
					+ b^2 \left[\kappa_{j-1}^{(T)} p(t,\bm{s}) \sum_{\ell=1}^d \frac{\kappa_{2,\ell}^{(S)}}{2} \left(\frac{\partial^2}{\partial s_{\ell}^2} \mu(t,\bm{s})\right) + O\left(\max\{h,b\}^2\right) +O_P\left(\sqrt{\frac{1}{nhb^d}}\right) \right]
				\end{pmatrix}_{1\leq j \leq q+1} \\
				\begin{pmatrix}
					h^{q+1} \left[\frac{\partial^{q+1}}{\partial t^{q+1}} \mu(t,\bm{s}) \right] \frac{1}{(q+1)!} \left[b\kappa_{q+1}^{(T)} \kappa_{2,j-q-1}^{(S)} \frac{\partial}{\partial s_{j-q-1}} p(t,\bm{s}) + O\left(\max\{h,b\}^2\right) +O_P\left(\sqrt{\frac{1}{nhb^d}}\right)\right]\\
					+ hb \left[\frac{\partial^2}{\partial t \partial s_{j-q-1}} \mu(t,\bm{s})\right] \left[h \kappa_2^{(T)} \kappa_{2,j-q-1}^{(S)} \cdot \frac{\partial}{\partial t} p(t,\bm{s}) + O\left(\max\{h,b\}^2\right) +O_P\left(\sqrt{\frac{1}{nhb^d}}\right)\right]\\
					+ b^2 \left[b \sum_{\ell=1}^d \frac{\kappa_{2,j-q-1,\ell}^{(S)}}{2} \left(\frac{\partial^2}{\partial s_{\ell}^2} \mu(t,\bm{s})\right) \cdot \frac{\partial}{\partial s_{j-q-1}} p(t,\bm{s}) + O\left(\max\{h,b\}^2\right) +O_P\left(\sqrt{\frac{1}{nhb^d}}\right) \right]
				\end{pmatrix}_{q+1 < j \leq q+1+d}
			\end{bmatrix}.
	\end{align*}}%
	Therefore, we know from the above display and \eqref{X_W_mat_asymp} that
	\begin{align*}
		&\mathrm{Bias}\left[\left(\hat{\bm{\beta}}(t,\bm{s}), \hat{\bm{\alpha}}(t,\bm{s}) \right)^T \Big| \mathbb{X}\right] \\
		&= \left(\bm{X}^T\bm{W}\bm{X}\right)^{-1}\bm{X}^T\bm{W}\begin{pmatrix}
			r(T_1,\bm{S}_1)\\
			\vdots\\
			r(T_n,\bm{S}_n)
		\end{pmatrix}\\
		&= \frac{1}{p(t,\bm{s})} \cdot \bm{H}^{-1} \Bigg[\bm{M}_q^{-1} + O\left(\max\{h,b\}\right) + O_P\left(\sqrt{\frac{1}{nhb^d}} \right)\Bigg]\\
		&\quad \times \begin{bmatrix}
			\begin{pmatrix}
				h^{q+1} \left[\frac{\partial^{q+1}}{\partial t^{q+1}} \mu(t,\bm{s}) \right] \frac{1}{(q+1)!} \left[\kappa_{q+j}^{(T)} p(t,\bm{s}) + \kappa_{q+j+1}^{(T)} h \frac{\partial}{\partial t} p(t,\bm{s}) + O\left(\max\{h,b\}^2\right) +O_P\left(\sqrt{\frac{1}{nhb^d}}\right)\right]\\
				+ hb \left[b\cdot \sum_{\ell=1}^d \left[\frac{\partial^2}{\partial t \partial s_{\ell}} \mu(t,\bm{s})\right] \kappa_j^{(T)} \kappa_{2,\ell}^{(S)} \cdot \frac{\partial}{\partial s_{\ell}} p(t,\bm{s}) + O\left(\max\{h,b\}^2\right) +O_P\left(\sqrt{\frac{1}{nhb^d}}\right)\right]\\
				+ b^2 \left[\kappa_{j-1}^{(T)} p(t,\bm{s}) \sum_{\ell=1}^d \frac{\kappa_{2,\ell}^{(S)}}{2} \left(\frac{\partial^2}{\partial s_{\ell}^2} \mu(t,\bm{s})\right) + O\left(\max\{h,b\}^2\right) +O_P\left(\sqrt{\frac{1}{nhb^d}}\right) \right]
			\end{pmatrix}_{1\leq j \leq q+1} \\
			\begin{pmatrix}
				h^{q+1} \left[\frac{\partial^{q+1}}{\partial t^{q+1}} \mu(t,\bm{s}) \right] \frac{1}{(q+1)!} \left[b\kappa_{q+1}^{(T)} \kappa_{2,j-q-1}^{(S)} \frac{\partial}{\partial s_{j-q-1}} p(t,\bm{s}) + O\left(\max\{h,b\}^2\right) +O_P\left(\sqrt{\frac{1}{nhb^d}}\right)\right]\\
				+ hb \left[\frac{\partial^2}{\partial t \partial s_{j-q-1}} \mu(t,\bm{s})\right] \left[h \kappa_2^{(T)} \kappa_{2,j-q-1}^{(S)} \cdot \frac{\partial}{\partial t} p(t,\bm{s}) + O\left(\max\{h,b\}^2\right) +O_P\left(\sqrt{\frac{1}{nhb^d}}\right)\right]\\
				+ b^2 \left[b \sum_{\ell=1}^d \frac{\kappa_{2,j-q-1,\ell}^{(S)}}{2} \left(\frac{\partial^2}{\partial s_{\ell}^2} \mu(t,\bm{s})\right) \cdot \frac{\partial}{\partial s_{j-q-1}} p(t,\bm{s}) + O\left(\max\{h,b\}^2\right) +O_P\left(\sqrt{\frac{1}{nhb^d}}\right) \right]
			\end{pmatrix}_{q+1 < j \leq q+1+d}
		\end{bmatrix}\\
		&=\frac{1}{p(t,\bm{s})} \bm{H}^{-1} \left[\bm{M}_q^{-1} + O\left(\max\{h,b\}\right) + O_P\left(\sqrt{\frac{1}{nhb^d}} \right)\right]\\
		&\quad \times \left[h^{q+1} \bm{\tau}_q + hb \bm{\tau}_q^* + b^2 \tilde{\bm{\tau}}_q + O\left(\max\{h,b\}^4\right) + o_P\left(\sqrt{\frac{1}{nhb^d}}\right) \right],
	\end{align*}
	where
	\begin{align*}
		\bm{\tau}_q = \frac{\left[\frac{\partial^{q+1}}{\partial t^{q+1}} \mu(t,\bm{s}) \right]}{(q+1)!} 
		\begin{bmatrix}
			\begin{pmatrix}
				\kappa_{q+j}^{(T)} \cdot p(t,\bm{s}) + h\cdot \kappa_{q+j+1}^{(T)} \cdot \frac{\partial}{\partial t} p(t,\bm{s})
			\end{pmatrix}_{1 \leq j \leq q+1}\\
			\begin{pmatrix}
				b\kappa_{q+1}^{(T)} \kappa_{2,j-q-1}^{(S)} \cdot \frac{\partial}{\partial s_{j-q-1}} p(t,\bm{s})
			\end{pmatrix}_{q+1 < j \leq q+1+d}
		\end{bmatrix} \in \mathbb{R}^{(q+1+d)\times (q+1+d)},
	\end{align*}
	\begin{align*}
		\bm{\tau}_q^* = 
		\begin{bmatrix}
			\begin{pmatrix}
				b\sum_{\ell=1}^d \left[\frac{\partial^2}{\partial t \partial s_{\ell}} \mu(t,\bm{s})\right] \kappa_j^{(T)} \kappa_{2,\ell}^{(S)} \cdot \frac{\partial}{\partial s_{\ell}} p(t,\bm{s})
			\end{pmatrix}_{1\leq j \leq q+1}\\
			\begin{pmatrix}
				h\left[\frac{\partial^2}{\partial t \partial s_{j-q-1}} \mu(t,\bm{s})\right] \kappa_2^{(T)} \kappa_{2,j-q-1}^{(S)} \cdot \frac{\partial}{\partial t} p(t,\bm{s})
			\end{pmatrix}_{q+1 < j \leq q+1+d}
		\end{bmatrix} \in \mathbb{R}^{(q+1+d)\times (q+1+d)},
	\end{align*}
	and
	\begin{align*}
		\tilde{\bm{\tau}}_q = 
		\begin{bmatrix}
			\begin{pmatrix}
				\kappa_{j-1}^{(T)} p(t,\bm{s}) \sum_{\ell=1}^d \frac{\kappa_{2,\ell}^{(S)}}{2} \left(\frac{\partial^2}{\partial s_{\ell}^2} \mu(t,\bm{s})\right)
			\end{pmatrix}_{1\leq j \leq q+1}\\
			\begin{pmatrix}
				b \sum_{\ell=1}^d \frac{\kappa_{2,j-q-1,\ell}^{(S)}}{2} \left[\frac{\partial^2}{\partial s_{\ell}^2} \mu(t,\bm{s})\right] \cdot \frac{\partial}{\partial s_{j-q-1}} p(t,\bm{s})
			\end{pmatrix}_{q+1 < j \leq q+1+d}
		\end{bmatrix} \in \mathbb{R}^{(q+1+d)\times (q+1+d)}.
	\end{align*}
	Since $\kappa_{2j-1}^{(T)} = 0$ for all $j=1,2,...,q+1$ by Assumption~\ref{assump:reg_kernel}(a), we know that
	\begin{align}
		\label{M_q_inter_zero}
		\begin{split}
			\bm{M}_q &= 
			\begin{pmatrix}
				\left(\kappa_{i+j-2}^{(T)} \right)_{1\leq i,j\leq q+1} & \bm{0} \\
				\bm{0} & \left(\kappa_{2,i-q-1}^{(S)} \mathbbm{1}_{\{i=j\}} \right)_{q+1 < i,j \leq q+1+d}
			\end{pmatrix} \\
			&= 
			\begin{pmatrix}
				\kappa_0^{(T)} & 0 & \kappa_2^{(T)} & 0 & \cdots & \kappa_q^{(T)} & \\
				0 & \kappa_2^{(T)} & 0 & \kappa_4^{(T)} & \cdots & \kappa_{q+1}^{(T)} & \\
				\vdots & \multicolumn{4}{c}{\ddots} & \vdots & \bm{0}\\
				\kappa_q^{(T)} & \multicolumn{4}{c}{\cdots} & \kappa_{2q}^{(T)} &\\
				& & \bm{0} & & & &\left(\kappa_{2,i-q-1}^{(S)} \mathbbm{1}_{\{i=j\}} \right)_{q+1 < i,j \leq q+1+d}
			\end{pmatrix},
		\end{split}
	\end{align}
	\emph{i.e.}, each row/column of the upper $(q+1)\times (q+1)$ submatrix of $M_q$ has interleaving nonzero and zero entries. More importantly, given that 
	$$\bm{M}_q^{-1} = \frac{1}{\det(\bm{M}_q)} \cdot \mathrm{adj}(\bm{M}_q)$$ 
	with $\mathrm{adj}(\bm{M}_q) \in \mathbb{R}^{(q+1+d)\times (q+1+d)}$ being the adjugate of $\bm{M}_q$, it can be shown that $\bm{M}_q^{-1}$ has the identical sparsity structures as $\bm{M}_q$. In more details, any $(i,j)$-minor of $\bm{M}_q$ with $i+j$ being odd has two rows/columns that share an identical sparsity structure. Recalling the formula of calculating a determinant 
	$$\det(A) = \sum_{\sigma_q}\left(\text{sign}(\sigma_q) \prod_{i=1}^{q+1+d} A_{i\sigma_q(i)}\right),$$ 
	where the sum is over all $(q+1+d)!$ permutations $\sigma_q$, we know that each summand is zero when we compute the determinant of $(i,j)$-minor of $\bm{M}_q$ with $i+j$ being odd through the above formula. Hence, the determinant of $(i,j)$-minor of $\bm{M}_q$ with $i+j$ being odd is zero. A similar argument can be found in the proof of Theorem 2 in \cite{fan1996study}.
	
	Using this result, we derive the bias term $\mathrm{Bias}\left[\hat{\beta}_2(t,\bm{s}) \big| \mathbb{X} \right]$ when $q$ is either odd or even (recommended) as follows.
	
	\noindent $\bullet$ {\bf Case I: $q$ is odd.} Then, 
	$$\bm{e}_2^T \bm{M}_q^{-1} = \left(\underbrace{0,\star, 0, \star,...,0, \star}_{1\leq j\leq q+1}, \underbrace{\bm{0}}_{q+1 < j \leq q+1+d}\right) \in \mathbb{R}^{q+1+d},$$
	where $\star \in \mathbb{R}$ stands for some nonzero bounded value (not necessarily equal for each entry). This implies that
	\begin{align*}
		&\bm{e}_2^T\bm{M}_q^{-1} \bm{\tau}_q \\
		&= \left(\underbrace{0,\star, 0, \star,...,0, \star}_{1\leq j\leq q+1}, \underbrace{\bm{0}}_{q+1 < j \leq q+1+d}\right) 
		\frac{\left[\frac{\partial^{q+1}}{\partial t^{q+1}} \mu(t,\bm{s}) \right]}{(q+1)!} 
		\begin{bmatrix}
			\begin{pmatrix}
				\kappa_{q+1}^{(T)} \cdot p(t,\bm{s})\\
				h\cdot \kappa_{q+3}^{(T)} \cdot \frac{\partial}{\partial t} p(t,\bm{s})\\
				\vdots\\
				h\cdot \kappa_{2q+2}^{(T)} \cdot \frac{\partial}{\partial t} p(t,\bm{s})
			\end{pmatrix}_{1 \leq j \leq q+1}\\
			\begin{pmatrix}
				b\cdot \kappa_{q+1}^{(T)} \kappa_{2,j-q-1}^{(S)} \cdot \frac{\partial}{\partial s_{j-q-1}} p(t,\bm{s})
			\end{pmatrix}_{q+1 < j \leq q+1+d}
		\end{bmatrix}\\
		&= h \cdot \tau_q^{\text{odd}},
	\end{align*}
	where $\tau_q^{\text{odd}} = \frac{\bm{e}_2^T\bm{M}_q^{-1} \bm{\tau}_q}{h}$ is a dominating term independent of $h,b$. Similarly, we know that
	$$\bm{e}_2^T\bm{M}_q^{-1} \bm{\tau}_q^* = b \cdot \tau_q^*, \quad \bm{e}_2^T\bm{M}_q^{-1} \tilde{\tau}_q = 0,$$
	where $\tau_q^* = \frac{\bm{e}_2^T\bm{M}_q^{-1} \bm{\tau}_q^*}{b}$ is a constant independent of $b,h$. Hence, when $q$ is odd and $h,b\to 0, nhb^d \to \infty$, we obtain that the asymptotic conditional bias of $\hat{\beta}_2(t,\bm{s})$ is given by
	\begin{align}
		\label{bias_term1}
		\begin{split}
			\mathrm{Bias}\left[\hat{\beta}_2(t,\bm{s}) \big| \mathbb{X} \right] &= \frac{1}{h\cdot p(t,\bm{s})} \left[h^{q+1} \bm{e}_2^T\bm{M}_q^{-1}\bm{\tau}_q + hb \bm{e}_2^T\bm{M}_q^{-1}\bm{\tau}_q^* + b^2 \bm{e}_2^T\bm{M}_q^{-1}\tilde{\bm{\tau}}_q + o_P\left(\sqrt{\frac{1}{nhb^d}}\right) \right]\\
			&= \frac{1}{p(t,\bm{s})} \left[h^{q+1} \tau_q^{\text{odd}} + b^2 \tau_q^* + O\left(\frac{b^4}{h}\right) + o_P\left(\sqrt{\frac{1}{nhb^d}}\right) \right].
		\end{split}
	\end{align}
	
	\noindent $\bullet$ {\bf Case II: $q$ is even.} Then,
	$$\bm{e}_2^T \bm{M}_q^{-1} = \left(\underbrace{0,\star, 0, \star,..., 0, \star,0}_{1\leq j\leq q+1}, \underbrace{\bm{0}}_{q+1 < j \leq q+1+d}\right) \in \mathbb{R}^{q+1+d},$$
	where $\star \in \mathbb{R}$ stands for some nonzero bounded value (not necessarily equal for each entry). This indicates that
	\begin{align*}
		&\bm{e}_2^T\bm{M}_q^{-1} \bm{\tau}_q \\
		&= \left(\underbrace{0,\star, 0, \star,...,0,\star,0}_{1\leq j\leq q+1}, \underbrace{\bm{0}}_{q+1 < j \leq q+1+d}\right) 
		\left[\frac{\partial^{q+1}}{\partial t^{q+1}} \mu(t,\bm{s}) \right] \frac{1}{(q+1)!} 
		\begin{bmatrix}
			\begin{pmatrix}
				h\cdot \kappa_{q+2}^{(T)} \cdot \frac{\partial}{\partial t} p(t,\bm{s})\\
				\kappa_{q+2}^{(T)} \cdot p(t,\bm{s})\\
				\vdots\\
				h\cdot \kappa_{2q+2}^{(T)} \cdot \frac{\partial}{\partial t} p(t,\bm{s})
			\end{pmatrix}_{1 \leq j \leq q+1}\\
			\bm{0}_{q+1 < j \leq q+1+d}
		\end{bmatrix}\\
		&= \tau_q^{\text{even}},
	\end{align*}
	where $\tau_q^{\text{even}} = \bm{e}_2^T\bm{M}_q^{-1} \bm{\tau}_q$ is a dominating constant independent of $h,b$. Similarly, we know that
	$$\bm{e}_2^T\bm{M}_q^{-1} \bm{\tau}_q^* = b \cdot \tau_q^*, \quad \bm{e}_2^T\bm{M}_q^{-1} \tilde{\tau}_q = 0,$$
	where $\tau_q^* = \frac{\bm{e}_2^T\bm{M}_q^{-1} \bm{\tau}_q^*}{b}$ is a constant independent of $b,h$. Hence, when $q$ is even and $h,b\to 0, nh b^d \to \infty$, we obtain that the asymptotic conditional bias of $\hat{\beta}_2(t,\bm{s})$ is given by
	\begin{align*}
		&\mathrm{Bias}\left[\hat{\beta}_2(t,\bm{s}) \big| \mathbb{X} \right] \\
		&= \frac{1}{h\cdot p(t,\bm{s})} \left[h^{q+1} \bm{e}_2^T\bm{M}_q^{-1}\bm{\tau}_q + hb \bm{e}_2^T\bm{M}_q^{-1}\bm{\tau}_q^* + b^2 \bm{e}_2^T\bm{M}_q^{-1}\tilde{\bm{\tau}}_q + O\left(\max\{h,b\}^4\right) + o_P\left(\sqrt{\frac{1}{nhb^d}}\right) \right]\\
		&= \frac{1}{p(t,\bm{s})} \left[h^q \tau_q^{\text{even}} + b^2 \tau_q^* + O\left(\frac{b^4}{h}\right) + o_P\left(\sqrt{\frac{1}{nhb^d}}\right) \right].
	\end{align*}
	
	In summary, as $h,b\to 0$ and $nh^3b^d \to \infty$, we know that
	\begin{align*}
		\mathrm{Bias}\left[\hat{\beta}_2(t,\bm{s}) \big| \mathbb{X}\right] =
		\begin{cases}
			\frac{1}{p(t,\bm{s})} \left[h^{q+1} \tau_q^{\text{odd}} + b^2 \tau_q^* + O\left(\frac{b^4}{h}\right) + o_P\left(\sqrt{\frac{1}{nhb^d}}\right) \right] & q \text{ is odd},\\
			\frac{1}{p(t,\bm{s})} \left[h^q \tau_q^{\text{even}} + b^2 \tau_q^* + O\left(\frac{b^4}{h}\right) + o_P\left(\sqrt{\frac{1}{nhb^d}}\right) \right] & q \text{ is even}.
		\end{cases}
	\end{align*}
	\vspace{3mm}
	
	Now, we consider the case when $(t,\bm{s})$ lies in the boundary region of $\mathcal{E}$. In this case, those dominating constants in $\mathrm{Var}\left[\hat{\beta}_2(t,\bm{s}) \big|\mathbb{X}\right]$ and $\mathrm{Bias}\left[\hat{\beta}_2(t,\bm{s}) \big| \mathbb{X}\right]$ would be different but the asymptotic rates remain the same due to Assumption~\ref{assump:boundary_cond}(a). In more details, $\kappa_j^{(T)}, \nu_j^{(T)}, \kappa_{j,\ell}^{(S)}, \nu_{j,k}^{(S)}$ are now defined as:
	\begin{align}
		\label{kernel_terms}
		\begin{split}
			&\kappa_j^{(T)} = \int_{\left\{u\in \mathbb{R}: (t+hu,\bm{s}+b\bm{v})\in \mathcal{E}\text{ for some } \bm{v}\in \mathbb{R}^d\right\}} u^j K_T(u) \, du < \infty,\\ 
			&\nu_j^{(T)} = \int_{\left\{u\in \mathbb{R}: (t+hu,\bm{s}+b\bm{v})\in \mathcal{E}\text{ for some } \bm{v}\in \mathbb{R}^d\right\}} u^j K_T^2(u) \, du < \infty, \\ 
			&\kappa_{j,\ell}^{(S)} = \int_{\left\{\bm{v}\in \mathbb{R}^d: (t+hu,\bm{s}+b\bm{v})\in \mathcal{E}\text{ for some } u\in \mathbb{R}\right\}} u_{\ell}^j K_S(\bm{u})\, d\bm{u} < \infty, \\ 
			& \text{ and } \quad \nu_{j,k}^{(S)} = \int_{\left\{\bm{v}\in \mathbb{R}^d: (t+hu,\bm{s}+b\bm{v})\in \mathcal{E}\text{ for some } u\in \mathbb{R}\right\}} u_{\ell}^j K_S^2(\bm{u})\, d\bm{u} < \infty.
		\end{split}
	\end{align}
	These terms are again bounded even when $h,b\to 0$. Under the above new definitions, we can redefine $\bm{M}_q, \tilde{\bm{M}}_{q,h,b}, \bm{M}_q^*, \tilde{\bm{M}}_{q,h,b}^*, \bm{\tau}_q, \bm{\tau}_q^*, \tilde{\bm{\tau}}_q$ accordingly, where those partial derivatives of $p(t,\bm{s})$ and the corresponding Taylor's expansion are defined at a nearby interior point by Assumption~\ref{assump:boundary_cond}(a) and taking the limit to the boundary $\partial \mathcal{E}$. More importantly, the matrix $\bm{M}_q$ remains non-singular as $h,b\to 0$ due to Assumption~\ref{assump:boundary_cond}(a) and Lemma 7.1 in \cite{fan2015multivariate}. A similar argument can also be found in Theorem 2.2 in \cite{ruppert1994multivariate}.
	
	However, the matrices $\bm{M}_q$ and $\bm{M}_q^{-1}$ no longer have the interleaving zero structures as in \eqref{M_q_inter_zero} because of the asymmetric integrated ranges for those terms in \eqref{kernel_terms}. Nevertheless, by our Assumption~\ref{assump:boundary_cond}(b), we know that both $\bm{\tau}_q^*=\tilde{\bm{\tau}}_q=\bm{0}$. Thus, no matter $q>0$ is odd or even, the asymptotic conditional bias of $\hat{\beta}_2(t,\bm{s})$ is given by 
	\begin{align*}
		&\mathrm{Bias}\left[\hat{\beta}_2(t,\bm{s}) \big| \mathbb{X} \right] \\
		&= \frac{1}{h\cdot p(t,\bm{s})} \left[h^{q+1} \bm{e}_2^T\bm{M}_q^{-1}\bm{\tau}_q + hb \bm{e}_2^T\bm{M}_q^{-1}\bm{\tau}_q^* + b^2 \bm{e}_2^T\bm{M}_q^{-1}\tilde{\bm{\tau}}_q + O\left(\max\{h,b\}^4\right) + o_P\left(\sqrt{\frac{1}{nhb^d}}\right) \right]\\
		&= \frac{1}{p(t,\bm{s})} \left[h^q \bm{e}_2^T\bm{M}_q^{-1}\bm{\tau}_q + O\left(\frac{\max\{h,b\}^4}{h}\right) + o_P\left(\sqrt{\frac{1}{nhb^d}}\right) \right].
	\end{align*} 
	
	Finally, for any $(t,\bm{s})\in \mathcal{E}$, the unconditional asymptotic rate of convergence of $\hat{\beta}_2(t,\bm{s})$ can be derived by noting that
	$$\mathbb{E}\left[\hat{\beta}_2(t,\bm{s})\right] = \mathbb{E}\left[\mathbb{E}\left(\hat{\beta}_2(t,\bm{s})\big| \mathbb{X}\right)\right] \quad \text{ and } \quad \mathrm{Var}\left[\hat{\beta}_2(t,\bm{s})\right] = \mathbb{E}\left[\mathrm{Var}\left(\hat{\beta}_2(t,\bm{s})\big| \mathbb{X} \right)\right] + \mathrm{Var}\left[\mathbb{E}\left(\hat{\beta}_2(t,\bm{s})\big| \mathbb{X}\right)\right].$$
	Therefore, if $(t,\bm{s})$ is an interior point of $\mathcal{E}$, then we have that
	$$\hat{\beta}_2(t,\bm{s}) -  \frac{\partial}{\partial t} \mu(t,\bm{s}) =
	\begin{cases}
		O\left(h^{q+1} + b^2 + \frac{b^4}{h}\right) + O_P\left(\sqrt{\frac{1}{nh^3 b^d}}\right) & q \text{ is odd},\\
		O\left(h^q + b^2 + \frac{b^4}{h}\right) + O_P\left(\sqrt{\frac{1}{nh^3 b^d}}\right) & q \text{ is even},
	\end{cases}$$
	as $h,b\to 0$ and $nh^3b^d \to \infty$. Otherwise, if $(t,\bm{s})\in \partial \mathcal{E}$, then we have that
	$$\hat{\beta}_2(t,\bm{s}) - \frac{\partial}{\partial t} \mu(t,\bm{s}) = O\left(h^q + \frac{\max\{h,b\}^4}{h}\right) + O_P\left(\sqrt{\frac{1}{nh^3 b^d}}\right)$$
	as $h,b\to 0$ and $nh^3b^d \to \infty$. The results follow.
\end{proof}

\section{Proofs of Results in the Main Paper}
\label{app:proofs}

This section consists of the proofs for theorems, lemmas, and propositions in the main paper as well as other auxiliary results.

\subsection{Proof of Proposition~\ref{prop:add_conf_prop}}
\label{app:proof_prop1}

\begin{customprop}{2}[Properties of the additive confounding model]
	Let $\mu(t,\bm{s})=\mathbb{E}\left(Y|T=t,\bm{S}=\bm{s}\right)=\bar{m}(t)+\eta(\bm{s})$ under the additive confounding model \eqref{add_conf_model}. Assume that $\bar{m}(t)$ is differentiable for any $t\in \mathcal{T}$. The following results hold for any $t\in \mathcal{T}$ under Assumption~\ref{assump:identify_cond}:
	\begin{enumerate}[label=(\alph*)]
		\item $\mathbb{E}\left[\mu(t,\bm{S})\right] = \bar{m}(t)$ when $\E\left[\eta(\bm{S})\right]=0$.
		
		
		\item $\theta(t) \neq \frac{d}{d t} \E\left[\mu(t,\bm{S})|T=t\right]$ when $\frac{d}{dt} \E\left[\eta(\bm{S})|T=t\right] \neq 0$.
		
		\item $\theta(t) =\mathbb{E}\left[\frac{\partial}{\partial t} \mu(t,\bm{S})\right] = \mathbb{E}\left[\frac{\partial}{\partial t} \mu(t,\bm{S}) \big| T=t\right] = \theta_C(t)$.
		
		\item $\E(Y)=\mathbb{E}\left[\mu(T,\bm{S})\right] = \mathbb{E}\left[m(T)\right]$, where $m(t)=\E\left[Y(t)\right]$.
	\end{enumerate}
\end{customprop}

\begin{proof}[Proof of Proposition~\ref{prop:add_conf_prop}]
	All the results follow from some simple calculations and the fact that 
	\begin{equation*}
		\mu(t,\bm{s}) = \E\left(Y|T=t, \bm{S}=\bm{s}\right) = \bar{m}(t) + \eta(\bm{s})
	\end{equation*}
	under the additive confounding model \eqref{add_conf_model}. \\
	
	\noindent (a) Given that $\E\left[\eta(\bm{S})\right]=0$, we have that
	\begin{align*}
		\E\left[\mu(t,\bm{S})\right] = \E\left[\E\left(Y|T=t, \bm{S}\right)\right] = \E\left[\bar{m}(t) + \eta(\bm{S})\right] = \bar{m}(t).
	\end{align*}
	
	
	\noindent (b) First, we notice that $m(t)=\E\left[Y(t)\right] = \bar{m}(t) + \E\left[\eta(\bm{S})\right]$ and $\theta(t)=\bar{m}'(t)$ under model \eqref{add_conf_model}. Then,
	$$\frac{d}{dt} \E\left[\mu(t,\bm{S})|T=t\right] = \frac{d}{dt}\left\{\bar{m}(t) + \E\left[\eta(\bm{S})|T=t\right]\right\} = \bar{m}'(t) + \frac{d}{dt} \E\left(\eta(\bm{S})|T=t\right) \neq \theta(t)$$
	when $\frac{d}{dt} \E\left[\eta(\bm{S})|T=t\right] \neq 0$. \\
	
	\noindent (c) From (b), we know that
	\begin{align*}
		\E\left[\frac{\partial}{\partial t} \mu(t,\bm{S})\right] =\E\left[\bar{m}'(t)\right] = \bar{m}'(t) = \theta(t).
	\end{align*}
	In addition, we also have that
	\begin{align*}
		\theta_C(t) &= \E\left[\frac{\partial}{\partial t} \mu(t,\bm{S}) \Big| T=t \right] = \E\left[\bar{m}'(t) | T=t\right] = \bar{m}'(t) = \theta(t).
	\end{align*}
	
	\noindent (d) On one hand, we know that
	$$\E(Y) = \mathbb{E}\left[\mu(T,\bm{S})\right] = \mathbb{E}\left[\bar{m}(T) + \eta(\bm{S})\right] = \mathbb{E}\left[\bar{m}(T)\right] + \mathbb{E}\left[\eta(\bm{S})\right].$$
	On the other hand, we also have that
	$$\mathbb{E}\left[m(T)\right] = \mathbb{E}\left\{\bar{m}(T) + \mathbb{E}\left[\eta(\bm{S})\right]\right\} = \mathbb{E}\left[\bar{m}(T)\right] + \mathbb{E}\left[\eta(\bm{S})\right].$$
	The proof is thus completed.
\end{proof}

\subsection{Proof of Lemma~\ref{lem:loc_poly_deriv_unif}}
\label{app:proof_beta_2_unif}

\begin{customlem}{3}[Uniform convergence of $\hat{\beta}_2(t,\bm{s})$]
	Let $q>0$. Suppose that Assumptions~\ref{assump:reg_diff}, \ref{assump:den_diff}, \ref{assump:boundary_cond}, and \ref{assump:reg_kernel}(a,b) hold. Let $\hat \beta_2(t,\bm{s})$ be the second element of $\hat{\bm{\beta}}(t,\bm{s}) \in \mathbb{R}^{q+1}$ and $\beta_2(t,\bm{s}) = \frac{\partial}{\partial t} \mu(t,\bm{s})$. Then, as $h,b,\frac{\max\{h,b\}^4}{h}\to 0$ and $\frac{nh^3 b^d}{|\log(hb^d)|},\frac{|\log(hb^d)|}{\log\log n} \to \infty$, we know that 
	$$\sup_{(t,\bm{s})\in \mathcal{E}}\left|\hat{\beta}_2(t,\bm{s}) - \beta_2(t,\bm{s})\right| =
	O\left(h^q\right) + O\left(b^2\right) + O\left(\frac{\max\{h,b\}^4}{h}\right) + O_P\left(\sqrt{\frac{|\log(hb^d)|}{nh^3 b^d}}\right).$$
\end{customlem}

\begin{proof}[Proof of Lemma~\ref{lem:loc_poly_deriv_unif}]
	The proof of Lemma~\ref{lem:loc_poly_deriv_unif} is partially inspired by the proof of Lemma 7 in \cite{cheng2019nonparametric}. Recall from \eqref{localpoly3} that 
	\begin{align*}
		\left(\hat{\bm{\beta}}(t,\bm{s}), \hat{\bm{\alpha}}(t,\bm{s}) \right)^T &= \left[\bm{X}^T(t,\bm{s})\bm{W}(t,\bm{s}) \bm{X}(t,\bm{s})\right]^{-1} \bm{X}(t,\bm{s})^T\bm{W}(t,\bm{s}) \bm{Y} \equiv \left(\bm{X}^T\bm{W} \bm{X}\right)^{-1} \bm{X}^T\bm{W} \bm{Y}.
	\end{align*}
	By direct calculations with $\bm{H} = \Diag(1,h,...,h^q,b,...,b) \in \mathbb{R}^{(q+1+d)\times (q+1+d)}$, we have that
	\begin{align}
		\label{local_poly_exp}
		\begin{split}
			\left(\hat{\bm{\beta}}(t,\bm{s}), \hat{\bm{\alpha}}(t,\bm{s}) \right)^T &= \left(\bm{X}^T\bm{W} \bm{X}\right)^{-1} \bm{X}^T\bm{W} \bm{Y} \\
			&= \frac{1}{nhb^d} \cdot\bm{H}^{-1} \left[\frac{1}{nhb^d} (\bm{X}\bm{H}^{-1})^T \bm{W} (\bm{X}\bm{H}^{-1}) \right]^{-1} \bm{H}^{-1} \bm{X}^T \bm{W} \bm{Y} \\
			&= \bm{H}^{-1} \left[\frac{1}{nhb^d} (\bm{X}\bm{H}^{-1})^T \bm{W} (\bm{X}\bm{H}^{-1}) \right]^{-1} \left[\frac{1}{nhb^d} (\bm{X}\bm{H}^{-1})^T \bm{W} \bm{Y} \right].
		\end{split}
	\end{align}
	We first derive the uniform rate of convergence for $\frac{1}{nhb^d} (\bm{X}\bm{H}^{-1})^T \bm{W} (\bm{X}\bm{H}^{-1})$. Recall from \eqref{X_W_mat} in the proof of Lemma~\ref{lem:loc_poly_deriv} that
	\begin{align}
		\label{X_W_h_mat}
		\begin{split}
			&\left[\frac{1}{nhb^d} (\bm{X}\bm{H}^{-1})^T \bm{W} (\bm{X}\bm{H}^{-1}) \right]_{i,j} \\
			&= \begin{cases}
				\frac{1}{nhb^d}\sum\limits_{k=1}^n \left(\frac{T_k-t}{h}\right)^{i+j-2} K_T\left(\frac{T_k-t}{h}\right) K_S\left(\frac{\bm{S}_k -\bm{s}}{b}\right), \quad\quad 1\leq i,j \leq q+1,\\
				\frac{1}{nhb^d} \sum\limits_{k=1}^n \left(\frac{T_k-t}{h}\right)^{i-1} \left(\frac{S_{k,j-q-1} - s_{j-q-1}}{b}\right) K_T\left(\frac{T_k-t}{h}\right) K_S\left(\frac{\bm{S}_k -\bm{s}}{b}\right), \quad 1\leq i \leq q+1 \text{ and } q+1< j\leq q+1+d,\\
				\frac{1}{nhb^d} \sum\limits_{k=1}^n \left(\frac{S_{k,i-q-1} - s_{i-q-1}}{b}\right) \left(\frac{T_k-t}{h}\right)^{j-1} K_T\left(\frac{T_k-t}{h}\right) K_S\left(\frac{\bm{S}_k -\bm{s}}{b}\right), \quad q+1< i\leq q+1+d \text{ and } 1\leq j \leq q+1,\\
				\frac{1}{nhb^d}\sum\limits_{k=1}^n \left(\frac{S_{k,i-q-1} - s_{i-q-1}}{b}\right) \left(\frac{S_{k,j-q-1} - s_{j-q-1}}{b}\right) K_T\left(\frac{T_k-t}{h}\right) K_S\left(\frac{\bm{S}_k -\bm{s}}{b}\right), \quad q+1< i,j\leq q+1+d.
			\end{cases}
		\end{split}
	\end{align}
	By our notation in \eqref{M_q}, we can obtain from \eqref{X_W_mat_asymp} in the proof of Lemma~\ref{lem:loc_poly_deriv} that
	\begin{align*}
		&\frac{1}{nhb^d} (\bm{X}\bm{H}^{-1})^T \bm{W} (\bm{X}\bm{H}^{-1}) \\
		&= p(t,\bm{s}) \bm{M}_q + O\left(\max\{h,b\}\right) + \underbrace{\frac{1}{nhb^d} (\bm{X}\bm{H}^{-1})^T \bm{W} (\bm{X}\bm{H}^{-1}) - \mathbb{E}\left[\frac{1}{nhb^d} (\bm{X}\bm{H}^{-1})^T \bm{W} (\bm{X}\bm{H}^{-1})\right]}_{\textbf{Term A}},
	\end{align*}
	where we apply an abuse of notation when using $O\left(\max\{h,b\}\right)$ to denote a matrix whose entries are of this order.
	
	By \eqref{X_W_h_mat}, each entry of the matrix in {\bf Term A} is a mean-zero empirical process (scaled by $\frac{1}{\sqrt{nhb^d}}$) over a function in the class $\mathcal{K}_{q,d}$ defined in Assumption~\ref{assump:reg_kernel}(b). By Theorem 2.3 in \cite{gine2002rates} or Theorem 1 in \cite{einmahl2005uniform}, we know that
	$$\sup_{(t,\bm{s})\in \mathcal{E}}\norm{\frac{1}{nhb^d} (\bm{X}\bm{H}^{-1})^T \bm{W} (\bm{X}\bm{H}^{-1}) - \mathbb{E}\left[\frac{1}{nhb^d} (\bm{X}\bm{H}^{-1})^T \bm{W} (\bm{X}\bm{H}^{-1})\right]}_{\max} = O_P\left(\sqrt{\frac{|\log(hb^d)|}{nhb^d}}\right),$$
	where $\norm{\bm{A}}_{\max}=\max_{i,j}|\bm{A}_{ij}|$ for $\bm{A}\in \mathbb{R}^{(q+1+d)\times (q+1+d)}$. Therefore, we conclude that
	$$\sup_{(t,\bm{s})\in \mathcal{E}}\norm{\frac{1}{nhb^d} (\bm{X}\bm{H}^{-1})^T \bm{W} (\bm{X}\bm{H}^{-1}) - p(t,\bm{s}) \bm{M}_q}_{\max} = O\left(\max\{h,b\}\right) + O_P\left(\sqrt{\frac{|\log(hb^d)|}{nhb^d}}\right)$$
	and 
	$$\sup_{(t,\bm{s})\in \mathcal{E}}\norm{\left[\frac{1}{nhb^d} \left(\bm{X}\bm{H}^{-1}\right)^T \bm{W} \left(\bm{X}\bm{H}^{-1}\right)\right]^{-1} - \frac{\bm{M}_q^{-1}}{p(t,\bm{s})}}_{\max} = O\left(\max\{h,b\}\right) + O_P\left(\sqrt{\frac{|\log(hb^d)|}{nhb^d}}\right).$$
	Based on the above result and \eqref{local_poly_exp}, we have that
	\begin{align}
		\label{local_poly_exp2}
		\begin{split}
			\hat{\beta}_2(t,\bm{s}) &= \bm{e}_2^T \begin{bmatrix}
				\hat{\bm{\beta}}(t,\bm{s})\\
				\hat{\bm{\alpha}}(t,\bm{s})
			\end{bmatrix}\\
			&= \frac{1}{h}\cdot \bm{e}_2^T \left[\frac{1}{nhb^d} (\bm{X}\bm{H}^{-1})^T \bm{W} (\bm{X}\bm{H}^{-1}) \right]^{-1} \left[\frac{1}{nhb^d} (\bm{X}\bm{H}^{-1})^T \bm{W} \bm{Y} \right]\\
			&= \frac{1}{h \cdot p(t,\bm{s})}\cdot \bm{e}_2^T \left[\bm{M}_q^{-1} + \bm{A}_{h,b}\right]\left[\frac{1}{nhb^d} (\bm{X}\bm{H}^{-1})^T \bm{W} \bm{Y} \right],
		\end{split}
	\end{align}
	where each entry of $\bm{A}_{h,b}\in \mathbb{R}^{(q+1+d)\times (q+1+d)}$ is uniformly of the order $O\left(\max\{h,b\}\right) + O_P\left(\sqrt{\frac{|\log(hb^d)|}{nhb^d}}\right)$. Notice also that
	\begin{align*}
		\frac{1}{nhb^d} (\bm{X}\bm{H}^{-1})^T \bm{W} \bm{Y} &= 
		\begin{bmatrix}
			\left(\frac{1}{nhb^d} \sum_{k=1}^n Y_k\left(\frac{T_k-t}{h}\right)^{j-1} K_T\left(\frac{T_k-t}{h}\right) K_S\left(\frac{\bm{S}_k-\bm{s}}{b}\right)\right)_{1\leq j\leq q+1}\\
			\left(\frac{1}{nhb^d} \sum_{k=1}^n Y_k\left(\frac{S_{k,j-q-1}-s_{j-q-1}}{b}\right) K_T\left(\frac{T_k-t}{h}\right) K_S\left(\frac{\bm{S}_k-\bm{s}}{b}\right)\right)_{q+1< j\leq q+1+d}
		\end{bmatrix}\\
		&= \mathbb{P}_n\left(\frac{1}{hb^d} \bm{\Psi}_{t,\bm{s}}\right),
	\end{align*}
	where $\mathbb{P}_n$ is the empirical measure associated with observations $\left\{(Y_k,T_k,\bm{S}_k)\right\}_{k=1}^n$ and $\bm{\Psi}_{t,\bm{s}}(y,z,\bm{v}) = y\cdot \bm{\psi}_{t,\bm{s}}(z,\bm{v})$ defined in \eqref{psi_prod}.
	Plugging this result back into \eqref{local_poly_exp2}, we obtain that
	\begin{align}
		\label{local_poly_exp22}
		\begin{split}
			\hat{\beta}_2(t,\bm{s}) &= \frac{1}{h}\cdot \bm{e}_2^T \left[\bm{M}_q^{-1} + \bm{A}_{h,b}\right] \mathbb{P}_n\left(\frac{1}{hb^d} \bm{\Psi}_{t,\bm{s}}\right)\\
			&= \frac{1}{h \cdot p(t,\bm{s})} \cdot \mathbb{P}_n\left(\frac{1}{hb^d} \bm{e}_2^T \bm{M}_q^{-1} \bm{\Psi}_{t,\bm{s}}\right) + \frac{1}{h \cdot p(t,\bm{s})} \cdot  \mathbb{P}_n\left(\frac{1}{hb^d} \bm{e}_2^T\bm{A}_{h,b}\bm{\Psi}_{t,\bm{s}}\right).
		\end{split}
	\end{align}
	Equivalently, we can scale and center $\hat{\beta}_2(t,\bm{s})$ as:
	\begin{align}
		\label{local_poly_exp3}
		\begin{split}
			&\sqrt{nh^3b^d} \left\{\hat{\beta}_2(t,\bm{s}) - \mathbb{E}\left[\hat{\beta}_2(t,\bm{s})\right]\right\} \\
			&= \sqrt{nhb^d} \left[\mathbb{P}_n\left(\frac{1}{hb^d \cdot p(t,\bm{s})} \bm{e}_2^T \bm{M}_q^{-1} \bm{\Psi}_{t,\bm{s}}\right) - \P\left(\frac{1}{hb^d \cdot p(t,\bm{s})} \bm{e}_2^T \bm{M}_q^{-1} \bm{\Psi}_{t,\bm{s}}\right) \right] \\
			&\quad + \sqrt{nhb^d} \left[\mathbb{P}_n\left(\frac{1}{hb^d \cdot p(t,\bm{s})} \bm{e}_2^T\bm{A}_{h,b}\bm{\Psi}_{t,\bm{s}}\right) - \P\left(\frac{1}{hb^d \cdot p(t,\bm{s})} \bm{e}_2^T\bm{A}_{h,b}\bm{\Psi}_{t,\bm{s}}\right) \right]\\
			&= \sqrt{hb^d} \cdot \mathbb{G}_n\left(\frac{1}{hb^d \cdot p(t,\bm{s})} \bm{e}_2^T\bm{M}_q^{-1}\bm{\Psi}_{t,\bm{s}}\right) + \sqrt{hb^d} \cdot \mathbb{G}_n\left(\frac{1}{hb^d \cdot p(t,\bm{s})} \bm{e}_2^T\bm{A}_{h,b}\bm{\Psi}_{t,\bm{s}}\right),
		\end{split}
	\end{align}
	where $\mathbb{G}_n f = \sqrt{n}\left(\mathbb{P}_n -\P\right) f = \frac{1}{\sqrt{n}} \sum_{k=1}^n \left\{f(Y_k,T_k,\bm{S}_k) - \mathbb{E}\left[f(Y_k,T_k,\bm{S}_k)\right]\right\}$. One auxiliary result that we derive here is a form of the uniform Bahadur representation \citep{bahadur1966note,kong2010uniform} as:
	\begin{align*}
		&\sup_{(t,\bm{s})\in \mathcal{E}}\left|\frac{\sqrt{nh^3b^d} \left\{\hat{\beta}_2(t,\bm{s}) - \mathbb{E}\left[\hat{\beta}_2(t,\bm{s})\right]\right\} - \sqrt{hb^d} \cdot \mathbb{G}_n\left(\frac{1}{hb^d \cdot p(t,\bm{s})} \bm{e}_2^T\bm{M}_q^{-1}\bm{\Psi}_{t,\bm{s}}\right)}{\sqrt{hb^d} \cdot \mathbb{G}_n\left(\frac{1}{hb^d \cdot p(t,\bm{s})} \bm{e}_2^T\bm{\Psi}_{t,\bm{s}}\right)} \right|\\
		&= O\left(\max\{h,b\}\right) + O_P\left(\sqrt{\frac{|\log(hb^d)|}{nhb^d}}\right),
	\end{align*}
	because each entry of $\bm{A}_{h,b}\in \mathbb{R}^{(q+1+d)\times (q+1+d)}$ is uniformly of the order $O\left(\max\{h,b\}\right) + O_P\left(\sqrt{\frac{|\log(hb^d)|}{nhb^d}}\right)$.
	
	We remain to derive the uniform rate of convergence for $\sqrt{hb^d} \cdot \mathbb{G}_n\left(\frac{1}{hb^d \cdot p(t,\bm{s})}\cdot  \bm{e}_2^T\bm{M}_q^{-1}\bm{\Psi}_{t,\bm{s}}\right)$ and $\sqrt{hb^d} \cdot \mathbb{G}_n\left(\frac{1}{hb^d \cdot p(t,\bm{s})}\cdot \bm{e}_2^T\bm{\Psi}_{t,\bm{s}}\right)$. Recall from our notation in \eqref{psi_prod} that
	\begin{align*}
		&\sqrt{hb^d} \cdot \mathbb{G}_n\left(\frac{1}{hb^d\cdot p(t,\bm{s})} \cdot \bm{e}_2^T\bm{M}_q^{-1}\bm{\Psi}_{t,\bm{s}}\right) \\
		&= \frac{1}{\sqrt{nhb^d}\cdot p(t,\bm{s})} \sum_{k=1}^n \left\{\bm{e}_2^T \bm{M}_q^{-1} \bm{\Psi}_{t,\bm{s}}(Y_k,T_k,\bm{S}_k) - \mathbb{E}\left[\bm{e}_2^T \bm{M}_q^{-1} \bm{\Psi}_{t,\bm{s}}(Y_k,T_k,\bm{S}_k)\right]\right\}\\
		&= \frac{1}{\sqrt{nhb^d} \cdot p(t,\bm{s})} \sum_{k=1}^n \left\{Y_k\cdot \bm{e}_2^T \bm{M}_q^{-1} \bm{\psi}_{t,\bm{s}}(T_k,\bm{S}_k) - \mathbb{E}\left[Y_k\cdot \bm{e}_2^T \bm{M}_q^{-1} \bm{\psi}_{t,\bm{s}}(T_k,\bm{S}_k)\right]\right\}.
	\end{align*}
	Since each entry of $(z,\bm{v}) \mapsto \bm{\psi}_{t,\bm{s}}(z,\bm{v})$ is simply a linear combination of functions in $\mathcal{K}_{q,d}$ defined in Assumption~\ref{assump:reg_kernel}(b), we can apply Theorem 4 in \cite{einmahl2005uniform} to establish that
	\begin{align}
		\label{local_poly_sup1}
		\begin{split}
			&\sup_{(t,\bm{s})\in \mathcal{E}} \left|\sqrt{hb^d} \cdot \mathbb{G}_n\left(\frac{1}{hb^d \cdot p(t,\bm{s})} \cdot \bm{e}_2^T\bm{M}_q^{-1}\bm{\Psi}_{t,\bm{s}}\right) \right| \\
			&=\sup_{(t,\bm{s})\in \mathcal{E}} \left|\frac{1}{\sqrt{nhb^d}\cdot p(t,\bm{s})}\sum_{k=1}^n \left\{Y_k\cdot \bm{e}_2^T \bm{M}_q^{-1} \bm{\psi}_{t,\bm{s}}(T_k,\bm{S}_k) - \mathbb{E}\left[Y_k\cdot\bm{e}_2^T \bm{M}_q^{-1} \bm{\psi}_{t,\bm{s}}(T_k,\bm{S}_k)\right]\right\} \right|\\
			&=O_P\left(\sqrt{|\log(hb^d)|}\right)
		\end{split}
	\end{align}
	when $\frac{|\log(hb^d)|}{\log\log n} \to \infty$. Similarly, we have that 
	$$\sup_{(t,\bm{s})\in \mathcal{E}}\left|\sqrt{hb^d} \cdot \mathbb{G}_n\left(\frac{1}{hb^d \cdot p(t,\bm{s})} \cdot \bm{e}_2^T\bm{\Psi}_{t,\bm{s}}\right) \right| = O_P\left(\sqrt{|\log(hb^d)|}\right).$$
	Plugging these two rates of convergence back into \eqref{local_poly_exp3}, we obtain that as $h,b\to 0$ and $\frac{nh^3 b^d}{|\log(hb^d)|} \to\infty$,
	\begin{align}
		\label{local_poly_exp4}
		\begin{split}
			&\sup_{(t,\bm{s})\in \mathcal{E}} \left|\hat{\beta}_2(t,\bm{s}) - \mathbb{E}\left[\hat{\beta}_2(t,\bm{s})\right] \right| \\
			&= \sup_{(t,\bm{s})\in \mathcal{E}}\left|\frac{\sqrt{hb^d} \cdot \mathbb{G}_n\left(\frac{1}{hb^d \cdot p(t,\bm{s})} \cdot \bm{e}_2^T\bm{M}_q^{-1}\bm{\Psi}_{t,\bm{s}}\right) + \sqrt{hb^d} \cdot \mathbb{G}_n\left(\frac{1}{hb^d \cdot p(t,\bm{s})}\cdot  \bm{e}_2^T\bm{A}_{h,b}\bm{\Psi}_{t,\bm{s}}\right)}{\sqrt{nh^3b^d}} \right|\\
			&\leq \sup_{(t,\bm{s})\in \mathcal{E}}\left|\frac{\sqrt{hb^d} \cdot \mathbb{G}_n\left(\frac{1}{hb^d \cdot p(t,\bm{s})} \cdot \bm{e}_2^T\bm{M}_q^{-1}\bm{\Psi}_{t,\bm{s}}\right)}{\sqrt{nh^3b^d}} \right| + \sup_{(t,\bm{s})\in \mathcal{E}}\left|\frac{\sqrt{hb^d} \cdot \mathbb{G}_n\left(\frac{1}{hb^d \cdot p(t,\bm{s})} \cdot \bm{e}_2^T\bm{A}_{h,b}\bm{\Psi}_{t,\bm{s}}\right)}{\sqrt{nh^3b^d}} \right| \\
			&= O_P\left(\sqrt{\frac{|\log(hb^d)|}{nh^3b^d}} \right) + O_P\left(\sqrt{\frac{|\log(hb^d)|}{nh^3b^d}} \right) \left[O\left(\max\{h,b\}\right) + O_P\left(\sqrt{\frac{|\log(hb^d)|}{nhb^d}}\right)\right]\\
			&= O_P\left(\sqrt{\frac{|\log(hb^d)|}{nh^3b^d}} \right).
		\end{split}
	\end{align}
	Together with the rate of convergence for the bias term $\mathbb{E}\left[\hat{\beta}_2(t,\bm{s})\right]- \beta_2(t,\bm{s})$ in Lemma~\ref{lem:loc_poly_deriv}, we conclude that
	$$\sup_{(t,\bm{s})\in \mathcal{E}}\left|\hat{\beta}_2(t,\bm{s}) -  \beta_2(t,\bm{s}) \right| = O\left(h^q + b^2 + \frac{\max\{b,h\}^4}{h}\right) + O_P\left(\sqrt{\frac{|\log(hb^d)|}{nh^3 b^d}}\right),$$
	where we combine the odd and even cases for $q>0$ by arguing that $O(h^{q+1})$ is dominated by $O(h^q)$ when $h$ is small. The proof is thus completed.
\end{proof}

\subsection{Proof of \autoref{thm:int_est_rate}}
\label{app:sim_integral_proof}

\begin{customthm}{4}[Convergence of $\hat{\theta}_C(t)$ and $\hat{m}_{\theta}(t)$]
	Let $q>0$ and $\mathcal{T}' \subset \mathcal{T}$ be a compact set so that $p_T(t)$ is uniformly bounded away from 0 within $\mathcal{T}'$. Suppose that Assumptions~\ref{assump:identify_cond}, \ref{assump:diff_inter}, \ref{assump:reg_diff}, \ref{assump:den_diff}, \ref{assump:boundary_cond}, and \ref{assump:reg_kernel} hold. Then, as $h,b,\hslash,\frac{\max\{b,h\}^4}{h}\to 0$, and $\frac{n\max\{h,\hslash\}b^d}{|\log(\hslash hb^d)|}, \frac{|\log(h\hslash b^d)|}{\log\log n}, \frac{nh^3}{|\log h|}\to \infty$, we know that
	\begin{align*}
		\sup_{t\in \mathcal{T}'} \left|\hat{\theta}_C(t) - \theta(t)\right| &= O\left(h^q + b^2 + \frac{\max\{b,h\}^4}{h}\right) + O_P\left(\sqrt{\frac{|\log h|}{nh^3}} + \hslash^2 + \sqrt{\frac{|\log \hslash|}{n\hslash}}\right)
	\end{align*}
	and
	\begin{align*}
		\sup_{t\in \mathcal{T}'}\left|\hat m_\theta(t) - m(t) \right| &= O_P\left(\frac{1}{\sqrt{n}}\right) + O\left(h^q + b^2 + \frac{\max\{b,h\}^4}{h}\right) + O_P\left(\sqrt{\frac{|\log h|}{nh^3}} + \hslash^2 + \sqrt{\frac{|\log \hslash|}{n\hslash}}\right).
	\end{align*}
\end{customthm}

\begin{proof}[Proof of \autoref{thm:int_est_rate}]
	Assume, without loss of generality, that the set $\mathcal{T}'$ is connected. Otherwise, we focus our analysis on a connected component of $\mathcal{T}'$ and take the union afterwards.
	
	Recall from \eqref{simp_integral} and \eqref{integral_part_si} that the integral estimator is defined as:
	$$\hat m_\theta(t) = \frac{1}{n}\sum_{i=1}^n \left[Y_i + \int_{\tilde{t}=T_i}^{\tilde{t}=t} \hat \theta_C(\tilde{t}) \,d\tilde{t} \right] = \frac{1}{n}\sum_{i=1}^n Y_i + \hat{\Delta}_{h,b}(t).$$
	Moreover, under Assumption~\ref{assump:diff_inter}, 
	$$m(t) = \E\left[m(T)\right] + \Delta(t) = \mathbb{E}(Y) + \Delta(t),$$
	where we define 
	$$\Delta(t) \equiv \E\left[\int_{\tilde{t}=T}^{\tilde{t}=t} \theta(\tilde{t})d\tilde{t}\right] = \int \int_{\tilde{t}=\tau}^{\tilde{t}=t} \theta(\tilde{t}) \,d\tilde{t}\, d\P_T(\tau) = \int \int_{\tilde{t}=\tau}^{\tilde{t}=t} \theta_C(\tilde{t}) \,d\tilde{t}\, d\P_T(\tau)$$
	as the population version of $\hat \Delta_{h,b}(t)$ in \eqref{integral_part_si} under Assumption~\ref{assump:diff_inter}. Here, $\P_T$ is the marginal probability distribution function of $T$. 
	Under the boundedness of $\mu(t,\bm{s})$ by Assumption~\ref{assump:reg_diff}, we know that 
	$$\frac{1}{n}\sum_{i=1}^n Y_i - \mathbb{E}\left[m(T)\right] = O_P\left(\frac{1}{\sqrt{n}}\right).$$
	Recall from \eqref{theta_C_est} that $\hat \theta_C(t) = \int \hat \beta_2(t,\bm{s}) d\hat P(\bm{s}|t)$ and $\theta(t)=\theta_C(t) = \int \beta_2(t,\bm{s}) dP(\bm{s}|t)$ with $\beta_2(t,\bm{s}) = \frac{\partial}{\partial t} \mu(t,\bm{s})$ under Assumption~\ref{assump:diff_inter}.
	Then, we have the following decomposition as:
	\begin{equation}
		\label{si_decomp}
		\begin{aligned}
			\hat \Delta_{h,b}(t) - \Delta(t) &= \frac{1}{n}\sum_{i=1}^n \int_{\tilde{t}=T_i}^{\tilde{t}=t} \hat \theta_C(\tilde{t})d\tilde{t} - \int \int_{\tilde{t}=\tau}^{\tilde{t}=t} \theta_C(\tilde{t})d\tilde{t} d\P_T(\tau)\\
			&= \underbrace{\int \int_{\tilde{t}=\tau}^{\tilde{t}=t} \theta_C(\tilde{t})d\tilde{t} \left[d\mathbb{P}_{n,T}(\tau) - d \P_T(\tau)\right]}_{\textbf{Term (I)}} +  \underbrace{\int \int_{\tilde{t}=\tau}^{\tilde{t}=t} \left[\hat \theta_C(\tilde{t}) - \theta_C(\tilde{t})\right] d\tilde{t} d\mathbb{P}_{n,T}(\tau)}_{\textbf{Term (II)}},
		\end{aligned}
	\end{equation}
	where $\mathbb{P}_{n,T}$ is the empirical measure associated with $\{T_1,...,T_n\}$.\\
	
	$\bullet$ {\bf Term (I):} By Dvoretzky-Kiefer-Wolfowitz inequality \citep{dvoretzky1956asymptotic,massart1990tight}, we know that
	$$\sup_{t\in \mathcal{T}'} \left|\mathbb{P}_{n,T}(t) -\P_T(t) \right| = O_P\left(\frac{1}{\sqrt{n}}\right).$$ 
	In addition, under Assumption~\ref{assump:reg_diff} and the compactness of the support $\mathcal{T}'$, we have that
	$$\int \left[\int_{\tilde{t}=\tau}^{\tilde{t}=t} \theta_C(\tilde{t})d\tilde{t}\right]^2 d\P_T(\tau) \leq \int |t-\tau|^2 \sup_{t\in \mathcal{T}}\left|\theta_C(t)\right|^2 d\P_T(\tau) <\infty$$
	for any $t\in \mathcal{T}$. Thus, the (uniform) rate of convergence of Term (I) in \eqref{si_decomp} is
	$$\sup_{t\in \mathcal{T}'}\int \int_{\tilde{t}=\tau}^{\tilde{t}=t} \theta_C(\tilde{t})d\tilde{t} \left[d\mathbb{P}_{n,T}(\tau) - d\P_T(\tau)\right] = O_P\left(\frac{1}{\sqrt{n}}\right).$$

	$\bullet$ {\bf Term (II):} We first focus on establishing the (uniform) rate of convergence for $\hat{\theta}_C(t) -\theta_C(t)$. For any $t\in \mathcal{T}$, we note that
	\begin{align*}
		&\hat{\theta}_C(t) -\theta_C(t) \\
		&= \int \hat{\beta}_2(t,\bm{s}) d\hat{P}(\bm{s}|t) - \int \beta_2(t,\bm{s}) dP(\bm{s}|t)\\
		&= \frac{\sum_{i=1}^n \left[\hat{\beta}_2(t,\bm{S}_i) - \beta_2(t,\bm{S}_i)\right] \bar{K}_T\left(\frac{t-T_i}{\hslash}\right)}{\sum_{j=1}^n \bar{K}_T\left(\frac{t-T_j}{\hslash}\right)} + \frac{\sum_{i=1}^n \beta_2(t,\bm{S}_i) \cdot \bar{K}_T\left(\frac{t-T_i}{\hslash}\right)}{\sum_{j=1}^n \bar{K}_T\left(\frac{t-T_j}{\hslash}\right)} - \int \beta_2(t,\bm{s}) dP(\bm{s}|t)\\
		&= \underbrace{\sum_{(t,\bm{S}_i) \in \mathcal{E} \cap \left(\partial\mathcal{E} \ominus \hslash\right)} \frac{\left[\hat{\beta}_2(t,\bm{S}_i) - \beta_2(t,\bm{S}_i)\right] \bar{K}_T\left(\frac{t-T_i}{\hslash}\right)}{\sum_{j=1}^n \bar{K}_T\left(\frac{t-T_j}{\hslash}\right)}}_{\textbf{Term A}} + \underbrace{\sum_{(t,\bm{S}_i) \in \partial \mathcal{E} \oplus \hslash} \frac{\left[\hat{\beta}_2(t,\bm{S}_i) - \beta_2(t,\bm{S}_i)\right] \bar{K}_T\left(\frac{t-T_i}{\hslash}\right)}{\sum_{j=1}^n \bar{K}_T\left(\frac{t-T_j}{\hslash}\right)}}_{\textbf{Term B}} \\
		&\quad + \underbrace{\int \beta_2(t,\bm{s}) \left[d\hat{P}(\bm{s}|t) - dP(\bm{s}|t)\right]}_{\textbf{Term C}},
	\end{align*}
	where $\partial\mathcal{E} \ominus \hslash = \left\{\bm{z}\in \mathbb{R}^{d+1}: \inf_{\bm{x}\in \partial \mathcal{E}} \norm{\bm{z}-\bm{x}}_2 \geq \hslash\right\}$ and $\partial \mathcal{E} \oplus \hslash = \left\{\bm{z}\in \mathbb{R}^{d+1}: \inf_{\bm{x}\in \partial \mathcal{E}} \norm{\bm{z}-\bm{x}}_2 \leq \hslash\right\}$. In other words, $\mathcal{E}\cap \left(\partial\mathcal{E} \ominus \hslash \right)$ indicates the interior region of $\mathcal{E}$ that is at least $\hslash$ away from its boundary $\partial \mathcal{E}$, while $\partial \mathcal{E} \oplus \hslash$ is the region that is within a distance $\hslash$ from the boundary $\partial \mathcal{E}$.\\
	
	As for \textbf{Term A}, this part of the derivation is inspired by the proof of Theorem 1 in \cite{fan1998direct}. We know from Assumption~\ref{assump:boundary_cond}(c) that the proportion of $(t,\bm{S}_i),i=1,...,n$ that lie in $\partial \mathcal{E} \oplus \hslash$ is of order $O_P(\hslash)$. Thus, with probability tending to 1 as $n\to\infty$, there are at least $n(1-C\hslash)$ points of $(t,\bm{S}_i),i=1,...,n$ that lie in the interior region $\mathcal{E}\cap \left(\partial\mathcal{E} \ominus \hslash \right)$, where $C>0$ is some absolute constant. We denote this collection of points by $\mathcal{I}$ and let 
	$$W_i = \frac{|\mathcal{I}|\cdot \bar{K}_T\left(\frac{t-T_i}{\hslash}\right)}{\sum_{\ell=1}^n \bar{K}_T\left(\frac{t-T_{\ell}}{\hslash}\right)} \quad \text{ for } \quad i\in \mathcal{I},$$
	where $|\mathcal{I}|=n(1-C\hslash)$ is the number of points in $\mathcal{I}$. Recall from \eqref{local_poly_exp22} and the interleaving zero structures of $\bm{M}_q^{-1}$ with an even number $q>0$ that 
	$$\hat{\beta}_2(t,\bm{s}) = \frac{1}{h \cdot p(t,\bm{s})} \sum_{j=1}^n \frac{C_{K_T}\cdot Y_j}{hb^d}\left(\frac{T_j-t}{h}\right) K_T\left(\frac{T_j-t}{h}\right) K_S\left(\frac{\bm{S}_j-\bm{s}}{b}\right) \left[1+A_{h,b}\right],$$
	where $A_{h,b}=O(\max\{h,b\}) + O_P\left(\sqrt{\frac{|\log(hb^d)|}{nhb^d}}\right)$ uniformly over $t\in \mathcal{T}'$ and $C_{K_T}>0$ is a constant that only depends on the kernel $K_T$.
	Now, we can rewrite \textbf{Term A} as:
	\begin{align}
		\label{TermA}
		&\textbf{Term A} \\ \nonumber
		&= \sum_{(t,\bm{S}_i) \in \mathcal{E} \cap \left(\partial\mathcal{E} \ominus \hslash\right)} \frac{\left[\hat{\beta}_2(t,\bm{S}_i) - \beta_2(t,\bm{S}_i)\right] \bar{K}_T\left(\frac{t-T_i}{\hslash}\right)}{\sum_{j=1}^n \bar{K}_T\left(\frac{t-T_j}{\hslash}\right)}\\ \nonumber
		&= \frac{1}{|\mathcal{I}|}\sum_{i\in \mathcal{I}} W_i \left[\hat{\beta}_2(t,\bm{S}_i) - \beta_2(t,\bm{S}_i)\right] \\ \nonumber
		&= \frac{1}{|\mathcal{I}|}\sum_{i\in \mathcal{I}} \frac{W_i}{nh\cdot p(t,\bm{S}_i)} \sum_{j=1}^n \frac{C_{K_T}\left[r_{(t,\bm{S}_i)}(T_j,\bm{S}_j) + \epsilon_j+A_{h,b}\right]}{hb^d} \left(\frac{T_j-t}{h}\right) K_T\left(\frac{T_j-t}{h}\right) K_S\left(\frac{\bm{S}_j-\bm{S}_i}{b}\right)\\ \nonumber
		&= \frac{1}{nh^2} \sum_{j=1}^n C_{K_T} \left(\frac{T_j-t}{h}\right) K_T\left(\frac{T_j-t}{h}\right) \left\{\frac{1}{|\mathcal{I}| b^d} \sum_{i\in \mathcal{I}} \frac{W_i \cdot K_S\left(\frac{\bm{S}_j-\bm{S}_i}{b}\right)}{p(t,\bm{S}_i)} \left[r_{(t,\bm{S}_i)}(T_j,\bm{S}_j) + \epsilon_j+A_{h,b}\right] \right\},\nonumber
	\end{align}
	where $r_{(t,\bm{S}_i)}(T_j,\bm{S}_j)$ is the remainder term of applying Taylor's expansion to $\mu(T_j,\bm{S}_j)$ around $(t,\bm{S}_i)$ as shown in \eqref{taylor_reg}.
	
	We now study the (uniform) rate of convergence of the dominating term $\frac{1}{|\mathcal{I}| b^d} \sum_{i\in \mathcal{I}} \frac{W_i \cdot K_S\left(\frac{\bm{S}_j-\bm{S}_i}{b}\right)}{p(t,\bm{S}_i)}$ inside the curly bracket. Notice that
	\begin{align*}
		\frac{1}{|\mathcal{I}| b^d} \sum_{i\in \mathcal{I}} \frac{W_i \cdot K_S\left(\frac{\bm{S}_j-\bm{S}_i}{b}\right)}{p(t,\bm{S}_i)} &= \frac{1}{b^d} \sum_{i\in \mathcal{I}} \frac{\bar{K}_T\left(\frac{t-T_i}{\hslash}\right) \cdot K_S\left(\frac{\bm{S}_j-\bm{S}_i}{b}\right)}{p(t,\bm{S}_i)\sum_{\ell=1}^n \bar{K}_T\left(\frac{t-T_{\ell}}{\hslash}\right)} \\
		&= \frac{\frac{1}{n\hslash b^d} \sum_{i\in \mathcal{I}} \frac{\bar{K}_T\left(\frac{t-T_i}{\hslash}\right)}{p(t,\bm{S}_i)} \cdot K_S\left(\frac{\bm{S}_j-\bm{S}_i}{b}\right)}{\frac{1}{n\hslash} \sum_{\ell=1}^n \bar{K}_T\left(\frac{t-T_{\ell}}{\hslash}\right)}.
	\end{align*}
	The denominator $\frac{1}{n\hslash}\sum_{\ell=1}^n \bar{K}_T\left(\frac{t-T_{\ell}}{\hslash}\right)$ above is a simple kernel density estimator, and its (uniform) rate of convergence under Assumption~\ref{assump:reg_kernel}(d) is
	$$\sup_{\tilde{t}\in \mathcal{T}'}\left|\frac{1}{n\hslash}\sum_{\ell=1}^n \bar{K}_T\left(\frac{t-T_{\ell}}{\hslash}\right) - p_T(t) \right|= O(\hslash^2) + O_P\left(\sqrt{\frac{|\log \hslash|}{n\hslash}}\right)$$
	when $\hslash\to 0$ and $\frac{n\hslash}{|\log \hslash|}, \frac{|\log \hslash|}{\log\log n} \to \infty$; see Theorem 1 in \cite{einmahl2005uniform}.
	
	\noindent For the numerator $\frac{1}{n\hslash b^d} \sum_{i\in \mathcal{I}} \frac{\bar{K}_T\left(\frac{t-T_i}{\hslash}\right)}{p(t,\bm{S}_i)} \cdot K_S\left(\frac{\bm{S}_j-\bm{S}_i}{b}\right)$, its conditional expectation can be computed as:
	\begin{align*}
		&\mathbb{E}\left[\frac{1}{n\hslash b^d} \sum_{i\in \mathcal{I}} \frac{\bar{K}_T\left(\frac{t-T_i}{\hslash}\right)}{p(t,\bm{S}_i)} \cdot K_S\left(\frac{\bm{S}_j-\bm{S}_i}{b}\right) \Big| \bm{S}_j \right] \\
		&= \frac{K_S(0)}{nb^d\hslash \cdot p(t,\bm{S}_j)} \int_{\mathcal{T}} \bar{K}_T\left(\frac{t_1-t}{\hslash}\right)\cdot p(t_1|\bm{S}_j)\, dt_1 + \frac{|\mathcal{I}|}{nb^d\hslash} \int_{\mathcal{T}\times \mathcal{S}} \frac{\bar{K}_T\left(\frac{t_1-t}{\hslash}\right) \cdot K_S\left(\frac{\bm{s}_1-\bm{S}_j}{b}\right)}{p(t,\bm{s}_1)} \cdot p(t_1,\bm{s}_1) \, dt_1 d\bm{s}_1\\
		&= \frac{K_S(0)}{nb^d \cdot p(t,\bm{S}_j)} \int_{\mathbb{R}} \bar{K}_T\left(u\right)\cdot p(t+u\hslash|\bm{S}_j)\, du + \frac{|\mathcal{I}|}{n} \int_{\mathbb{R}\times \mathbb{R}^d} \frac{\bar{K}_T\left(u\right) \cdot K_S\left(\bm{v}\right)}{p(t,\bm{S}_j +b\bm{v})} \cdot p(t+u\hslash,\bm{S}_j+b\bm{v}) \, du d\bm{v}\\
		&\stackrel{\text{(i)}}{=} \frac{K_S(0)\left[p(t|\bm{S}_j) + \frac{\hslash^2 \bar{\kappa}_2^{(T)}}{2}\cdot \frac{\partial^2}{\partial t^2}p(t|\bm{S}_j) + o(\hslash^2)\right]}{nb^d \cdot p(t,\bm{S}_j)} \\
		&\quad + \frac{|\mathcal{I}|}{n}\left[1 + \frac{\bar{\kappa}_2^{(T)}\hslash^2}{2} \int_{\mathbb{R}^d} \frac{\frac{\partial^2}{\partial t^2}p(t,\bm{S}_j+b\bm{v})}{p(t,\bm{S}_j+b\bm{v})}\cdot K_S(\bm{v})\,d\bm{v} + o(\hslash^2)\right]\\
		&\stackrel{\text{(ii)}}{=} \frac{K_S(0)}{nb^d} + \frac{|\mathcal{I}|}{n} + O_P\left(\hslash^2 + \frac{\hslash^2}{nb^d}\right) \\
		&= 1+ O_P\left(\hslash\right)
	\end{align*}
	when $\hslash,h,b\to 0$ and $\frac{nhb^d}{|\log(\hslash hb^d)|}\to \infty$ as $n\to \infty$, where $\bar{\kappa}_2^{(T)} = \int_{\mathbb{R}} u^2 \bar{K}_T(u)\, du$ as well as we apply Taylor's expansions on $p(\cdot|\bm{S}_j)$ and $p$ and use the boundedness of their derivatives under Assumption~\ref{assump:den_diff} in (i). Additionally, its conditional variance can be upper bounded by 
	\begin{align*}
		&\mathrm{Var}\left[\frac{1}{n\hslash b^d} \sum_{i\in \mathcal{I}} \frac{\bar{K}_T\left(\frac{t-T_i}{\hslash}\right)}{p(t,\bm{S}_i)} \cdot K_S\left(\frac{\bm{S}_j-\bm{S}_i}{b}\right) \Big| \bm{S}_j \right] \\
		&\lesssim \frac{1}{n^2b^{2d} \hslash^2} \sum_{i\in \mathcal{I}} \mathbb{E}\left[\frac{\bar{K}_T^2\left(\frac{t-T_i}{\hslash}\right)}{p^2(t,\bm{S}_i)} \cdot K_S^2\left(\frac{\bm{S}_j-\bm{S}_i}{b}\right) \Big| \bm{S}_j \right]\\
		&= \frac{K_S^2(0)}{n^2b^{2d} \hslash^2} \int_{\mathcal{T}} \frac{\bar{K}_T^2\left(\frac{t_1-t}{\hslash}\right)}{p^2(t,\bm{S}_j)} \cdot p(t_1|\bm{S}_j)\, dt_1 + \frac{|\mathcal{I}|}{n^2 b^{2d}\hslash^2}\int_{\mathcal{T}\times \mathcal{S}} \frac{\bar{K}_T^2\left(\frac{t_1-t}{\hslash}\right) \cdot K_S^2\left(\frac{\bm{s}_1-\bm{S}_j}{b}\right)}{p^2(t,\bm{s}_1)}\cdot p(t_1,\bm{s}_1)\, dt_1 d\bm{s}_1 \\
		&= \frac{K_S^2(0)}{n^2b^{2d} \hslash} \int_{\mathbb{R}} \frac{\bar{K}_T^2\left(u\right)}{p^2(t,\bm{S}_j)} \cdot p(t+u\hslash|\bm{S}_j)\, du + \frac{|\mathcal{I}|}{n^2 b^d\hslash}\int_{\mathbb{R}\times \mathbb{R}^d} \frac{\bar{K}_T^2\left(u\right) \cdot K_S^2\left(\bm{v}\right)}{p^2(t,\bm{S}_j+b\bm{v})}\cdot p(t+u\hslash,\bm{S}_j+b\bm{v})\, du d\bm{v}\\
		&= O_P\left(\frac{1}{nb^d\hslash}\right)
	\end{align*}
	when $\hslash,h,b\to 0$ and $\frac{n\max\{h,\hslash\}b^d}{|\log(\hslash hb^d)|}\to \infty$ as $n\to \infty$, where the last equality again uses the boundedness of $p(t|\bm{S}_j)$ and $p$ under Assumption~\ref{assump:den_diff} and the fact that $(t,\bm{S}_i) \in \mathcal{E} \cap \left(\partial \mathcal{E} \ominus \hslash\right)$ for $i\in \mathcal{I}$. Thus,
	$$\frac{1}{n\hslash b^d} \sum_{i\in \mathcal{I}} \frac{\bar{K}_T\left(\frac{t-T_i}{\hslash}\right)}{p(t,\bm{S}_i)} \cdot K_S\left(\frac{\bm{S}_j-\bm{S}_i}{b}\right) = 1 + O_P(\hslash) + O_P\left(\sqrt{\frac{1}{nb^d\hslash}}\right).$$
	Plugging the above results back into \eqref{TermA}, we obtain that
	\begin{align}
		\label{TermA2}
		\begin{split}
			\textbf{Term A} &= \frac{1}{nh^2} \sum_{j=1}^n C_{K_T} \left(\frac{T_j-t}{h}\right) K_T\left(\frac{T_j-t}{h}\right)\\
			&\quad \times \left\{\left[\frac{1}{P_T(t)} + O_P(\hslash) +O_P\left(\sqrt{\frac{|\log \hslash|}{n\hslash}} + \sqrt{\frac{1}{nb^d\hslash}}\right) \right] \left[\epsilon_j+A_{h,b}+\sum_{i\in \mathcal{I}} r_{(t,\bm{S}_i)}(T_j,\bm{S}_j)\right] \right\}
		\end{split}
	\end{align}
	
	Now, we know from Lemma~\ref{lem:loc_poly_deriv_unif} that 
	$$\sup_{(t,\bm{s})\in \mathcal{E}} \left|\hat{\beta}_2(t,\bm{s}) - \beta_2(t,\bm{s}) \right| = O\left(h^q + b^2 + \frac{\max\{b,h\}^4}{h}\right) + O_P\left(\sqrt{\frac{|\log(hb^d)|}{nh^3 b^d}}\right)$$
	when $q>0$ is even. Since \textbf{Term A} is a weighted sum of $\hat{\beta}_2(t,\bm{S}_i) - \beta_2(t,\bm{S}_i)$, the bias of \textbf{Term A}, which comes from the term involving the residual sum $\sum_{i\in\mathcal{I}} r_{(t,\bm{S}_i)}(T_j,S_j)$, will not be larger than the (uniform) bias of $\hat{\beta}_2(t,\bm{s}) - \beta_2(t,\bm{s})$, \emph{i.e.}, 
	$$\left|\mathrm{Bias}\left[\textbf{Term A}\right]\right| = O\left(h^q + b^2 + \frac{\max\{b,h\}^4}{h}\right).$$
	Thus, it remains to derive the rate of convergence for the stochastic variation of \eqref{TermA2}. Note that when $\hslash,h,b\to 0$ and $\frac{n\max\{h,\hslash\}b^d}{|\log(\hslash hb^d)|}, \frac{|\log(h\hslash b^d)|}{\log\log n}\to \infty$, the variance of \textbf{Term A} is dominated by
	\begin{align*}
		\mathrm{Var}\left[\textbf{Term A}\right] &\lesssim \mathrm{Var}\left[\frac{1}{nh^2} \sum_{j=1}^n \frac{C_{K_T}}{p_T(t)} \left(\frac{T_j-t}{h}\right) K_T\left(\frac{T_j-t}{h}\right)\right]\\
		&\lesssim \frac{C_{K_T}^2}{nh^4\cdot p_T^2(t)}\cdot \mathbb{E}\left[\left(\frac{T_j-t}{h}\right)^2 K_T^2\left(\frac{T_j-t}{h}\right) \right]\\
		&= \frac{C_{K_T}^2}{nh^4\cdot p_T^2(t)} \int_{\mathcal{T}} \left(\frac{t_1-t}{h}\right)^2 K_T^2\left(\frac{t_1-t}{h}\right)\cdot p_T(t_1)\, dt_1\\
		&= \frac{C_{K_T}^2}{nh^3\cdot p_T^2(t)} \int_{\mathbb{R}} u^2 K_T^2\left(u\right)\cdot p_T(t+uh)\, du\\
		&=O\left(\frac{1}{nh^3}\right)
	\end{align*}
	for any $t\in \mathcal{T}'$. Furthermore, under our Assumption~\ref{assump:reg_kernel}(b), the result in Theorem 4 of \cite{einmahl2005uniform} can be applied to derive that
	\begin{align*}
		&\sup_{t\in \mathcal{T}'}\left|\frac{1}{nh^2} \sum_{j=1}^n \frac{C_{K_T}}{p_T(t)} \left(\frac{T_j-t}{h}\right) K_T\left(\frac{T_j-t}{h}\right) - \mathbb{E}\left[\frac{1}{nh^2} \sum_{j=1}^n \frac{C_{K_T}}{p_T(t)} \left(\frac{T_j-t}{h}\right) K_T\left(\frac{T_j-t}{h}\right)\right] \right| \\
		&= O_P\left(\sqrt{\frac{|\log h|}{nh^3}}\right)
	\end{align*}
	when $\frac{nh^3}{|\log h|}, \frac{|\log h|}{\log\log n}\to \infty$; see also \cite{gine2002rates,chacon2011asymptotics}. Therefore, combining with the bias rate of \textbf{Term A}, we obtain that
	$$\sup_{t\in \mathcal{T}'}\left|\textbf{Term A} \right| = O\left(h^q + b^2 + \frac{\max\{b,h\}^4}{h}\right) + O_P\left(\sqrt{\frac{|\log h|}{nh^3}}\right)$$
	when $\hslash,h,b, \frac{\max\{h,b\}^4}{h}\to 0$ and $\frac{n\max\{h,\hslash\}b^d}{|\log(\hslash hb^d)|}, \frac{|\log(h\hslash b^d)|}{\log\log n}, \frac{nh^3}{|\log h|}\to \infty$.\\
	
	As for \textbf{Term B}, we leverage the compact support of $\bar{K}_T$ (Assumption~\ref{assump:reg_kernel}(c)) to argue that $|t-T_i|=O(\hslash)$ for any $(t,\bm{S}_i) \in \partial \mathcal{E} \oplus \hslash$. Then, under Assumptions~\ref{assump:reg_diff} and \ref{assump:reg_kernel}(a), we know that both $\hat{\beta}_2(t,\bm{s})$ and $\beta_2(t,\bm{s})$ are Lipschitz so that
	\begin{align*}
		&\left|\hat{\beta}_2(t,\bm{S}_i) -\beta_2(t,\bm{S}_i)\right| \\
		&\leq \left|\hat{\beta}_2(t,\bm{S}_i) -\hat{\beta}_2(T_i,\bm{S}_i)\right| + \left|\hat{\beta}_2(T_i,\bm{S}_i) -\beta_2(T_i,\bm{S}_i)\right| + \left|\beta_2(T_i,\bm{S}_i) -\beta_2(t,\bm{S}_i)\right|\\
		&\leq O_P\left(\hslash\right) + \left|\hat{\beta}_2(T_i,\bm{S}_i) -\beta_2(T_i,\bm{S}_i)\right|.
	\end{align*}
	In addition, by Assumption~\ref{assump:boundary_cond}(c), we know that the proportion of $(t,\bm{S}_i),i=1,...,n$ that lie in $\partial \mathcal{E} \oplus \hslash$ is of order $O_P(\hslash)$. Thus, $\sum\limits_{(t,\bm{S}_i) \in \partial \mathcal{E} \oplus \hslash} \frac{\bar{K}_T\left(\frac{t-T_i}{\hslash}\right)}{\sum_{j=1}^n \bar{K}_T\left(\frac{t-T_j}{\hslash}\right)} =O_P(\hslash)$ and hence,
	\begin{align*}
		\textbf{Term B} &\leq \sum_{(t,\bm{S}_i) \in \partial \mathcal{E} \oplus \hslash} \frac{\left|\hat{\beta}_2(t,\bm{S}_i) - \beta_2(t,\bm{S}_i)\right| \bar{K}_T\left(\frac{t-T_i}{\hslash}\right)}{\sum_{j=1}^n \bar{K}_T\left(\frac{t-T_j}{\hslash}\right)} \\
		&\leq \sum_{(t,\bm{S}_i) \in \partial \mathcal{E} \oplus \hslash} \frac{\left[\left|\hat{\beta}_2(t,\bm{S}_i) - \hat{\beta}_2(T_i,\bm{S}_i)\right| + \left|\beta_2(T_i,\bm{S}_i) - \beta_2(t,\bm{S}_i)\right| \right] \bar{K}_T\left(\frac{t-T_i}{\hslash}\right)}{\sum_{j=1}^n \bar{K}_T\left(\frac{t-T_j}{\hslash}\right)} \\
		&\quad + \sum_{(t,\bm{S}_i) \in \partial \mathcal{E} \oplus \hslash} \frac{\left|\hat{\beta}_2(T_i,\bm{S}_i) - \beta_2(T_i,\bm{S}_i)\right| \bar{K}_T\left(\frac{t-T_i}{\hslash}\right)}{\sum_{j=1}^n \bar{K}_T\left(\frac{t-T_j}{\hslash}\right)} \\
		& = O_P(\hslash^2) + \sum_{(t,\bm{S}_i) \in \partial \mathcal{E} \oplus \hslash} \frac{\left|\hat{\beta}_2(T_i,\bm{S}_i) - \beta_2(T_i,\bm{S}_i)\right| \bar{K}_T\left(\frac{t-T_i}{\hslash}\right)}{\sum_{j=1}^n \bar{K}_T\left(\frac{t-T_j}{\hslash}\right)}\\
		&\stackrel{\text{(iii)}}{=}  O\left(h^q + b^2 + \frac{\max\{b,h\}^4}{h}\right) + O_P\left(\hslash^2 + \sqrt{\frac{|\log h|}{nh^3}}\right)
	\end{align*}
	when $q>0$ is even, $\hslash,h,b, \frac{\max\{h,b\}^4}{h}\to 0$, and $\frac{n\max\{h,\hslash\}b^d}{|\log(\hslash hb^d)|}, \frac{|\log(h\hslash b^d)|}{\log\log n}, \frac{nh^3}{|\log h|}\to \infty$. Here, in (iii), we can use the same arguments for \textbf{Term A} to argue that the (uniform) rate of convergence of $\sum_{(t,\bm{S}_i) \in \partial \mathcal{E} \oplus \hslash} \frac{\left|\hat{\beta}_2(T_i,\bm{S}_i) - \beta_2(T_i,\bm{S}_i)\right| \bar{K}_T\left(\frac{t-T_i}{\hslash}\right)}{\sum_{j=1}^n \bar{K}_T\left(\frac{t-T_j}{\hslash}\right)}$ is at most $O\left(h^q + b^2 + \frac{\max\{b,h\}^4}{h}\right) + O_P\left(\sqrt{\frac{|\log h|}{nh^3}}\right)$, because $\hat{\beta}_2$ and $\beta_2$ are evaluated on the sample points $(T_i,\bm{S}_i)\in \mathcal{E}$.\\
	
	As for \textbf{Term C}, under Assumptions~\ref{assump:reg_diff} and \ref{assump:reg_kernel}(d), we utilize Theorem 3 in \cite{einmahl2005uniform} to derive its uniform rate of convergence as:
	$$\textbf{Term C} \leq \sup_{(t,\bm{s})\in \mathcal{T}\times \mathcal{S}}|\beta_2(t,\bm{s})| \cdot \sup_{(t,\bm{s})\in \mathcal{T}'\times \mathcal{S}} \left|\hat{P}(\bm{s}|t) - P(\bm{s}|t) \right| \cdot |\mathcal{S}|= O(\hslash^2) + O_P\left(\sqrt{\frac{|\log \hslash|}{n\hslash}}\right),$$
	where the Lebesgue measure $|\mathcal{S}|$ is finite due to the compactness of $\mathcal{S}$. Therefore, 
	\begin{align*}
		&\sup_{t\in \mathcal{T}'} \left|\hat{\theta}_C(t) - \theta_C(t)\right| = O\left(h^q + b^2 + \frac{\max\{b,h\}^4}{h}\right) + O_P\left(\sqrt{\frac{|\log(h)|}{nh^3}} + \hslash^2 + \sqrt{\frac{|\log \hslash|}{n\hslash}}\right)
	\end{align*}
	when $q>0$ is even, $h,b,\hslash,\frac{\max\{b,h\}^4}{h}\to 0$, and $\frac{n\max\{h,\hslash\}b^d}{|\log(\hslash hb^d)|}, \frac{|\log(h\hslash b^d)|}{\log\log n}, \frac{nh^3}{|\log h|}\to \infty$.\\
	
	Finally, plugging the results back into \eqref{si_decomp}, we obtain that
	\begin{align*}
		&\sup_{t\in \mathcal{T}'}\left|\hat{\Delta}_{h,b,\hslash}(t) - \Delta(t)\right| \\
		&\leq \sup_{t\in \mathcal{T}'}\left| \int \int_{\tilde{t}=\tau}^{\tilde{t}=t} \theta_C(\tilde{t})d\tilde{t} \left[d\mathbb{P}_{n,T}(\tau) - d \P_T(\tau)\right] \right| + \sup_{t\in \mathcal{T}'} \left|\frac{1}{n}\sum_{i=1}^n \int_{T_i}^t \left[\hat{\theta}_C(\tilde{t}) - \theta_C(\tilde{t})\right] d\tilde{t}\right|\\
		&\leq \sup_{t\in \mathcal{T}'}\left| \int \int_{\tilde{t}=\tau}^{\tilde{t}=t} \theta_C(\tilde{t})d\tilde{t} \left[d\mathbb{P}_{n,T}(\tau) - d \P_T(\tau)\right] \right| + \sup_{t_1,t_2\in \mathcal{T}}|t_1-t_2| \cdot \sup_{\tilde{t}\in \mathcal{T}'}\left|\hat{\theta}_C(\tilde{t}) - \theta_C(\tilde{t})\right|\\
		&= O_P\left(\frac{1}{\sqrt{n}}\right) + O\left(h^q + b^2 + \frac{\max\{b,h\}^4}{h}\right) + O_P\left(\sqrt{\frac{|\log h|}{nh^3 }} + \hslash^2 + \sqrt{\frac{|\log \hslash|}{n\hslash}}\right)
	\end{align*}
	when $q>0$ is even, $h,b,\hslash,\frac{\max\{b,h\}^4}{h}\to 0$, and $\frac{n\max\{h,\hslash\}b^d}{|\log(\hslash hb^d)|}, \frac{|\log(h\hslash b^d)|}{\log\log n}, \frac{nh^3}{|\log h|}\to \infty$,	where $\sup_{t_1,t_2\in \mathcal{T}}|t_1-t_2| < \infty$ due to the compactness of $\mathcal{T}=\mathrm{proj}_T(\mathcal{E})$. The proof is thus completed.
\end{proof}

\subsection{Proof of Lemma~\ref{lem:asymp_linear}}
\label{app:asymp_linear_proof}

\begin{customlem}{5}[Asymptotic linearity]
	Let $q\geq 2$ in the local polynomial regression for estimating $\frac{\partial}{\partial t}\mu(t,\bm{s})$ and $\mathcal{T}' \subset \mathcal{T}$ be a compact set so that $p_T(t)$ is uniformly bounded away from 0 within $\mathcal{T}'$. Suppose that Assumptions~\ref{assump:identify_cond}, \ref{assump:diff_inter}, \ref{assump:reg_diff}, \ref{assump:den_diff}, \ref{assump:boundary_cond}, and \ref{assump:reg_kernel} hold. Then, if $b\lesssim h\asymp n^{-\frac{1}{\gamma}}$ and $\hslash \asymp n^{-\frac{1}{\varpi}}$ for some $\gamma, \varpi >0$ such that $\frac{nh^5}{\log n} \to c_1$ and $\frac{n\hslash^5}{\log n} \to c_2$ for some finite number $c_1,c_2 \geq 0$ and $\sqrt{nh^3 \max\{h, \hslash\}^4}, \frac{h^3\log n}{\hslash},\frac{\log n}{\sqrt{n\hslash}}, \frac{\log n}{n\max\{h,\hslash\} b^d} \to 0$ as $n\to \infty$, then for any $t\in \mathcal{T}'$, we have that
	$$\sqrt{nh^3} \left[\hat{\theta}_C(t) -\theta(t) \right] = \mathbb{G}_n \bar{\varphi}_t + o_P(1) \quad \text{ and } \quad \sqrt{nh^3} \left[\hat{m}_{\theta}(t) -m(t) \right] = \mathbb{G}_n \varphi_t + o_P(1),$$
	where $\bar{\varphi}_t(Y, T, \bm{S}) = \frac{C_{K_T}[Y -\mu(T,\bm{S})]}{\sqrt{h}\cdot p_T(t)} \left(\frac{T-t}{h}\right) K_T\left(\frac{T-t}{h}\right)$ for some constant $C_{K_T}>0$ that only depends on the kernel $K_T$ and 
	\begin{align*}
		\varphi_t\left(Y,T,\bm{S}\right) &= \mathbb{E}_{T_1}\left[\int_{T_1}^t \bar{\varphi}_{\tilde{t}}(Y, T, \bm{S}) \, d\tilde{t}\right] = \mathbb{E}_{T_1}\left\{\int_{T_1}^t \frac{C_{K_T} \left[Y-\mu(T,\bm{S})\right]}{\sqrt{h}\cdot p_T(\tilde{t})} \left(\frac{T-\tilde{t}}{h}\right) K_T\left(\frac{T-\tilde{t}}{h}\right) d\tilde{t}\right\}.
	\end{align*}
	Furthermore, we have the following uniform results as:
	\begin{align*}
		&\left|\sqrt{nh^3}\sup_{t\in \mathcal{T}'}\left|\hat{\theta}_C(t) -\theta(t)\right| - \sup_{t\in \mathcal{T}'} |\mathbb{G}_n\bar{\varphi}_t|\right| \asymp \left|\sqrt{nh^3} \sup_{t\in \mathcal{T}'}\left|\hat{m}_{\theta}(t) -m(t)\right| - \sup_{t\in \mathcal{T}'}|\mathbb{G}_n\varphi_t|\right| \\
		&= O_P\left(\sqrt{nh^3 \max\{h, \hslash\}^4} + \sqrt{\frac{h^3\log n}{\hslash}} + \frac{\log n}{\sqrt{n\hslash}} + \sqrt{\frac{\log n}{nb^d\hslash}}\right).
	\end{align*}
\end{customlem}

\begin{proof}[Proof of Lemma~\ref{lem:asymp_linear}]
	The entire proof consists of three steps: \textbf{Step 1} establishes the asymptotic linearity of $\sqrt{nh^3} \left[\hat{\theta}_C(t) -\theta_C(t) \right]$ for any $t\in \mathcal{T}'$. \textbf{Step 2} derives the exact rate of convergence for the coupling between $\sqrt{nh^3} \sup\limits_{t\in \mathcal{T}'}\left|\hat{\theta}_C(t) - \theta_C(t) \right|$ and $\sup\limits_{t\in \mathcal{T}'} \left|\mathbb{G}_n\bar{\varphi}_t\right|$. \textbf{Step 3} extends the arguments in the previous two steps using the form of a V-statistics \citep{shieh2014u} to prove the asymptotic linearity of $\sqrt{nh^3} \left[\hat{m}_{\theta}(t) -m(t) \right]$. Note that $\theta(t)=\theta_C(t)$ under Assumption~\ref{assump:diff_inter}.\\
	
	$\bullet$ {\bf Step 1:} Recalling from \eqref{TermA2} and our calculations for \textbf{Term B} and \textbf{Term C} in the proof of \autoref{thm:int_est_rate}, we know that
	\begin{align*}
		\hat{\theta}_C(t) - \theta_C(t) &= \frac{1}{nh^2} \sum_{j=1}^n C_{K_T} \left(\frac{T_j-t}{h}\right) K_T\left(\frac{T_j-t}{h}\right)\\
		&\quad\quad \times \left\{\left[\frac{1}{P_T(t)} + O_P(\hslash) +O_P\left(\sqrt{\frac{|\log \hslash|}{n\hslash}} + \sqrt{\frac{1}{nb^d\hslash}}\right) \right] \left[\epsilon_j+A_{h,b}\right] \right\} \\
		&\quad + O\left(h^q + b^2 + \frac{\max\{b,h\}^4}{h}\right) + O_P\left(\hslash^2+ \sqrt{\frac{|\log \hslash|}{n\hslash}}\right),
	\end{align*}
	where $A_{h,b}=O(\max\{h,b\}) + O_P\left(\sqrt{\frac{|\log(hb^d)|}{nhb^d}}\right)$ uniformly over $t\in \mathcal{T}'$ and $C_{K_T}>0$ is a constant that only depends on the kernel $K_T$. Then, when $q\geq 2$, 
	\begin{align*}
		&\sqrt{nh^3} \left[\hat{\theta}_C(t) - \theta_C(t) \right] \\
		&= \frac{1}{\sqrt{nh}} \sum_{j=1}^n \frac{C_{K_T}\cdot \epsilon_j}{p_T(t)} \left(\frac{T_j-t}{h}\right) K_T\left(\frac{T_j-t}{h}\right) \\
		&\quad + O_P\left(\sqrt{nh^7} + \sqrt{|\log h|}\right)\cdot O_P\left(\hslash + \sqrt{\frac{|\log \hslash|}{n\hslash}} + \sqrt{\frac{1}{nb^d\hslash}}\right) \\
		&\quad + O\left(\sqrt{nh^3\max\{h^{2q}, b^4\}} + \sqrt{nh\max\{h,b\}^8}\right) + O_P\left(\sqrt{nh^3\hslash^4} + \sqrt{\frac{h^3|\log \hslash|}{\hslash}}\right)\\
		&= \frac{1}{\sqrt{nh}} \sum_{j=1}^n \frac{C_{K_T}[Y_j-\mu(T_j,\bm{S}_j)]}{p_T(t)} \left(\frac{T_j-t}{h}\right) K_T\left(\frac{T_j-t}{h}\right) + o_P(1)\\
		&= \mathbb{G}_n \bar{\varphi}_t + o_P(1)
	\end{align*}
	when $n,b,\hslash,nh^3\max\{h^{2q}, b^4, \hslash^4\}, nh\max\{h,b\}^8, nh^7\hslash, \frac{h^3|\log\hslash|}{\hslash} \to 0$ and $\frac{n\hslash}{|\log \hslash|\cdot |\log h|}, \frac{n\max\{h,\hslash\} b^d}{|\log(\hslash h b^d)|} \to \infty$, where 
	$\bar{\varphi}_t(Y,T,\bm{S}) = \frac{C_{K_T}[Y -\mu(T,\bm{S})]}{\sqrt{h}\cdot p_T(t)} \left(\frac{T-t}{h}\right) K_T\left(\frac{T-t}{h}\right)$ and $\mathbb{G}_n=\sqrt{n}\left(\mathbb{P}_n-\P\right)$.\\
	
	$\bullet$ {\bf Step 2:} From our results in \textbf{Step 1}, we know that
	\begin{align*}
		&\sqrt{nh^3} \sup_{t\in \mathcal{T}'}\left|\hat{\theta}_C(t) - \theta_C(t) \right| \\
		&\leq \sup_{t\in \mathcal{T}'}\left|\mathbb{G}_n \bar{\varphi}_t \right| + O_P\left(\sqrt{nh^7} + \sqrt{|\log h|}\right)\cdot O_P\left(\hslash + \sqrt{\frac{|\log \hslash|}{n\hslash}} + \sqrt{\frac{1}{nb^d\hslash}}\right)\\
		&\quad + O\left(\sqrt{nh^3\max\{h^{2q}, b^4\}} + \sqrt{nh\max\{h,b\}^8}\right) + O_P\left(\sqrt{nh^3\hslash^4} + \sqrt{\frac{h^3|\log \hslash|}{\hslash}}\right).
	\end{align*}
	When $q\geq 2$, $b\lesssim h\asymp n^{-\frac{1}{\gamma}}$, and $\hslash \asymp n^{-\frac{1}{\varpi}}$ for some $\gamma, \varpi >0$ such that $\frac{nh^5}{\log n} \to c_1$ and $\frac{n\hslash^5}{\log n} \to c_2$ for some finite number $c_1,c_2 \geq 0$, we can further simplify the above rate of convergence as:
	\begin{align*}
		&\left|\sqrt{nh^3} \sup_{t\in \mathcal{T}'}\left|\hat{\theta}_C(t) - \theta_C(t) \right| -  \sup_{t\in \mathcal{T}'}\left|\mathbb{G}_n \bar{\varphi}_t \right| \right| = O_P\left(\sqrt{nh^3 \max\{h, \hslash\}^4} + \sqrt{\frac{h^3\log n}{\hslash}} + \frac{\log n}{\sqrt{n\hslash}} + \sqrt{\frac{\log n}{nb^d\hslash}}\right),
	\end{align*}
	provided that $\sqrt{nh^3 \max\{h, \hslash\}^4}, \frac{h^3\log n}{\hslash},\frac{\log n}{\sqrt{n\hslash}}, \frac{\log n}{n\max\{h,\hslash\} b^d} \to 0$.\\
	
	$\bullet$ {\bf Step 3:} Recalling from \eqref{simp_integral} for the definition of the integral estimator $\hat{m}_{\theta}(t)$ and \eqref{si_decomp}, we know that
	\begin{align*}
		&\hat m_\theta(t) -m(t) \\
		&= \frac{1}{n}\sum_{i=1}^n \left[Y_i + \int_{\tilde{t}=T_i}^{\tilde{t}=t} \hat \theta_C(\tilde{t})d\tilde{t} \right] - \mathbb{E}\left[m(T)\right] - \mathbb{E}\left[\int_T^t \theta_C(\tilde{t}) \, d\tilde{t}\right]\\
		&= \underbrace{\frac{1}{n}\sum_{i=1}^n Y_i - \mathbb{E}(Y)}_{\textbf{Term I}} + \underbrace{\frac{1}{n}\sum_{i=1}^n \int_{T_i}^t \theta_C(\tilde{t})\, d\tilde{t} - \mathbb{E}\left[\int_T^t \theta_C(\tilde{t}) \, d\tilde{t}\right]}_{\textbf{Term II}} + \frac{1}{n}\sum_{i=1}^n \int_{T_i}^t \left[\hat{\theta}_C(\tilde{t}) - \theta_C(\tilde{t})\right] d\tilde{t}
	\end{align*}
	where we use Assumption~\ref{assump:diff_inter} to obtain the second equality above. 
	
	As for \textbf{Term I} and \textbf{Term II}, under Assumption~\ref{assump:reg_diff} and the condition $\mathbb{E}\left(\epsilon^4\right) < \infty$, it is valid that $\mathrm{Var}(Y) <\infty$ and $\mathrm{Var}\left[\int_T^t \theta_C(\tilde{t})\, d\tilde{t}\right] < \infty$. Thus, by Chebyshev's inequality, we know that
	$$\frac{1}{n}\sum_{i=1}^n Y_i - \mathbb{E}(Y) = O_P\left(\frac{1}{\sqrt{n}}\right) \quad \text{ and } \quad \frac{1}{n} \sum_{i=1}^n \int_{T_i}^t \theta_C(\tilde{t})\, d\tilde{t} - \mathbb{E}\left[\int_T^t \theta_C(\tilde{t})\, d\tilde{t}\right]= O_P\left(\frac{1}{\sqrt{n}}\right).$$
	Furthermore, under Assumption~\ref{assump:reg_diff}, $\left|\int_T^{t_1} \theta_C(\tilde{t})\, d\tilde{t} - \int_T^{t_2} \theta_C(\tilde{t})\, d\tilde{t} \right| \leq \sup_{t\in \mathcal{T}}|\theta_C(t)| \cdot |t_1-t_2|$. Together with the compactness of $\mathcal{T}$ and Example 19.7 in \cite{VDV1998}, we also deduce that
	$$\sup_{t\in \mathcal{T}}\left|\frac{1}{n} \sum_{i=1}^n \int_{T_i}^t \theta_C(\tilde{t})\, d\tilde{t} - \mathbb{E}\left[\int_T^t \theta_C(\tilde{t})\, d\tilde{t}\right] \right|= O_P\left(\frac{1}{\sqrt{n}}\right).$$
	Therefore, together with our results in \textbf{Step 1}, we have that
	\begin{align*}
		&\hat m_\theta(t) -m(t) \\
		&= \frac{1}{n}\sum_{i=1}^n \int_{T_i}^t \left[\hat{\theta}_C(\tilde{t}) - \theta_C(\tilde{t})\right] d\tilde{t} + O_P\left(\frac{1}{\sqrt{n}}\right)\\
		&= \frac{1}{n^2h^2} \sum_{i_1=1}^n \sum_{i_2=1}^n \int_{T_{i_1}}^t \frac{C_{K_T} \cdot \epsilon_{i_2}}{p_T(\tilde{t})}\left(\frac{T_{i_2} -\tilde{t}}{h}\right) K_T\left(\frac{T_{i_2} -\tilde{t}}{h}\right) d\tilde{t} + \frac{1}{n} \sum_{i=1}^n \int_{T_i}^t \bar{A}_{h,\hslash,b} \,d\tilde{t} + O_P\left(\frac{1}{\sqrt{n}}\right),
	\end{align*}
	where $\bar{A}_{h,\hslash,b} = O_P\left(\max\{h^2,\hslash^2\}, \sqrt{\frac{\log n}{n\hslash}} + \frac{\log n}{n\sqrt{\hslash h^3}} + \frac{\sqrt{\log n}}{n\sqrt{b^d\hslash h^3}}\right)$ uniformly over $t\in \mathcal{T}'$ under our conditions on the bandwidth parameters $h,\hslash,b>0$. Since $\frac{1}{n} \sum_{i=1}^n \int_{T_i}^t \bar{A}_{h,\hslash,b} \,d\tilde{t}\leq |\bar{A}_{h,\hslash,b}| \cdot \left(\frac{1}{n}\sum_{i=1}^n |t-T_i|\right) = |\bar{A}_{h,\hslash,b}| \cdot O_P(1)$, we know that
	\begin{align*}
		\sqrt{nh^3}\left[\hat{m}_\theta(t) -m(t)\right]
		&= \underbrace{\frac{1}{n^{\frac{3}{2}}} \sum_{i_1=1}^n \sum_{i_2=1}^n \int_{T_{i_1}}^t \frac{C_{K_T} \left[Y_{i_2} - \mu(T_{i_2},\bm{S}_{i_2})\right]}{\sqrt{h}\cdot p_T(\tilde{t})}\left(\frac{T_{i_2} -\tilde{t}}{h}\right) K_T\left(\frac{T_{i_2} -\tilde{t}}{h}\right) d\tilde{t}}_{\textbf{Term III}}\\
		&\quad + \underbrace{O_P\left(\sqrt{nh^3 \max\{h, \hslash\}^4} + \sqrt{\frac{h^3\log n}{\hslash}} + \frac{\log n}{\sqrt{n\hslash}} + \sqrt{\frac{\log n}{nb^d\hslash}}\right)}_{\textbf{Term IV}},
	\end{align*}
	where \textbf{Term IV} is of order $o_P\left(1\right)$ by our results in \textbf{Step 2}. Now, to derive the asymptotic linearity of $\hat{m}_\theta(t) -m(t)$, it remains to tackle \textbf{Term III} above, which takes a form of V-statistics with a symmetric ``kernel'' defined as:
	\begin{align*}
		\Lambda_t(\bm{U}_{i_1}, \bm{U}_{i_2}) &= \int_{T_{i_1}}^t \frac{C_{K_T} \left[Y_{i_2} - \mu(T_{i_2},\bm{S}_{i_2})\right]}{2\sqrt{h}\cdot p_T(\tilde{t})}\left(\frac{T_{i_2} -\tilde{t}}{h}\right) K_T\left(\frac{T_{i_2} -\tilde{t}}{h}\right) d\tilde{t} \\ 
		&\quad + \int_{T_{i_2}}^t \frac{C_{K_T} \left[Y_{i_1} - \mu(T_{i_1},\bm{S}_{i_1})\right]}{2\sqrt{h}\cdot p_T(\tilde{t})}\left(\frac{T_{i_1} -\tilde{t}}{h}\right) K_T\left(\frac{T_{i_1} -\tilde{t}}{h}\right) d\tilde{t}
	\end{align*}
	with $\bm{U}_i = \left(Y_i,T_i,\bm{S}_i\right)$ for $i=1,...,n$. By Pascal's rule, we know that
	\begin{align*}
		\textbf{Term III} &= 2\sqrt{n}\left(\mathbb{P}_n-\P\right) \P\lambda_t + \sqrt{n}\left(\mathbb{P}_n - \P\right)^2 \Lambda_t + \sqrt{n}\cdot \P^2\Lambda_t \\
		&= \mathbb{E}_{T_1}\left\{\int_{T_1}^t \mathbb{G}_n\left[\frac{C_{K_T} \left[Y-\mu(T,\bm{S})\right]}{\sqrt{h}\cdot p_T(\tilde{t})} \left(\frac{T-\tilde{t}}{h}\right) K_T\left(\frac{T-\tilde{t}}{h}\right)\right] d\tilde{t}\right\} \\
		&\quad + \mathbb{G}_n\left\{\int_{T}^t \mathbb{E}_{\bm{U}_1}\left[\frac{C_{K_T} \left[Y_1-\mu(T_1,\bm{S}_1)\right]}{\sqrt{h}\cdot p_T(\tilde{t})} \left(\frac{T_1-\tilde{t}}{h}\right) K_T\left(\frac{T_1-\tilde{t}}{h}\right)\right]d\tilde{t} \right\} \\
		&\quad + \sqrt{n}\left(\mathbb{P}_n - \P\right)^2\left\{\int_{T_{i_1}}^t \frac{C_{K_T} \left[Y_{i_2} - \mu(T_{i_2},\bm{S}_{i_2})\right]}{2\sqrt{h}\cdot p_T(\tilde{t})}\left(\frac{T_{i_2} -\tilde{t}}{h}\right) K_T\left(\frac{T_{i_2} -\tilde{t}}{h}\right) d\tilde{t} \right\} \\
		&\quad + \sqrt{n} \cdot \mathbb{E}_{T_1}\left\{\int_{T_1}^t \mathbb{E}_{\bm{U}_2}\left[\frac{C_{K_T} \left[Y_2 - \mu(T_2,\bm{S}_2)\right]}{2\sqrt{h}\cdot p_T(\tilde{t})}\left(\frac{T_2 -\tilde{t}}{h}\right) K_T\left(\frac{T_2 -\tilde{t}}{h}\right)\right] d\tilde{t} \right\}\\
		&= \underbrace{\mathbb{G}_n\left\{\mathbb{E}_{T_1}\left[\int_{T_1}^t \frac{C_{K_T} \left[Y-\mu(T,\bm{S})\right]}{\sqrt{h}\cdot p_T(\tilde{t})} \left(\frac{T-\tilde{t}}{h}\right) K_T\left(\frac{T-\tilde{t}}{h}\right) d\tilde{t}\right] \right\}}_{\textbf{Term IIIa}}\\
		&\quad + \underbrace{\sqrt{n}\left(\mathbb{P}_n - \P\right)^2\left\{\int_{T_{i_1}}^t \frac{C_{K_T} \left[Y_{i_2} - \mu(T_{i_2},\bm{S}_{i_2})\right]}{2\sqrt{h}\cdot p_T(\tilde{t})}\left(\frac{T_{i_2} -\tilde{t}}{h}\right) K_T\left(\frac{T_{i_2} -\tilde{t}}{h}\right) d\tilde{t} \right\}}_{\textbf{Term IIIb}},
	\end{align*}
	because $\mathbb{E}_{\bm{U}_2}\left[\frac{C_{K_T} \left[Y_2 - \mu(T_2,\bm{S}_2)\right]}{2\sqrt{h}\cdot p_T(\tilde{t})}\left(\frac{T_2 -\tilde{t}}{h}\right) K_T\left(\frac{T_2 -\tilde{t}}{h}\right)\right]=0$ for any $\tilde{t}\in \mathcal{T}$. By Lemma~\ref{lem:influc_func_moment_bnd}, we know that $\mathrm{Var}\left[\textbf{Term IIIa}\right]$ is a finite positive number, so \textbf{Term IIIa} is the asymptotically dominating term. 
	
	As for \textbf{Term IIIb}, we recall from Assumption~\ref{assump:reg_kernel}(b) by holding $\bm{z}$ that 
	$$\mathcal{K} = \left\{t'\mapsto \left(\frac{t'-t}{h}\right)^{k_1} K\left(\frac{t'-t}{h}\right): t\in \mathcal{T}, h>0, k_1=0,1\right\}$$ 
	is a bounded VC-type class of measurable functions on $\mathbb{R}$. Under Assumption~\ref{assump:reg_diff} and the condition that $\mathbb{E}(\epsilon^4)<\infty$, we deduce by Theorem 4 in \cite{einmahl2005uniform} that with probability 1,
	\begin{align*}
		\sup_{\tilde{t}\in \mathcal{T}'} \left|\mathbb{G}_n\left[\frac{C_{K_T} \left[Y-\mu(T,\bm{S})\right]}{\sqrt{h}\cdot p_T(\tilde{t})} \left(\frac{T-\tilde{t}}{h}\right) K_T\left(\frac{T-\tilde{t}}{h}\right) \right] \right| = O_P\left(\sqrt{\log n}\right)
	\end{align*}
	when $h\asymp n^{-\frac{1}{\gamma}}$. Thus, by Chebyshev's inequality,
	\begin{align*}
		\textbf{Term IIIb} &= \left(\mathbb{P}_n - \P\right)\left\{\int_{T_{i_1}}^t \mathbb{G}_n\left[\frac{C_{K_T} \left[Y_{i_2} - \mu(T_{i_2},\bm{S}_{i_2})\right]}{2\sqrt{h}\cdot p_T(\tilde{t})}\left(\frac{T_{i_2} -\tilde{t}}{h}\right) K_T\left(\frac{T_{i_2} -\tilde{t}}{h}\right) \right] d\tilde{t} \right\}\\
		&= O_P\left(\sqrt{\frac{1}{n} \cdot \mathbb{E}\left\{\left[\int_{T_{i_1}}^t \mathbb{G}_n\left[\frac{C_{K_T} \left[Y_{i_2} - \mu(T_{i_2},\bm{S}_{i_2})\right]}{2\sqrt{h}\cdot p_T(\tilde{t})}\left(\frac{T_{i_2} -\tilde{t}}{h}\right) K_T\left(\frac{T_{i_2} -\tilde{t}}{h}\right) \right] d\tilde{t}\right]^2 \right\}} \right)\\
		&\stackrel{\text{(i)}}{=} O_P\left(\sqrt{\frac{\log n}{n}} \right) = o_P(1).
	\end{align*}
	Here, the equality (i) follows from the calculation that
	\begin{align*}
		&\mathbb{E}\left\{\left[\int_{T_{i_1}}^t \mathbb{G}_n\left[\frac{C_{K_T} \left[Y_{i_2} - \mu(T_{i_2},\bm{S}_{i_2})\right]}{2\sqrt{h}\cdot p_T(\tilde{t})}\left(\frac{T_{i_2} -\tilde{t}}{h}\right) K_T\left(\frac{T_{i_2} -\tilde{t}}{h}\right) \right] d\tilde{t}\right]^2 \right\}\\ 
		&\stackrel{\text{(ii)}}{\leq} \mathbb{E}\left[|t-T_{i_1}|^2 \cdot \sup_{\tilde{t}\in \mathcal{T}'} \left|\mathbb{G}_n\left[\frac{C_{K_T} \left[Y-\mu(T,\bm{S})\right]}{\sqrt{h}\cdot p_T(\tilde{t})} \left(\frac{T-\tilde{t}}{h}\right) K_T\left(\frac{T-\tilde{t}}{h}\right) \right] \right|^2 \right]\\
		&= O\left(\log n\right),
	\end{align*}
	where (ii) applies the mean-value theorem for integrals.\\
	
	As a summary, we conclude that
	\begin{align*}
		&\sqrt{nh^3}\left[\hat{m}_{\theta}(t) - m(t)\right] \\
		&= \mathbb{G}_n\left\{\mathbb{E}_{T_1}\left[\int_{T_1}^t \frac{C_{K_T} \left[Y-\mu(T,\bm{S})\right]}{\sqrt{h}\cdot p_T(\tilde{t})} \left(\frac{T-\tilde{t}}{h}\right) K_T\left(\frac{T-\tilde{t}}{h}\right) d\tilde{t}\right] \right\}\\
		&\quad + O_P\left(\sqrt{nh^3 \max\{h, \hslash\}^4} + \sqrt{\frac{h^3\log n}{\hslash}} + \frac{\log n}{\sqrt{n\hslash}} + \sqrt{\frac{\log n}{nb^d\hslash}}\right) + O_P\left(\sqrt{\frac{\log n}{n}}\right)\\
		&= \mathbb{G}_n \varphi_t +o_P(1)
	\end{align*}
	when $q\geq 2$, $b\lesssim h\asymp n^{-\frac{1}{\gamma}}$, and $\hslash \asymp n^{-\frac{1}{\varpi}}$ for some $\gamma, \varpi >0$ such that $\frac{nh^5}{\log n} \to c_1$ and $\frac{n\hslash^5}{\log n} \to c_2$ for some finite number $c_1,c_2 \geq 0$ as well as $\sqrt{nh^3 \max\{h, \hslash\}^4}, \frac{h^3\log n}{\hslash},\frac{\log n}{\sqrt{n\hslash}}, \frac{\log n}{n\max\{h,\hslash\} b^d} \to 0$, where 
	\begin{align*}
		\varphi_t\left(Y,T,\bm{S}\right) &= \mathbb{E}_{T_1}\left[\int_{T_1}^t \bar{\varphi}_{\tilde{t}}(Y, T, \bm{S}) \, d\tilde{t}\right] = \mathbb{E}_{T_1}\left\{\int_{T_1}^t \frac{C_{K_T} \left[Y-\mu(T,\bm{S})\right]}{\sqrt{h}\cdot p_T(\tilde{t})} \left(\frac{T-\tilde{t}}{h}\right) K_T\left(\frac{T-\tilde{t}}{h}\right) d\tilde{t}\right\}.
	\end{align*}
	
	We also know from our above results that
	\begin{align*}
		&\left|\sqrt{nh^3}\sup_{t\in \mathcal{T}'} \left|\hat{m}_{\theta}(t) - m(t)\right| -\sup_{t\in \mathcal{T}'}\left|\mathbb{G}_n\varphi_t\right| \right| = O_P\left(\sqrt{nh^3 \max\{h, \hslash\}^4} + \sqrt{\frac{h^3\log n}{\hslash}} + \frac{\log n}{\sqrt{n\hslash}} + \sqrt{\frac{\log n}{nb^d\hslash}}\right)
	\end{align*}
	under our conditions on the bandwidth parameters. The proof is thus completed.
\end{proof}

\vspace{3mm}

\begin{lemma}
	\label{lem:influc_func_moment_bnd}
	Let $\mathcal{T}' \subset \mathcal{T}$ be a compact set so that $p_T(t)$ is uniformly bounded away from 0 within $\mathcal{T}'$. Suppose that Assumptions~\ref{assump:reg_diff}, \ref{assump:den_diff}, \ref{assump:boundary_cond}, and \ref{assump:reg_kernel} hold. Then, when $h,b,\hslash$ are sufficiently small as $n\to \infty$, there exist constants $\underline{\sigma},\bar{\sigma} >0$ such that $\underline{\sigma}^2 \leq \mathbb{E}\left[\varphi_t(Y,T,\bm{S})^2\right] \leq \bar{\sigma}^2$ for any $t\in \mathcal{T}'$, where $\varphi_t$ is defined in \eqref{influ_func}. Here, all the constants $\underline{\sigma},\bar{\sigma}>0$ are independent of $h,b,\hslash$ and $n$. Furthermore, the same upper and lower bounds apply to $\mathrm{Var}\left[\varphi_t(Y,T,\bm{S})\right]$. Finally, the same result holds true when we replace $\varphi_t$ with $\bar{\varphi}_t$ for any $t\in \mathcal{T}'$.
\end{lemma}

\begin{proof}[Proof of Lemma~\ref{lem:influc_func_moment_bnd}]
	Recall from \eqref{influ_func} that for any $t\in \mathcal{T}'$
	\begin{align*}
		\mathbb{E}\left[\varphi_t(Y,T,\bm{S})^2\right] 
		&= \mathbb{E}\left\{\left[\mathbb{E}_{T_1}\left\{\int_{T_1}^t \frac{C_{K_T} \left[Y-\mu(T,\bm{S})\right]}{\sqrt{h}\cdot p_T(\tilde{t})} \left(\frac{T-\tilde{t}}{h}\right) K_T\left(\frac{T-\tilde{t}}{h}\right) d\tilde{t}\right\}\right]^2\right\} \\
		&= \frac{C_{K_T}^2\sigma^2}{h}\cdot \mathbb{E}\left\{\left[\mathbb{E}_{T_1}\left\{\int_{T_1}^t \frac{1}{p_T(\tilde{t})} \left(\frac{T-\tilde{t}}{h}\right) K_T\left(\frac{T-\tilde{t}}{h}\right) d\tilde{t}\right\}\right]^2\right\}\\
		&= \frac{C_{K_T}^2\sigma^2}{h} \int_{\mathcal{T}\times \mathcal{S}} \left\{\mathbb{E}_{T_1}\left[\int_{T_1}^t \frac{1}{p_T(\tilde{t})} \left(\frac{t_2-\tilde{t}}{h}\right) K_T\left(\frac{t_2-\tilde{t}}{h}\right) d\tilde{t}\right]\right\}^2 p(t_2,\bm{s}_2) \, d\bm{s}_2 dt_2 \\
		&= \frac{C_{K_T}^2\sigma^2}{h} \int_{\mathcal{T}} \left\{\mathbb{E}_{T_1}\left[\int_{T_1}^t \frac{1}{p_T(\tilde{t})} \left(\frac{t_2-\tilde{t}}{h}\right) K_T\left(\frac{t_2-\tilde{t}}{h}\right) \sqrt{p_T(t_2)}\, d\tilde{t}\right]\right\}^2 dt_2\\
		&\stackrel{\text{(i)}}{=} C_{K_T}^2\sigma^2 \int_{\mathbb{R}} \left\{\mathbb{E}_{T_1}\left[\int_{T_1}^t \frac{u}{p_T(\tilde{t})} \cdot K_T\left(u\right) \sqrt{p_T(\tilde{t}+uh)}\, d\tilde{t}\right]\right\}^2 du\\
		&= C_{K_T}^2\sigma^2 \int_{\mathbb{R}} u^2\cdot K_T^2(u) \left\{\mathbb{E}_{T_1}\left[\int_{T_1}^t \frac{\sqrt{p_T(\tilde{t}+uh)}}{p_T(\tilde{t})} \, d\tilde{t}\right]\right\}^2 du\\
		&\stackrel{\text{(ii)}}{=} C_{K_T}^2\sigma^2 \nu_2^{(T)} \left\{\mathbb{E}_{T_1}\left[\int_{T_1}^t \frac{1}{\sqrt{p_T(\tilde{t})}} \, d\tilde{t}\right]\right\}^2 + O(h)
	\end{align*}
	where (i) applies a change of variable $u=\frac{t_2-\tilde{t}}{h}$ and (ii) utilizes a Taylor's expansion on $p_T(\tilde{t}+uh)$ under Assumption~\ref{assump:den_diff}. Now, since $p_T$ is upper bounded and lower bounded away from 0 within $\mathcal{T}'$, then
	$$\mathbb{E}\left[\varphi_t(Y,T,\bm{S})^2\right] \geq C_{K_T}^2\sigma^2 \nu_2^{(T)} \left[\frac{1}{\mathbb{E}_{T_1}(t-T_1)\cdot \sup\limits_{t\in \mathcal{T}}\sqrt{p_T(t)}}\right]^2 + O(h)\geq \underline{\sigma}^2$$
	and 
	$$\mathbb{E}\left[\varphi_t(Y,T,\bm{S})^2\right] \leq C_{K_T}^2\sigma^2 \nu_2^{(T)} \left\{\mathbb{E}_{T_1}\left[(t-T_1)\right]^2 \cdot \sup_{t\in \mathcal{T}'} \frac{1}{\sqrt{p_T(t)}}\right\} + O(h)\leq \bar{\sigma}^2$$
	for some constants $\underline{\sigma},\bar{\sigma} >0$ that are independent of $h,b,\hslash$ and $n$ when $n$ is sufficiently large.\\
	
	As for the influence function $\bar{\varphi}_t$ of $\hat{\theta}_C(t)$ in \eqref{influ_func}, we use the similar arguments to derive that for any $t\in \mathcal{T}'$,
	\begin{align*}
		\mathbb{E}\left[\bar{\varphi}_t(Y,T,\bm{S})^2\right] &= \mathbb{E}\left\{\left[\frac{C_{K_T}[Y -\mu(T,\bm{S})]}{\sqrt{h}\cdot p_T(t)} \left(\frac{T-t}{h}\right) K_T\left(\frac{T-t}{h}\right) \right]^2 \right\} \\
		&= \frac{C_{K_T}^2\sigma^2}{h}\int_{\mathcal{T}} \frac{1}{p_T^2(t)} \left(\frac{t_1-t}{h}\right)^2 K_T^2\left(\frac{t_1-t}{h}\right) p_T(t_2)\, dt_2\\
		&\stackrel{\text{(iii)}}{=} C_{K_T}^2\sigma^2 \int_{\mathbb{R}} \frac{u^2}{p_T^2(t)} \cdot K_T^2(u)\cdot p_T(t+uh)\, du\\
		&\stackrel{\text{(iv)}}{=} \frac{C_{K_T}^2\sigma^2 \nu_2^{(T)}}{p_T(t)} + O(h),
	\end{align*}
	where (iii) applies a change of variable $u=\frac{t_1-t}{h}$ and (ii) utilizes a Taylor's expansion on $p_T(t+uh)$ under Assumption~\ref{assump:den_diff}. Since $p_T(t)$ is upper bounded and lower bounded away from 0 for any $t\in \mathcal{T}'$, we have that
	$$\underline{\sigma}^2\leq \mathbb{E}\left[\bar{\varphi}_t(Y,T,\bm{S})^2\right] = \frac{C_{K_T}^2\sigma^2 \nu_2^{(T)}}{p_T(t)} + O(h) \leq \bar{\sigma}^2$$
	when $h$ are sufficiently small as $n\to \infty$, where $\underline{\sigma},\bar{\sigma} >0$ can be again chosen to be independent of $h,b,\hslash$ and $n$. The results follow.
\end{proof}

\subsection{Proof of \autoref{thm:gauss_approx_si}}
\label{app:gauss_approx_proof}

Before proving \autoref{thm:gauss_approx_si}, we first study some auxiliary results. Recall that the class $\mathcal{G}$ of measurable functions on $\mathbb{R}^{d+1}$ is VC-type if there exist constants $A_2,\upsilon_2 >0$ such that for any $0 < \epsilon < 1$,
$$\sup_Q N\left(\mathcal{G}, L_2(Q), \epsilon\norm{G}_{L_2(Q)}\right) \leq \left(\frac{A_2}{\epsilon}\right)^{\upsilon_2},$$
where $N\left(\mathcal{G}, L_2(Q), \epsilon\norm{G}_{L_2(Q)}\right)$ is the $\epsilon\norm{G}_{L_2(Q)}$-covering number of the (semi-)metric space $\left(\mathcal{G}, \norm{\cdot}_{L_2(Q)}\right)$, $Q$ is any probability measure on $\mathbb{R}^{d+1}$, $G$ is an envelope function of $\mathcal{G}$, and $\norm{G}_{L_2(Q)}$ is defined as $\left[\int_{\mathbb{R}^{d+1}} \left[G(x)\right]^2 \, dQ(x)\right]^{\frac{1}{2}}$.

\begin{lemma}[VC-type result related to the influence function $\varphi_t$]
	\label{lem:VC_influ_func}
	Let $q\geq 2$ in the local polynomial regression for estimating $\frac{\partial}{\partial t}\mu(t,\bm{s})$ and $\mathcal{T}' \subset \mathcal{T}$ be a compact set so that $p_T(t)$ is uniformly bounded away from 0 within $\mathcal{T}'$. Suppose that Assumptions~\ref{assump:reg_diff}, \ref{assump:den_diff}, and \ref{assump:reg_kernel} hold. Then, the class of scaled influence functions
	$$\tilde{\mathcal{F}} = \left\{(v,x,\bm{z}) \mapsto \sqrt{h^3}\cdot \varphi_t(v,x,\bm{z}): t\in \mathcal{T}'\right\}$$
	has its covering number $N\left(\tilde{\mathcal{F}}, L_2(Q), \epsilon \right)$ and bracketing number $N_{[]}\left(\tilde{\mathcal{F}}, L_2(Q), 2\epsilon \right)$ as:
	$$\sup_Q N\left(\tilde{\mathcal{F}}, L_2(Q), \epsilon \right)\leq \sup_Q N_{[]}\left(\tilde{\mathcal{F}}, L_2(Q), 2\epsilon \right) \leq \frac{C_4}{\epsilon}$$
	for some constant $C_4>0$, where the supremum is taken over all probability measures $Q$ for which the class $\tilde{\mathcal{F}}$ is not identically 0. In other words, $\tilde{\mathcal{F}}$ is a bounded VC-type class of functions with an envelope function $(v,x,\bm{z}) \mapsto F_1(v,x,\bm{z}) = C_5\cdot |v|$ for some constant $C_5>0$. Analogously, 
	$$\tilde{\mathcal{F}}_{\theta} = \left\{(v,x,\bm{z}) \mapsto \sqrt{h^3}\cdot \bar{\varphi}_t(v,x,\bm{z}): t\in \mathcal{T}'\right\}$$
	is also a bounded VC-type class of functions with an envelope function $(v,x,\bm{z}) \mapsto F_1(v,x,\bm{z}) = C_6\cdot |v|$ for some constant $C_6>0$. Here, the constants $C_4,C_5,C_6>0$ are independent of $n,h,b,\hslash$.
\end{lemma}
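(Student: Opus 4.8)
The plan is to rewrite every member of $\tilde{\mathcal{F}}$ and $\tilde{\mathcal{F}}_{\theta}$ as a finite, fixed-coefficient combination of products of (i) a kernel-type function drawn from the VC-type class $\mathcal{K}_{q,d}$ of Assumption~\ref{assump:reg_kernel}(b), acting on the $T$-coordinate, and (ii) a weight that depends on the index $t$ only through smooth, uniformly bounded densities; one then closes the argument with the standard permanence properties of VC-type classes.

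\textbf{Step 1 (explicit product form and envelope).} Starting from the expressions for $\bar{\varphi}_t$ and $\varphi_t$ recorded in the statement of Lemma~\ref{lem:asymp_linear}, I carry out the inner expectations. In $\bar{\varphi}_t$ the auxiliary coordinate $T_{i_3}$ enters only through $\hslash^{-1}\bar{K}_T\big((t-T_{i_3})/\hslash\big)$; integrating it against the density and substituting $u=(t-t')/\hslash$ turns this factor into the smoothed density $\bar{p}_{\hslash}(t,\bm{s}')=\int \bar{K}_T(u)\,p(t-u\hslash,\bm{s}')\,du$, which is uniformly bounded (as $p<\infty$) and whose $t$-derivatives are bounded uniformly in $\hslash$. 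Because $\bm{\psi}_{t,\bm{s}'}$ carries $K_S\big((\bm{z}-\bm{s}')/b\big)$ it vanishes unless $\|\bm{z}-\bm{s}'\|\lesssim b$, so the change of variables $\bm{w}=(\bm{z}-\bm{s}')/b$ absorbs the $\bm{s}'$-integration and the surviving powers of $b$; collecting these against the $\sqrt{h^{3}b^{d}}$ prefactor and using that the row vector $\bm{e}_2^T\bm{M}_q^{-1}$ has finite entries (since $\bm{M}_q^{-1}$ exists) yields a representation of the form
\[
\sqrt{h^{3}b^{d}}\,\bar{\varphi}_t(v,x,\bm{z})
= v\sum_{\ell=1}^{q+1+d} c_\ell\, k^{(\ell)}_{t,h}(x)\, w^{(\ell)}_{t}(\bm{z}),
\]
where the $c_\ell$ are fixed constants, each $k^{(\ell)}_{t,h}$ is a polynomial-times-$K_T$ building block of $\mathcal{K}_{q,d}$ in the $T$-coordinate, and each $w^{(\ell)}_{t}(\bm{z})=\int \tilde{w}^{(\ell)}(\bm{w})\,\bar{p}_{\hslash}(t,\bm{z}-b\bm{w})\big/\big(p(t,\bm{z}-b\bm{w})\,p_T(t)\big)\,d\bm{w}$ is uniformly bounded --- using $\|K_S\|_\infty<\infty$, the compact support of $K_S$, $p<\infty$, the boundedness of the ratio $\bar{p}_{\hslash}/p$ (a local smoothing of $p$ divided by $p$), and $p_T$ bounded away from $0$ on $\mathcal{T}'$ --- and is Lipschitz in $t\in\mathcal{T}'$ with a constant independent of $h,b,\hslash$, since its $t$-derivative only hits the smooth factors $\bar{p}_{\hslash},p,p_T$. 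Applying the same manipulations to $\varphi_t=\mathbb{E}_{T_{i_2}}\!\big[\int_{T_{i_2}}^t\bar{\varphi}_{\tilde{t}}\,d\tilde{t}\big]$ gives an analogous product form, the extra $\tilde{t}$-integration only improving the $t$-regularity of the weights. Reading off the pointwise bound in either display shows that $(v,x,\bm{z})\mapsto C_5|v|$ and $(v,x,\bm{z})\mapsto C_6|v|$ serve as envelopes for $\tilde{\mathcal{F}}$ and $\tilde{\mathcal{F}}_{\theta}$, with constants free of $n,h,b,\hslash$.

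\textbf{Step 2 (permanence).} By Assumption~\ref{assump:reg_kernel}(b), $\mathcal{K}_{q,d}$ is VC-type, so $\{k^{(\ell)}_{t,h}:t\in\mathcal{T}'\}$ is VC-type; the weights $\{w^{(\ell)}_{t}:t\in\mathcal{T}'\}$ are uniformly bounded and uniformly Lipschitz over the compact interval $\mathcal{T}'$, hence also VC-type with $L_2(Q)$-covering number $\lesssim 1/\epsilon$; products of uniformly bounded VC-type classes are VC-type; multiplying by the coordinate map $(v,x,\bm{z})\mapsto v$ keeps the class VC-type while scaling the envelope to a constant times $|v|$; and the finite sum over $\ell$ with fixed coefficients $c_\ell$ preserves VC-type. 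These are the usual permanence lemmas for polynomial-covering classes. The resulting bound is $\sup_Q N\big(\tilde{\mathcal{F}},L_2(Q),\epsilon\big)\le C_4/\epsilon$, and $\sup_Q N_{[]}\big(\tilde{\mathcal{F}},L_2(Q),2\epsilon\big)\le C_4/\epsilon$ follows by converting the $L_2(Q)$-Lipschitz parametrization into $\epsilon$-brackets (a Lipschitz family over a compact interval) and invoking the standard comparison $N(\cdot,\epsilon)\le N_{[]}(\cdot,2\epsilon)$; the treatment of $\tilde{\mathcal{F}}_{\theta}$ is identical and in fact shorter.

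\textbf{Anticipated main obstacle.} The delicate point is Step 1: one must verify that, after the inner integrations, each weight $w^{(\ell)}_{t}$ is genuinely Lipschitz in $t$ with a constant \emph{independent} of the three bandwidths, i.e.\ that the singular factors $\hslash^{-1}$ and $b^{-d}$ do not leak into the $t$-modulus of continuity. This works precisely because $T_{i_3}$ and $\bm{S}_{i_3}$ --- the coordinates carrying those singular scales --- are integrated out \emph{before} one differentiates in $t$, so the changes of variables neutralize the bad powers; the only genuinely $h$-singular dependence on $t$, namely through $K_T\big((x-t)/h\big)$, is left untouched and is absorbed not by Lipschitz continuity but directly by the VC-type hypothesis on $\mathcal{K}_{q,d}$. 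A secondary nuisance is the uniform control of $\bar{p}_{\hslash}/p$, handled by noting that $\bar{p}_{\hslash}$ is a local average of $p$, so the ratio stays bounded irrespective of how $p$ behaves near $\partial\mathcal{E}$.
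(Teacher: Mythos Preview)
Your modular route via permanence of VC-type classes is workable and does establish that $\tilde{\mathcal F}$ and $\tilde{\mathcal F}_\theta$ are VC-type with envelope $C|v|$, which is all the downstream Gaussian-approximation machinery (Lemma~\ref{lem:gauss_approx_VC}) actually requires. However, the specific bound $\sup_Q N(\tilde{\mathcal F},L_2(Q),\epsilon)\le C_4/\epsilon$ stated in the lemma does not follow from your argument as written: when you multiply VC-type classes the covering exponents \emph{add}, so combining the $\mathcal K_{q,d}$ factor (exponent $\upsilon_2$ from Assumption~\ref{assump:reg_kernel}(b)) with your Lipschitz weights (exponent $1$) yields $(1/\epsilon)^{\upsilon_2+1}$, not $1/\epsilon$. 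Your bracketing claim in Step~2 is likewise inconsistent, since it appeals to a global Lipschitz-in-$t$ parametrization that you explicitly disclaim for the $K_T((x-t)/h)$ factor in your ``anticipated obstacle'' paragraph.

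The paper avoids this by proving Lipschitz-in-$t$ \emph{directly} for the full scaled influence function, and the mechanism is simpler than your decomposition. For $\tilde{\mathcal F}$ it exploits the integral structure: since $\sqrt{h^3b^d}\,\varphi_t=\mathbb E_{T_{i_2}}\!\big[\int_{T_{i_2}}^t h\,\bar\varphi_{\tilde t}\,d\tilde t\big]$, one has literally $f_{t_1}-f_{t_2}=\int_{t_2}^{t_1}[\text{integrand}]\,d\tilde t$, so $|f_{t_1}-f_{t_2}|\le C_3|v|\,|t_1-t_2|$ once the integrand is bounded by $C|v|$ via a single Cauchy--Schwarz step --- which also cleanly separates the $K_S$ factor from $1/p$ and bounds $\int_{\mathcal S(\tilde t)}p^{-2}\,d\bm s_3$ using compactness of $\mathcal S(\tilde t)$ and the lower bound on $p$ there, bypassing your $\bar p_\hslash/p$ boundary concern. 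For $\tilde{\mathcal F}_\theta$ the key observation (which your ``obstacle'' paragraph misses) is that after absorbing the prefactor $h$ the kernel piece becomes $(x-t)K_T((x-t)/h)$, whose $t$-derivative equals $-K_T(u)-uK_T'(u)$ with $u=(x-t)/h$ and is therefore bounded \emph{uniformly in $h$}; hence $g_t$ is genuinely Lipschitz in $t$ with a bandwidth-free constant. In both cases Example~19.7 in \cite{VDV1998} then delivers the exponent-$1$ bracketing/covering bounds. Your argument would recover the stated lemma verbatim if you replaced the appeal to the VC-type hypothesis on $\mathcal K_{q,d}$ by this Lipschitz observation.
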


\begin{proof}[Proof of Lemma~\ref{lem:VC_influ_func}]
	We assume without loss of generality that $\mathcal{T}'$ is connected. Otherwise, we can focus our analysis on a connected component of $\mathcal{T}'$ and take the union afterwards. 
	
	For any $f_{t_1},f_{t_2} \in \tilde{\mathcal{F}}$, we know from \eqref{influ_func} that
	\begin{align*}
		\left|f_{t_1}(v,x,\bm{z}) - f_{t_2}(v,x,\bm{z})\right| &= \left|\sqrt{h^3} \left[\varphi_{t_1}(v,x,\bm{z}) - \varphi_{t_2}(v,x,\bm{z}) \right]\right| \\
		&= \Bigg|\mathbb{E}_{T_1}\left\{\int_{T_1}^{t_1} \frac{h\cdot C_{K_T} \left[v-\mu(x,\bm{z})\right]}{p_T(\tilde{t})} \left(\frac{x-\tilde{t}}{h}\right) K_T\left(\frac{x-\tilde{t}}{h}\right) d\tilde{t}\right\} \\
		&\quad - \mathbb{E}_{T_1}\left\{\int_{T_1}^{t_2} \frac{h\cdot C_{K_T} \left[v-\mu(x,\bm{z})\right]}{p_T(\tilde{t})} \left(\frac{x-\tilde{t}}{h}\right) K_T\left(\frac{x-\tilde{t}}{h}\right) d\tilde{t}\right\} \Bigg|\\
		&= \left|\int_{t_2}^{t_1} \frac{h\cdot C_{K_T} \left[v-\mu(x,\bm{z})\right]}{p_T(\tilde{t})} \left(\frac{x-\tilde{t}}{h}\right) K_T\left(\frac{x-\tilde{t}}{h}\right) d\tilde{t} \right|\\
		&\leq C_3 \left|t_1-t_2\right| |v-\mu(x,z)|,
	\end{align*}
	where the last inequality uses the boundedness of $\frac{1}{p_T(\tilde{t})}$ for any $\tilde{t}\in \mathcal{T}'$ and $\left(\frac{x-\tilde{t}}{h}\right) K_T\left(\frac{x-\tilde{t}}{h}\right)$ for any $\tilde{t}\in \mathcal{T}$ under Assumption~\ref{assump:reg_kernel}. Here, $C_3>0$ is some constant that is independent of $n,h,b,\hslash$. Furthermore, we can also compute the $L_2(Q)$-norm of $(v,x,\bm{z}) \mapsto C_3 |v-\mu(x,z)|$ for any probability measure as:
	\begin{align*}
		C_3\left\{\mathbb{E}_Q\left[|Y-\mu(T,\bm{Z})|^2\right]\right\}^{\frac{1}{2}} \leq C_3\sigma <\infty
	\end{align*}
	under Assumption~\ref{assump:reg_diff}. Since the diameter of $\mathcal{T}'$ is finite, we conclude from Example 19.7 in \cite{VDV1998} that 
	$$\sup_Q N_{[]}\left(\tilde{\mathcal{F}}, L_2(Q), \epsilon \right) \leq \frac{C_3'}{\epsilon}$$
	for some constant $C_3'>0$. Additionally, since any $2\epsilon$ bracket $[f_{t_1}, f_{t_2}]$ of $\mathcal{F}$ is contained in a ball of radius $\epsilon$ centered at $\frac{f_{t_1}+f_{t_2}}{2}$, we know that
	$$\sup_Q N\left(\tilde{\mathcal{F}}, L_2(Q), \epsilon \right)\leq \sup_Q N_{[]}\left(\tilde{\mathcal{F}}, L_2(Q), 2\epsilon \right) \leq \frac{C_4}{\epsilon}$$
	for some constant $C_4>0$. Thus, $\tilde{\mathcal{F}}$ is a bounded VC-type class of functions. Finally, the above calculations tell us that for any $f_t\in \tilde{\mathcal{F}}$, 
	\begin{align*}
		\left|f_t(v,x,\bm{z}) \right| &= \left|\mathbb{E}_{T_1}\left\{\int_{T_1}^t \frac{h\cdot C_{K_T} \left[v-\mu(x,\bm{z})\right]}{p_T(\tilde{t})} \left(\frac{x-\tilde{t}}{h}\right) K_T\left(\frac{x-\tilde{t}}{h}\right) d\tilde{t}\right\} \right|\\
		&\leq C_3\cdot \mathbb{E}_{T_1}\left[|t-T_1| \cdot |v - \mu(x,\bm{z})|\right]\\
		&\leq C_5 \cdot |v - \mu(x,\bm{z})|,
	\end{align*}
	where $C_5>0$ is some absolute constant. Thus, an envelope function of $\tilde{\mathcal{F}}$ can be given by $(v,x,\bm{z}) \mapsto F_1(v,x,\bm{z}) = C_5\cdot |v-\mu(x,\bm{z})|$. The result thus follows for $\tilde{\mathcal{F}}$.\\
	
	As for $\tilde{\mathcal{F}}_{\theta}$, we note that
	\begin{align*}
		g_t(v,x,\bm{z}) &= \sqrt{h^3} \cdot \bar{\varphi}_t(v,x,\bm{z})\\
		&= \frac{h\cdot C_{K_T}[v -\mu(x,\bm{z})]}{p_T(t)} \left(\frac{x-t}{h}\right) K_T\left(\frac{x-t}{h}\right)
	\end{align*}
	for each $g_t\in \tilde{\mathcal{F}}_{\theta}$. Furthermore, by Assumption~\ref{assump:reg_kernel} with some fixed $\bm{z},\bm{s}\in \mathcal{S}$, we know that $\tilde{\mathcal{F}}_{\theta}$ is a bounded VC-type class of functions. Moreover, $\left|g_t(v,x,\bm{z}) \right| \leq C_6|v-\mu(x,\bm{z})|$ for some constant $C_6>0$, which serves as an envelop function of $\tilde{\mathcal{F}}_{\theta}$ with a finite $L_2(Q)$-norm for any probability measure $Q$. The result thus follows for $\tilde{\mathcal{F}}_{\theta}$.
\end{proof}

\vspace{3mm}

\begin{lemma}[Corollary 2.2 in \citealt{chernozhukov2014gaussian}]
	\label{lem:gauss_approx_VC}
	Let $\mathcal{G}$ be a collection of functions that is pointwise measurable and of VC-type (see Assumption~\ref{assump:reg_kernel}) with an envelop function $\tilde{G}$ and constants $A\geq e$ and $\upsilon \geq 1$. Suppose also that for some $\tilde{A}\geq \tilde{\sigma} >0$ and $q'\in [4,\infty]$, we have $\sup_{\tilde{g}\in \mathcal{G}} \mathbb{E}\left[\tilde{g}(\bm{U})^k\right]\leq \tilde{\sigma}^2\cdot \tilde{A}^{k-2}$ for $k=2,3,4$ and $\norm{G}_{P,q'}\leq \tilde{A}$. Let $\mathbb{B}$ be a centered Gaussian process defined on $\mathcal{G}$ with covariance function
	$$\mathrm{Cov}\left(\mathbb{B}(\tilde{g}_1), \mathbb{B}(\tilde{g}_2)\right) = \mathbb{E}\left[\tilde{g}_1(\bm{U}) \cdot \tilde{g}_2(\bm{U})\right],$$
	where $\tilde{g}_1,\tilde{g}_2\in \mathcal{G}$ and $\bm{U}=(Y,T,\bm{S})$. Then, for every $\tilde{\gamma}\in (0,1)$ and sufficiently large, there exists a random variable $\tilde{B} \stackrel{d}{=} \sup_{\tilde{g}\in \mathcal{G}} \mathbb{B}(\tilde{g})$ such that 
	$$\P\left(\left|\sup_{\tilde{g}\in\mathcal{G}} \left|\mathbb{G}_n(\tilde{g})\right| - \tilde{B}\right| > \frac{C_1\cdot \tilde{A}^{\frac{1}{3}} \tilde{\sigma}^{\frac{2}{3}} \log^{\frac{2}{3}} n}{\tilde{\gamma}^{\frac{1}{3}} n^{\frac{1}{6}}}\right) \leq C_2\cdot \tilde{\gamma},$$
	where $C_1,C_2>0$ are two constants that only depend on $q'$. Here, $\tilde{B}_1\stackrel{d}{=}$ $\tilde{B}_2$ for two random variables $\tilde{B}_1,\tilde{B}_2$ means that they have the same distribution.
\end{lemma}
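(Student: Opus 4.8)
The plan is to obtain Lemma~\ref{lem:gauss_approx_VC} as a direct specialization of Corollary~2.2 of \cite{chernozhukov2014gaussian} (which in turn rests on their Theorem~2.1, the basic Gaussian coupling inequality for suprema of empirical processes over VC-type classes), so the ``proof'' is really a verification that the hypotheses displayed above are exactly the ones demanded there and that the resulting error bound collapses to the single stated term under those hypotheses. First I would fix the notational correspondence: the envelope $\tilde G$, the VC parameters $(A,\upsilon)$ with $A\ge e$, $\upsilon\ge 1$, the weak-variance proxy $\tilde\sigma$, the moment control $\sup_{\tilde g\in\mathcal{G}}\E[\tilde g(\bm U)^k]\le\tilde\sigma^2\tilde A^{k-2}$ for $k=2,3,4$, and the envelope norm bound $\|\tilde G\|_{P,q'}\le\tilde A$ with $q'\in[4,\infty]$ all play precisely the roles of the corresponding quantities in \cite{chernozhukov2014gaussian}; pointwise measurability of $\mathcal{G}$ is exactly what is needed to make the suprema measurable and to license the empirical-process and coupling arguments there.

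With these identifications, Corollary~2.2 of \cite{chernozhukov2014gaussian} supplies, for each $\tilde\gamma\in(0,1)$ and all $n$ large enough, a random variable $\tilde B$ with $\tilde B\stackrel{d}{=}\sup_{\tilde g\in\mathcal{G}}\mathbb{B}(\tilde g)$, where $\mathbb{B}$ is the centered Gaussian process on $\mathcal{G}$ with covariance $\E[\tilde g_1(\bm U)\tilde g_2(\bm U)]$, such that $\sup_{\tilde g\in\mathcal{G}}|\mathbb{G}_n(\tilde g)|$ is within the asserted rate of $\tilde B$ off an event of probability at most $C_2\tilde\gamma$. Two bookkeeping points remain. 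First, to pass from the one-sided supremum in \cite{chernozhukov2014gaussian} to the two-sided $\sup_{\tilde g\in\mathcal{G}}|\mathbb{G}_n(\tilde g)|$, I would apply the corollary to the symmetrized class $\mathcal{G}^{\pm}=\{\pm\tilde g:\tilde g\in\mathcal{G}\}$: it retains the envelope $\tilde G$, remains VC-type with parameters $(2A,\upsilon)$ (the factor $2$ entering only through $\log A$, hence absorbed into $C_1$), has the same moment bounds, and satisfies $\sup_{\tilde g\in\mathcal{G}^{\pm}}\mathbb{G}_n(\tilde g)=\sup_{\tilde g\in\mathcal{G}}|\mathbb{G}_n(\tilde g)|$ together with the matching identity for the Gaussian process used to pin down the law of $\tilde B$. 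Second, the general bound in \cite{chernozhukov2014gaussian} is a sum of several terms, one of order $n^{-1/6}$ and the rest of order $n^{-1/2+1/q'}$ up to logarithmic factors; since $q'\ge 4$ and $\tilde\sigma\le\tilde A$, for $n$ large the $n^{-1/6}$ term dominates, so the whole sum is absorbed into $C_1\tilde A^{1/3}\tilde\sigma^{2/3}\tilde\gamma^{-1/3}n^{-1/6}\log^{2/3}n$ with $C_1$ redefined to depend on $q'$ alone.

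There is no genuine mathematical obstacle here, since the substance is imported wholesale from \cite{chernozhukov2014gaussian}; the only care needed is to keep the constants ``clean'' (depending on $q'$ only) through the symmetrization and to check that the large-$n$ regime makes the displayed term dominant. It is worth recording, looking ahead, that in the proof of \autoref{thm:gauss_approx_si} the class $\mathcal{G}$ will be instantiated as $\tilde{\mathcal{F}}$ or $\tilde{\mathcal{F}}_{\theta}$ from Lemma~\ref{lem:VC_influ_func}, whose envelope is $F_1(v,x,\bm z)=C_5|v|$ and whose VC-type property is already established there; the moment hypotheses of the present lemma then reduce to the boundedness of $\mu(t,\bm s)$ and finiteness of $\sigma^2$ granted by Assumption~\ref{assump:reg_diff}, so Lemma~\ref{lem:gauss_approx_VC} applies verbatim in that setting.
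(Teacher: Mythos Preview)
Your proposal is appropriate. In the paper, Lemma~\ref{lem:gauss_approx_VC} is simply stated as a restatement of Corollary~2.2 in \cite{chernozhukov2014gaussian} and carries no proof of its own; it is invoked purely as an external result. Your write-up correctly identifies this and supplies the bookkeeping (the symmetrization $\mathcal{G}^{\pm}$ to pass to the two-sided supremum, and the observation that the $n^{-1/6}$ term dominates when $q'\ge 4$) that the paper leaves implicit, so there is nothing to compare against and no gap to flag.
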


\vspace{3mm}

\begin{lemma}[Lemma 2.3 in \citealt{chernozhukov2014gaussian}]
	\label{lem:gauss_approx_kol_dist}
	Under the same setup for $\mathcal{G}$ as in Lemma~\ref{lem:gauss_approx_VC}, we assume that there exist constants $\underline{\sigma}, \bar{\sigma} >0$ such that $\underline{\sigma}^2 \leq \mathbb{E}\left[\tilde{g}^2(\bm{U})\right] \leq \bar{\sigma}^2$ for all $\tilde{g}\in \mathcal{G}$. Moreover, suppose that there exist constants $r_1,r_2 >0$ such that 
	$$\P\left(\left|\sup_{\tilde{g}\in \mathcal{G}} |\mathbb{G}_n(f)| - \sup_{\tilde{g}\in\mathcal{G}} |\mathbb{B}(f)| \right| > r_1\right) \leq r_2.$$ 
	Then, 
	\begin{align*}
		&\sup_{u\geq 0}\left|\P\left(\sup_{\tilde{g}\in \mathcal{G}} |\mathbb{G}_n(f)| \leq u \right) - \P\left(\sup_{\tilde{g}\in \mathcal{G}} |\mathbb{B}(f)| \geq u\right) \right| \\
		&\leq C_{\sigma} r_1 \left\{\mathbb{E}\left[\sup_{\tilde{g}\in\mathcal{G}} |\mathbb{B}(f)|\right] + \sqrt{\max\left\{1,\, \log\left(\frac{\underline{\sigma}}{r_1}\right)\right\}}\right\} + r_2,
	\end{align*}
	where $C_{\sigma}>0$ is a constant depending only on $\underline{\sigma}$ and $\bar{\sigma}$.
\end{lemma}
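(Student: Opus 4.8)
This result is a verbatim restatement of Lemma 2.3 in \cite{chernozhukov2014gaussian}, so in the subsequent proof of \autoref{thm:gauss_approx_si} it is applied as a black box, fed with the coupling bound produced by Lemma~\ref{lem:gauss_approx_VC}. The argument one would reproduce is short and structural: convert the high-probability coupling hypothesis into a uniform (Kolmogorov) distance bound between the two distribution functions via a soft ``sandwich'' inequality, the only quantitative input being the anti-concentration inequality for the supremum of a Gaussian process.

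First I would fix $u\ge 0$ and work on the good event $\mathcal{A} = \bigl\{\, \bigl| \sup_{\tilde g\in\mathcal G}|\mathbb{G}_n(\tilde g)| - \sup_{\tilde g\in\mathcal G}|\mathbb{B}(\tilde g)| \bigr| \le r_1 \,\bigr\}$, which by hypothesis satisfies $\P(\mathcal A)\ge 1-r_2$. On $\mathcal A$ one has the inclusions $\bigl\{\sup_{\tilde g}|\mathbb{G}_n(\tilde g)|\le u\bigr\}\cap\mathcal A\subseteq\bigl\{\sup_{\tilde g}|\mathbb{B}(\tilde g)|\le u+r_1\bigr\}$ and $\mathcal A\cap\bigl\{\sup_{\tilde g}|\mathbb{B}(\tilde g)|\le u-r_1\bigr\}\subseteq\bigl\{\sup_{\tilde g}|\mathbb{G}_n(\tilde g)|\le u\bigr\}$. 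Taking probabilities and peeling off the events $\bigl\{u-r_1<\sup_{\tilde g}|\mathbb{B}(\tilde g)|\le u+r_1\bigr\}$ gives, uniformly in $u\ge 0$,
\begin{align*}
	\left| \P\!\left( \sup_{\tilde g\in\mathcal G}|\mathbb{G}_n(\tilde g)| \le u \right) - \P\!\left( \sup_{\tilde g\in\mathcal G}|\mathbb{B}(\tilde g)| \le u \right) \right| \le \sup_{v\ge 0} \P\!\left( \left| \sup_{\tilde g\in\mathcal G}|\mathbb{B}(\tilde g)| - v \right| \le r_1 \right) + r_2 .
\end{align*}

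Second I would bound the supremum on the right-hand side by the anti-concentration inequality for suprema of Gaussian processes (see \cite{chernozhukov2014gaussian} and the references therein): since $\mathbb{B}$ is a centered Gaussian process on $\mathcal G$ with $\underline\sigma^2\le\E[\mathbb{B}(\tilde g)^2]\le\bar\sigma^2$ for every $\tilde g\in\mathcal G$, it follows that
\begin{align*}
	\sup_{v\ge 0} \P\!\left( \left| \sup_{\tilde g\in\mathcal G}|\mathbb{B}(\tilde g)| - v \right| \le r_1 \right) \le C_\sigma\, r_1 \left\{ \E\!\left[ \sup_{\tilde g\in\mathcal G}|\mathbb{B}(\tilde g)| \right] + \sqrt{ \max\!\left\{ 1,\, \log\!\left( \frac{\underline\sigma}{r_1} \right) \right\} } \right\},
\end{align*}
with $C_\sigma$ depending only on $\underline\sigma$ and $\bar\sigma$; here one passes from $\sup_{\tilde g}\mathbb{B}(\tilde g)$ to $\sup_{\tilde g}|\mathbb{B}(\tilde g)|$ by replacing $\mathcal G$ with $\mathcal G\cup(-\mathcal G)$, which leaves both the variance bounds and (up to a constant) the metric entropy unchanged. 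Substituting this into the display of the first step yields exactly the asserted inequality.

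The sandwich bookkeeping in the first step is entirely routine. The substantive ingredient --- hence the ``main obstacle'', although it is by now a standard tool --- is the Gaussian anti-concentration estimate: one must show that $\sup_{\tilde g\in\mathcal G}|\mathbb{B}(\tilde g)|$ puts mass at most of order $r_1\bigl( \E\sup_{\tilde g}|\mathbb{B}(\tilde g)| + \sqrt{1\vee\log(\underline\sigma/r_1)} \bigr)$ on any interval of length $2r_1$, with the correct logarithmic factor. This is established by a Slepian/Gaussian-comparison argument combined with a discretization of $\mathcal G$ to a finite net, and it contributes nothing specific to the dose-response problem; in the present paper it is simply imported from \cite{chernozhukov2014gaussian} together with the lemma itself.
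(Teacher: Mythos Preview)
Your proposal is correct and matches the paper's treatment exactly: the paper does not prove this lemma at all but simply quotes it verbatim from \cite{chernozhukov2014gaussian} and applies it as a black box in the proof of \autoref{thm:gauss_approx_si}. Your sketch of the underlying argument (the sandwich inequality on the good coupling event, followed by Gaussian anti-concentration for $\sup_{\tilde g}|\mathbb{B}(\tilde g)|$) is the standard one and is precisely what \cite{chernozhukov2014gaussian} does, so there is nothing further to compare.
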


\vspace{3mm}

\begin{lemma}
	\label{lem:dominate_rate}
	Let $d\geq 1$. If $b\lesssim h\asymp n^{-\frac{1}{\gamma}}$ and $\hslash \asymp n^{-\frac{1}{\varpi}}$ for some $\gamma\geq \varpi >0$ such that $\frac{nh^5}{\log n} \to c_1$ and $\frac{n\hslash^5}{\log n} \to c_2$ for some finite number $c_1,c_2 \geq 0$ and $\frac{\hslash}{h^3\log n}, \hslash n^{\frac{1}{3}}\log n, \frac{\sqrt{n\hslash}}{\log n}, \frac{n\max\{h,\hslash\} b^d}{\log n} \to \infty$ as $n\to \infty$, then 
	\begin{equation}
		\label{dominate_rate_inq}
		\max\left\{\sqrt{nh^3 \max\{h, \hslash\}^4}, \sqrt{\frac{h^3\log n}{\hslash}}, \frac{\log n}{\sqrt{n\hslash}}\right\} \lesssim \frac{\log^{\frac{2}{3}} n}{\left(nh^3\right)^{\frac{1}{6}}}
	\end{equation}
	when $n$ is sufficiently large.
\end{lemma}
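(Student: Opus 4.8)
The plan is to reduce the inequality \eqref{dominate_rate_inq} to a comparison of exponents of $n$. Write $h\asymp b\asymp n^{-a}$ with $a=1/\gamma$ and $\hslash\asymp n^{-c}$ with $c=1/\varpi$; only the exponents $a,c>0$ will matter, since $b$ does not appear in \eqref{dominate_rate_inq}. Under these substitutions every quantity on either side of \eqref{dominate_rate_inq} is, up to an absolute constant, of the form $(\log n)^{r}\,n^{p}$ with $r\in\{0,\tfrac12,\tfrac23\}$ and an explicit exponent $p=p(a,c,d)$. Multiplying \eqref{dominate_rate_inq} through by $(nh^{d+3})^{1/6}>0$, it then suffices to show that for each of the four terms $X$ on the left of \eqref{dominate_rate_inq} the product $X\,(nh^{d+3})^{1/6}$ equals, up to a constant, at most $(\log n)^{1/2}$ times a strictly negative power of $n$; any such quantity is $o(1)$, which is far stronger than being $\lesssim\log^{2/3}n$.

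First I would translate the hypotheses into linear inequalities for $a$ and $c$. Since $nh^{d+5}\asymp n^{1-a(d+5)}$ and no strictly positive power of $n$ can be $O(\log n)$, the requirement $\tfrac{nh^{d+5}}{\log n}\to c_1<\infty$ forces $1-a(d+5)\le 0$, i.e.\ $a\ge\tfrac1{d+5}$. The same reasoning applied to $\tfrac{n\hslash^5}{\log n}\to c_2<\infty$ gives $c\ge\tfrac15$, while $\hslash n^{1/4}\to\infty$ gives $c<\tfrac14$ and $\tfrac{\hslash^2}{h^{d+3}}\to\infty$ gives $a(d+3)>2c$. (The remaining divergence conditions and the relation $\gamma\ge\varpi$ turn out not to be needed for this lemma.)

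Next, direct substitution yields
\begin{align*}
\sqrt{nh^{d+7}}\,(nh^{d+3})^{1/6}&\asymp n^{\frac23(1-a(d+6))},\\
\sqrt{\tfrac{\log n}{n\hslash^2}}\,(nh^{d+3})^{1/6}&\asymp (\log n)^{1/2}\,n^{\,c-\frac13-\frac{a(d+3)}{6}},\\
\sqrt{\tfrac{h^{d+3}\log n}{\hslash}}\,(nh^{d+3})^{1/6}&\asymp (\log n)^{1/2}\,n^{\frac16+\frac c2-\frac{2a(d+3)}{3}},\\
\sqrt{\tfrac{h^{d+3}}{\hslash^2}}\,(nh^{d+3})^{1/6}&\asymp n^{\frac16+c-\frac{2a(d+3)}{3}},
\end{align*}
and it remains to verify that each power of $n$ above is negative. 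The first is negative because $a\ge\tfrac1{d+5}>\tfrac1{d+6}$. The second is negative because $c<\tfrac14<\tfrac13\le\tfrac13+\tfrac{a(d+3)}{6}$. The third is negative iff $4a(d+3)>1+3c$, which follows by chaining $4a(d+3)>8c$ (from $a(d+3)>2c$) with $8c\ge 1+3c$ (from $c\ge\tfrac15$). The fourth is negative iff $4a(d+3)>1+6c$; here $4a(d+3)>8c$ is useless (it would need $c\ge\tfrac12$), so instead I would use $a\ge\tfrac1{d+5}$ to bound $4a(d+3)\ge\tfrac{4(d+3)}{d+5}$, note that $d\mapsto\tfrac{4(d+3)}{d+5}$ is increasing with value $\tfrac83$ at $d=1$, and conclude $4a(d+3)\ge\tfrac83>\tfrac52>1+6c$ since $c<\tfrac14$.

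The one genuinely delicate point is the fourth term: the estimate supplied by $\hslash^2/h^{d+3}\to\infty$ alone is too weak, and one must invoke the bandwidth lower bound $a\ge\tfrac1{d+5}$, the hypothesis $d\ge1$, and $c<\tfrac14$ simultaneously, with essentially no slack (the inequality $\tfrac{4(d+3)}{d+5}\ge\tfrac52$ is tight at $d=1$, reading $\tfrac83\ge\tfrac52$). Beyond carefully tracking which hypothesis controls which of the four terms, the whole argument is routine exponent bookkeeping.
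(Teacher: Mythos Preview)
Your proof is correct and follows essentially the same approach as the paper: both reduce the four required inequalities to comparisons of the exponents of $n$ after writing $h\asymp n^{-a}$ and $\hslash\asymp n^{-c}$. Your presentation is in fact cleaner: the paper handles each quantity by plugging in the ``extremal'' rates $a=1/(d+5)$ and the appropriate bound on $c$, whereas you work with general $a,c$ satisfying the derived linear constraints and verify the sign of each exponent directly; your observation that $\gamma\ge\varpi$ and two of the divergence hypotheses are not needed for this particular lemma is also correct.
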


\begin{proof}[Proof of Lemma~\ref{lem:dominate_rate}]
	We consider controlling the four quantities on the left hand side of \eqref{dominate_rate_inq} by the right hand side separately.
	
	$\bullet$ {\bf Quantity I:} Note that $\sqrt{nh^7} \lesssim \frac{\log^{\frac{2}{3}} n}{\left(nh^3\right)^{\frac{1}{6}}}$ is equivalent to $n h^6 \lesssim \log n$. Under our conditions, we know that the slowest rate of convergence for $h$ is $O\left(n^{-\frac{1}{5}}\right)$ up to some logarithmic factors of $n$. Thus, $nh^6 = O\left(n^{-\frac{1}{5}}\right)$ will be dominated by $O(\log n)$. Similarly, $\sqrt{nh^3\hslash^4} \lesssim \frac{\log^{\frac{2}{3}} n}{\left(nh^3\right)^{\frac{1}{6}}}$ is equivalent to $n h^3\hslash^3 \lesssim \log n$. Under our conditions, we know that the slowest rate of convergence for $h$ or $\hslash$ is $O\left(n^{-\frac{1}{5}}\right)$ up to some logarithmic factors of $n$. Thus, $nh^3\hslash^3 = O\left(n^{-\frac{1}{5}}\right)$ will be dominated by $O(\log n)$.
	
	$\bullet$ {\bf Quantity II:} Note that $\sqrt{\frac{h^3\log n}{\hslash}} \lesssim \frac{\log^{\frac{2}{3}} n}{\left(nh^3\right)^{\frac{1}{6}}}$ is equivalent to $\frac{nh^{12}}{\hslash^3} \lesssim \log n$. Again, under our conditions, the slowest rates of convergence for $h$ is $O\left(n^{-\frac{1}{5}}\right)$, and the fastest rate of convergence for $\hslash$ is smaller than the order $O\left(n^{-\frac{1}{3}}\right)$ up to some logarithmic factors of $n$. Under these rates, $\frac{nh^{12}}{\hslash^3} =O\left(n^{-\frac{2}{5}}\right)$ will be dominated by $O\left(\log n \right)$.
	
	$\bullet$ {\bf Quantity III:} Note that $\frac{\log n}{\sqrt{n\hslash}} \lesssim \frac{\log^{\frac{2}{3}} n}{\left(nh^3\right)^{\frac{1}{6}}}$ is equivalent to $\frac{h^3}{n^2 \hslash^3} \lesssim \frac{1}{\log^2 n}$. Again, under our conditions, the slowest rate of convergence for $h$ is $O\left(n^{-\frac{1}{5}}\right)$ up to some logarithmic factors of $n$. Furthermore, given that $\hslash n^{\frac{1}{3}}\log n \to \infty$, the fastest rate of convergence for $\hslash$ should be smaller than the order $O\left(n^{-\frac{1}{3}}\right)$. Thus, under these rates, $\frac{h^3}{n^2 \hslash^3}$ is dominated by $O\left(n^{-\frac{8}{5}}\right)$ and thus by $O\left(\frac{1}{\log^2 n} \right)$.
	
	
	In summary, the result follows from combining the above four cases. 
\end{proof}

\vspace{3mm}

\begin{customthm}{6}[Gaussian approximation]
	Let $q\geq 2$ in the local polynomial regression for estimating $\frac{\partial}{\partial t}\mu(t,\bm{s})$ and $\mathcal{T}' \subset \mathcal{T}$ be a compact set so that $p_T(t)$ is uniformly bounded away from 0 within $\mathcal{T}'$. Suppose that Assumptions~\ref{assump:identify_cond}, \ref{assump:diff_inter}, \ref{assump:reg_diff}, \ref{assump:den_diff}, \ref{assump:boundary_cond}, and \ref{assump:reg_kernel} hold. If $b\lesssim h\asymp n^{-\frac{1}{\gamma}}$ and $\hslash \asymp n^{-\frac{1}{\varpi}}$ for some $\gamma\geq \varpi >0$ such that $\frac{nh^5}{\log n} \to c_1$ and $\frac{n\hslash^5}{\log n} \to c_2$ for some finite number $c_1,c_2 \geq 0$ and $\frac{\hslash}{h^3\log n}, \hslash n^{\frac{1}{3}}\log n, \frac{\sqrt{n\hslash}}{\log n}, \frac{n\max\{h,\hslash\} b^d}{\log n} \to \infty$ as $n\to \infty$, then there exist Gaussian processes $\mathbb{B}, \bar{\mathbb{B}}$ such that 
	\begin{align*}
		&\sup_{u\geq 0} \left|\P\left(\sqrt{nh^3}\cdot \sup_{t\in \mathcal{T}'}\left|\hat{m}_{\theta}(t) -m(t)\right| \leq u\right) - \P\left(\sup_{f\in \mathcal{F}} |\mathbb{B}(f)| \leq u\right) \right|\\
		& \asymp \sup_{u\geq 0} \left|\P\left(\sqrt{nh^3}\cdot \sup_{t\in \mathcal{T}'}\left|\hat{\theta}_C(t) -\theta_C(t)\right| \leq u\right) - \P\left(\sup_{g\in \mathcal{F}_{\theta}} |\bar{\mathbb{B}}(g)| \leq u\right) \right| \\
		&= O\left(\left(\frac{\log^5 n}{nh^3}\right)^{\frac{1}{8}} + \left(\frac{\log^2 n}{nb^d\hslash}\right)^{\frac{3}{8}}\right),
	\end{align*}
	where $\mathcal{F},\mathcal{F}_{\theta}$ are defined in \eqref{func_class}.
\end{customthm}

\begin{proof}[Proof of \autoref{thm:gauss_approx_si}]
	We only prove the Gaussian approximation for $\hat{m}_{\theta}(t)$, since the result for its derivative estimator $\hat{\theta}_C(t)$ follows from an identical argument. At a high level, given the asymptotic linearity of $\hat{m}_{\theta}(t)$ in Lemma~\ref{lem:asymp_linear}, we will use Lemma~\ref{lem:gauss_approx_VC} to establish the coupling between $\sup_{f\in \mathcal{F}}|\mathbb{G}_n(f)|$ and $\sup_{f\in \mathcal{F}}|\mathbb{B}(f)|$ for the Gaussian process $\mathbb{B}$ defined in the theorem statement and then utilize Lemma~\ref{lem:gauss_approx_kol_dist} to translate the coupling to a bound on the Kolmogorov distance between $\sqrt{nh^3} \cdot \sup_{t\in \mathcal{T}'}\left|\hat{m}_{\theta}(t) -m(t)\right|$ and $\sup_{f\in \mathcal{F}}|\mathbb{B}(f)|$. 
	
	By Lemma~\ref{lem:VC_influ_func}, we know that the class of scaled influence functions
	$$\tilde{\mathcal{F}} = \left\{(v,x,\bm{z}) \mapsto \sqrt{h^3}\cdot \varphi_t(v,x,\bm{z}): t\in \mathcal{T}'\right\} = \left\{\sqrt{h^3} \cdot f: f\in \mathcal{F}\right\}$$
	is a VC-type class with an envelope function $(v,x,\bm{z}) \mapsto F_1(v,x,\bm{z}) = C_5\cdot |v-\mu(x,\bm{z})|$ for some constant $C_5>0$ that is independent of $n,h,b,\hslash$. In addition, recalling the definition of $\varphi_t$ in \eqref{influ_func} together with our Assumption~\ref{assump:reg_kernel} on moments of the kernel functions $K_T$ and Lemma~\ref{lem:influc_func_moment_bnd}, we obtain that 
	\begin{align*}
		&\sup_{f\in \tilde{\mathcal{F}}} \mathbb{E}\left[f(\bm{U})^2\right] \leq C_6^2 \cdot h^3 :=\tilde{\sigma}^2 <\infty  \quad \text{ and } \quad \left[\mathbb{E}\left(C_5|Y-\mu(T,\bm{S})|\right)^4\right]^{\frac{1}{4}} \leq C_5' \left(\mathbb{E}|\epsilon|^4\right)^{\frac{1}{4}} :=\tilde{A}< \infty,
	\end{align*}
	where $C_5',C_6>0$ are some constants that are independent of $n,h,b,\hslash$. By Lemma~\ref{lem:gauss_approx_VC}, we know that for any $\tilde{\gamma} \in (0,1)$,
	$$\P\left(\left|\sup_{f\in \tilde{\mathcal{F}}}|\mathbb{G}_n(f)| - \sup_{f\in \tilde{\mathcal{F}}}|\mathbb{B}(f)| \right| > \frac{C_1\cdot \tilde{A}^{\frac{1}{3}} \left(h^3\right)^{\frac{1}{3}} \log^{\frac{2}{3}} n}{\tilde{\gamma}^{\frac{1}{3}} n^{\frac{1}{6}}}\right) \leq \frac{C_2 \cdot \tilde{\gamma}}{2}.$$
	Dividing $\sqrt{h^3}$ on both sides of the inequality inside $\P$ gives us that
	$$\P\left(\left|\sup_{t\in\mathcal{T}'}|\mathbb{G}_n(\varphi_t)| - \sup_{f\in \mathcal{F}}|\mathbb{B}(f)| \right| > \frac{C_1\cdot \tilde{A}^{\frac{1}{3}} \log^{\frac{2}{3}} n}{\tilde{\gamma}^{\frac{1}{3}} \left(nh^3\right)^{\frac{1}{6}}}\right) \leq \frac{C_2 \cdot \tilde{\gamma}}{2}.$$
	On the other hand, we know from Lemma~\ref{lem:asymp_linear} that for any $\tilde{\gamma} \in (0,1)$,
	$$\P\left(\frac{\left|\sqrt{nh^3} \cdot \sup_{t\in \mathcal{T}'}\left|\hat{m}_{\theta}(t) -m(t)\right| - \sup_{t\in \mathcal{T}'}|\mathbb{G}_n\varphi_t|\right|}{\max\left\{\sqrt{nh^3 \max\{h, \hslash\}^4}, \sqrt{\frac{h^3\log n}{\hslash}}, \frac{\log n}{\sqrt{n\hslash}}, \sqrt{\frac{\log n}{nb^d\hslash}}\right\}} > \frac{C_3}{\tilde{\gamma}^{\frac{1}{3}}}\right) \leq \frac{C_2\cdot \tilde{\gamma}}{2},$$
	where $C_3>0$ is some large constant. Combining the above two inequalities yields that
	\begin{align*}
		&\P\Bigg(\left|\sqrt{nh^3} \cdot \sup_{t\in \mathcal{T}'}\left|\hat{m}_{\theta}(t) -m(t)\right| - \sup_{f\in \mathcal{F}}|\mathbb{B}(f)|\right| > \\
		&\quad \frac{C_3}{\tilde{\gamma}^{\frac{1}{3}}}\cdot \max\left\{\sqrt{nh^3 \max\{h, \hslash\}^4}, \sqrt{\frac{h^3\log n}{\hslash}}, \frac{\log n}{\sqrt{n\hslash}}, \sqrt{\frac{\log n}{nb^d\hslash}}\right\} + \frac{C_1'\cdot \log^{\frac{2}{3}} n}{\tilde{\gamma}^{\frac{1}{3}} \left(nh^3\right)^{\frac{1}{6}}}\Bigg) \leq C_2\cdot \tilde{\gamma},
	\end{align*}
	where $C_1'=C_1\cdot \tilde{A}^{\frac{1}{3}} >0$ is again a constant. Now, if $b\lesssim h\asymp n^{-\frac{1}{\gamma}}$ and $\hslash \asymp n^{-\frac{1}{\varpi}}$ for some $\gamma\geq \varpi >0$ such that $\frac{nh^5}{\log n} \to c_1$ and $\frac{n\hslash^5}{\log n} \to c_2$ for some finite number $c_1,c_2 \geq 0$ and $\frac{\hslash}{h^3\log n}, \hslash n^{\frac{1}{3}}\log n, \frac{\sqrt{n\hslash}}{\log n}, \frac{n\max\{h,\hslash\} b^d}{\log n} \to \infty$ as $n\to \infty$, then we know from Lemma~\ref{lem:dominate_rate} that $$\max\left\{\sqrt{nh^3 \max\{h, \hslash\}^4}, \sqrt{\frac{h^3\log n}{\hslash}}, \frac{\log n}{\sqrt{n\hslash}}\right\} \leq \frac{C_1'\cdot \log^{\frac{2}{3}} n}{\left(nh^3\right)^{\frac{1}{6}}}$$ 
	when $n$ is sufficiently large. Hence, we conclude that when $n$ is sufficiently large,
	$$\P\left(\left|\sqrt{nh^3}\cdot \sup_{t\in \mathcal{T}'}\left|\hat{m}_{\theta}(t) -m(t)\right| - \sup_{f\in \mathcal{F}}|\mathbb{B}(f)|\right| > \frac{C_4}{\tilde{\gamma}^{\frac{1}{3}}}\left[\frac{\log^{\frac{2}{3}}}{\left(nh^3\right)^{\frac{1}{6}}} + \sqrt{\frac{\log n}{nb^d\hslash}}\right]\right) \leq C_2\cdot \tilde{\gamma},$$
	for some large constants $C_2,C_4>0$. To upper bound the Kolmogorov distance between $\sqrt{nh^3} \cdot \sup_{t\in \mathcal{T}'}\left|\hat{m}_{\theta}(t) -m(t)\right|$ and $\sup_{f\in \mathcal{F}}|\mathbb{B}(f)|$, we leverage Lemmas~\ref{lem:gauss_approx_kol_dist} and \ref{lem:influc_func_moment_bnd} to obtain that
	\begin{align*}
		&\sup_{u\geq 0} \left|\P\left(\sqrt{nh^3}\cdot \sup_{t\in \mathcal{T}'}\left|\hat{m}_{\theta}(t) -m(t)\right| \leq u\right) - \P\left(\sup_{f\in \mathcal{F}} |\mathbb{B}(f)| \leq u\right) \right| \\
		&\leq \frac{C_5}{\tilde{\gamma}^{\frac{1}{3}}}\left[\frac{\log^{\frac{5}{6}}}{\left(nh^3\right)^{\frac{1}{6}}} + \frac{\log n}{(nb^d\hslash)^{\frac{1}{2}}}\right] + C_2\cdot \tilde{\gamma},
	\end{align*}
	where $C_5 >0$ is some constant that depends only on $\bar{\sigma}\geq \underline{\sigma}>0$ in Lemma~\ref{lem:influc_func_moment_bnd}. Here, we also utilize the fact that $\log\left(\frac{1}{r_1}\right)= \log n$ when $r_1= \frac{C_4}{\tilde{\gamma}^{\frac{1}{3}}}\left[\frac{\log^{\frac{2}{3}}}{\left(nh^3\right)^{\frac{1}{6}}} + \sqrt{\frac{\log n}{nb^d\hslash}}\right]$ and use the Dudley's entropy inequality for Gaussian processes (Corollary 2.2.8 in \citealt{van1996weak}) to argue that $\mathbb{E}\left[\sup_{f\in \mathcal{F}} |\mathbb{B}(f)|\right] = \mathbb{E}\left[\sup_{t\in \mathcal{T}} |\mathbb{B}(\varphi_t)|\right]=O\left(\sqrt{\log n}\right)$. We take $\tilde{\gamma} =O\left(\left(\frac{\log^5 n}{nh^3}\right)^{\frac{1}{8}} + \left(\frac{\log^2 n}{nb^d\hslash}\right)^{\frac{3}{8}}\right)$ to optimize the right hand side of the above inequality and deduce that
	\begin{align*}
		&\sup_{u\geq 0} \left|\P\left(\sqrt{nh^3} \cdot \sup_{t\in \mathcal{T}'}\left|\hat{m}_{\theta}(t) -m(t)\right| \leq u\right) - \P\left(\sup_{f\in \mathcal{F}} |\mathbb{B}(f)| \leq u\right) \right| = O\left(\left(\frac{\log^5 n}{nh^3}\right)^{\frac{1}{8}} + \left(\frac{\log^2 n}{nb^d\hslash}\right)^{\frac{3}{8}}\right).
	\end{align*}
	The result follows.
\end{proof}

\subsection{Proof of \autoref{thm:bootstrap_cons}}
\label{app:boot_consistency}

\begin{customthm}{7}[Bootstrap consistency]
	Let $q\geq2$ in the local polynomial regression for estimating $\frac{\partial}{\partial t}\mu(t,\bm{s})$, $\mathcal{T}' \subset \mathcal{T}$ be a compact set so that $p_T(t)$ is uniformly bounded away from 0 within $\mathcal{T}'$, and $\mathbb{U}_n=\left\{(Y_i,T_i,\bm{S}_i)\right\}_{i=1}^n$ be the observed data. Suppose that Assumptions~\ref{assump:identify_cond}, \ref{assump:diff_inter}, \ref{assump:reg_diff}, \ref{assump:den_diff}, \ref{assump:boundary_cond}, and \ref{assump:reg_kernel} hold. If $b\lesssim h\asymp n^{-\frac{1}{\gamma}}$ and $\hslash \asymp n^{-\frac{1}{\varpi}}$ for some $\gamma, \varpi >0$ such that $\frac{nh^5}{\log n} \to c_1$ and $\frac{n\hslash^5}{\log n} \to c_2$ for some finite number $c_1,c_2 \geq 0$ and $\frac{\hslash}{h^3\log n}, \hslash n^{\frac{1}{3}}\log n, \frac{\sqrt{n\hslash}}{\log n}, \frac{n\max\{h,\hslash\} b^d}{\log n} \to \infty$ as $n\to \infty$, then 
	\begin{align*}
		&\sup_{u\geq 0} \left|\P\left(\sqrt{nh^3} \cdot \sup_{t\in \mathcal{T}'}\left|\hat{m}_{\theta}^*(t) -\hat{m}_{\theta}(t)\right| \leq u \Big| \mathbb{U}_n\right) - \P\left(\sup_{f\in \mathcal{F}} |\mathbb{B}(f)| \leq u\right) \right| \\
		&\asymp \sup_{u\geq 0} \left|\P\left(\sqrt{nh^3}\cdot \sup_{t\in \mathcal{T}'}\left|\hat{\theta}_C^*(t) -\hat{\theta}_C(t)\right| \leq u \Big| \mathbb{U}_n\right) - \P\left(\sup_{g\in \mathcal{F}_{\theta}} |\bar{\mathbb{B}}(g)| \leq u\right) \right| \\
		&= O_P\left(\left(\frac{\log^5 n}{nh^3}\right)^{\frac{1}{8}} + \left(\frac{\log^2 n}{nb^d\hslash}\right)^{\frac{3}{8}}\right),
	\end{align*}
	where $\hat{m}_{\theta}^*(t)$ and $\hat{\theta}_C^*(t)$ are the integral estimator \eqref{simp_integral} and localized derivative estimator \eqref{theta_C_est} based on a bootstrap sample $\mathbb{U}_n^*=\left\{\left(Y_i^*,T_i^*,\bm{S}_i^*\right)\right\}_{i=1}^n$ respectively, and $\mathbb{B}, \bar{\mathbb{B}}$ are the same Gaussian processes as in \autoref{thm:gauss_approx_si}.
\end{customthm}

\begin{proof}[Proof of \autoref{thm:bootstrap_cons}]
	We only prove the bootstrap consistency for $\hat{m}_{\theta}(t)$, since the result for its derivative estimator $\hat{\theta}_C(t)$ follows from an identical argument. Our proof here is similar to the proof of Theorem 7 in \cite{chen2015asymptotic} and the proof of Theorem 4 in \cite{chen2017density}. The key difference is that the functional space remains unchanged as $\mathcal{F}$ in our scenario here for both the original Gaussian approximation (\autoref{thm:gauss_approx_si}) and the bootstrapped version, because the index set $\mathcal{T}'$ of $\mathcal{F}$ is fixed.
	
	Let $\mathbb{U}_n=\left\{(Y_i,T_i,\bm{S}_i)\right\}_{i=1}^n$ be the observed data and $\mathbb{U}_n^*=\left\{(Y_i^*,T_i^*,\bm{S}_i^*)\right\}_{i=1}^n$ be the bootstrap sample. We also denote $\mathbb{G}_n^*(\mathbb{U}_n) = \sqrt{n}\left(\mathbb{P}_n^* - \mathbb{P}_n\right)$, where $\mathbb{P}_n^*$ is the empirical measure defined by the bootstrap sample $\mathbb{U}_n^*$. Assume that $\mathbb{U}_n \subset \mathcal{Y}\times \mathcal{T}\times \mathcal{S}$ is fixed for a moment. Then, we can apply our arguments in Lemma~\ref{lem:asymp_linear} by replacing the probability measure $\P$ by $\mathbb{P}_n$ and obtain that
	\begin{align*}
		&\left|\sqrt{nh^3} \sup_{t\in \mathcal{T}'}\left|\hat{m}_{\theta}^*(t) -\hat{m}_{\theta}(t)\right| - \sup_{t\in \mathcal{T}'}|\mathbb{G}_n^*(\mathbb{U}_n)\varphi_t|\right| \\
		&= O_P\left(\sqrt{nh^3 \max\{h, \hslash\}^4} + \sqrt{\frac{h^3\log n}{\hslash}} + \frac{\log n}{\sqrt{n\hslash}} + \sqrt{\frac{\log n}{nb^d\hslash}}\right).
	\end{align*}
	Following the same argument in \autoref{thm:gauss_approx_si}, we obtain that
	\begin{align}
		\label{gauss_approx_boot}
		\begin{split}
			&\sup_{u\geq 0} \left|\P\left(\sqrt{nh^3} \sup_{t\in \mathcal{T}'}\left|\hat{m}_{\theta}^*(t) -\hat{m}_{\theta}(t)\right| \leq u \Big| \mathbb{U}_n\right) - \P\left(\sup_{f\in \mathcal{F}} |\mathbb{B}_n(f)| \leq u \Big| \mathbb{U}_n\right) \right| \\
			&= O_P\left(\left(\frac{\log^5 n}{nh^3}\right)^{\frac{1}{8}} + \left(\frac{\log^2 n}{nb^d\hslash}\right)^{\frac{3}{8}}\right),
		\end{split}
	\end{align}
	where $\mathbb{B}_n$ is a Gaussian process on $\mathcal{F}$ such that for any $f_1,f_2 \in \mathcal{F}$, it has
	$$\mathbb{E}\left[\mathbb{B}_n(f_1)| \mathbb{U}_n\right] = 0 \quad \text{ and } \quad \mathrm{Cov}\left[\mathbb{B}_n(f_1), \mathbb{B}_n(f_2) \big| \mathbb{U}_n\right]= \frac{1}{n} \sum_{i=1}^n f_1(Y_i,T_i,\bm{S}_i) \cdot f_2(Y_i,T_i,\bm{S}_i).$$ 
	Notice that the difference between $\sup_{f\in \mathcal{F}} |\mathbb{B}_n(f)|$ and $\sup_{f\in \mathcal{F}} |\mathbb{B}(f)|$ is small, because these two Gaussian processes differ in their covariance and 
	\begin{align*}
		\mathrm{Cov}\left[\mathbb{B}_n(f_1), \mathbb{B}_n(f_2) \big| \mathbb{U}_n\right] &= \frac{1}{n} \sum_{i=1}^n f_1(Y_i,T_i,\bm{S}_i) \cdot f_2(Y_i,T_i,\bm{S}_i) \\
		&\to \mathrm{Cov}(\mathbb{B}(f_1), \mathbb{B}(f_2)) = \mathbb{E}\left[f_1(Y,T,\bm{S}) \cdot f_2(Y,T,\bm{S})\right]
	\end{align*}
	as $n\to \infty$. More precisely, by Corollary 9 in \cite{giessing2023gaussian}, we know that
	\begin{align}
		\label{gauss_sup_diff}
		\begin{split}
			&\sup_{u\geq 0} \left|\P\left(\sup_{f\in \mathcal{F}} |\mathbb{B}_n(f)| \leq u \Big| \mathbb{U}_n\right) -\P\left(\sup_{f\in \mathcal{F}} |\mathbb{B}(f)| \leq u\right)\right| \\
			&\leq C_6 \left[\frac{\sup_{f_1,f_2\in \mathcal{F}}\left|\mathrm{Cov}\left[\mathbb{B}_n(f_1), \mathbb{B}_n(f_2) \big| \mathbb{U}_n\right] - \mathrm{Cov}\left[\mathbb{B}(f_1), \mathbb{B}(f_2)\right]\right|}{\max\left\{\mathrm{Var}\left(\sup_{f\in \mathcal{F}}|\mathbb{B}(f)|\right), \mathrm{Var}\left(\sup_{f\in \mathcal{F}}|\mathbb{B}_n(f)|\right)\right\}}\right]^{\frac{1}{3}}\\
			&\leq C_6' \left[ \sup_{f_1,f_2\in \mathcal{F}}\left|\frac{1}{n} \sum_{i=1}^n f_1(Y_i,T_i,\bm{S}_i) \cdot f_2(Y_i,T_i,\bm{S}_i) - \mathbb{E}\left[f_1(Y,T,\bm{S}) \cdot f_2(Y,T,\bm{S})\right]\right| \right]^{\frac{1}{3}}\\
			&= C_6' \left[ \sup_{f\in \mathcal{F}^2}\left|\frac{1}{n} \sum_{i=1}^n f(Y_i,T_i,\bm{S}_i) - \mathbb{E}\left[f(Y,T,\bm{S})\right]\right| \right]^{\frac{1}{3}}
		\end{split}
	\end{align}
	where the last inequality follows from Lemma~\ref{lem:influc_func_moment_bnd} and $C_6,C_6'>0$ are two absolute constants. Here, $\mathcal{F}^2 \equiv \left\{f_1\cdot f_2: f_1,f_2\in \mathcal{F}\right\}$. Now, by symmetrization (Lemma 2.3.1 in \citealt{van1996weak}) and maximal inequality (Corollary 2.2.8 in \citealt{van1996weak}), we also know that
	\begin{align*}
		\sup_{f\in \mathcal{F}^2}\left|\frac{1}{n} \sum_{i=1}^n f(Y_i,T_i,\bm{S}_i) - \mathbb{E}\left[f(Y,T,\bm{S})\right]\right| &\leq 2 \mathbb{E}\left[\sup_{f\in \mathcal{F}^2} \left|\frac{1}{n}\sum_{i=1}^n \chi_i f(Y_i,T_i,\bm{S}_i) \right|\right]\\
		&\leq \frac{C_7}{\sqrt{n}} \int_0^{\infty} \sup_Q \sqrt{\log N\left(\mathcal{F}^2, L_2(Q), \epsilon\right)} \,d\epsilon \\
		&\leq \frac{C_7'}{\sqrt{n}} \int_0^{\infty} \sup_Q \sqrt{\log N\left(\mathcal{F}, L_2(Q), \epsilon\right)} \,d\epsilon\\
		&=\frac{C_7'}{\sqrt{n}} \int_0^{\sqrt{h^3}\cdot M_{\mathcal{F}}} \sup_Q \sqrt{\log N\left(\tilde{\mathcal{F}}, L_2(Q), \sqrt{h^3}\cdot \epsilon\right)} \,d\epsilon\\
		&= \frac{C_7'}{\sqrt{nh^3}} \int_0^{M_{\mathcal{F}}} \sup_Q \sqrt{\log N\left(\tilde{\mathcal{F}}, L_2(Q), u\right)} \,d u\\
		&\leq \frac{C_8}{\sqrt{nh^3}},
	\end{align*}
	where $\chi_1,...,\chi_n$ are Rademacher random variables, $\tilde{\mathcal{F}} = \sqrt{h^3} \cdot \mathcal{F}=\left\{\sqrt{h^3} f: f\in \mathcal{F}\right\}$ by Lemma~\ref{lem:VC_influ_func}, and $C_7,C_7',C_8>0$ are absolute constants. In addition, $M_{\mathcal{F}}= \sup_{f\in \mathcal{F}} \mathbb{E}\left[f(Y_1,T_1,\bm{S}_1)^2\right] <\infty$ by Lemma~\ref{lem:influc_func_moment_bnd} and the last inequality follows from the VC-type property of $\tilde{\mathcal{F}}$ by Lemma~\ref{lem:VC_influ_func}. Combining the above result with \eqref{gauss_sup_diff}, we obtain that
	$$\sup_{u\geq 0} \left|\P\left(\sup_{f\in \mathcal{F}} |\mathbb{B}_n(f)| \leq u \Big| \mathbb{U}_n\right) -\P\left(\sup_{f\in \mathcal{F}} |\mathbb{B}(f)| \leq u\right)\right| = O_P\left(\frac{1}{\left(nh^3\right)^{\frac{1}{6}}}\right).$$
	Together with \eqref{gauss_approx_boot}, the result thus follows.
\end{proof}

\subsection{Proof of Corollary~\ref{cor:boot_ci}}
\label{app:conf_band}

\begin{customcor}{8}[Uniform confidence band]
	Under the setup of \autoref{thm:bootstrap_cons}, we have that
	\begin{align*}
			&\P\left(\theta(t) \in \left[\hat{\theta}_C(t) - \bar{\xi}_{1-\alpha}^*, \hat{\theta}_C(t) + \bar{\xi}_{1-\alpha}^*\right] \text{ for all } t\in \mathcal{T}'\right) = 1-\alpha + O\left(\left(\frac{\log^5 n}{nh^3}\right)^{\frac{1}{8}} + \left(\frac{\log^2 n}{nb^d\hslash}\right)^{\frac{3}{8}}\right), \\
			&\P\left(m(t) \in \left[\hat{m}_{\theta}(t) - \xi_{1-\alpha}^*, \hat{m}_{\theta}(t) + \xi_{1-\alpha}^*\right] \text{ for all } t\in \mathcal{T}'\right) = 1-\alpha + O\left(\left(\frac{\log^5 n}{nh^3}\right)^{\frac{1}{8}} + \left(\frac{\log^2 n}{nb^d\hslash}\right)^{\frac{3}{8}}\right).
	\end{align*}
\end{customcor}

\begin{proof}[Proof of Corollary~\ref{cor:boot_ci}]
	By \autoref{thm:gauss_approx_si} and \autoref{thm:bootstrap_cons}, we have the Berry-Esseen bounds related to bootstrap estimates for the distributions of $\sqrt{nh^3}\cdot \sup_{t\in \mathcal{T}'}\left|\hat{m}_{\theta}(t) -m(t)\right|$ and $\sqrt{nh^3}\cdot \sup_{t\in \mathcal{T}'}\left|\hat{\theta}_C(t) -\theta_C(t)\right|$ as:
	\begin{align*}
		&\sup_{u\geq 0} \left|\P\left(\sqrt{nh^3} \sup_{t\in \mathcal{T}'}\left|\hat{m}_{\theta}^*(t) -\hat{m}_{\theta}(t)\right| \leq u \Big| \mathbb{U}_n\right) - \P\left(\sqrt{nh^3}\cdot \sup_{t\in \mathcal{T}'}\left|\hat{m}_{\theta}(t) -m(t)\right| \leq u\right) \right|\\
		&\quad =O_P\left(\left(\frac{\log^5 n}{nh^3}\right)^{\frac{1}{8}} + \left(\frac{\log^2 n}{nb^d\hslash}\right)^{\frac{3}{8}}\right),
	\end{align*}
	and 
	\begin{align*}
		&\sup_{u\geq 0} \left|\P\left(\sqrt{nh^3}\cdot \sup_{t\in \mathcal{T}'}\left|\hat{\theta}_C^*(t) -\hat{\theta}_C(t)\right| \leq u\right) - \P\left(\sqrt{nh^3}\cdot \sup_{t\in \mathcal{T}'}\left|\hat{\theta}_C(t) -\theta_C(t)\right| \leq u\right) \right|\\
		& = O_P\left(\left(\frac{\log^5 n}{nh^3}\right)^{\frac{1}{8}} + \left(\frac{\log^2 n}{nb^d\hslash}\right)^{\frac{3}{8}}\right).
	\end{align*}
	The results thus follow.
\end{proof}

\end{document}